\newcommand{\len}{\mathrm{len}}
\newcommand{\Mod}{\mathsf{MOD}}
\newcommand{\tdhlf}{\textsc{2D HLF}}
\newcommand{\php}{\textsc{PHP}}
\newcommand{\thr}{\mathsf{THR}}
\newcommand{\e}{\mathsf{E}}
\newcommand{\G}{\mathsf{G}}
\renewcommand*\env@matrix[1][*\c@MaxMatrixCols c]{%
  \hskip -\arraycolsep
  \let\@ifnextchar\new@ifnextchar
  \array{#1}}
\definecolor{ballblue}{rgb}{0.13, 0.67, 0.8}
\definecolor{bittersweet}{rgb}{1.0, 0.44, 0.37}
\definecolor{palered-violet}{rgb}{0.86, 0.44, 0.58}
\definecolor{midnightblue}{rgb}{0.071, 0.271, 0.349}
\definecolor{airforceblue}{rgb}{0.35, 0.51, 0.57}
\crefname{algocf}{Algorithm}{Algorithms}
\Crefname{algocf}{Algorithm}{Algorithms}
\crefname{algocfline}{Line}{Lines}
\Crefname{algocfline}{Line}{Lines}
\renewcommand{\backref}[1]{}
\renewcommand{\backrefalt}[4]{%
\ifcase #1 %
\or
[p.\ #2]%
\else
[pp.\ #2]%
\fi}
\newtheorem{theorem}{Theorem}[section]
\newtheorem*{namedtheorem}{\theoremname}
\newcommand{\theoremname}{testing}
\newtheorem{lemma}[theorem]{Lemma}
\newtheorem{proposition}[theorem]{Proposition}
\newtheorem{corollary}[theorem]{Corollary}
\newtheorem{question}[theorem]{Question}
\theoremstyle{definition}
\newtheorem{definition}[theorem]{Definition}
\newtheorem{remark}[theorem]{Remark}
\renewcommand{\Pr}{\mathop{\bf Pr\/}}
\newcommand{\E}{\mathop{\bf E\/}}
\newcommand{\poly}{\mathrm{poly}}
\newcommand{\polylog}{\mathrm{polylog}}
\newcommand{\size}{\mathrm{size}}
\newcommand{\sgn}{\mathrm{sgn}}
\newcommand{\R}{\mathbb R}
\newcommand{\N}{\mathbb N}
\newcommand{\Z}{\mathbb Z}
\newcommand{\F}{\mathbb F}
\newcommand{\BQP}{\mathsf{BQP}}
\newcommand{\PH}{\mathsf{PH}}
\newcommand{\PTIME}{\mathsf{P}}
\newcommand{\NP}{\mathsf{NP}} 
\newcommand{\TC}{\mathsf{TC}}
\newcommand{\AC}{\mathsf{AC}}
\newcommand{\ac}{\AC}
\newcommand{\sym}{\SYM}
\newcommand{\SYM}{\mathsf{SYM}}
\newcommand{\NEXP}{\mathsf{NEXP}}
\newcommand{\GCC}{\mathsf{GCC}} 
\newcommand{\GC}{\mathsf{GC}} \newcommand{\gc}{\GC}
\newcommand{\ACC}{\mathsf{ACC}} \newcommand{\acc}{\ACC}
\newcommand{\ACCz}{\mathsf{ACC^0}} 
\newcommand{\QNC}{\mathsf{QNC}}
\newcommand{\NC}{\mathsf{NC}}
\newcommand{\OR}{\mathsf{OR}} 
\newcommand{\NOT}{\mathsf{NOT}} 
\newcommand{\AND}{\mathsf{AND}} 
\newcommand{\BQLOGTIME}{\mathsf{BQLOGTIME}}
\newcommand{\g}{\mathsf{G}}
\newcommand{\CKT}{\mathsf{CKT}}
\newcommand{\DT}{\mathsf{DT}}
\newcommand{\MAJ}{\mathsf{MAJ}} \newcommand{\maj}{\MAJ}
\newcommand{\PSPACE}{\mathsf{PSPACE}}
\newcommand{\PP}{\mathsf{PP}}
\newcommand{\eps}{\varepsilon}
\newcommand{\bits}{\{-1,1\}}
\newcommand{\bitz}{\{0,1\}}
\newcommand{\calC}{\mathcal{C}}
\newcommand{\calD}{\mathcal{D}}
\newcommand{\calE}{\mathcal{E}}
\newcommand{\calF}{\mathcal{F}}
\newcommand{\calL}{\mathcal{L}}
\newcommand{\calP}{\mathcal{P}}
\newcommand{\calR}{\mathcal{R}}
\newcommand{\calS}{\mathcal{S}}
\newcommand{\quasipoly}{\mathrm{quasipoly}}
\newcommand{\abs}[1]{\lvert #1 \rvert}
\newcommand{\Abs}[1]{\Bigl\lvert #1 \Bigr\rvert}
\newcommand{\ignore}[1]{}
\newcounter{termcounter}[equation]
\renewcommand{\thetermcounter}{\the\numexpr\value{equation}+1\relax.\roman{termcounter}}
\crefname{term}{term}{terms}
\def\term{\@ifnextchar[\term@optarg\term@noarg}%
\def\term@optarg[#1]#2{%
  \textup{#1}%
  \def\@currentlabel{#1}%
  \def\cref@currentlabel{[][2147483647][]#1}%
  \cref@label[term]{#2}}
\def\term@noarg#1{%
  \refstepcounter{termcounter}%
  \textup{\thetermcounter}%
  \cref@label[term]{#1}}
\title{Improved Circuit Lower Bounds and\\Quantum-Classical Separations}
\author{Sabee Grewal\thanks{\href{mailto:sabee@cs.utexas.edu,vmkumar@cs.utexas.edu}{\texttt{\{sabee, vmkumar\}@cs.utexas.edu}}. Department of Computer Science, The University of Texas at Austin.}\and Vinayak M. Kumar}
\date{}
\renewcommand\footnotemark{}
\begin{document}

\maketitle

\begin{abstract}
We continue the study of the circuit class $\GC^0$, which augments $\AC^0$ with unbounded-fan-in gates that compute arbitrary functions inside a sufficiently small Hamming ball but must be constant outside it.
While $\GC^0$ can compute functions requiring exponential-size circuits, Kumar (CCC 2023) showed that switching-lemma lower bounds for $\AC^0$ extend to $\GC^0$ with no loss in parameters.

We prove a parallel result for the polynomial method: any lower bound for $\AC^0[p]$ obtained via the polynomial method extends to $\GC^0[p]$ without loss in parameters. As a consequence, we show that the majority function $\maj$ requires depth-$d$ $\GC^0[p]$ circuits of size $2^{\Omega(n^{1/2(d-1)})}$, matching the best-known lower bounds for $\AC^0[p]$. This yields the most expressive class of non-monotone circuits for which exponential-size lower bounds are known for an explicit function.
We also prove a similar result for the algorithmic method, showing that $\e^\NP$ requires exponential-size $\GCC^0$ circuits, extending a result of Williams (JACM 2014).

Finally, leveraging our improved classical lower bounds, we establish the strongest known unconditional separations between quantum and classical circuit classes. 
We separate $\QNC^0$ from $\GC^0$ and $\GC^0[p]$ in various settings and show that $\BQLOGTIME$ is not contained in $\GC^0$. As a consequence, we construct an oracle relative to which $\BQP$ lies outside uniform $\GC^0$, extending the Raz–Tal oracle separation between $\BQP$ and $\PH$ (STOC 2019).
\end{abstract}

\hypersetup{linktocpage}
\setcounter{tocdepth}{2}
\tableofcontents

\section{Introduction}

Proving superpolynomial circuit lower bounds for an explicit function is a longstanding challenge in computer science. 
It remains one of our only viable approaches to resolving the $\PTIME \stackrel{?}{=} \NP$ question \cite{aaronson2016p}.
Beyond this central goal, circuit lower bounds also find applications throughout complexity theory, for example, in structural complexity \cite{furst1984parity,haastad1986computational,aaronson2010bqp,rossman2015average,raz2022oracle}, proving unconditional quantum advantage \cite{bravyi2018quantum,watts2019exponential,grier2019interactive}, and pseudorandomness \cite{nisan1994hardness,impagliazzo1997p}. 

Motivated in part by the relativization barrier of Baker, Gill, and Solovay \cite{baker1975relativizations}, 
considerable effort was put forth in the mid-1970s to early 1980s to prove circuit lower bounds for explicit functions.
After a burst of progress \cite{ schnorr1974zwei,paul19752,stockmeyer1976combinational,schnorr19803n,blum19812,blum1983boolean}, the best lower bound for an explicit function was $3n - o(n)$. 
The current state of the art is $3.1n - o(n)$, and the (seemingly) marginal improvement in the leading constant was highly nontrivial to obtain \cite{demenkov2011elementary,find2016better,golovnev2016weighted,li20221}. 

A ``bottom-up'' approach to circuit lower bounds has also been explored, where the goal is to prove lower bounds for highly restricted circuits, then slightly relax those restrictions and repeat. 
This approach has led to two techniques: switching lemmas (or more broadly, the method of random restrictions) \cite{ajtai198311,furst1984parity,yao1985separating,haastad1986computational} and the polynomial method \cite{razborov1987lower,smolensky1987algebraic}. 
The former technique has been used to show lower bounds against $\AC^0$, constant-depth circuits of $\AND$, $\OR$, and $\NOT$ gates with unbounded fan-in. The latter technique has been used to prove lower bounds against $\AC^0[p]$, constant-depth circuits that include unbounded fan-in $\Mod_p$ gates, where $p$ is prime, in addition to $\AND$, $\OR$, and $\NOT$ gates.\footnote{$\Mod_p$ outputs $0$ iff the sum of the input bits is congruent to $0 \pmod p$.}

Alas, this bottom-up approach stalled in the late 1980s. 
Furthermore, the natural proofs barrier of Razborov and Rudich \cite{razborov1994natural} showed that the random restriction and polynomial methods fail to prove superpolynomial-size lower bounds against $\TC^0$, constant-depth, polynomial-size circuits of $\AND$, $\OR$, $\NOT$ and $\MAJ$ gates with unbounded fan-in---a circuit class far weaker than polynomial-depth, polynomial-size circuits.\footnote{$\MAJ$ outputs $1$ iff at least half of the input bits are $1$.}
Additionally, Aaronson and Wigderson \cite{aaronson2009algebrization} identified a third barrier, the algebrization barrier, another hurdle any new lower bound technique must overcome.

The gold standard in circuit complexity is the development of new lower bound techniques that circumvent known barriers.
A shining example is Williams' algorithmic method, which led to breakthrough $\ACC^0$ lower bounds \cite{wil14acc0}.\footnote{$\ACC^0$ is the union of $\AC^0[m]$ for all $m$.}
However, new techniques are few and far between.
In this work, we take a complementary approach: rather than seeking new techniques, we aim to refine our understanding of existing ones.
By examining how and where current methods fail, we hope to gain insight into what future breakthroughs might require.
Broadly, our work is driven by two motivating questions: 

\begin{question}\label{question:first}
What is the strongest circuit class for which current techniques can still yield nontrivial lower bounds? 
\end{question}

\begin{question}\label{question:second}
Is there a unifying perspective that captures existing techniques, revealing a common structure or property they all exploit?
\end{question}

\subsection{Our Results in a Nutshell}
An early attempt to unify and extend lower bound techniques was made by Yao \cite{yao1989circuits}, who observed that certain lower bounds hold even when circuits are augmented with \emph{local computation}, i.e., bounded fan-in gates that compute arbitrary functions.
For example, Yao showed that Razborov’s monotone circuit-size lower bound for $k$-Clique on $n$ vertices \cite{razborov1985lower} holds even when the monotone circuits are allowed arbitrary monotone gates of fan-in $n^{1/100}$ (whereas Razborov's original lower bound assumed gates of fan-in 2). 
In a follow-up work, Jukna \cite{jukna1990monotone} showed that Razborov's lower bound holds for arbitrary monotone gates of fan-in $n$ as long as the minterm of each gate is at most $(n/\log n)^{2/3}$.

Beyond proving lower bounds for more expressive circuit classes, the study of local computation has also been used to analyze the limitations of lower bound techniques, a perspective taken by Chen, Hirahara, Oliveira, Pich, Rajgopal, and Santhanam \cite{chen2022beyond}. 
At a high level, the idea is as follows: if a lower bound technique for $\AC^0$ also applies to some larger class $\mathsf{C}$, it suggests that the technique is insensitive to the differences between $\AC^0$ and $\mathsf{C}$.
By analyzing this insensitivity more carefully, one can hope to refine the technique and obtain stronger lower bounds against $\AC^0$.

The notion of locality studied in prior work---arbitrary computation over a small number of input bits---does not generalize constant-depth circuits with \emph{unbounded} fan-in.
For example, even a single unbounded fan-in $\OR$ gate cannot be implemented by a constant-depth circuit with only bounded fan-in gates.
To extend the line of investigation pursued by Yao, Jukna, and Chen et al. to the unbounded fan-in setting, we must identify a notion of locality that is compatible with unbounded fan-in gates.

Recently, Kumar \cite{kumar2023tight} introduced the $\G(k)$ gate: an unbounded fan-in gate that can compute an arbitrary function within a Hamming ball of radius $k$ but must be constant outside it.
In this work, we propose interpreting the $\G(k)$ gate as defining a new notion of locality---one that is especially well-suited to the unbounded fan-in setting.
To see this, observe that $\AND$, $\OR$, and $\NOT$ can be viewed as special cases of this model: each computes a function that depends only on inputs within a Hamming ball of radius $0$, and is constant elsewhere.
Thus, the circuit class $\GC^0(k)$, constant-depth circuits built from $\G(k)$ gates, naturally generalizes $\AC^0$.
Moreover, since arbitrary bounded fan-in gates are also special cases of $\G(k)$ gates, this definition subsumes earlier models of local computation studied by Yao and by Chen et al., while extending them to include unbounded fan-in.\footnote{
Let us briefly compare our model with the more traditional notion of locality, i.e., the arbitrary bounded fan-in model considered in Yao's work. 
Specifically, consider constant-depth circuits composed of unbounded fan-in $\AND$, $\OR$, and $\NOT$ gates, along with arbitrary gates of bounded fan-in $k$---call this class $\mathsf{YAO}^0$. This offers a natural point of comparison with our $\GC^0(k)$ model.

A natural question is whether $\GC^0(k)$ can be simulated by $\mathsf{YAO}^0$. However, a simple counting argument shows that this is not the case: there exist individual $\G(k)$ gates that require exponential-size $\mathsf{YAO}^0$ circuits to implement. Indeed, a size-$s$ $\mathsf{YAO}^0$ circuit on $n$ input bits can be encoded by $s(k\log(n+s) + 2^k)$ bits: each of the $s$ gates is specified by its $k$ inputs and the length-$2^k$ truth table. In contrast, a $\G(k)$ gate of fan-in $n$ requires $\binom{n}{\le k} = \Omega((n/k)^k)$ bits to specify. Thus $s$ must be $(n/k)^{\Omega(k)}$, which is exponential in $n$ when $k = n^\eps$---the regime of interest in this work. 
}

The main result of \cite{kumar2023tight} was to prove a novel switching lemma for $\GC^0$, which implies lower bounds for $\GC^0$ that are just as strong as those known for $\AC^0$. The core takeaway is captured by the following informal theorem:

\begin{theorem}[Main result of {\cite{kumar2023tight}}, Informal]\label{thm:intro-kumar}
If one can prove size-$s$ lower bounds against depth-$d$ $\AC^0$ using a switching lemma, then one can prove size-$s$ lower bounds against depth-$d$ $\GC^0(k)$ even when $k = 0.1 n^{1/d}$ (for a possibly different hard function).
\end{theorem}

This result is surprising because, in this regime of $k$, a simple counting argument shows that $\GC^0(k)$ can compute functions requiring exponential-size Boolean circuits (see \cref{thm:gk-incomp-TC}).
In the spirit of Yao \cite{yao1989circuits} and Jukna \cite{jukna1990monotone}, \cref{thm:intro-kumar} yields new lower bounds for a strictly more powerful class of circuits. 
But in the spirit of Chen et al.\cite{chen2022beyond}, the result also illuminates the limitations of the technique itself. In particular, it shows that the switching lemma cannot distinguish between $\AC^0$ and $\GC^0(k)$. In other words, the technique applies equally well to both classes, despite the latter's significantly greater computational power.

The first contribution of this work is to show the analogous result for the polynomial method. 

\begin{theorem}[Improved circuit lower bounds, Informal]
Define $\GC^0(k)[p]$ as the class of constant-depth $\GC^0(k)$ circuits augmented with unbounded fan-in $\Mod_p$ gates.
If one can prove size-$s$ lower bounds against depth-$d$ $\AC^0[p]$ using the polynomial method, then one can prove size-$s$ lower bounds against depth-$d$ $\GC^0(k)[p]$ even when $k = 0.1 n^{1/2d}$ (for a possibly different hard function).
\end{theorem}

Towards addressing \cref{question:first}, our result yields exponential-size circuit lower bounds against $\GC^0(k)[p]$ in a regime where this class can compute functions requiring exponential-size Boolean circuits. 
In particular, our results give the least restricted class of non-monotone circuits for which we have exponential-size circuit lower bounds against an explicit function (see \cref{remark:explicit-lower-bound} for further detail). 
Notably, $\AC^0[p]$ and $\GC^0(k)[p]$ provably do not satisfy a switching lemma, so our lower bounds could not have been achieved by prior work.\footnote{It is natural to wonder if existing $\AC^0[p]$ lower bounds already imply our results for $\GC^0[p]$. We explain at length in \cref{subsec:intro:our-new-circuit-lower-bound} why this is not the case.} 

Towards addressing \cref{question:second}, a central conceptual contribution of this work is to identify a broader notion of locality---namely, arbitrary computation restricted to small Hamming balls---as the key property exploited by both the switching lemma and the polynomial method. 
Strikingly, both techniques operate at a level that is agnostic to the precise gate types involved, so long as the computation remains sufficiently local in this Hamming-ball sense.
This is particularly surprising, as the switching lemma and the polynomial method are deeply different in nature---combinatorial versus algebraic---yet both extend naturally to $\g(k)$ gates.

Because our result shows that the polynomial method cannot distinguish between $\AC^0[p]$ and $\GC^0(k)[p]$, it can also be interpreted as identifying a \emph{barrier}---much like relativization, naturalization, and algebrization \cite{baker1975relativizations,razborov1994natural,aaronson2009algebrization}.
Specifically, if a function $f$ can be computed by size-$s$ $\GC^0[p]$ circuits, then neither the polynomial method nor the switching lemma can be used to prove a stronger than size-$s$ lower bound against $\AC^0[p]$. Otherwise, by our results, such a lower bound would lift to $\GC^0[p]$---contradicting the assumed existence of a small $\GC^0[p]$ circuit for $f$.
This perspective may help explain why certain lower bounds remain elusive, such as proving tight lower bounds for $\MAJ$ against $\AC^0[p]$.  

In addition to these conceptual contributions and new circuit lower bounds, we also present several related results. We prove analogous (but weaker) results for the algorithmic method. Furthermore, our new lower bounds have a range of applications, including to learning theory and quantum-classical separations. We discuss these in detail in the following subsection.

\subsection{Our Results in Detail}

\subsubsection{Our New Circuit Lower Bound}\label{subsec:intro:our-new-circuit-lower-bound}
Our first result uses the polynomial method to prove exponential-size lower bounds for $\GC^0(k)[p]$ circuits.

\begin{theorem}[{$\GC^0(k)[p]$ lower bound, Restatement of \cref{thm:gc0-lower-bound}}]\label{thm:intro-gc0-lower-bound}
    Let $p$ and $q$ be distinct prime numbers, and let $k=O(n^{1/2d})$.
    Any depth-$d$ $\GC^0(k)[p]$ circuit that computes either $\MAJ$ or $\Mod_q$ on $n$ input bits  must have size $2^{\Omega\left(n^{1/2(d-1)}\right)}$.
\end{theorem}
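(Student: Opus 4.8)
The plan is to run the Razborov--Smolensky polynomial method, where the only genuinely new ingredient is a low-degree probabilistic $\F_p$-polynomial for a single $\g(k)$ gate. Recall the Razborov--Smolensky skeleton: each $\AND$, $\OR$, or $\Mod_p$ gate of fan-in $m$ has a distribution over $\F_p$-polynomials of degree $O((p-1)\log(1/\delta))$ that agrees with the gate pointwise except with probability at most $\delta$; composing such polynomials gate-by-gate and union-bounding the errors over the $s$ gates produces, for a depth-$d$ size-$s$ circuit, one polynomial of degree $(O_p(\log s))^{d-1}$ agreeing with the circuit on a $0.99$ fraction of inputs (the exponent $d-1$ rather than $d$ comes from approximating the output gate with constant error, hence $O_p(1)$ degree); and Smolensky's degree bound says no $\F_p$-polynomial of degree $o(\sqrt n)$ agrees with $\Mod_q$ (for $q\ne p$) or $\MAJ$ on that many inputs, forcing $(O_p(\log s))^{d-1}\ge\Omega(\sqrt n)$ and hence $s\ge 2^{\Omega(n^{1/2(d-1)})}$. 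I would keep this skeleton and only change how $\g(k)$ gates are approximated.

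The crux is therefore the following lemma: for every $\g(k)$ gate $h\colon\{0,1\}^m\to\{0,1\}$ and every $\delta\in(0,\tfrac12)$ there is a distribution over polynomials $P\in\F_p[y_1,\dots,y_m]$ of degree $O((p-1)(k+\log(1/\delta)))$ with $\Pr_P[P(y)\ne h(y)]\le\delta$ for \emph{every} $y\in\{0,1\}^m$. What makes this delicate is that $k$ must enter \emph{additively}: the naive representations of $h$ --- a DNF/CNF over the Hamming ball $B(0^m,k)$, or a disjoint sum of point indicators over that ball --- all incur either a spurious $\log m$ factor or a product $k\cdot\log(1/\delta)$, and either loss already degrades the conclusion from $2^{\Omega(n^{1/2(d-1)})}$ to roughly $2^{\Omega(n^{1/(2d(d-1))})}$.

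To prove the lemma I would argue as follows. Assume WLOG that $h$'s high-weight constant is $0$, and set $D:=C(k+\log(1/\delta))$ for a large constant $C=C(p)$. Let $\psi\in\F_p[y]$ be the unique multilinear polynomial of degree $\le D$ agreeing with $h$ on all inputs of Hamming weight at most $D$ --- this exists and is unique because evaluation at weight-$\le D$ inputs is triangular in the monomial basis --- so $\psi=h$ on weights $\le k$ and $\psi\equiv 0$ on weights in $(k,D]$. Next build a randomized one-sided approximate indicator $\eta$ of $\{\,|y|\le D\,\}$: draw a uniformly random partition of $[m]$ into $N:=\Theta(D)$ blocks $B_1,\dots,B_N$; form the \emph{exact} degree-$(p-1)$ block summaries $z_\ell:=(\sum_{i\in B_\ell}y_i)^{p-1}\in\{0,1\}$ (so $z_\ell=1$ iff $B_\ell$ contains a number of ones not divisible by $p$); and let $\eta$ be the exact ($\deg\le N$) $\F_p$-polynomial for the symmetric function $\mathbb 1[\sum_\ell z_\ell\le k]$. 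Put $P:=\psi\cdot\eta$, of degree $D+N(p-1)=O_p(k+\log(1/\delta))$. If $|y|\le k$ then $\sum_\ell z_\ell\le(\text{number of nonempty blocks})\le|y|\le k$, so $\eta=1$ and $P=\psi=h$; if $k<|y|\le D$ then $\psi=0$, so $P=0=h$ regardless of $\eta$; and if $|y|>D$, a Chernoff/bounded-differences estimate over the random blocking shows the number of blocks whose count of ones is $\not\equiv 0\pmod p$ exceeds $k$ --- whence $\eta=0$ and $P=0=h$ --- except with probability $e^{-\Omega(D)}\le\delta$. This gives precisely the pointwise (hence distribution-free) error the composition step needs, and using the degree-$(p-1)$ summary $(\sum y_i)^{p-1}$ in place of an approximate block-$\OR$ is what keeps the degree free of an extra $\log(1/\delta)$ factor.

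Plugging the lemma into the skeleton (per-gate error $\delta=1/(100s)$, constant error at the output gate) yields a polynomial of degree $(O_p(k+\log s))^{d-1}$ agreeing with the circuit's output on a $0.99$ fraction of inputs; since the circuit computes $\Mod_q$ or $\MAJ$, Smolensky's $\Omega(\sqrt n)$ degree bound gives $k+\log s\ge\Omega(n^{1/2(d-1)})$. The hypothesis $k=O(n^{1/2d})$ is exactly what renders $k$ negligible, since $n^{1/2d}=o(n^{1/2(d-1)})$, leaving $\log s\ge\Omega(n^{1/2(d-1)})$, i.e.\ $s\ge 2^{\Omega(n^{1/2(d-1)})}$. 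One further point requires care: when the output gate is itself a $\g(k)$ gate it cannot be approximated with $O_p(1)$ degree, only $O_p(k)$, so one must verify this costs no more than the $k$ the hypothesis already absorbs. The main obstacle throughout is the lemma, and specifically securing the additive dependence on $k$ with no $\log m$ loss; everything else is a faithful replay of Razborov--Smolensky.
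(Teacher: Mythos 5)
Your skeleton is exactly the paper's: reduce everything to a degree-$O_p(k+\log(1/\eps))$ probabilistic $\F_p$-polynomial for a single $\g(k)$ gate, compose gate-by-gate with per-gate error $1/O(s)$ and constant error at the top, and invoke the Razborov--Smolensky degree bound. Where you genuinely differ is in the proof of that key lemma. The paper interpolates $f-c$ only on the radius-$k$ Hamming ball (degree $k$) and multiplies by $1-Q$, where $Q$ is a probabilistic polynomial for the threshold $\thr^k$ of degree $O(\sqrt{k\log(1/\eps)}+\log(1/\eps))$ imported as a black box from prior work; you instead interpolate on the larger radius-$D$ ball, $D=O(k+\log(1/\delta))$, and build the threshold indicator from scratch via a random partition with mod-$p$ block summaries. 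Your route is more elementary and self-contained, but it is also where the gaps sit.

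First, the concentration step for $|y|>D$ does not follow from ``Chernoff/bounded-differences over the random blocking'' as stated. The count $X=\#\{\ell:\sum_{i\in B_\ell}y_i\not\equiv 0\ (\mathrm{mod}\ p)\}$ is a function of the $|y|$ independent block assignments of the ones, each with Lipschitz constant $2$, so McDiarmid gives only $e^{-\Omega(N^2/|y|)}$; this is weaker than the required $e^{-\Omega(N)}$ already for $|y|\gg N$ and vacuous for $|y|\gg N^2$. This is fixable (e.g., condition on the placement of all but $D$ of the ones and apply bounded differences to the remaining $\Theta(N)$ independent assignments, after checking that $\Pr[b+\mathrm{Bin}(D,1/N)\not\equiv 0\ (\mathrm{mod}\ p)]\ge c_1(p)>0$ uniformly in the conditioned base count $b$), but the estimate as written would fail; you also never justify the uniform expectation bound $\E[X]\ge c_1N$ over all $|y|>D$. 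The paper sidesteps all of this by citing the known threshold polynomial.

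Second, the worry you flag about a $\g(k)$ output gate is not a formality: it breaks the stated exponent in the regime $k=\Theta(n^{1/2d})$. With a $\g(k)$ top gate the composed degree is $O(k)\cdot(O(k+\log s))^{d-1}$, not $(O(k+\log s))^{d-1}$, and the resulting necessary condition $k(k+\log s)^{d-1}\ge\Omega(\sqrt n)$ is already satisfied by $\log s=Cn^{1/2d}$ for a large constant $C$, since $k^{d}=\Theta(\sqrt n)$ there; so one can only conclude $s\ge 2^{\Omega(n^{1/2d})}$, not $2^{\Omega(n^{1/2(d-1)})}$. (The paper's own derivation ``solves for $s$'' through the bound $(k+\log n)(k+\log(ns))^{d-1}\ge\Omega(\sqrt n)$ and is terse on exactly this point.) To recover the claimed exponent you need either $k$ small enough that the top-gate factor is $n^{o(1)}$, or a separate argument for circuits whose output gate is a $\g(k)$ gate; ``the hypothesis already absorbs it'' is not enough.
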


Notably, this lower bound \emph{matches} the best-known bound for depth-$d$ $\AC^0[p]$. 

\paragraph{Is $k=O(n^{1/2d})$ optimal?} 
The locality $k = O(n^{1/2d})$ in \cref{thm:intro-gc0-lower-bound} is optimal up to a factor of $2$ in the exponent; specifically, there is a gap between $1/2d$ and $1/d$.
This is because $\Mod_q$ over $n$ bits can be computed by a depth-$d$ circuit of size $O(n^{1-1/d})$ using $\Mod_q$ gates of fan-in $n^{1/d}$---that is, by a $\GC^0(n^{1/d})$ circuit.

\paragraph{Why the na\"ive aproach fails}
It is natural to ask whether our lower bound for $\GC^0(k)[p]$ could be recovered by simulating such circuits within $\AC^0[p]$ and applying known lower bounds. This naïve approach, however, fails: it incurs an unavoidable blow-up in size and therefore yields much weaker bounds.

Suppose we have a depth-$d$ size-$s$ $\GC^0(k)[p]$ circuit with $s = 2^{\Theta(n^{\frac{1}{2(2d-1)}}/k)}$. To simulate it in $\AC^0[p]$, each $\G(k)$ gate could have fan-in up to $s$, and upon expressing each one as a CNF or DNF of size $s^{O(k)}$, we obtain a depth-$2d$, size-$2^{O(n^{\frac{1}{2(2d-1)}})}$ $\AC^0[p]$ circuit. This blow-up is inherent: by a counting argument,\footnote{The number of CNFs/DNFs of size $t$ is at most $2^{nt}$, while a single $\G(k)$ gate of fan-in $s$ requires $s^{\Omega(k)}$ bits to specify. Thus, representing all such gates requires $t = s^{\Omega(k)}$.} there exists $\G(k)$ gates of fan-in $s$ requiring size $s^{\Omega(k)}$ when expressed as a CNF/DNF. Hence, any size-$s$ $\GC^0(k)[p]$ circuit including such a gate will have a size-$s^{O(k)}$ simulating circuit.

Known $\ac^0[p]$ lower bounds \cite{razborov1987lower,smolensky1987algebraic} imply depth-$2d$ size-$2^{O(n^{\frac{1}{2(2d-1)}})}$ $\ac^0[p]$ circuits cannot compute majority. Combining this with our simulation implies a size $s=2^{\Theta(n^{\frac{1}{2(2d-1)}}/k)}$ lower bound for $\GC^0(k)[p]$, which is far weaker than our exponential bound in \cref{thm:intro-gc0-lower-bound} due to the $1/k$ factor in the exponent. 
Even for constant $k$ this is weaker than our $2^{\Omega(n^{1/2d})}$ bound, and when $k \ge n^{1/(4d-2)}$ it yields no nontrivial bound at all. By contrast, \cref{thm:intro-gc0-lower-bound} gives $2^{\Omega(n^{1/2(d-1)})}$ lower bounds for all $k \le O(n^{1/2d})$, bypassing the simulation bottleneck entirely.

One might also ask whether saving on the depth blow-up from $d$ to $2d$ could salvage the simulation. The answer is no: the real obstacle is the size blow-up, which persists regardless of depth.\footnote{A size-$t$ $\AC^0[p]$ circuit is describable in $O(t^2\log t)$ bits. As a $\G(k)$ gate of fan-in $s$ requires $s^{\Omega(k)}$ bits to specify, it follows there exists $\G(k)$ gates that require size $s^{\Omega(k)}$ $\AC^0[p]$ circuits (regardless of depth). } For completeness, we note that some depth reduction is possible, but it seems challenging to avoid some constant factor blow-up in depth. If our $\GC^0(k)$ circuit had no $\Mod_p$ gates, expanding even layers into CNFs and odd layers into DNFs collapses to depth $d+1$. In the presence of $\Mod_p$ gates, a similar argument yields depth $3d/2$, and whether further collapse to $d+1$ is possible is, to the best of our knowledge, a challenging open problem.\footnote{For example, consider a depth-$d$ $\GC^0(k)[p]$ circuit where even layers are $\Mod_p$ gates and odd layers are $\G(k)$ gates not in $\G(k-1)$. Expanding the $\G(k)$ gates does not obviously allow collapse due to the sandwiching layers of $\Mod_p$ gates, leading to depth $3d/2$.} In any case, even with such reductions, the $s^{O(k)}$ size blow-up rules out recovering our bounds through a na\"ive simulation.

\subsubsection{Related Classical Results}

We now outline the key ingredients in the proof of \cref{thm:intro-gc0-lower-bound}, along with our results on the algorithmic method, PAC learning of $\GC^0(k)[p]$ circuits, and a new multi-output multi-switching lemma for $\GC^0(k)$.

\paragraph{Proof Ingredients for \cref{thm:intro-gc0-lower-bound}}
The key lemma in our argument is to show that any $\g(k)$ gate can be computed by a probabilistic polynomial of extremely low degree (\cref{def:prob-poly}). 

\begin{lemma}[Restatement of \cref{lem:gateapprox}]
\label{lem:intro-gateapprox}
    For any prime $p$ and $\g(k)$ gate $G$ of fan-in $n$, there is an $\eps$-probabilistic $\F_p$-polynomial of degree $O(k + \log(1/\eps))$ computing $G$.
\end{lemma}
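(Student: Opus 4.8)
The plan is to peel off the two regimes of the $\g(k)$ gate $G$ --- the Hamming ball $B_k=\{x\in\bitz^n:\abs{x}\le k\}$, where $\abs{x}$ denotes Hamming weight and $G$ agrees with some arbitrary $f$, and its complement, where $G$ equals a fixed constant $c\in\bitz$ --- and handle them with a degree-$k$ \emph{exact} polynomial and a low-degree probabilistic polynomial for a Hamming-weight threshold, respectively. First I would note that the restriction $f|_{B_k}$ is computed exactly by a multilinear $\F_p$-polynomial $P$ of degree at most $k$: the inclusion matrix indexed by subsets $S\subseteq[n]$ with $\abs{S}\le k$, with $(S,T)$-entry $\mathbf{1}[T\subseteq S]$, is unitriangular under the size ordering, hence invertible over $\Z$ and over $\F_p$, so evaluation on $B_k$ is a bijection from degree-$\le k$ multilinear polynomials onto functions $B_k\to\F_p$; let $P$ be the preimage of $f|_{B_k}$. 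Then the identity $G(x)=c+(P(x)-c)\cdot\mathbf{1}[\abs{x}\le k]$ holds on all of $\bitz^n$ (on $B_k$ it reads $c+(f(x)-c)=f(x)$; off $B_k$ the indicator kills the garbage value of $P$). Consequently, given any $\eps$-probabilistic $\F_p$-polynomial $\mathbf{Q}_I$ of degree $D$ for the threshold $I(x):=\mathbf{1}[\abs{x}\le k]$, the polynomial $c+(P-c)\cdot\mathbf{Q}_I$ is an $\eps$-probabilistic $\F_p$-polynomial for $G$ of degree at most $k+D$. So the lemma reduces to producing $\mathbf{Q}_I$ with $D=O(k+\log(1/\eps))$.

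For the threshold, set $W:=C\,(k+\log(1/\eps))$ for a large absolute constant $C$ and split by weight at $W$. On weights $0,1,\dots,W$ the function $I$ depends only on $\abs{x}$, so it equals an $\F_p$-linear combination $P_W$ of the elementary symmetric polynomials $e_0,\dots,e_W$ (equivalently, the truncated Pascal matrix $\big[\binom{w}{j}\big]_{0\le w,j\le W}$ is unitriangular, hence invertible over $\F_p$); thus $\deg P_W\le W$ and $P_W(x)=I(x)$ whenever $\abs{x}\le W$. On weights above $W$ we have $I(x)=0$, and the key point is that $\mathbf{1}[\abs{x}>W]$ --- a Hamming-weight threshold at value $W{+}1=\Theta(k+\log(1/\eps))$ with error parameter $\eps$, where $\log(1/\eps)=O(W)$ --- has an $\eps$-probabilistic $\F_p$-polynomial $\mathbf{Q}_{\mathrm{tail}}$ of degree $O(W)$. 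Then set $\mathbf{Q}_I:=(1-\mathbf{Q}_{\mathrm{tail}})\cdot P_W$: for $\abs{x}\le W$ it equals $P_W(x)=I(x)$ with probability at least $1-\eps$ (indeed deterministically when $k<\abs{x}\le W$, since $P_W$ vanishes there), and for $\abs{x}>W$ it equals $0=I(x)$ with probability at least $1-\eps$. Hence $\mathbf{Q}_I$ has error $\le\eps$ and degree at most $W+\deg\mathbf{Q}_{\mathrm{tail}}=O(k+\log(1/\eps))$, as needed.

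The two invertibility facts and the gluing identities are routine. The real work is the degree bound for $\mathbf{Q}_{\mathrm{tail}}$: expanding ``$\abs{x}>W$'' as an $\OR$ over $(W{+}1)$-subsets and applying the textbook probabilistic $\OR$ yields only degree $O(W\log(1/\eps))$ --- a multiplicative blow-up on the degree-$(W{+}1)$ monomials, which is precisely what kills the intended $\GC^0(k)[p]$ lower bound (the degree contributed per gate must be $O(k+\log(1/\eps))$, not $O(k\log(1/\eps))$). Getting the \emph{additive} bound is where I would invoke --- or, for a self-contained treatment, reprove --- the sampling-based construction of probabilistic polynomials for Hamming-weight thresholds: randomly restrict to a subset of coordinates at a rate tuned so the restricted weight concentrates to additive error $\eps$, reducing to a threshold of value $O(\log(1/\eps))$ (degree $O(\log(1/\eps))$ there), then glue the restricted answers back with a narrow exact window as above. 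The delicate point is balancing the restriction rate against the concentration tail so that the surviving threshold value --- hence the final degree --- stays $O(k+\log(1/\eps))$ uniformly; this is also the place where the intended regime $k=O(n^{1/2d})$ with $\log(1/\eps)$ polynomially bounded makes all of $k$, $\log(1/\eps)$, and $\sqrt{k\log(1/\eps)}$ fit within budget.
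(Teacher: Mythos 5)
Your proposal is correct and follows essentially the same route as the paper's proof of \cref{lem:gateapprox}: interpolate $f$ exactly on the radius-$k$ Hamming ball by a degree-$k$ polynomial (the paper's \cref{lem:interpolate}) and combine it with a probabilistic polynomial for a Hamming-weight threshold. The ingredient you defer as ``the real work'' is precisely what the paper imports as a black box (\cref{lem:thr-prob-poly}, from Srinivasan et al.): an $\eps$-probabilistic $\F_p$-polynomial for $\thr^k$ of degree $O(\sqrt{k\log(1/\eps)}+\log(1/\eps))$, which is $O(k+\log(1/\eps))$ by AM--GM. Since that result can be applied directly at threshold $k$, your intermediate exact window at $W=C(k+\log(1/\eps))$ built from elementary symmetric polynomials is unnecessary (though not wrong). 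One genuine, if small, repair is needed: \cref{def:prob-poly} requires every polynomial in the support to be \emph{proper}, and your $c+(P-c)\cdot\mathbf{Q}_I$ can take values outside $\{0,1\}$ on the error event (where $P$ carries garbage off the ball but the sampled indicator fails to vanish). The paper fixes this by raising the whole expression to the $(p-1)$-st power and invoking Fermat's little theorem (\cref{thm:fermat}), which costs only a constant factor in degree for constant $p$; you should do the same.
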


This upper bound is, in fact, optimal: 

\begin{lemma}[Restatement of \cref{lem:prob-poly-lower-bound}]\label{lem:intro-prob-poly-lower-bound}
There exists a $\g(k)$ gate that requires probabilistic degree $\Omega(k + \log(1/\eps))$.
\end{lemma}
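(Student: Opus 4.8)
The plan is to prove the two terms $\Omega(k)$ and $\Omega(\log(1/\eps))$ separately and then package them into a single legal $\g(k)$ gate. For the packaging, take the gate $G$ of fan-in $n = k + n_2$ (one may think of $n_2 = \Theta(\log(1/\eps))$, or set $n_2 = n-k$ for a prescribed fan-in) with coordinates split as $\{0,1\}^k \times \{0,1\}^{n_2}$, defined by $G(y, 0^{n_2}) = h(y)$ for a ``hard'' function $h\colon\{0,1\}^k\to\{0,1\}$ with $h(0^k)=0$, and $G(y,z)=1$ whenever $z\neq 0^{n_2}$. This is a valid $\g(k)$ gate: any input of Hamming weight larger than $k$ has $z\neq 0^{n_2}$ (since $|y|\le k$), so $G$ is the constant $1$ outside the radius-$k$ ball. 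Substituting $0^{n_2}$ into the $z$-block of an $\eps$-probabilistic $\F_p$-polynomial for $G$ gives an $\eps$-probabilistic polynomial of no larger degree for $h$, while substituting $0^k$ into the $y$-block gives one for $\OR_{n_2}$ (here we use $h(0^k)=0$). So it suffices to show (A) there is a $k$-bit function $h$ of $\eps$-probabilistic $\F_p$-degree $\Omega(k)$, and (B) $\OR_m$ has $\eps$-probabilistic $\F_p$-degree $\Omega(\log(1/\eps))$ whenever $m\ge\tfrac12\log_2(1/\eps)$. Note $\OR$ is a $\g(k)$ gate for every $k$, so both ingredients are genuinely of the required type.

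For (A), I would use a dimension count. If a Boolean $f$ has $\eps$-probabilistic degree $d$, then by averaging over the uniform distribution there is a single $\F_p$-polynomial of degree $\le d$ agreeing with $f$ on a $(1-\eps)$-fraction of $\{0,1\}^k$; composing it with $z\mapsto z^{p-1}$ turns it into a $\{0,1\}$-valued polynomial of degree $\le (p-1)d$ that still agrees with $f$ there, and now the $(\le\eps 2^k)$-size disagreement set determines $f$. Hence the number of ``easy'' $f$ is at most $p^{\binom{k}{\le (p-1)d}}\cdot\binom{2^k}{\le\eps 2^k}\le 2^{\binom{k}{\le (p-1)d}\log_2 p + H(\eps)2^k}$, which is below $2^{2^k-1}$ as soon as $\binom{k}{\le (p-1)d}\log_2 p<(1-H(\eps))2^k$; since $1-H(\eps)>0$ for $\eps\le 1/3$ and $\binom{k}{\le t}/2^k=\Pr[\mathrm{Bin}(k,\tfrac12)\le t]$, this forces $(p-1)d = k/2-\Omega(\sqrt k)$, i.e.\ $d=\Omega(k)$ (with the constant depending on the fixed prime $p$). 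At least half of all Boolean functions vanish at $0^k$, so we can take such an $h$ with $h(0^k)=0$; choosing $h$ hard at $\eps=1/3$ makes it hard for all smaller $\eps$ too, by monotonicity of probabilistic degree in the error.

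For (B), I would argue directly. Suppose $\OR_m$ has an $\eps$-probabilistic $\F_p$-polynomial of degree $d$ with $m\ge d+1$. Draw $S\subseteq[m]$ uniformly of size $d+1$, then $T\subseteq S$ uniformly, and feed in the indicator vector $\mathbf 1_T$; since $\OR(\mathbf 1_T)=[T\neq\emptyset]$, picking a polynomial in the support correct on a $(1-\eps)$-fraction of this product distribution and applying Markov over $S$ gives a fixed degree-$(\le d)$ polynomial $Q$ and a set $S$ with $|S|=d+1$ such that the restriction $Q\vert_S$ agrees with $\OR_{d+1}=1-\prod_{i\in S}(1-y_i)$ on a $(1-\sqrt\eps)$-fraction of $\{0,1\}^{d+1}$. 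But $\OR_{d+1}$ is multilinear of degree exactly $d+1$, so $\OR_{d+1}-Q\vert_S$ is a nonzero multilinear polynomial of degree $d+1$ in $d+1$ variables; by the standard bound that a nonzero multilinear polynomial of degree $e$ is nonzero on at least $2^{m-e}$ points of $\{0,1\}^m$, it is nonzero on at least $2^{(d+1)-(d+1)}=1$ point, so $Q\vert_S$ disagrees with $\OR_{d+1}$ on at least one of the $2^{d+1}$ points. This forces $\sqrt\eps\ge 2^{-(d+1)}$, i.e.\ $d\ge\tfrac12\log_2(1/\eps)-1=\Omega(\log(1/\eps))$.

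Combining (A) and (B) through $G$ gives $\eps$-probabilistic degree at least $\max\{\Omega(k),\Omega(\log(1/\eps))\}=\Omega(k+\log(1/\eps))$, matching the upper bound of \cref{lem:intro-gateapprox}. I expect (A) to be the main obstacle: the naive count (a low-degree polynomial plus an arbitrarily-chosen $\eps$-fraction of modifications) already exceeds $2^{2^k}$ once $\eps$ is close to $1/3$, so one really needs the reduction to $\{0,1\}$-valued polynomials via $z\mapsto z^{p-1}$ to keep the estimate under control; this only costs the $p$-dependent constant that the statement already hides. A secondary point of care is making the two hard instances coexist inside one legal $\g(k)$ gate, which is why the $\OR$ instance --- consistent with the constant $G$ is forced to output on high-weight inputs --- is placed on the $z$-block rather than a second copy of $h$.
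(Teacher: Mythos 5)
Your proof is correct and follows essentially the same strategy as the paper's: a counting argument over arbitrary behavior inside the radius-$k$ ball for the $\Omega(k)$ term, and the full $\F_p$-degree of an $\OR$-type function on a $\Theta(\log(1/\eps))$-bit subcube (where even a single error already costs more than $\eps$) for the $\Omega(\log(1/\eps))$ term. The differences are only presentational: you package the two hard sub-instances into a semi-explicit two-block gate and count functions, whereas the paper takes a uniformly random $\g(k)$ gate, frames the bound via Yao's minimax with two hard distributions, and union-bounds over low-degree polynomials --- the same counting, dualized.
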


The tightness of our degree bound in \cref{lem:intro-gateapprox} is crucial for obtaining $\GC^0(k)[p]$ lower bounds that \emph{match} the best-known $\AC^0[p]$ lower bounds. Anything even slightly suboptimal would not suffice! 
For example, had the degree been modestly larger---say, $O(k \log(1/\eps))$---the resulting lower bound in \cref{thm:intro-gc0-lower-bound} would degrade with increasing $k$.

 We use \cref{lem:intro-gateapprox} to prove that $\GC^0(k)[p]$ can be approximated by proper $\F_p$ polynomials (i.e., polynomials that have Boolean outputs when the inputs are Boolean, see \cref{def:proper-polynomials}).

\begin{theorem}[Restatement of \cref{thm:probpoly}]
\label{thm:intro-probpoly}
Let $p$ be a prime.
For any constant-depth-$d$ size-$s$ $\GC^0(k)[p]$ circuit, there exists an proper polynomial $q(x)\in\F_p[x_1,\dots, x_n]$ with \[\deg(q)\le O\left(k^d + \log^{d-1}s\right)\]
such that 
\[\Pr_{x\sim U_n}[q(x) \neq  C(x)]\le 0.1.\]
\end{theorem}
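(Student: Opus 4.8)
The plan is to run the Razborov--Smolensky gate-by-gate construction of an approximating probabilistic polynomial, using \cref{lem:gateapprox} as the one new ingredient needed to handle $\g(k)$ gates, and then to derandomize. First I would fix a per-gate error budget and, for every gate of $C$, write down an $\eps'$-probabilistic $\F_p$-polynomial computing it: a $\NOT$ gate becomes an exact degree-$1$ polynomial; a $\Mod_p$ gate becomes the exact (deterministic) polynomial $1-\bigl(\sum_i x_i\bigr)^{p-1}$ of degree $p-1=O(1)$; an $\AND$ or $\OR$ gate of fan-in at most $s$ becomes the classical Razborov probabilistic polynomial of degree $O\bigl((p-1)\log(1/\eps')\bigr)=O(\log(1/\eps'))$; and a $\g(k)$ gate becomes, by \cref{lem:gateapprox}, a probabilistic polynomial of degree $O\bigl(k+\log(1/\eps')\bigr)$. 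Each of these is a proper polynomial (\cref{def:proper-polynomials}) -- it outputs a value in $\{0,1\}$ whenever its inputs lie in $\{0,1\}$ -- and properness is preserved under substitution, so the polynomial obtained by composing them along $C$ is also proper.

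Next I would compose these gate polynomials from the input literals upward, sampling the randomness of each gate independently, and then derandomize. Writing $\delta_i$ for the maximum degree reached at level $i$ (input literals at level $0$, so $\delta_0=1$), substitution gives $\delta_i\le \delta_{i-1}\cdot\max_{g\text{ at level }i}\deg(Q_g)$. A union bound over the at most $s$ gates shows that, for each fixed input $x$, the resulting probabilistic polynomial $P$ disagrees with $C(x)$ with probability at most $\sum_{\text{gates }g}\eps'$, so taking $\eps'$ of order $1/s$ makes this at most $0.1$; swapping the two expectations, $\E_{\text{coins}}\bigl[\Pr_{x\sim U_n}[P(x)\ne C(x)]\bigr]\le 0.1$, and fixing the coins to a minimizer yields a single deterministic proper polynomial $q$ with $\Pr_{x\sim U_n}[q(x)\ne C(x)]\le 0.1$.

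The delicate part is the degree bookkeeping: the crude estimate, with every one of the $d$ levels contributing a factor $O(k+\log s)$, only gives $O\bigl((k+\log s)^d\bigr)$, which is too weak. To land on the claimed $O\bigl(k^d+\log^{d-1}s\bigr)$ I would (i) spend only a constant fraction of the error budget on the output gate, so that an $\AND$/$\OR$/$\Mod_p$ output gate contributes a factor $O(1)$ and a $\g(k)$ output gate a factor $O(k)$ with no $\log s$, and (ii) account for the $\AND$/$\OR$/$\Mod_p$ sub-structure using the classical analysis, which contributes only logarithmic factors and never the parameter $k$; this confines the $k$-dependence to the at most $d$ $\g(k)$-layers and keeps the exponent of $\log s$ at $d-1$. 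I expect this parameter tracking to be the main source of friction in writing the proof, but I emphasize that the \emph{substantive} obstacle has already been cleared in \cref{lem:gateapprox}: it is essential that a $\g(k)$ gate has probabilistic degree $O\bigl(k+\log(1/\eps)\bigr)$ rather than, say, $O\bigl(k\log(1/\eps)\bigr)$, since the latter would make each $\g(k)$-layer cost a $\Theta(k\log s)$ factor and inflate the final degree to $\Theta\bigl((k\log s)^d\bigr)$ -- far too large to recover the state-of-the-art $\AC^0[p]$ lower bound in \cref{thm:intro-gc0-lower-bound}.
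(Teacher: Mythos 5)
Your proposal matches the paper's proof of \cref{thm:probpoly} essentially step for step: associate a probabilistic $\F_p$-polynomial to each gate, with \cref{lem:gateapprox} doing the work for the $\g(k)$ gates; allot a constant error budget to the output gate and error $\Theta(1/s)$ to every other gate; compose along the circuit with a union bound; and derandomize by averaging over the randomness and fixing a good draw. Two small corrections. The exact polynomial for $\Mod_p$ should be $\bigl(\sum_i x_i\bigr)^{p-1}$ rather than $1-\bigl(\sum_i x_i\bigr)^{p-1}$, since the paper's convention is that $\Mod_p$ outputs $0$ precisely when the Hamming weight is $\equiv 0\pmod p$; you are right to take the $(p-1)$-st power at all, though, as the paper's own proof substitutes $P_G(x)=\sum_i x_i$ for the $\Mod_q$ gate, which is not a proper polynomial over $\F_q$ when $q>2$, so your version is what actually keeps the composed polynomial proper. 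Your separate treatment of $\AND/\OR$ via the classical Razborov polynomial is harmless; the paper just folds $\AND$ and $\OR$ into $\g(k)$, since they are $\g(k)$ gates for every $k$.

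On the degree bookkeeping, you are right to sense trouble, but neither your strategies (i)--(ii) nor the paper's own argument reach $O(k^d+\log^{d-1}s)$. What the composition actually yields at constant error is $O\bigl(k(k+\log s)^{d-1}\bigr)$, exactly as written in \cref{thm:probpoly}. Your idea (i) is what drops the exponent on $(k+\log s)$ from $d$ to $d-1$; but idea (ii) cannot remove the cross terms, because every non-output $\g(k)$ gate still needs error $\Theta(1/s)$ and hence contributes a full factor $\Theta(k+\log s)$, so a circuit whose $d$ layers all consist of $\g(k)$ gates already forces degree $\Theta\bigl(k(k+\log s)^{d-1}\bigr)$. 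The form $O(k^d+\log^{d-1}s)$ is strictly smaller when $1\ll k\ll\log s$ and $d\ge 3$ (e.g.\ $k=\Theta(\sqrt{\log s})$ gives $k(k+\log s)^{d-1}=\Theta(\log^{d-1/2}s)$ versus $k^d+\log^{d-1}s=\Theta(\log^{d-1}s)$), so the bound in this restatement is an over-tightened simplification; the target you should be proving, and the one used downstream, e.g.\ in \cref{thm:cor-bound-gc0p}, is the $O\bigl((k+\log(1/\eps))(k+\log(s/\eps))^{d-1}\bigr)$ bound of \cref{thm:probpoly}.
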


Combining \cref{thm:intro-probpoly} with the well-known fact that any $\F_p$ polynomial approximating $\MAJ$ or $\Mod_q$ must have degree $\Omega(\sqrt{n})$ yields our \cref{thm:intro-gc0-lower-bound}.

\begin{remark}\label{remark:explicit-lower-bound}
\Cref{thm:intro-gc0-lower-bound} gives the least restricted class of non-monotone circuits for which we have exponential-size lower bounds for an explicit function.
In particular, the result applies to $\GC^0(k)[p] \cap \TC^0$. 
Consequently, $\GC^0(k)[p] \cap \TC^0$ currently represents the largest subclass of $\TC^0$ for which we have superpolynomial-size lower bounds. 
As a concrete example, consider $\AC^0[2]$ augmented with $\thr_k$ gates, which output $1$ if the Hamming weight of the input exceeds $k$, and $0$ otherwise.
This class lies within $\GC^0(k)[p] \cap \TC^0$, and our result also yields exponential lower bounds against it.
\end{remark}

\paragraph{Algorithmic Method}
In a celebrated result, Williams \cite{wil14acc0} used the algorithmic method to prove that there are languages in $\e^\NP$ and $\NEXP$ that require exponential-size $\ACC^0$ circuits. 
Recall that $\e^\NP$ is the class of languages that can be decided by a Turing machine in time $2^{O(n)}$ with access to an $\NP$ oracle.
In this work, we prove that there are languages in $\e^\NP$ that require exponential-size $\GCC^0(k)$ circuits, where $\GCC^0(k) \coloneqq \bigcup_{m \in \N} \GC^0(k)[m]$.\footnote{A similar result can be shown for $\NEXP$, but we focus on $\e^\NP$ because we get a stronger size-depth tradeoff.}

\begin{theorem}[{$\e^\NP \not\subseteq \GCC^0(k)$}, Restatement of \cref{thm:main-gcc0}]\label{thm:intro-main-gcc0}
    For every constant $d$, there is a $\delta >0$ such that for all $k \le O(n^{\delta/\log n})$, there is language in $\e^\NP$  that fails to have $\GCC^0(k)$ circuits of depth $d$ and size $\exp\left(\Omega(n^\delta/k) \right)$.
 \end{theorem}

As for $\GC^0(k)[p]$, one might again consider expanding a depth-$d$, size-$s$ $\GCC^0(k)$ circuit into a depth-$2d$, size-$s^{O(k)}$ $\ACC^0$ circuit by converting each $\g(k)$ gate into a CNF. Applying Williams’ lower bound in this setting would yield strictly weaker results than our \cref{thm:intro-main-gcc0}; we elaborate on this in \cref{subsec:ryan-williams}. Nonetheless, our current bound still incurs a $1/k$ loss in the exponent, and it remains an open question whether $\ACC^0$ lower bounds can be lifted to $\GCC^0(k)$ \emph{without} such degradation.

For a circuit class $\mathsf{C}$, the $\mathsf{C}$-\textsc{CircuitSAT} problem asks whether a given circuit $C \in \mathsf{C}$ has a satisfying input $x \in \{0,1\}^n$ with $C(x) = 1$.
The algorithmic method shows that faster-than-brute-force algorithms for $\mathsf{C}$-\textsc{CircuitSAT} yield circuit lower bounds for $\mathsf{C}$.
Accordingly, our lower bound for $\GCC^0$ follows from a fast algorithm for $\GCC^0(k)$-\textsc{CircuitSAT}, which we obtain by generalizing Williams’ algorithm for $\ACC^0$-\textsc{CircuitSAT}.

\begin{theorem}[{$\GCC^0(k)$-\textsc{CircuitSAT} algorithm}, Restatement of \cref{thm:gccfasteval}]
\label{thm:intro-gccfasteval}
For every $d>1$ and certain $\eps = \eps(d)$
, the satisfiability of depth-$d$ $\GCC^0(k)$ circuits with $n$ inputs and $2^{n^\eps/k}$ size can be determined in time $2^{n-\Omega(n^\delta/k)}$ for some $\delta > \eps$.
\end{theorem}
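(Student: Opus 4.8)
The plan is to port Williams' algorithmic-method argument for $\ACC^0$, which has two essentially independent pieces. (1) A \emph{structure theorem}: every depth-$d$ size-$s$ $\ACC^0$ circuit is equivalent to a ``sparse'' $\mathsf{SYM}\circ\mathsf{AND}$ circuit --- one symmetric gate fed by $N=2^{(\log s)^{f_{\ACC}(d/2)}}$ bottom $\AND$ gates --- obtained from the Yao/Beigel--Tarui recursion. (2) A \emph{fast satisfiability test} for $\mathsf{SYM}\circ\mathsf{AND}$ circuits built from Coppersmith-style rectangular matrix multiplication, which beats brute force whenever $N$ is suitably sub-exponential in the number of inputs. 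Piece (2) is a black box for us: it only sees $N$ and the input count, not how the $\AND$ gates arose. So the whole task reduces to redoing piece (1) for $\GCC^0(k)$ at the \emph{same} depth $d$ --- not through the lossy route of first expanding every $\g(k)$ gate into an $\ACC^0$ gadget, which sends depth $d$ to $2d$ and replaces $f_{\ACC}(d/2)$ by the far larger $f_{\ACC}(d)$ in the exponent tower.

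The one new ingredient is a cheap normal form for a $\g(k)$ gate. Let $G$ be a $\g(k)$ gate of fan-in $m$ with far-field constant $c$ and near-field function $f$, and for $v\in\{0,1\}^m$ put $L_v(y):=|v|-\sum_{i:v_i=1}y_i+\sum_{i:v_i=0}y_i$, a $\{-1,+1\}$-weighted linear form with constant term $|v|$. Since $L_v(y)\ge 0$ always, vanishing iff $y=v$,
\[
  G(y_1,\dots,y_m)\;=\;c\;\oplus\;\bigvee_{\substack{v\in\{0,1\}^m,\ |v|\le k\\ f(v)\neq c}}\bigl[\,L_v(y)=0\,\bigr].
\]
Thus, up to complementation, $G$ is an $\OR$ over only $\binom{m}{\le k}\le m^{O(k)}$ tests $[L_v(y)=0]$; each such test is both a single $\AND$ gate over the literals of $y$ (so $G$ is a depth-$2$ $\AC^0$ gadget of size $m^{O(k)}\le s^{O(k)}$) and a single linear-equality gate (a form that meshes with the algebraic $\Z_M$/CRT bookkeeping of the Yao/Beigel--Tarui recursion). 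The plan is to fold such gadgets into that recursion so that a $\g(k)$ gate contributes only an $s^{O(k)}$ factor to the \emph{size} parameter the recursion raises to an $f_{\ACC}(\cdot)$-th power, while contributing \emph{nothing} to the depth that fixes the tower's exponent --- i.e.\ treating a $\g(k)$-expansion as part of the $\AC^0$ sub-structure the conversion already digests, not as genuine extra depth. If this goes through, a depth-$d$ size-$s$ $\GCC^0(k)$ circuit converts to a $\mathsf{SYM}\circ\mathsf{AND}$ circuit with $N=2^{O((k\log s)^{f_{\ACC}(d/2)})}$ bottom gates: the same tower as the $\ACC^0$ case, the base merely scaled by $k$, the depth still $d$.

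Plugging in: with $s\le 2^{n^{\eps}/k}$ and $\eps=.99/f_{\ACC}(d/2)$ we get $k\log s\le n^{\eps}$, so the conversion gives $N\le 2^{(n^{\eps})^{f_{\ACC}(d/2)}}=2^{n^{.99}}$, comfortably inside the regime where Williams' $\mathsf{SYM}\circ\mathsf{AND}$ satisfiability test runs in time $2^{n-\Omega(n^{\delta})}$; carrying the factor of $k$ --- sitting inside the exponent of $N$ --- through that analysis yields both the tolerated size $2^{n^{\eps}/k}$ and the saving $2^{n-\Omega(n^{\delta}/k)}$ with $\delta>\eps$. I expect the main obstacle to be exactly this folding: showing that the depth-$2$ $\AC^0$ gadget for a $\g(k)$ gate can be absorbed into the Beigel--Tarui recursion with the tower exponent remaining $f_{\ACC}(d/2)$ rather than degrading to $f_{\ACC}(d)$ (what one gets by naively substituting gadgets and running the depth-$2d$ $\ACC^0$ conversion), and that the extra $s^{O(k)}$ blowup keeps every step deterministic and inside the claimed time, since a satisfiability algorithm cannot absorb two-sided error for free. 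Tracking the $k$-dependence carefully through Williams' running time --- to land on $n^{\delta}/k$ rather than merely ``some savings'' --- is the remaining bookkeeping.
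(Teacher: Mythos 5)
Your high-level diagnosis is right --- the point is to replace each $\g(k)$ gate at the cost of \emph{one} level of the Williams recursion, not two, so that the tower exponent stays at $f_{\ACC}(d/2)$ rather than degrading to $f_{\ACC}(d)$. But the mechanism you propose does not achieve this, and in fact it is the lossy path you are trying to avoid. Your identity
\[
  G(y)\;=\;c\;\oplus\;\bigvee_{\substack{v:\,|v|\le k\\ f(v)\ne c}} [L_v(y)=0]
\]
expands $G$ into an $\OR$ of $\AND$s, where \emph{both} the $\OR$ and the $\AND$s have unbounded fan-in (the $\AND$s have fan-in $m$, the full fan-in of $G$, not something polylogarithmic). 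The Beigel--Tarui/Williams recursion handles unbounded-fan-in $\AND$ and $\OR$ gates by spending one round each to replace them with a probabilistic $\Mod_p$ of \emph{small}-fan-in $\AND$s. Your gadget therefore costs two rounds of that recursion per $\g(k)$ gate, exactly doubling the effective depth --- this is the naive CNF/DNF expansion the paper explicitly flags as the $2d$-depth baseline, repackaged. Observing that $[L_v(y)=0]$ is a linear equality does not rescue this, since the recursion's modular bookkeeping is over fixed small primes and the Hamming-distance constraint is an integer constraint over a range of size $m$; and even if it were encodable, you would still owe the outer $\OR$ its own round.

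The paper's proof closes this gap with a genuinely different ingredient that your plan is missing: a randomness-efficient \emph{probabilistic} depth-$2$ circuit for a $\g(k)$ gate, built via Valiant--Vazirani isolation (the paper's \cref{thm:valvaz,thm:thrifty-raz-smo}) followed by Allender--Hertrampf depth reduction (\cref{lem:ahdepthred}), landing on \cref{thm:gk-depth-2}. That gadget is a single $\Mod_q$ gate over $\AND$s of fan-in $O(k^3\log^2 n\log(1/\eps))$, using only $O(k^2\log^2 n\log(1/\eps))$ random bits. Crucially this is \emph{the same shape} the recursion already produces when it digests an $\OR$ gate --- a $\Mod_p$ of small $\AND$s, probabilistic, with few random bits --- so substituting it in contributes exactly one round per $\g(k)$ gate, on par with $\AND/\OR$ gates, which is what keeps the depth parameter passed to \cref{thm:probcirctosym+} at $2d$ (hence $f(d)=f_{\ACC}(d/2)$ in the exponent tower). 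Your plan has no analogue of this: the deterministic DNF neither has small bottom fan-in nor the $\Mod_q$ top gate, and you do not supply a replacement for the isolation argument that makes both possible with few random bits. The ``folding'' step you identify as the main obstacle is not a matter of bookkeeping; it requires this new primitive, and without it your route reproduces $f_{\ACC}(d)$.
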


The key ingredient in our faster $\GCC^0$-\textsc{CircuitSAT} algorithm is a randomness-efficient probabilistic circuit for computing $\g(k)$ gates.
While our earlier probabilistic polynomial construction (from \cref{lem:intro-gateapprox}) yields degree-$O(k)$ approximations using $\poly(n)$ random bits, 
this construction uses too many random bits, and attempting to use it in the algorithmic method would yield a $\GCC^0$-\textsc{CircuitSAT} algorithm that is too slow. Furthermore, the randomness is used in a complicated manner, making it unclear how to convert it from a probabilistic polynomial into a probabilistic circuit.

To address this, we design a new probabilistic circuit of degree $O(k^2 \log^2 n)$ that computes any $\g(k)$ gate using only $O(k^2 \log^2 n)$ random bits. This generalizes a construction of Allender and Hertrampf \cite{allender1994uniform}, and trades a modest increase in degree for exponential savings in randomness, which is crucial for obtaining a faster algorithm for \textsc{CircuitSAT}.

\begin{theorem}[Restatement of \cref{thm:gk-depth-2}]\label{thm:intro-gk-depth-2}
    Let $q$ be a prime. Any $\G(k)$ gate on $n$ bits can be computed by a depth-2 probabilistic circuit using $O(k^2\log^2 n\log(1/\eps))$ random bits, and consists of a $\Mod_q$ of fan-in $2^{O(k^3\log^2n\log(1/\eps))}$ at the top, and $\AND$ gates of fan-in $O(k^3\log^2n\log(1/\eps))$ at the bottom layer. Furthermore, the circuit can be constructed in $2^{O(k^3\log^2 n\log(1/\eps))}$ time.
\end{theorem}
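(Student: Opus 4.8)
The plan is to split a $\g(k)$ gate into two pieces that each admit a cheap probabilistic $\F_q$-representation, combine them additively over $\F_q$, and read off the depth-$2$ circuit. Write $G\colon\bitz^n\to\bitz$ for the function the gate computes: $G(x)=c$ for a fixed $c\in\bitz$ whenever $\abs{x}>k$, and $G(x)=f(x)$ for an arbitrary $f$ whenever $\abs{x}\le k$. The first piece handles the light region: set $g(x)=\sum_{\abs{T}\le k}\hat f(T)\prod_{i\in T}x_i\in\F_q[x]$, where $\hat f(T)\in\F_q$ is the reduction mod $q$ of the integer $\sum_{U\subseteq T}(-1)^{\abs{T\setminus U}}f(\mathbf{1}_U)$. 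M\"obius inversion over the poset of sets of size $\le k$ gives $g(\mathbf{1}_S)=f(\mathbf{1}_S)$ for every $S$ with $\abs{S}\le k$, so $g$ is an $\F_q$-polynomial of degree $\le k$ with $n^{O(k)}$ monomials, and $g(x)\in\bitz$ on every light input. The identity to realise is, over $\F_q$,
\[
  G(x) \;=\; c+(g(x)-c)\cdot\mathbf{1}[\abs{x}\le k],
\]
whose right-hand side, whenever the indicator is evaluated correctly, lies in $\bitz$ and equals $G(x)$ ($g(x)=f(x)$ on light inputs, $c$ on heavy ones). Moreover, a depth-$2$ circuit with a $\Mod_q$ on top of $\AND$ gates computes precisely $x\mapsto\mathbf{1}[P(x)\not\equiv 0\bmod q]$, where $P\in\F_q[x]$ is the sum of the monomials computed by the bottom $\AND$ gates (counted with multiplicity); so any $\F_q$-polynomial of degree $D$ with $M$ monomials is computed by such a circuit with bottom fan-in at most $D$ and top fan-in at most $(q-1)M$. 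It therefore suffices to present the right-hand side above as a single $\F_q$-polynomial of small degree with few monomials.

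The only real work is a randomness-efficient probabilistic polynomial for $\mathbf{1}[\abs{x}\le k]$; for this I would run the Razborov--Smolensky construction with derandomised coefficients. Draw $\ell$ independent samples $c^{(1)},\dots,c^{(\ell)}$, each from an explicit $\tfrac14$-biased distribution over $\F_q^{\binom{n}{k+1}}$ (seed length $O(\log\binom{n}{k+1})=O(k\log n)$ per sample), put $R^{(s)}(x)=\sum_{\abs{U}=k+1}c^{(s)}_U\prod_{i\in U}x_i$, and take the gadget to be $\prod_{s=1}^{\ell}\bigl(1-R^{(s)}(x)^{q-1}\bigr)$. If $\abs{x}\le k$ then each $R^{(s)}(x)$ is an empty sum, hence $0$, and the gadget equals $1$ deterministically. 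If $\abs{x}\ge k+1$ then $R^{(s)}(x)=\langle c^{(s)},v_x\rangle$ for the fixed nonzero $\bitz$-vector $v_x=(\mathbf{1}[U\subseteq\supp(x)])_{\abs{U}=k+1}$, so by $\tfrac14$-bias $R^{(s)}(x)\ne 0$ with probability $\ge 1-(1/q+\tfrac14)\ge 1/4$, independently over $s$; wherever $R^{(s)}(x)\ne 0$, Fermat's little theorem gives $R^{(s)}(x)^{q-1}=1$, which zeroes the whole product, so the gadget equals $0=\mathbf{1}[\abs{x}\le k]$ except with probability $(3/4)^{\ell}$. Taking $\ell=O(\log(1/\eps))$ makes this at most $\eps$. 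Each factor $1-R^{(s)}(x)^{q-1}$ has degree $(q-1)(k+1)$ and $n^{O(k)}$ monomials, so the gadget has degree $(q-1)(k+1)\ell=O(k\log(1/\eps))$, has $n^{O(k\log(1/\eps))}$ monomials, and uses $O(k\log n\log(1/\eps))$ random bits in total.

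Substituting the gadget into the displayed identity and multiplying $g(x)-c$ through the product yields a single $\F_q$-polynomial $H$ of degree $k+(q-1)(k+1)\ell=O(k\log(1/\eps))$ with $n^{O(k\log(1/\eps))}$ monomials. By the first paragraph, $H$ is computed by a $\Mod_q\circ\AND$ circuit of bottom fan-in $O(k\log(1/\eps))$ and top fan-in $2^{O(k\log n\log(1/\eps))}$, and this circuit computes $\mathbf{1}[H(x)\not\equiv 0\bmod q]$, which on every light input equals $H(x)=g(x)=f(x)=G(x)$ and on every heavy input equals $c=G(x)$ unless all $R^{(s)}(x)$ vanish (probability at most $\eps$); hence it computes $G$ with error at most $\eps$ on every input. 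The circuit is produced by evaluating the $\tfrac14$-biased generators on their seeds, computing the $\hat f(T)$ from the truth table of $f$ on the $n^{O(k)}$ light inputs, and expanding $H$ into monomials, all in time $2^{O(k\log n\log(1/\eps))}$. These bounds comfortably meet those claimed in the statement; the somewhat larger $\poly(k,\log n)$ factors appearing there are what one obtains from the alternative route the authors indicate, which generalises Allender--Hertrampf's hashing/isolation argument and is perhaps more transparent.

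The obstacle I expect to matter is randomness efficiency, not the algebra. A probabilistic polynomial of near-optimal degree $O(k+\log(1/\eps))$ for a $\g(k)$ gate is already available, but it uses $\poly(n)$ random bits; the whole point here is to push the seed length down to $\poly(k,\log n,\log(1/\eps))$ while keeping the degree — hence the bottom $\AND$-fan-in — at the $\poly(k,\log n)\cdot\log(1/\eps)$ scale. The M\"obius piece contributes no randomness, so everything funnels into the threshold gadget, which with fully independent coefficients would need $\binom{n}{k+1}=n^{\Theta(k)}$ bits; it is precisely the $\tfrac14$-biased generator over $\F_q^{\binom{n}{k+1}}$ (or the Allender--Hertrampf isolation lemma) that simultaneously achieves $O(k\log n)$ bits per copy and a constant hitting probability at no cost in degree. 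A last point to get right is that $f$ on the light region must be represented exactly, by its M\"obius polynomial of degree $\le k$, so that $H(x)$ is Boolean on the good event and a bare $\Mod_q\circ\AND$ circuit — with no negation and no intermediate gates — suffices.
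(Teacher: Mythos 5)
Your proof is correct and meets (indeed beats) every bound in the statement, but it takes a genuinely different route from the paper's. The paper first builds a depth-$5$ probabilistic circuit (\cref{thm:thrifty-raz-smo}): the light region is handled by the same exact degree-$k$ interpolation you use, while the heavy region is detected via the Valiant--Vazirani isolation lemma --- $m = O(k\log n)$ random hash vectors in $\bitz^m$, hence $m^2 = O(k^2\log^2 n)$ random bits per trial --- and the resulting $\AND\circ\Mod_q\circ\AND\circ\Mod_q\circ\AND$ circuit is then collapsed to depth $2$ by two applications of the Allender--Hertrampf depth-reduction lemma (\cref{lem:ahdepthred}); that double collapse is exactly what produces the $k^3\log^2 n$ fan-in exponents in the statement. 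You instead detect the heavy region with a derandomized Razborov--Smolensky gadget: $\F_q$-linear combinations $R^{(s)}$ of the degree-$(k+1)$ elementary monomials with coefficients drawn from an explicit $\tfrac14$-biased space over $\F_q^{\binom{n}{k+1}}$, raised to the $(q-1)$-st power and multiplied over $O(\log(1/\eps))$ independent trials, after which the whole polynomial is expanded into monomials and the $\Mod_q\circ\AND$ circuit is read off directly. This avoids depth reduction entirely and yields strictly better parameters --- $O(k\log n\log(1/\eps))$ random bits, bottom fan-in $O(k\log(1/\eps))$ independent of $n$, top fan-in $2^{O(k\log n\log(1/\eps))}$ --- at the price of importing an explicit small-bias generator over $\F_q^N$ (e.g.\ the powering construction over an extension field), which you should cite, since both the constant hitting probability against every fixed nonzero test vector $v_x$ and the $O(k\log n)$ seed length per trial hinge on it. Your one-sided error analysis is also clean: each factor $1-R^{(s)}(x)^{q-1}$ is Boolean by Fermat's little theorem, so the gadget equals $1$ deterministically on light inputs, and the only failure mode is all $\ell$ linear forms vanishing simultaneously on a heavy input.
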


\paragraph{PAC Learning $\GC^0(k)[p]$}
Using a framework of Carmosino, Impagliazzo, Kabanets, and Kolokolova \cite{carmosino2016learning}, we give a quasipolynomial time learning algorithm for $\GC^0(k)[p]$ in the PAC model over the uniform distribution with membership queries (\cref{def:learning-model}). 

\begin{theorem}[{Learning $\GC^0(k)[p]$ in quasipolynomial time}, Restatement of \cref{thm:pac-learn}]\label{cor:intro-pac}
For every prime $p$ and $k = O(n^{1/2d})$, there is a randomized algorithm that, using membership queries, learns a given $n$-variate Boolean function $f \in \GC^0(k)[p]$ of size $s_f$ to within error $\eps$ over the uniform distribution, in time $\quasipoly(n, s_f, 1 / \eps)$. 
\end{theorem}

Using circuit lower bounds to obtain learning algorithms dates back to the seminal work of Linial, Mansour, and Nisan \cite{linial1993constant} where they gave a quasipolynomial time learning algorithm for $\AC^0$ in the PAC model over the uniform distribution (hereafter, the ``LMN algorithm''). 
One can interpret the LMN algorithm as exploiting the existence of a natural property that is useful against $\AC^0$ (in the sense of Razborov and Rudich \cite{razborov1994natural}, see \cref{def:natural-property}).

Carmosino, Impagliazzo, Kabanets, and Kolokolova \cite{carmosino2016learning} prove that for any circuit class $\mathsf{C}$ containing $\AC^0$, a natural property that is useful against $\mathsf{C}$ implies a quasipolynomial time learning algorithm for $\mathsf{C}$ in the PAC model over the uniform distribution with membership queries. 
It is not hard to show that our $\GC^0(k)[p]$ lower bound (\cref{thm:intro-gc0-lower-bound}) is natural, which implies \cref{cor:intro-pac}. 

\begin{theorem}[Informal version of \cref{thm:gc0-is-natural}]\label{thm:intro-gc0-is-natural}
   For every prime $p$ and $k = O(n^{1/2d})$, there is a natural property of $n$-variate Boolean functions that is useful against $\GC^0(k)[p]$ circuits of depth $d$ and of size up to 
   $\exp\left(\Omega(n^{1/2d})\right)$.
\end{theorem}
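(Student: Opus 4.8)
The plan is to mimic the way the classical Razborov--Smolensky $\AC^0[p]$ lower bound is seen to be a natural proof in the sense of Razborov and Rudich \cite{razborov1994natural}, substituting our polynomial-approximation theorem for $\GC^0(k)[p]$ (\cref{thm:intro-probpoly}) wherever the classical argument invokes its $\AC^0[p]$ analogue. Fix a prime $p$ and a constant depth $d$, and suppose $k\le c_1 n^{1/2d}$. I would choose a small constant $\gamma=\gamma(d,p)>0$, set the degree threshold $\Delta_n:=\gamma\sqrt n$, and take the candidate property to be
\[
\mathcal P_n \;:=\; \Bigl\{\, f\colon\{0,1\}^n\to\{0,1\} \;\Big|\; \forall\, q\in\F_p[x_1,\dots,x_n]\colon \ \deg q\le\Delta_n \ \Rightarrow\ \Pr_{x\sim U_n}\bigl[q(x)\ne f(x)\bigr]>0.1 \,\Bigr\}.
\]
I would then verify the three requirements of a natural property -- usefulness against the stated class, largeness, and constructivity -- and conclude by observing that $(\mathcal P_n)_n$ is precisely the object the framework of Carmosino, Impagliazzo, Kabanets and Kolokolova \cite{carmosino2016learning} consumes to output the learning algorithm.

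\emph{Usefulness} is immediate from \cref{thm:intro-probpoly}: a depth-$d$, size-$s$ $\GC^0(k)[p]$ circuit $C$ with $k\le c_1 n^{1/2d}$ and $s\le\exp(c_2 n^{1/2d})$ is computed to within error $0.1$ over the uniform distribution by a proper $\F_p$-polynomial of degree $O(k^d+\log^{d-1}s)=O(c_1^d\sqrt n+c_2^{d-1}n^{(d-1)/(2d)})\le\gamma\sqrt n=\Delta_n$, once $c_1,c_2$ are small enough in terms of $d$ and $p$ (using $n^{(d-1)/(2d)}\le\sqrt n$); hence $C\notin\mathcal P_n$. This is exactly where the ``no loss in parameters'' of \cref{thm:intro-probpoly} is essential: because the degree stays $O(\sqrt n)$ across the full range $k,\log s=O(n^{1/2d})$, a single fixed threshold $\Delta_n=\Theta(\sqrt n)$ serves for every circuit of the claimed size; conversely, taking $\gamma$ below the $\Omega(\sqrt n)$ lower bound on the approximate $\F_p$-degree of $\MAJ$ and $\Mod_q$ puts those functions into $\mathcal P_n$, so one recovers the worst-case bound \cref{thm:intro-gc0-lower-bound} for free.

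\emph{Largeness} is a counting bound. If $f\notin\mathcal P_n$ then $f$ agrees on a $\ge 0.9$ fraction of $\{0,1\}^n$ with some element of the $\F_p$-space $V_{\Delta_n}\subseteq\F_p^{2^n}$ spanned by the cube-evaluations of the multilinear monomials of degree $\le\Delta_n$; since $\dim_{\F_p}V_{\Delta_n}=\binom{n}{\le\Delta_n}\le n^{\Delta_n}=2^{O(\sqrt n\log n)}$ we get $|V_{\Delta_n}|=2^{2^{O(\sqrt n\log n)}}$, and for each fixed element of $V_{\Delta_n}$ there are at most $\binom{2^n}{\le 0.1\cdot 2^n}\,2^{0.1\cdot 2^n}\le 2^{0.6\cdot 2^n}$ Boolean functions agreeing with it on $\ge 0.9$ of the cube. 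Thus $|\{f\notin\mathcal P_n\}|\le 2^{2^{O(\sqrt n\log n)}+0.6\cdot 2^n}=2^{(0.6+o(1))2^n}$, so $\mathcal P_n$ contains a $1-2^{-\Omega(2^n)}$ fraction of all $n$-variate Boolean functions, comfortably above the required $2^{-O(n)}$ density (indeed any threshold $\Delta_n=o(n/\log n)$ would do here).

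\emph{Constructivity} is the only ingredient not peculiar to $\GC^0$, and I expect it to be the one point that must be cited rather than re-derived. Deciding $f\in\mathcal P_n$ from the $2^n$-bit truth table of $f$ is the question of whether that truth table lies within relative Hamming distance $0.1$ of the bounded-degree Reed--Muller-type code $V_{\Delta_n}$ -- exactly the computation whose $\poly(2^n)$-time solvability is what makes the Razborov--Smolensky $\AC^0[p]$ lower bound a natural proof \cite{razborov1994natural} (after the same routine reformulation of the approximation measure used there), and since this step never mentions the circuit class it transfers verbatim. With constructivity and largeness in hand $(\mathcal P_n)_n$ is natural; the usefulness calculation gives the advertised size $\exp(\Omega(n^{1/2d}))$; and because $\GC^0(k)[p]\supseteq\AC^0$, feeding $(\mathcal P_n)_n$ into \cite{carmosino2016learning} produces the quasipolynomial-time PAC-learning algorithm for $\GC^0(k)[p]$. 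The only real labor is bookkeeping the constants $\gamma,c_1,c_2$ so that the degree bound of \cref{thm:intro-probpoly} stays under $\Delta_n$ -- which is clean precisely because \cref{thm:intro-probpoly} is lossless.
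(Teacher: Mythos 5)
Your high-level strategy matches the paper's: take the combinatorial property underlying Razborov--Smolensky, observe that largeness and constructivity are circuit-class-independent, and prove usefulness against $\GC^0(k)[p]$ via \cref{thm:intro-probpoly}. The usefulness and largeness calculations you give are correct and in fact more explicit than what the paper writes. The problem is in the constructivity step, and it is a genuine gap rather than a citation to be chased down. The set $\mathcal P_n$ you define --- functions with relative Hamming distance $>0.1$ from every polynomial of degree $\le\Delta_n=\Theta(\sqrt n)$ --- is \emph{not} the combinatorial property Razborov and Rudich show to be $\NC^2$-computable. Deciding $f\in\mathcal P_n$ from the $2^n$-bit truth table is the nearest-codeword decision problem for the degree-$\sqrt n$ Reed--Muller code over $\F_p$ at relative radius $0.1$. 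Naive enumeration over the $p^{\binom{n}{\le\sqrt n}} = 2^{2^{\Theta(\sqrt n\log n)}}$ codewords is doubly exponential, hence nowhere near $\poly(2^n)$; and $0.1$ is far beyond the unique- and known list-decoding radii for $\mathrm{RM}(\sqrt n,n)$, whose minimum relative distance is $p^{-\Theta(\sqrt n)}$. No $\poly(2^n)$-time test for ``$f$ is $0.1$-far from all degree-$\sqrt n$ polynomials'' is available, and your parenthetical ``after the same routine reformulation'' is precisely where the missing argument lives.

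What Razborov and Rudich actually make constructive (and what the paper inherits via its citation to \cite[Section~3]{razborov1994natural} and \cite[Section~5]{carmosino2016learning}) is a rank-based surrogate $R_n$: roughly, Smolensky's argument shows that if $f$ has a good low-degree approximation then a certain $\F_p$-linear space built from $f$ and the low-degree multilinear monomials has deficient dimension, so the property ``this space has nearly full dimension'' is computable via Gaussian elimination in $\NC^2$, and one direction of Smolensky's argument gives $R_n\subseteq\mathcal P_n$. The fix to your write-up is small but necessary: declare the natural property to be $R_n$ rather than $\mathcal P_n$, take constructivity and largeness directly from \cite{razborov1994natural}, and then run exactly your usefulness paragraph using the containment $R_n\subseteq\mathcal P_n$ together with \cref{thm:intro-probpoly}. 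The constant-bookkeeping you do (showing $O(k^d+\log^{d-1}s)\le\gamma\sqrt n$ in the stated parameter regime) is the part genuinely specific to $\GC^0(k)[p]$, and it carries over unchanged.
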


\paragraph{A New Multi-Output Multi-Switching Lemma For $\GC^0(k)$}
In \cref{subsec:quantum-intro}, we describe new separations between quantum and classical circuits. 
One such separation relies on a new multi-switching lemma for $\GC^0(k)$ tailored to handle circuits with multiple outputs. 
The details of the switching lemma are quite technical, and we refer the interested reader to \cref{subsec:muli-switching} for details. 
The general switching lemma is stated in \cref{thm:multioutputlemma}.
We heavily rely on Kumar's multi-switching lemma \cite{kumar2023tight}, which we use in a black-box manner.

This strengthened switching lemma allows us to show that $\GC^0(k)$ circuits have exponentially small correlation with a particular search problem that can be solved by constant-depth quantum circuits. While a similar separation could be obtained using only Kumar’s switching lemma, our new version yields significantly stronger bounds on the correlation.

\subsubsection{Improved Quantum-Classical Separations}\label{subsec:quantum-intro}
A central goal in quantum complexity theory is to identify problems that are tractable for quantum computers but intractable for classical ones. 
One way to formalize this goal is to show that $\BQP$ (Bounded-Error Quantum Polynomial Time) strictly contains $\PTIME$ (Polynomial Time). 
Alas, even showing that $\PTIME$ is strictly contained in $\PSPACE$ is currently beyond the reach of complexity theory. 

One can separate $\BQP$ from $\PTIME$ \emph{conditionally}, for example, under the assumption that there is no polynomial-time algorithm for factoring integers \cite{shor1997polynomial,regev2024efficient}. 
There is also a long line of research that separates quantum and (randomized) classical computation in the black-box model \cite{bernstein1993quantum,simon1997power,aaronson2015forrelation}. 

Yet another option (and the one that is most relevant to this work) is to look at restricted models of computation. 
In a groundbreaking work, Bravyi, Gosset, and K\"onig \cite{bravyi2018quantum} exhibited a search problem that is solvable by $\QNC^0$ (constant-depth bounded-fan-in quantum circuits), but is hard for $\NC^0$ (constant-depth bounded-fan-in classical circuits). This is an \emph{unconditional separation} between a quantum and classical model of computation.

Since then, there has been a lot of progress \cite{watts2019exponential,legall2019average,coudron2019trading,grier2019interactive,bravyi2020quantum,grier2021interactive,grilo2024power}. We briefly summarize the state of the art prior to this work: there is a decision problem that separates $\BQLOGTIME$ (\cref{def:bqlogtime}) and quasipolynomial-size $\AC^0$ \cite{raz2022oracle}; a search problem that separates $\QNC^0$ and exponential-size $\AC^0$ \cite{watts2019exponential}; and a search problem that separates $\QNC^0\mathsf{/qpoly}$ and polynomial-size $\AC^0[p]$ for any prime $p$ \cite{watts2019exponential, grilo2024power}. 
Recall that $\QNC^0\mathsf{/qpoly}$ is the class of constant-depth bounded-fan-in quantum circuits that start with a quantum advice state, i.e., an input-independent quantum state of choice.
Grier and Schaeffer \cite{grier2019interactive} also obtain a separation between $\QNC^0$ and exponential-size $\AC^0[p]$ for an interactive problem.
Finally, Bravyi, Gosset, K\"onig, and Temamichel \cite{bravyi2020quantum} and Grier, Ju, and Schaeffer \cite{grier2021interactive} showed that these separations hold even when the quantum circuits are subject to certain types of noise.\footnote{Watts and Parham \cite{watts2024unconditionalquantumadvantagesampling} also studied unconditional separations for input-independent sampling problems. In this work, we focus on computational problems that have inputs and outputs.} 

Building on this line of work, we can subsume all previously known separations between quantum and classical circuits. 
In particular, we show that even if we allow arbitrary unbounded fan-in local circuits (i.e., $\GC^0$ and its extensions), these circuits are still not powerful enough to simulate constant-depth quantum circuits. 
We re-use the problems used to obtain the above quantum-classical separations; our contribution is to strengthen all of the lower bounds to hold for $\GC^0(k)$ or $\GC^0(k)[p]$.
All of our separations are exponential, meaning that the problems can be solved with polynomial-size quantum circuits but require exponential-size classical circuits.
Previously the best separation between $\QNC^0\mathsf{/qpoly}$ and polynomial-size $\AC^0[p]$ circuits. 
In the remainder of this subsection, we state our separations in more detail.

\paragraph{$\BQLOGTIME$ vs. $\GC^0$}
In \cref{subsec:raz-tal}, we exhibit a promise problem that separates $\BQLOGTIME$ from $\GC^0(k)$. 

\begin{theorem}[Restatement of \cref{cor:bqlogtime-not-in-gac0}]\label{cor:intro-bqlogtime-not-in-gac0}
There is a promise problem in $\BQLOGTIME$ (\cref{def:bqlogtime}) that is not solvable by constant-depth $\GC^0(k)$ for $k = \frac{O(n^{1/4d})}{(\log n)^{\omega(1)}}$ and size $\quasipoly(n)$. 
\end{theorem}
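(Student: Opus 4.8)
The plan is to realize the promise problem as (a rescaled version of) Raz and Tal's Forrelation distinguishing problem and to lift their $\AC^0$ lower bound to $\GC^0(k)$ by proving the corresponding Fourier-growth bound for $\GC^0(k)$ circuits. The problem is the one from \cref{subsec:raz-tal}: the input encodes a pair of strings drawn either from the uniform distribution $U$ or from the Forrelation distribution $\mathcal D$ (correlated, rounded Gaussians), and one must decide which. The $\BQLOGTIME$ upper bound (\cref{def:bqlogtime}) is taken verbatim from Raz and Tal \cite{raz2022oracle}: a quantum machine that prepares a uniform superposition, makes $O(1)$ oracle queries interleaved with Hadamard and phase gates, and measures runs in $O(\log n)$ time and succeeds with the required advantage. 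Hence the entire content is the classical lower bound.

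The first step is to reduce the classical lower bound to a Fourier-growth statement. Raz and Tal's key analytic lemma gives, for every $f\colon\bitz^n\to[-1,1]$,
\[
\Bigl|\E_{\mathcal D}[f]-\E_{U}[f]\Bigr|\ \le\ \sum_{\ell\ge 1}\gamma(n)^{\ell}\,L_{1,\ell}(f),
\]
where $L_{1,\ell}(f)=\sum_{|S|=\ell}\bigl|\widehat f(S)\bigr|$ is the level-$\ell$ $L_1$ Fourier weight and $\gamma(n)=\widetilde O(n^{-1/2})$. So it suffices to show that every depth-$d$, size-$s$ $\GC^0(k)$ circuit $C$ satisfies $L_{1,\ell}(C)\le A^{\ell}$ for a bound $A=A(k,d,s)$ with $\gamma(n)\,A=o(1)$; then the distinguishing advantage is $O(\gamma(n)A)=o(1)$, and a routine averaging argument (fix the instance pair achieving the worst-case error) promotes this into unsolvability of the promise problem.

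The second step — the heart of the argument — is the Fourier-growth bound for $\GC^0(k)$. Here I would invoke Kumar's switching lemma for $\GC^0(k)$ \cite{kumar2023tight} (the multi-output version \cref{thm:multioutputlemma} is available, but the single-output form suffices since the target is a single-bit promise problem) together with the standard random-restriction-plus-canonical-decision-tree method used to derive $\AC^0$ Fourier growth from H{\aa}stad's switching lemma. Iterating the $\GC^0(k)$ switching lemma across the $d$ layers of $C$ under a $p$-random restriction with $p\approx 1/\poly(k,\log s)$ — the $\poly(k)$ factor is forced because a restricted $\g(k)$ gate collapses to a shallow decision tree only once its surviving fan-in becomes comparable to $k$ — makes $C$, except with probability $2^{-\omega(n)}$, a decision tree of depth $O(\log s)$; and depth-$t$ decision trees have $L_{1,\ell}\le\binom{t}{\ell}\le t^{\ell}$. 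Tracking the restriction parameters through the usual bookkeeping yields
\[
L_{1,\ell}(C)\ \le\ \bigl(\poly(k)\cdot\polylog(s)\bigr)^{O(d)\,\ell},
\]
where the exponent carries an $O(d)$ rather than the $d-1$ available for $\AC^0$, reflecting the extra per-layer cost of switching $\g(k)$ gates.

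Finally, with $s=\quasipoly(n)$ we have $\polylog(s)=\polylog(n)$, and the requirement $\gamma(n)\,A=o(1)$ becomes $\bigl(\poly(k)\,\polylog(n)\bigr)^{O(d)}=o\bigl(\sqrt n/\polylog(n)\bigr)$, i.e.\ $k=O(n^{1/4d})/(\log n)^{\omega(1)}$ — exactly the range in the statement, the super-polylogarithmic slack being what is needed for the advantage to vanish rather than merely be bounded. The main obstacle is the Fourier-growth bound itself: one must get its base polynomial in $k$ (not, say, $2^{k}$) while keeping the dependence on $d$ linear in the exponent, since any worse dependence in either place would push the admissible $k$ below $n^{1/4d}$ and, for $\poly(n)$-size circuits, could destroy the separation entirely. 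Concretely, the delicate points are verifying that Kumar's switching lemma composes cleanly through layers of $\g(k)$ gates (not merely with $\g(k)$ gates confined to the bottom level) and that the resulting decision-tree depth can be taken to be $O(\log s)$ uniformly in $k$; these are precisely what lets the lift avoid the naive $\g(k)\mapsto$ CNF expansion, which would blow the size up by $n^{\Theta(k)}$ and ruin the parameters.
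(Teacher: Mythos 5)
Your high-level plan — reduce the advantage bound to a Fourier-growth statement for $\GC^0(k)$, then invoke Raz--Tal's analytic machinery — is the same approach the paper takes, but there are two points worth flagging.

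First, you propose to re-derive the $\GC^0(k)$ Fourier-growth bound from scratch via the switching lemma. The paper does not do this: it simply cites Kumar's Theorem 5.14 from \cite{kumar2023tight} (stated here as \cref{lemma:kumar-switching}), which already gives
$\sum_{|S|=\ell}|\widehat f(S)|\le C_1\bigl(C_2\,k(k+\log m)^{d-1}\bigr)^{\ell}$
for size-$m$ depth-$d$ $\GC^0(k)$ circuits. Notice the exponent on $(k+\log m)$ is exactly $d-1$, the same as in the $\AC^0$ bound with $\log m$ replaced by $k+\log m$ (plus one extra factor of $k$). Your assertion that the exponent must degrade to $O(d)$ because of ``extra per-layer cost of switching $\g(k)$ gates'' is wrong, and it contradicts the lossless-lift theme that the paper is built around: Kumar's switching lemma was designed precisely so that this loss does not occur. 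Your final parameter computation quietly assumes the $d-1$-type exponent anyway (otherwise you would not land exactly on $k=O(n^{1/4d})/\log^{\omega(1)}n$), so your argument is internally inconsistent: the Fourier-growth bound you state would give a strictly smaller admissible $k$ than what you then claim.

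Second, your Raz--Tal analytic step is stated imprecisely. The inequality you write, $|\E_{\mathcal D}[f]-\E_U[f]|\le\sum_{\ell\ge1}\gamma(n)^{\ell}L_{1,\ell}(f)$, is not what Raz--Tal prove. What they prove, and what the paper uses (\cref{lem:second-level}), is a bound of the form $|\E_{\mathcal D}[f]-\E_U[f]|\le 2\eps L/\sqrt N$, provided the \emph{second}-level Fourier weight of \emph{every restriction} $f_\rho$ is at most $L$. Only the level-$2$ weight appears, and the quantifier over restrictions is essential to the multi-step random-walk argument. In this setting that quantifier is free because restrictions of $\GC^0(k)$ circuits stay in the class, so Kumar's bound applies to every $f_\rho$ — but you need to say this, and you should be using the level-$2$ bound rather than a sum over all levels. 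Also, the multi-output multi-switching lemma (\cref{thm:multioutputlemma}) you mention as ``available'' plays no role here; it is used only for the $\QNC^0$ search-problem separation.

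With those two fixes — cite Kumar's Theorem 5.14 with its $d-1$ exponent rather than re-deriving a weaker bound, and replace your all-levels inequality with the second-level-for-all-restrictions lemma — your proposal becomes essentially the paper's proof.
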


By well-known reductions, this implies an oracle relative to which $\BQP$ is not contained in the class of languages decided by uniform $\GC^0$ circuit families.

\begin{corollary}[Restatement of \cref{thm:oracle-separation-bqp}]\label{thm:intro-oracle-separation-bqp}
There is an oracle relative to which $\BQP$ is not contained in the class of languages decidable by uniform families of circuits $\{C_n\}$, where for all $n \in \N$, $C_n$ is a size-$2^{n^{O(1)}}$ depth-$d$ $\GC^0(k)$ circuit with $k \in \frac{2^{n/4d}}{n^{\omega(1)}}$. 
\end{corollary}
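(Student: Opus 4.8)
The plan is to obtain this as a routine consequence of \cref{cor:intro-bqlogtime-not-in-gac0} via the standard transformation of a query-model $\BQLOGTIME$-versus-small-circuits separation into an oracle separation for $\BQP$ (the same route Raz and Tal use to separate $\BQP$ from $\PH$ relative to an oracle, going back to Aaronson). Fix the depth $d$, and let $\Pi$ be the promise problem from \cref{cor:intro-bqlogtime-not-in-gac0}: $\Pi \in \BQLOGTIME$, and on instances of length $N$ it is not solved by depth-$d$ $\GC^0(k)$ circuits of size $\quasipoly(N)$ with $k = O(N^{1/4d})/(\log N)^{\omega(1)}$. For each $n$ reserve a block $B_n$ of $2^n$ oracle addresses (say, the addresses carrying an $n$-bit scale tag), and write $y_n \in \bitz^{2^n}$ for the restriction of the oracle to $B_n$; the language to be separated is $L_O = \{1^n : y_n \text{ is a YES-instance of } \Pi\}$. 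On input $1^n$ a $\BQP^O$ machine runs the $\BQLOGTIME$ algorithm for $\Pi$ on $y_n$, answering each of its $\polylog(2^n) = \poly(n)$ random-access queries by a single oracle query into $B_n$; this is a $\poly(n)$-time bounded-error computation touching only scale $n$, so $L_O \in \BQP^O$ no matter how the $y_n$ are chosen (at "junk" scales we simply set $y_n$ to a fixed YES-instance so that $L_O$ is everywhere well defined).

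It remains to choose the $y_n$ so that no uniform family of depth-$d$, size-$2^{n^{O(1)}}$, locality-$2^{n/4d}/n^{\omega(1)}$ $\GC^0(k)$ circuits decides $L_O$. This is a diagonalization over the countably many candidate families $\{C_n^{(i)}\}_i$, where the $i$-th family has depth $d$, size $2^{n^{c_i}}$, and locality $k_i(n) \in 2^{n/4d}/n^{\omega(1)}$. Choose widely spaced scales $n_1 < n_2 < \cdots$ so that the portion of the oracle read by $C_{n_i}^{(i)}$ (namely, addresses of length at most $n_i^{c_i}$, hence scales at most $n_i^{c_i}$) is disjoint from the portions relevant to the other steps; at step $i$ we fix every oracle bit in that portion (the non-$B_{n_i}$ blocks to a fixed YES-instance) except that $y_{n_i}$ is to be chosen. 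Hard-wiring all oracle bits outside $B_{n_i}$ turns $C_{n_i}^{(i)}$, viewed as an oracle circuit, into an ordinary circuit on the $2^{n_i}$ Boolean inputs $y_{n_i}$; and since a $\g(k)$ gate restricted by constants is again a $\g(k)$ gate (fixing inputs to $0$ leaves it $\g(k)$; fixing $j$ inputs to $1$ turns it into a $\g(k-j) \subseteq \g(k)$ gate, constant when $j>k$), the result is a depth-$d$ $\GC^0(k_i(n_i))$ circuit of size $2^{n_i^{c_i}} = 2^{(\log 2^{n_i})^{c_i}} = \quasipoly(2^{n_i})$ and locality $k_i(n_i) = O((2^{n_i})^{1/4d})/(\log 2^{n_i})^{\omega(1)}$ --- exactly the regime excluded by \cref{cor:intro-bqlogtime-not-in-gac0} at instance length $N = 2^{n_i}$. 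Hence this circuit errs on some promise instance $y_{n_i}$, which we now fix; this finalizes all oracle bits that $C_{n_i}^{(i)}$ reads, so its output is determined and differs from $L_O(1^{n_i})$, and subsequent steps never disturb it. The resulting oracle $O$ satisfies $L_O \in \BQP^O$ while $L_O$ is not decided by any uniform $\GC^0$ family of the stated size and locality. The parameter translation is exactly as claimed: $(2^n)^{1/4d} = 2^{n/4d}$, $(\log 2^n)^{\omega(1)} = n^{\omega(1)}$, and $\quasipoly(2^n) = 2^{n^{O(1)}}$.

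There is no genuinely hard step here; the care all goes into bookkeeping, and the only place where the generality of $\GC^0$ (versus $\AC^0$) enters is the observation that the class is closed under restriction by constants with no change to $k$ or to the depth, so that a uniform family of $\GC^0(k)$ oracle circuits for $L_O$ really does yield a $\GC^0(k)$ circuit for $\Pi$ at instance length $2^n$ in the precise parameter regime of \cref{cor:intro-bqlogtime-not-in-gac0}. Everything else --- that the $\BQLOGTIME$ simulation is a valid $\poly(n)$-time $\BQP^O$ algorithm, that scale-tagging keeps the blocks $B_n$ disjoint and lets the diagonalization proceed one scale at a time, and that a single scale per candidate family suffices because the circuit lower bound is a worst-case statement --- is identical to the corresponding arguments in the $\AC^0$ case, which is why the excerpt can reasonably call this step "well-known reductions."
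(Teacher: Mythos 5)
Your proof is correct and follows essentially the same route as the paper: the paper's own proof is a one-line deferral to Raz and Tal's Appendix A, which is exactly the block-encoding-plus-diagonalization argument you carry out in full, with the Raz--Tal distinguisher/circuit lower bound replaced by its $\GC^0(k)$ analogue. One detail you handle explicitly that the paper leaves implicit, and which is the only place the generality of $\GC^0$ actually matters, is that fixing input bits of a $\g(k)$ gate to constants yields again a $\g(k')$ gate with $k' \le k$ (so $\GC^0(k)$ is closed under restriction without blowup in depth, size, or $k$), so the hard-wired oracle circuit at each diagonalization scale really does land in the parameter regime of \cref{cor:intro-bqlogtime-not-in-gac0}.
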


Raz and Tal \cite{raz2022oracle} showed that $\BQLOGTIME \not\subseteq \AC^0$, which implied an oracle relative to which $\BQP$ is not contained in the class of languages decided by uniform families of size-$2^{n^{O(1)}}$ constant-depth $\AC^0$ circuits. It is well-known that this class is precisely the polynomial hierarchy $\PH$. Hence, because $\GC^0(k)$ contains $\AC^0$ (and can even compute functions that require exponential-size $\AC^0$ circuits), \cref{thm:intro-oracle-separation-bqp} is a generalization of the relativized separation of $\BQP$ and $\PH$.

One reason Raz and Tal \cite{raz2022oracle} is such a striking result is that it shows even the enormous power of $\PH$ fails to simulate quantum computation in a relativizing way. 
This is made more precise in the beautiful follow-up work of Aaronson, Ingram, and Kretschmer \cite{aaronson2022acrobatics} who show (among many other results) that there is an oracle relative to which $\PTIME = \NP$ but $\BQP = \PTIME^{\# \PTIME}$. 
In words, they show that even in a world where $\NP$ is easy, $\BQP$ can still be extremely powerful. 
Our oracle separation result complements these results (and relies on Raz and Tal). 

We give one concrete implication of \cref{thm:oracle-separation-bqp}.
Namely, we show that there is an oracle relative to which $\BQP$ is outside of hierarchies of counting classes, where the counting classes can count whether there are a small number of accepting witnesses. 
This is perhaps surprising because $\BQP \subseteq \PP$ relative to all oracles \cite{adleman1997quantum}. Hence, we show that it is actually necessary to count a larger number of witnesses to simulate $\BQP$ in a relativizing way. The counting classes are defined in \cref{def:Counting-quant,def:biased-ch}, and the oracle separation is given in \cref{cor:biased-ch-separation}.

\paragraph{$\QNC^0$ vs. $\GC^0$} In \cref{subsec:qnc-gc0}, we exhibit a search problem that separates $\QNC^0$ from $\GC^0(k)$.
Our separation is based on the 2D Hidden Linear Function (\tdhlf) problem (\cref{def:tdhlf}) introduced by Bravyi, Gosset, and K\"onig \cite{bravyi2018quantum}.

\begin{theorem}[Restatement of \cref{thm:watts-main}]\label{thm:intro-watts-main}
The \tdhlf\ problem (\cref{def:tdhlf}) on $n$ bits cannot be solved by a constant-depth-$d$ size-$\exp(O(n^{1/4d}))$ $\GC^0(k)$ circuit with $k=O(n^{1/4d})$. 
Furthermore, for the same value of $k$, there exists an (efficiently samplable) input distribution on which any $\GC^0_d(k)$ circuit (or $\GC^0_d(k)\mathsf{/rpoly}$ circuit) of size at most $\exp(n^{1/4d})$ only solves the \tdhlf\ problem with probability at most $\exp(-n^c)$ for some $c > 0$. 
\end{theorem}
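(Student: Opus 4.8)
\textbf{Proof plan for \cref{thm:intro-watts-main}.}
The plan is to follow the Watts--Kothari--Schaeffer--Tal template \cite{watts2019exponential} for separating $\QNC^0$ from $\AC^0$ on the \tdhlf\ problem, and to upgrade every invocation of an $\AC^0$ switching/correlation bound to the corresponding $\GC^0(k)$ statement. The $\QNC^0$ upper bound is already established by Bravyi, Gosset, and K\"onig \cite{bravyi2018quantum}: a constant-depth quantum circuit solves \tdhlf\ exactly, so nothing new is needed there. For the classical lower bound, recall that the WKST analysis reduces solving \tdhlf\ on a random instance to the claim that no small $\AC^0$ circuit can, with non-negligible probability, produce outputs satisfying a large family of local linear constraints derived from the hidden quadratic form; concretely they show that a purported solver would have to compute (or strongly correlate with) many parities on disjoint-ish blocks of input bits simultaneously, which a multi-output $\AC^0$ circuit cannot do. So the first step is to isolate exactly this ``simultaneously compute many parities'' bottleneck as a statement about multi-output circuits.

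The second step is to replace the multi-output $\AC^0$ ingredient with our new multi-output multi-switching lemma for $\GC^0(k)$, \cref{thm:multioutputlemma}, which (building on Kumar's multi-switching lemma \cite{kumar2023tight}) says that after a suitable random restriction, a size-$s$ depth-$d$ $\GC^0(k)$ circuit with many output gates collapses, except with tiny probability, to a small-width decision forest --- an object that manifestly cannot compute even a single parity on a surviving block of super-constant size, let alone many of them. Quantitatively, the restriction must leave alive blocks of size roughly $n^{1/4d}$ worth of free variables inside the parity constraints while the collapse failure probability stays below $\exp(-n^{c})$; tracking the parameters through Kumar's lemma is where the $k = O(n^{1/4d})$ ceiling and the $\exp(O(n^{1/4d}))$ size ceiling come from, exactly mirroring how the $\AC^0$ version yields $n^{1/4d}$ in the exponent. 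I would then combine this with the standard fact that a decision tree of depth $o(b)$ has correlation $o(1)$ with parity on $b$ bits, and union-bound over the $\poly(n)$ output constraints, to conclude that the circuit's output violates one of the \tdhlf\ constraints except with probability $\exp(-n^{c})$; the promise/input-distribution structure of \tdhlf\ then upgrades this to ``solves the problem with probability at most $\exp(-n^c)$.''

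The third, essentially bookkeeping, step is the $\mathsf{/rpoly}$ strengthening: a $\GC^0(k)\mathsf{/rpoly}$ circuit is a distribution over $\GC^0(k)$ circuits independent of the input, so by averaging there is a fixed advice string achieving at least the average success probability; since the bound $\exp(-n^c)$ holds for every fixed circuit, it holds in expectation over the advice, and hence the $\mathsf{/rpoly}$ version follows immediately from the plain version. One has to be slightly careful about the order of quantifiers --- the input distribution must be chosen before the advice --- but the WKST hard distribution is a fixed, efficiently samplable distribution depending only on $n$, so this is fine.

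\textbf{Main obstacle.} The delicate point is the multi-output multi-switching lemma itself and the parameter accounting around it: Kumar's switching lemma for $\GC^0(k)$ already loses a factor in the exponent relative to $\AC^0$ (it gives $n^{1/d}$-type bounds where $\AC^0$ gives $n^{1/(d-1)}$), and I need to verify that the \emph{multi-output} version, with the restriction calibrated so that enough variables survive inside each \tdhlf\ constraint \emph{and} the $\g(k)$ gates are simplified by the restriction only when $k = O(n^{1/4d})$, still yields a collapse-failure probability small enough to union-bound over all output gates. Getting the exponent to land at $n^{1/4d}$ rather than something weaker --- and in particular making sure the $\g(k)$ gates' ``arbitrary behavior on low Hamming weight'' does not survive the restriction in a way that defeats the decision-forest structure --- is the crux; this is precisely what \cref{subsec:muli-switching} and \cref{thm:multioutputlemma} are set up to handle, and the proof here is the reduction of \tdhlf\ hardness to that lemma plus the routine correlation-with-parity and averaging arguments above.
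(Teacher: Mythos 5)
Your proposal matches the paper's proof in essentially every respect: the paper also takes the Watts--Kothari--Schaeffer--Tal argument as a black box, notes that the only $\AC^0$-specific ingredient is the collapse of the multi-output circuit to a small decision forest under random restriction, substitutes \cref{thm:multioutputlemma} (via \cref{cor:muli-output-multi-switching}) for that step, and then invokes the WKST reduction from the Parallel Parity Halving Problem to \tdhlf. The only cosmetic difference is that the paper names the intermediate problem ($\php$) explicitly and routes the hardness through \cref{theorem:php-notin-gc0}, whereas you describe the same bottleneck as ``simultaneously computing many parities''; the parameter accounting and the $\mathsf{/rpoly}$ averaging are handled exactly as you outline.
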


\cref{thm:intro-watts-main} generalizes the separation between $\QNC^0$ and $\AC^0$ obtained by Watts, Kothari, Schaeffer, and Tal \cite{watts2019exponential}. The proof requires a new multi-output multi-switching lemma for $\GC^0(k)$, which we prove in \cref{subsec:muli-switching}.

Using the frameworks developed by Bravyi et al.\ \cite{bravyi2020quantum} and Grier et al.~\cite{grier2021interactive}, we show in \cref{thm:noisy-qnc0-gc0} that this separation holds even when the quantum circuits are subjected to certain types of noise.

\paragraph{$\QNC^0\mathsf{/qpoly}$ vs. $\GC^0[p]$}
In \cref{subsec:gc2,subsec:gcp}, we exhibit a family of search problems that separates $\QNC^0\mathsf{/qpoly}$ from $\GC^0(k)[p]$ for all primes $p$. 
The family of search problems is a generalization of the Parity Bending problem introduced by Watts, Kothari, Schaeffer, and Tal \cite{watts2019exponential} and was also studied in a recent work of Grilo, Kashefi, Markham, and Oliveira \cite{grilo2024power}.

\begin{theorem}[Restatement of \cref{thm:intro-gcp}]{\label{thm:intro-gcp}}
    For any prime $p$, there is a search problem that is solvable by $\QNC^0\mathsf{/qpoly}$ with probability $1-o(1)$, but any $\GC^0(k)[p]\mathsf{/rpoly}$ circuit of depth $d$ and size at most $\exp\left(O(n^{1/2.01d}) \right)$ with $k = O(n^{1/2d})$ cannot solve the search problem with probability exceeding $n^{-\Omega(1)}$.
\end{theorem}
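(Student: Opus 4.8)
The plan is to mirror the structure of \cite{watts2019exponential} for the Parity Bending problem, replacing the $\AC^0[p]$ correlation bound with our new average-case $\GC^0(k)[p]$ bound from \cref{thm:intro-mod-p-correlation-bounds}. First I would recall the quantum side: the generalized Parity Bending problem is designed so that a shallow quantum circuit with appropriate $qpoly$ advice (a graph-state–type resource adapted to the modulus $p$) can, on input a description of a system of $\F_p$-linear constraints, produce an output string whose correlation pattern with the inputs forces any would-be classical solver to compute a $\Mod_q$-like function (for $q \neq p$) or $\MAJ$ on a large block of the input bits. This reduction is essentially the one in \cite{watts2019exponential,grilo2024power}: I would restate their lemma that a classical circuit solving the search problem with probability $\rho$ yields a classical circuit of the same depth and size (plus the same $\mathsf{/rpoly}$ advice, which is harmless since the reduction is nonadaptive and input-oblivious in the relevant coordinates) computing $F \in \{\Mod_q,\MAJ\}$ on $n' = \Theta(n)$ bits with probability $\tfrac12 + \Omega(\rho) - o(1)$ over the induced input distribution.

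Next I would invoke \cref{thm:intro-mod-p-correlation-bounds} on the reduced instance: any $\GC^0(k)[p]\mathsf{/rpoly}$ circuit of depth $d$ and size $\exp(O((n')^{1/2.01d}))$ with $k = O((n')^{1/2d})$ correlates with $F$ under the hard distribution only up to $\tfrac12 + (n')^{-\Omega(1)}$. Combining the two bounds forces $\rho \le n^{-\Omega(1)}$, which is exactly the claimed classical hardness. The depth is preserved by the reduction, the size blowup is polynomial (so it is absorbed into the $\exp(O(n^{1/2.01d}))$ budget), and since $n' = \Theta(n)$ the parameter regime $k = O(n^{1/2d})$ transfers directly. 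On the quantum side I would cite the construction of \cite{grilo2024power} (generalizing \cite{watts2019exponential}) verbatim for the $1-o(1)$ success probability of $\QNC^0\mathsf{/qpoly}$, since we make no changes there.

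The main obstacle is making sure \cref{thm:intro-mod-p-correlation-bounds} applies to the \emph{specific} distribution and function that the Parity Bending reduction produces, rather than to some clean product distribution on $\{\Mod_q,\MAJ\}$-inputs. Concretely, the reduction fixes most input coordinates and leaves a structured sub-block on which $F$ must be computed; I need the average-case bound to be robust under this restriction and under the non-uniform (but efficiently samplable) marginal on the free block. I expect this to go through because our bound, like Razborov–Smolensky correlation bounds, only uses that a low-degree $\F_p$ polynomial cannot approximate $F$ well on the uniform distribution over the free block, and the reduction is built precisely so that the free block is (close to) uniform; the $\mathsf{/rpoly}$ advice does not help because, conditioned on the advice, we still have a fixed $\GC^0(k)[p]$ circuit to which \cref{thm:intro-probpoly} and the degree lower bound for $F$ both apply. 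The one piece of bookkeeping worth spelling out carefully is that the $o(1)$ loss in the quantum success probability plus the $n^{-\Omega(1)}$ classical upper bound leave a genuine gap — i.e., $1 - o(1) > n^{-\Omega(1)}$ — which is immediate but should be stated to justify calling this a separation.
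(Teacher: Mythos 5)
Your high-level plan—replace the $\AC^0[p]$ correlation bound in the classical lower bound of \cite{watts2019exponential,grilo2024power} with \cref{thm:intro-mod-p-correlation-bounds}, and inherit the $\QNC^0\mathsf{/qpoly}$ upper bound verbatim—is the right one, and the quantum side and the final parameter bookkeeping are fine. However, the crux of the classical lower bound is not what you describe, and the lemma you propose to ``restate'' does not exist in the cited works.

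You claim that a circuit solving the search problem with probability $\rho$ yields, via a nonadaptive reduction, a circuit of essentially the same size and depth computing some $F \in \{\Mod_q, \MAJ\}$ on $\Theta(n)$ bits with advantage $\Omega(\rho)$. No such clean linear reduction exists, and the paper does not argue this way. The $(q,r)$-Parallel Parity Bending problem asks the solver to be correct on a $2/3 + \Theta(1)$ \emph{fraction} of $r$ parallel coordinates; the gap between this and the trivial $1/3$ (for $p=2$) or $1/2$ (for $p\neq 2$) baseline is a constant, so a ``correct on many coordinates'' event cannot be directly converted into a small correlation advantage for a single $\Mod$ computation. The obstacle you flag in your last paragraph (making sure the bound applies to the right induced distribution) is not the real obstacle; the real obstacle is that without controlling the joint distribution of the per-coordinate error bits, the circuit could be adversarially correlated across coordinates, and a Chernoff-style argument on the parallel instances fails.

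What the paper actually does is a Fourier-analytic amplification: (i) prove single-instance hardness of a ``$k$-Output Mod $k$''-type relational problem via a worst-to-average self-reduction (\cref{lemma:worst-to-average}) followed by an application of the correlation bound (\cref{thm:mod-p-correlation-bounds}); (ii) show that \emph{for every nontrivial character} $\chi_a$ of the relevant finite abelian group ($\F_3^r$ or $\F_2^r$), the distribution of per-coordinate errors has $|\E[\chi_a]| \le n^{-\Omega(1)}$, because a circuit biasing $\chi_a$ would yield a good single-instance solver on the concatenated input; (iii) invoke the XOR lemma for finite abelian groups (\cref{lem:xor-lemma-abelian}) to conclude the error distribution is within total variation $n^{-\Omega(1)}\sqrt{|G|^r}$ of uniform; and (iv) apply a Chernoff bound on the \emph{uniform} distribution, with $r = \Theta(\log n)$ chosen to balance the $\sqrt{|G|^r}$ blowup against the $\exp(-\Omega(r))$ Chernoff tail. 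Steps (ii)--(iv), the choice of $r$, and the reduction from Parallel Parity Bending to the parallel $\Mod$ problem are genuine technical content that your proposal omits; the success bound you end up with is a \emph{statistical distance plus tail probability}, not the linear $1/2 + \Omega(\rho)$ you posit. Without the XOR lemma machinery, the argument does not close.
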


Previously the best separations were between polynomial-size $\QNC^0$ and \emph{polynomial-size} $\AC^0[p]$ obtained in the works of Watts et al.~\cite{watts2019exponential} and Grilo et al.~\cite{grilo2024power}.
Our \cref{thm:intro-gcp} is a separation between polynomial-size $\QNC^0$ and \emph{exponential-size} $\GC^0(k)[p]$.  

\paragraph{Interactive $\QNC^0$ vs. $\GC^0(k)[p]$}
Grier and Schaeffer \cite{grier2019interactive} studied quantum-classical separations that can be obtained in certain interactive models.
Among some conditional results, they obtain an unconditional separation between $\QNC^0$ and $\AC^0[p]$ for all primes $p$. 
We generalize their separation to $\GC^0(k)[p]$.

\begin{theorem}[Restatement of \cref{thm:grier-schaeffer}]\label{thm:intro-grier-schaeffer}
Let $k = O(n^{1/2d})$.
There is an interactive task that $\QNC^0$ circuits can solve that depth-$d$, size-$s$ $\GC^0(k)[p]$ circuits cannot for $s \leq \exp\left(O(n^{1/2.01d})\right)$.
\end{theorem}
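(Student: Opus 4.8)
The plan is to reuse the interactive task of Grier and Schaeffer \cite{grier2019interactive} essentially verbatim, to inherit their constant-depth quantum strategy unchanged, and to replace only the classical hardness half of their argument, substituting their correlation bound against $\AC^0[p]$ with our average-case bound against $\GC^0(k)[p]$ (\cref{thm:intro-mod-p-correlation-bounds}). Recall the shape of their separation: the task $\calT$ runs over a constant number $r = O(1)$ of rounds; a referee samples an input $x$ from a distribution $\calD$ together with private randomness, and in round $i$ sends the player a message $m_i$ that is a \emph{shallow} function (computable in $\AC^0$, hence in $\GC^0(k)[p]$ for every $k$) of $x$ and the transcript so far; the player replies with $a_i$; and at the end the referee accepts iff the full transcript satisfies a global predicate. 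Grier and Schaeffer exhibit a $\QNC^0$ player that wins with probability $1-o(1)$; this half of the statement is entirely about the quantum model, so we import it directly.

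For the classical direction I would use the fact that their lower bound is \emph{modular} in the underlying classical circuit class: their argument shows that if a classical player, implemented by circuits from a class $\mathsf{C} \supseteq \AC^0$ (each round's circuit of depth $d$ and size $s$, possibly randomized), wins $\calT$ with probability noticeably above the trivial value, then one can ``unroll'' the interaction into a single $\mathsf{C}$-circuit $C$ that, on input $x$, simulates the whole protocol and outputs a bit agreeing with a fixed hard function $F \in \{\MAJ, \Mod_q\}$ (for some prime $q \neq p$) on a $\tfrac12 + n^{-\Omega(1)}$ fraction of $x \sim \calD$. The circuit $C$ just interleaves the $r$ player circuits with the $r$ referee-message circuits. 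Since $r$, $d$, and the depth of each referee-message circuit are all constants, and since composing $\GC^0(k)[p]$ circuits adds depths while preserving the gate set and the locality parameter $k$, the circuit $C$ has depth $O(d)$, size $O(rs)$, and lies in $\GC^0(k)[p]\mathsf{/rpoly}$ whenever each player circuit does (the player's private coins become the random advice). It is exactly for this step that the $\mathsf{/rpoly}$ version of the correlation bound is needed.

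Then I would invoke \cref{thm:intro-mod-p-correlation-bounds} at depth $O(d)$: for $k = O(n^{1/2d})$, no depth-$O(d)$ $\GC^0(k)[p]\mathsf{/rpoly}$ circuit of size $\exp(O(n^{1/2.01d}))$ can compute $F$ on the relevant distribution with probability exceeding $\tfrac12 + n^{-\Omega(1)}$, contradicting the previous paragraph as soon as $s \le \exp(O(n^{1/2.01d}))$. This yields the claimed classical lower bound, and combined with the imported $\QNC^0$ upper bound it proves the separation. The gap between the exponents $2$ and $2.01$ is precisely the slack that absorbs the constant-factor inflation of the depth (and the constant number of rounds and referee circuits) incurred by unrolling, together with the constant hidden in the $O(d)$.

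I expect the main obstacle to lie in the second step: one must verify that Grier and Schaeffer's classical argument genuinely reduces to an average-case lower bound for a single explicit function rather than relying on $\AC^0[p]$-specific structure, and in particular that the global predicate and the referee's per-round message functions are realizable inside $\GC^0(k)[p]$ (they are shallow, so this should be routine, and a parity on at most $k$ bits, if one appears, can be absorbed into a single $\g(k)$ gate), and that the depth blow-up and the locality parameter $k$ are tracked carefully so that the final parameters $k = O(n^{1/2d})$ and $s \le \exp(O(n^{1/2.01d}))$ emerge exactly as stated. Everything else is bookkeeping or a direct appeal to \cite{grier2019interactive}.
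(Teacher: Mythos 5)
Your high-level architecture is right --- import the $\QNC^0$ upper bound from \cite{grier2019interactive}, observe that their classical lower bound is modular in the circuit class, and land the contradiction on the $\MAJ$/$\Mod_q$ correlation bound (\cref{thm:mod-p-correlation-bounds}) --- and that is indeed where the paper ends up. But the mechanism you describe for the key step does not match what Grier and Schaeffer actually prove, and this is a genuine gap rather than bookkeeping. Their tasks are not referee--player games over an input distribution whose classical hardness reduces to agreeing with a hard function on a $\tfrac12 + n^{-\Omega(1)}$ fraction of inputs; they are two-round worst-case simulation tasks (simulate measurements of an adaptively specified Clifford circuit on a fixed graph state), and the modular statement their paper provides is the \emph{worst-case oracle containment} $\NC^1 \subseteq (\AC^0)^{\mathsf{T_2}}$: any device that always produces valid outputs for $\mathsf{T_2}$ can be used as an oracle gate so that $\AC^0$ decides every $\NC^1$ language. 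There is no average-case ``unrolling'' lemma of the form you posit in \cite{grier2019interactive}, so your step 2 would have to be proven from scratch, and it is not clear it is true as stated.

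The paper's actual proof is the worst-case version of your plan: if depth-$d$, size-$s$ $\GC^0(k)[p]$ circuits solved $\mathsf{T_2}$, then $\NC^1 \subseteq (\AC^0)^{\GC^0(k)[p]} = \GC^0(k)[p]$ (the class is closed under composition with $\AC^0$, costing only constant depth and polynomial size); in particular $\MAJ$ would have such circuits, contradicting \cref{cor:gc0k-nc1-incomp}, whose relevant half is exactly the correlation bound you cite. So your final appeal to \cref{thm:intro-mod-p-correlation-bounds} is the right move, and a worst-case lower bound on $\MAJ$ already suffices (the $\mathsf{/rpoly}$ machinery you invoke for the player's coins is not needed for the theorem as stated, which concerns deterministic circuits). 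To repair your write-up, replace the average-case unrolling step with the black-box containment $\NC^1 \subseteq (\AC^0)^{\mathsf{T_2}}$ and track only the constant depth/polynomial size overhead of that composition.
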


\subsection{Open Problems}\label{subsec:open-problems}

Combined with the work of Kumar \cite{kumar2023tight}, we now know that $\AC^0$ size lower bounds from the combinatorial technique of switching lemmas, as well as $\AC^0[p]$ lower bounds using the algebraic technique of probabilistic polynomials, both lift \emph{losslessly} to $\GC^0$ and $\GC^0(k)[p]$, respectively. It is extremely surprising to us that both techniques, while extremely different in flavor, generalize so cleanly to $\g(k)$ gates. 
This observation raises many questions about how $\g(k)$ gates can help us understand the limitations of our lower bound techniques.

\begin{itemize}
    \item Do $\g(k)$ gates exactly capture the switching lemma technique as well as the probabilistic polynomial technique? 
    This would let us know whether there is an even more general class of gates that capture the power of these techniques.
    
    \item Can we use $\g(k)$ gates (or its generalizations derived from the last item) to show barrier results for current lower bounds we have? 
    For example, implementing explicit functions in $\GC^0(k)$ or $\GC^0(k)[p]$ would demonstrate a limitation on the size lower bounds achievable for $\AC^0$ or $\AC^0[p]$ via switching lemmas or the polynomial method.

    \item Can lower bounds for $\e^{\NP}$ using Williams' algorithmic method be lifted losslessly from $\ACC^0$ to $\GCC^0$? 
\end{itemize}

There are also general questions about how $\GC^0$ and their counterparts fit in the landscape of circuit classes.
\begin{itemize}
    \item How do $\GC^0(k)$, $\GC^0(k)[p]$, and $\GCC^0(k)$ compare to more classical circuit classes like $\NC^1$ and  $\TC^0$? 
    We know that when $k=n$, $\GC^0(k)$ can compute any function, and when $k = 1$, $\GC^0(k) = \AC^0$. What is the smallest $k$ such that $\TC^0\subset \GC^0(k)$? We know this is true when $k\ge n/2$, but is it true for smaller $k$? 
    Similar questions can be raised for $\GC^0(k)[p]$.

    \item Can we get stronger quantum-classical separations? 
    Specifically, can we obtain separations between $\QNC^0$ and $\GC^0(k)[p]$ without giving the quantum circuit an advice state? 
    
    \item \cite{kumar2023tight} gave a natural subclass of $\g(k)$ consisting of biased linear threshold gates. 
    Are there other natural gates contained in $\g(k)$?

\end{itemize}

\paragraph{Concurrent Work}
An independent and concurrent work of Hsieh, Mendes, Oliveira, and Subramanian \cite{hsieh2024concurrent} overlaps with our work in one way. 
They give an exponential separation between $\GC^0(k)$ and $\QNC^0$, which is essentially the same as our separation (\cref{thm:watts-main}).\footnote{Hsieh et al.\ denote $\GC^0(k)$ by $\mathsf{bPTF}^0[k]$.} 
Like us, they also prove a new muli-output multi-switching lemma for $\GC^0(k)$ (\cref{thm:multioutputlemma}) to obtain their separation. 
The similarity in our arguments comes from the fact that we both use the exponential separation between $\AC^0$ and $\QNC^0$ of Watts, Kothari, Schaeffer, and Tal  \cite{watts2019exponential} as a starting point. 

Hsieh et al.\ also show that their separation holds if the quantum circuits are subjected to a certain noise model, which we also do in \cref{thm:noisy-qnc0-gc0}. 
This noise-robustness result follows from applying the framework introduced by Bravyi, Gosset K\"onig, and Temamichel \cite{bravyi2020quantum} and further developed by Grier, Ju, and Schaeffer \cite{grier2021interactive}. Hsieh et al.\ also study extending this framework to prime-dimensional qudits. 

\section{Preliminaries}
We presume the reader is familiar with common concepts in the theory of computation (circuit complexity and quantum computing, in particular).
All prerequisite knowledge can be found in standard textbooks such as \cite{goldreich2008computational,arora2009computational,nielsen2002quantum}.

We obey the following notation and conventions throughout.
For a positive integer $n$, $[n]\coloneqq \{1,\dots,n\}$.
For us, the natural numbers do not include $0$, i.e., $\N \coloneqq \{1, 2, 3, \ldots\}$.
Define $\binom{n}{\leq k} \coloneqq \sum_{i=0}^k \binom{n}i$.
For $S\subseteq[n]$ and $x \in \{0,1\}^n$, define $x^S \coloneqq \prod_{i\in S}x_i$.
Let $\quasipoly(n)$ denote all functions that have an upper bound of the form $2^{O(\log^c n})$ for some constant $c$.

We denote the Hamming weight of a string $x\in\bitz^n$ as $\abs{x} = \sum_{i} x_i$. 
More generally, for $x \in \F_q^n$ (for some prime $q$), $\abs{x} = \sum_i x_i \pmod q$.
The \emph{Hamming distance} between $x, y \in \bitz^n$ is $\Delta(x,y) = \abs{\{ i \in [n] : x_i \neq y_i \}}$.
The \emph{Hamming ball of radius $k$} is the set $\{x\in\bitz^n: |x|\le k\}$, and \emph{Hamming ball of radius $k$ centered at $c$} is the set $\{ x \in \bitz^n : \Delta(x, c) \leq k \}$.

For a distribution $\calD$ over support $S$, $x \sim \calD$ denotes sampling an $x \in S$ according to the distribution $\calD$.
For a set $S$, we denote drawing a sample $s \in S$ uniformly at random by $s \sim S$.
$U_n$ denotes the uniform distribution over length-$n$ bit strings. 
For a distribution $\calD$ and a function $f$, $\E[f(\calD)] \coloneqq \E_{x \sim \calD}[f(x)]$.
For two discrete distributions $p$ and $q$ supported on $S$, the total variation distance (also called the statistical distance) is defined as $\frac{1}{2}\sum_{s \in S}\abs{p(s) - q(s)}$.

We also use Fermat's little theorem. 

\begin{theorem}[Fermat's little theorem]\label{thm:fermat}
For any integer $a \not\equiv 0 \pmod p$ for a prime $p$, $a^{p-1} \equiv 1 \pmod p$.
\end{theorem}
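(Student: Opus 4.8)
The plan is to prove the congruence directly by a rearrangement argument on the nonzero residues modulo $p$. First I would fix a prime $p$ and an integer $a$ with $p \nmid a$, and consider the $p-1$ multiples $a, 2a, 3a, \dots, (p-1)a$. The key claim is that, reduced modulo $p$, these form precisely a permutation of $1, 2, \dots, p-1$. To see this, suppose $ia \equiv ja \pmod p$ with $1 \le i, j \le p-1$; then $p \mid (i-j)a$, and since $p$ is prime with $p \nmid a$, Euclid's lemma gives $p \mid (i - j)$, which forces $i = j$ because $\abs{i-j} < p$. Hence the $ia$ are pairwise incongruent modulo $p$, and none is $\equiv 0 \pmod p$ (again because $p$ divides neither $i$ nor $a$), so they hit each of the $p-1$ nonzero residues exactly once.

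Next I would compute the product of all $p-1$ of these numbers in two ways. On one hand $\prod_{i=1}^{p-1}(ia) = a^{p-1}\,(p-1)!$; on the other, the permutation claim gives $\prod_{i=1}^{p-1}(ia) \equiv \prod_{i=1}^{p-1} i = (p-1)! \pmod p$. Therefore $a^{p-1}(p-1)! \equiv (p-1)! \pmod p$, i.e.\ $p \mid \bigl(a^{p-1} - 1\bigr)(p-1)!$. Since $p$ is prime and each of $1, 2, \dots, p-1$ is coprime to $p$, we have $\gcd\bigl((p-1)!,\, p\bigr) = 1$, so we may cancel $(p-1)!$ to conclude $a^{p-1} \equiv 1 \pmod p$.

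This argument has essentially no difficult step; the only place that genuinely uses the primality of $p$ (rather than merely $\gcd(a,p) = 1$) is the final cancellation of $(p-1)!$, which relies on $p$ being coprime to every smaller positive integer. A slicker alternative would invoke Lagrange's theorem from group theory: the nonzero residues modulo $p$ form a multiplicative group $(\Z/p\Z)^{\times}$ of order $p-1$, so the multiplicative order of $a$ divides $p-1$ and hence $a^{p-1} \equiv 1 \pmod p$. I would present the elementary rearrangement proof since it is self-contained, but either route is entirely routine.
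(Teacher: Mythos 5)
Your proof is correct; it is the standard rearrangement (bijection) argument, and the group-theoretic alternative you mention via Lagrange's theorem is equally valid. Note that the paper simply records Fermat's little theorem as a classical preliminary fact (in its Preliminaries section) without supplying a proof, so there is no authored proof to compare against -- but your argument is complete and rigorous, with the cancellation of $(p-1)!$ correctly justified by coprimality to $p$.
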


All circuit classes studied in this work are constant depth, and $d$ always denotes a constant.
Circuits are comprised of layers of gates. When we refer to the ``top'' of a classical circuit, we are referring to the last layer of the circuit. In particular, for a Boolean-valued circuit, the top is a single gate. The ``bottom'' of a circuit refers to the first layer of gates.

For an integer $m$, the $\Mod_m$ gate is the unbounded fan-in Boolean gate that outputs $0$ iff the sum of the input bits is congruent to $0 \pmod m$.  
The $\MAJ$ gate computes the majority function, i.e., the unbounded fan-in Boolean gate that outputs $1$ iff the majority of the input bits are $1$.
The $\thr^k$ gate is the unbounded fan-in Boolean gate that outputs $1$ iff the Hamming weight of the input is $> k$.

Recall the following well-studied circuit classes:
\begin{itemize}
    \item $\NC^i$: $O(\log^i n)$-depth circuits of bounded fan-in $\AND$, $\OR$, and $\NOT$ gates. 
    \item $\AC^i$: $O(\log^i n)$-depth circuits of unbounded fan-in $\AND$, $\OR$, and $\NOT$ gates. 
    \item $\AC^i[p]$: $O(\log^i n)$-depth circuits of unbounded fan-in $\AND$, $\OR$, $\NOT$, and $\Mod_p$ gates. 
    \item $\ACC^i$: The union of $\AC^i[m]$ for all $m$. 
    \item $\TC^i$: $O(\log^i n)$-depth circuits of unbounded fan-in $\AND, \OR, \NOT,$ and $\MAJ$  gates. 
    \item $\QNC^i$: $O(\log^i n)$-depth quantum circuits of bounded fan-in quantum gates. 
    \item $\mathsf{SIZE}(f(n))$: fan-in-2 Boolean circuits of size $O(f(n))$. 
\end{itemize}
$\NC \coloneqq \bigcup_i \NC^i$, and $\AC$ and $\TC$ are defined analogously. It is known that $\NC = \AC = \TC$. The size of a circuit is the number of gates in the circuit besides $\NOT$ gates.
We always specify the circuit size when relevant; however, if the size is not explicitly mentioned, it should be assumed to be polynomial.

A search problem (also called a relation problem or relational problem) is a computational problem with many valid outputs, as opposed to a function problem which only has one valid output for each input. 
A two-round interactive problem is a computational problem where in the first round you are given an input and produce an output and in the second round, you produce another input and output. The correctness of an interactive algorithm is based on the entire transcript of the interaction, and a computational device solving an interactive problem gets to keep state from the first round during the second round. 

In a common abuse of notation we use e.g. $\AC^0$ or $\GC^0(k)[p]$ to interchangeably talk about a type circuit or a class of (decision/relation/interactive) problems, where the context clarifies what we are referring to.

We also will use probabilistic circuits.

\begin{definition}\label{def:prob-circuit}
A \emph{probabilistic circuit} that computes a function $f:\bitz^n\to\bitz$ is a circuit $C$ that takes input $x\in\bitz^n$ and uniformly random bits $r$, and satisfies the property that for all $x\in \bitz^n$, \[\Pr_{r}[C(x,r) \neq f(x)]\le \eps.\] 
\end{definition}

\subsection{The \texorpdfstring{$\g(k)$}{G(k)} Gate}\label{subsec:gk-crash-course}

The $\g(k)$ gate is an unbounded fan-in gate with the following behavior. The circuit designer chooses a Hamming ball $B_{k,c}$ of radius $k$ centered at $c$. On input $x \in \bitz^n$, if $x \in B_{k,c}$, the $\g(k)$ gate can compute any function $f$ of the circuit designer's choosing. 
Otherwise, the $\g(k)$ gate outputs a constant $c \in \{0,1\}$ of the designer's choosing.
We define $\GC^0$ as the class of constant-depth circuits comprised of $\G(k)$ gates.

One can equivalently define the $\g(k)$ gate as an unbounded fan-in gate that computes within the Hamming ball of radius $k$ centered at $0^n$. This is because one can use this gate to implement $\NOT$. Then one can shift the center of the Hamming ball by appropriately applying $\NOT$ gates to the input bits.
We typically use this definition in our proofs, because it yields cleaner arguments.

The value of $k$ controls the power of the $\g(k)$ gate. 
When $k$ is a constant, it is easy to see that a single $\g(k)$ gate can be computed by a depth-two polynomial-size $\AC^0$ circuit. 
When $k = n$, a single $\g(k)$ gate can compute any function.
Much of the landscape between $k = O(1)$ and $k=n$ is not yet understood, which we discuss further in \nameref{subsec:open-problems}.

We also emphasize that the circuit designer can use the $\g(k)$ gate however they like. 
On the tamer side, the $\g(k)$ gate can, e.g., evaluate parity on $k$ bits or majority on $2k$ bits, and, on the wilder side, it can, e.g., evaluate uncomputable functions like the halting function (with the caveat that it must output a constant if the input is not within the relevant Hamming ball). 

The $\g(k)$ gates capture natural gates as special cases. For example, $\g(k)$ gates naturally generalize $\AND$ and $\OR$ gates to biased linear threshold gates. 
Let $\theta, w_1, \dots, w_n \in \R$, with the $w_i$'s sorted such that $\abs{w_1}\leq \abs{w_2}\leq \dots \leq \abs{w_n}$. Let $f(x) = \sgn(\sum_{i=1}^n w_ix_i - \theta)$. If $\sum_{i > k} \abs{w_i} - \sum_{i \leq k} \abs{w_i} < \abs{\theta}$, then $f$ can be computed by a $\g(k)$ gate \cite[Theorem A.1]{kumar2023tight}.
Kumar \cite{kumar2023tight} showed that circuits comprised of biased linear threshold gates interpolate between $\AC^0$ and $\TC^0$ as the parameter $k$ varies. 
We note that there is a connection between linear threshold functions and neural networks that dates back to the 1940s \cite{mcculloch1943logical}, and there is a precise connection between feed-forward neural networks and $\TC^0$ circuits \cite{muroga1971threshold, maass1991computational} (see also \cite[Section 2.5.1]{arunachalam2021quantum}). 
Circuits with $\g(k)$ gates capture a subset of neural networks whose activation functions are \emph{biased} linear threshold functions.

\section{Approximating \texorpdfstring{$\g(k)$}{G(k)} Gates by Low-Degree Polynomials}\label{sec:raz-smo}

We show that any $\g(k)$ gate can be approximated by proper low-degree polynomials. 
To discuss our results in more detail, we must introduce some terminology.

\begin{definition}[Proper polynomial]\label{def:proper-polynomials}
Let $q$ be a prime number. 
A polynomial $p(x) \in \F_q[x_1,\dots,x_n]$ is proper when $p(x) \in \{0,1\}$ for all inputs $x \in \{0,1\}^n$.
\end{definition}

\begin{definition}[$\eps$-approximating polynomial]\label{def:approx-poly} 
An $\eps$-approximate polynomial for a function $f:\bitz^n \to \bitz$ is a proper polynomial $p$ such that 
\[
\Pr_{x \sim U_n}[f(x) \neq p(x)] \leq \eps.
\]
\end{definition}

\begin{definition}[$\eps$-probabilistic polynomial]\label{def:prob-poly}
   An $\eps$-probablistic polynomial of degree $d$ for a function $f: \bitz^n \to \bitz$ is a distribution $\calP$ over proper polynomials of degree $\le d$ such that for every $x\in\bitz^n$, \[\Pr_{p\sim \calP}[p(x) \neq f(x)]\le \eps.\] 
\end{definition}

In \cref{subsec:raz-smo}, we show that $\g(k)$ gates can be approximated by low-degree polynomials.
As a consequence, we show that any $\GC^0(k)[q]$ circuit can be approximated by low-degree polynomials, generalizing the Razborov-Smolensky polynomial method \cite{razborov1987lower, smolensky1987algebraic} for $\AC^0[q]$ to $\GC^0(k)[q]$. 
This allows us to prove circuit lower bounds for $\GC^0(k)[q]$; we discuss this application and others in \cref{sec:apps-to-classical,sec:quantum}. 

In \cref{subsec:gcc-approx}, we construct probabilistic polynomials for $\g(k)$ gates that use very few bits of randomness. 
The randomness-efficiency of our construction will be essential to invoke the algorithms-to-lower-bounds technique of Williams \cite{wil14acc0}, which we do in \cref{subsec:ryan-williams}.

\subsection{Approximating \texorpdfstring{$\gc^0[p]$}{GC0[p]} by Low-Degree Polynomials}\label{subsec:raz-smo}
We show that size-$s$ $\GC^0(k)[q]$ circuits can be well-approximated by $\F_q$-polynomials of degree $\poly(k, \log s )$. To do so, we need a standard lemma stating that one can interpolate a truth table on a radius $k$ Hamming ball by a degree-$k$ polynomial. 
We give a proof for convenience.

\begin{lemma}
\label{lem:interpolate}
    For any $f:\bitz^n\to\F_q$ and prime $q$, there exists a unique $\F_q$-polynomial $p$ with $\deg(p)\le k$ such that for all $x\in\bitz^n$ with $\abs{x}\le k$, $f(x) = p(x)$. Furthermore, this polynomial can be constructed in $n^{O(k)}$ time.
\end{lemma}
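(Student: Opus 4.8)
The plan is to reduce to multilinear polynomials and then argue by a dimension count. First I would observe that since $x_i^2 = x_i$ for all $x \in \bitz^n$, every $\F_q$-polynomial agrees on $\bitz^n$ with a multilinear one of no larger degree; so it suffices (and is in fact necessary for uniqueness to hold at all — otherwise $x_1$ and $x_1^2$ both ``work'') to prove the statement for multilinear $p$. Let $V_k$ denote the $\F_q$-vector space spanned by the monomials $\{x^S : S \subseteq [n],\ \abs{S} \le k\}$; these monomials are linearly independent as formal multilinear polynomials, so $\dim_{\F_q} V_k = \binom{n}{\le k}$, which is exactly the number of points in the Hamming ball $B_k \coloneqq \{x \in \bitz^n : \abs{x} \le k\}$. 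The restriction map $\mathrm{ev}\colon V_k \to \F_q^{B_k}$ sending $p \mapsto (p(x))_{x \in B_k}$ is $\F_q$-linear between two spaces of equal finite dimension, so it is bijective iff it is injective; and a bijection gives precisely the claimed existence and uniqueness for every $f$ (only the restriction of $f$ to $B_k$ matters).

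Next I would establish injectivity of $\mathrm{ev}$ directly. Suppose $p = \sum_{\abs{S}\le k} c_S x^S \in V_k$ is nonzero, and let $S$ have minimal size among sets with $c_S \neq 0$. Evaluate $p$ at the indicator point $e_S \in \bitz^n$ (defined by $(e_S)_i = 1 \iff i \in S$), which lies in $B_k$ since $\abs{e_S} = \abs{S} \le k$: the monomial $x^T$ contributes to $p(e_S)$ iff $T \subseteq S$, and minimality of $\abs{S}$ forces any such $T$ with $c_T \neq 0$ to be $S$ itself. Hence $p(e_S) = c_S \neq 0$, so $p \notin \ker(\mathrm{ev})$. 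Therefore $\mathrm{ev}$ is injective, hence bijective, which yields both halves of the statement.

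For the constructive part, I would write the interpolant down explicitly by Möbius inversion over the Boolean lattice: set $c_S \coloneqq \sum_{T \subseteq S} (-1)^{\abs{S} - \abs{T}} f(e_T)$ for every $S$ with $\abs{S} \le k$, and $p \coloneqq \sum_{\abs{S}\le k} c_S x^S$. A one-line check — for $x = e_U$ with $\abs{U} \le k$, swap the order of summation and use $\sum_{T \subseteq S \subseteq U} (-1)^{\abs{S}-\abs{T}} = [U = T]$ — shows $p(e_U) = f(e_U)$, giving an alternative proof of existence; note that this formula only ever queries $f$ at points of $B_k$. For the running time: there are $\binom{n}{\le k} = n^{O(k)}$ sets $S$, and each $c_S$ is a sum of at most $2^{\abs{S}} \le 2^k$ evaluations of $f$, for a total of $n^{O(k)}$ arithmetic operations; alternatively one may simply Gaussian-eliminate the $\binom{n}{\le k} \times \binom{n}{\le k}$ evaluation matrix in time $n^{O(k)}$.

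The argument is routine and I do not expect a genuine obstacle; the only points requiring care are the implicit restriction to multilinear polynomials (without which the uniqueness claim is false) and verifying that the Möbius formula never evaluates $f$ outside the radius-$k$ ball, both of which are handled above.
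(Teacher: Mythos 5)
Your proof is correct and uses the same core idea as the paper — a linear-algebraic dimension count between the space of multilinear degree-$\le k$ polynomials and $\F_q^{B_k}$ where $B_k$ is the Hamming ball — but you fill in a step the paper leaves implicit. The paper's proof simply asserts that the $\binom{n}{\le k}$ linear equations indexed by points of $B_k$ are linearly independent and invokes Gaussian elimination; you actually \emph{prove} the injectivity of the evaluation map (via the minimal-support-monomial argument evaluated at the indicator point $e_S$), which is exactly the linear-independence claim the paper takes for granted. You also add two things the paper does not: (i) the observation that "uniqueness" is only meaningful once one restricts to multilinear polynomials (the paper's implicit convention, since it sums over monomials $x^S$, but worth making explicit since $x_1$ and $x_1^2$ would otherwise both interpolate for $k\ge 2$), and (ii) the explicit M\"obius-inversion formula $c_S = \sum_{T\subseteq S}(-1)^{\abs{S}-\abs{T}}f(e_T)$, which gives a closed-form interpolant and makes it transparent that the construction only queries $f$ on $B_k$, as opposed to the paper's generic Gaussian-elimination route. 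Both constructions run in $n^{O(k)}$ time, so nothing is lost either way; your version is more self-contained.
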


\begin{proof}
Consider the $\mathbb{F}_q$-linear system of equations given by $\sum_{\lvert S \rvert \le k} c_S a^S = f(a)$ for each $a \in \{0,1\}^n$ such that $\lvert a \rvert \leq k$. These equations are linearly independent, and since the number of equations equals the number of variables, there is a unique set of coefficients $\{c_S\}$ that satisfies this system. Therefore, the polynomial with these coefficients, $p(x) \coloneqq \sum_{\lvert S \rvert \le k} c_S x^S$, is the desired polynomial. Furthermore, these coefficients can be retrieved in $n^{O(k)}$ time via Gaussian elimination on the $\binom{n}k$ linear equations.
\end{proof}

Next, we prove a technical lemma that says there are low-degree probabilistic polynomials for $\g(k)$ gates. 
Our construction uses probabilistic polynomials for $\thr^k$, where $\thr^k$ is the unbounded fan-in gate that outputs $1$ iff the Hamming weight of the input is $> k$.

\begin{lemma}[{\cite[Theorem 3]{srinivasan_et_al:LIPIcs.FSTTCS.2019.28}}]\label{lem:thr-prob-poly}
For any prime $q$, there is an $\eps$-probabilistic $\F_q$ polynomial of degree $O(\sqrt{k\log(1/\eps)} + \log(1/\eps))$ that computes $\thr^k$.
\end{lemma}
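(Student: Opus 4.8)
Since this is a known bound (Theorem~3 of \cite{srinivasan_et_al:LIPIcs.FSTTCS.2019.28}), the plan is to recall its proof, which pairs an ``easy'' $n$-dependent bound with a randomized reduction that brings the $n$ down to $k$. First I would normalize. Since $\thr^k_n(x) = 1 - \thr^{n-k-1}_n(\mathbf{1} - x)$ and substituting $1 - x_i$ for $x_i$ is degree-preserving, I may assume the threshold parameter is at most $n/2$; and since constant error can be amplified, it is enough to handle $\eps \le 1/4$ and carry a separate $+\log(1/\eps)$ term, which also absorbs the regime $k = O(\log(1/\eps))$, where $\thr^k$ behaves essentially like $\OR$ and is handled directly by the field-trick polynomial $(\sum_i r_i x_i)^{q-1}$. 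Note that $\thr^k_n$ depends only on $w \coloneqq \abs{x}$, so everything is about deciding $w \le k$ versus $w \ge k+1$. The easy bound: $\thr^k_n$ is obtained from $\MAJ_N$, with $N = \max(2k+1,\,2n-2k-1) \le 2n$, by fixing $n-2k-1$ of its inputs to $1$ and the rest to $0$ (pad with $0$s if $n \le 2k+1$); restricting an $\eps$-probabilistic polynomial for $\MAJ_N$ gives one for $\thr^k_n$ of no larger degree, and $\MAJ_N$ has an $\eps$-probabilistic $\F_q$-polynomial of degree $O(\sqrt{N\log(1/\eps)})$ (``exact counting in a narrow window, random subsampling away from it''), so we already get degree $O(\sqrt{n\log(1/\eps)})$ --- and this is the claimed bound $O(\sqrt{k\log(1/\eps)})$ in the base case $n = O(k)$.

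For $n \gg k$, the plan is to drive the \emph{effective scale} down to $\Theta(k)$ by randomized reductions before invoking the base case, the basic move being: hash the $n$ coordinates into $\Theta(k)$ (or $\poly(k)$) buckets and replace each bucket by the probabilistic $\OR$ over $\F_q$ of its coordinates, which is identically $0$ on empty buckets, so the resulting string $y$ has $\abs{y} \le \abs{x}$ --- ``$w \le k$'' is never destroyed --- and $\E[\abs{y}]$ tracks $\abs{x}$ (exactly below the birthday threshold, saturating above it). Interleaving such hashes with random subsampling, one can arrange that the surviving instance is on $\Theta(k)$ bits while the weight stays comparable to the threshold; whenever the current weight lands in a window of $O(\sqrt{k\log(1/\eps)})$ levels around the current threshold, I would compute $\thr^k_n$ there exactly by a symmetric $\F_q$-polynomial of degree $O(\sqrt{k\log(1/\eps)})$ (the weight-level interpolation of \cref{lem:interpolate}); otherwise a low-degree probabilistic \emph{selector} --- built from base-case thresholds on the $\Theta(k)$-bit string --- routes to the appropriate sub-case. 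With $W \coloneqq \Theta(\sqrt{k\log(1/\eps)} + \log(1/\eps))$, each ingredient has degree $O(W)$, the geometric shrinking of the scale makes the degrees sum (geometrically) to $O(W)$, a union bound over the $O(1)$ probabilistic ingredients (each re-amplified to error $\eps/O(1)$ at constant cost) keeps the total error $\le \eps$, and properness costs only a factor $q - 1 = O(1)$ by composing the final polynomial with the univariate map $\F_q \to \{0,1\}$ fixing $0 \mapsto 0$ and $1 \mapsto 1$, which makes every polynomial in the distribution Boolean-valued without changing its value where it was already correct.

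I expect the scale-reduction to be the main obstacle, and it is the reason the degree depends on $k$ rather than $n$: one cannot just subsample to a small set, because when $k \ll n$ the weight $w$ is a negligible fraction of $n$, so no small sample can distinguish $w = k$ from $w = k+1$; the hashing is what forces the instance down to the intrinsic scale, and the $O(\sqrt{w\log(1/\eps)})$ fluctuation it injects is precisely what fixes the window width, hence the final degree, at $\Theta(\sqrt{k\log(1/\eps)})$. Two points make the execution delicate: (i) the weight $w$ can be anywhere in $[0,n]$, so the reduction must be staged so that at each stage the weight is not too large relative to the current scale (one does not get the concentration for free in a single shot); and (ii) the selectors and the window interpolant are polynomials in the hashed variables, and composing them with the $\Theta(\log(1/\eps))$-degree $\OR$-gadgets that define those variables would naively multiply the degrees, so one must argue, non-black-box, that the $\OR$-gadgets can be run at a far milder per-bucket error (their errors only push $\abs{y}$ downward, and the window already tolerates a $\Theta(\sqrt{k\log(1/\eps)})$ downward shift) and spliced into the outer thresholds without blow-up.
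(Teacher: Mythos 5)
This lemma is imported verbatim from Srinivasan, Tripathi, and Venkitesh (FSTTCS 2019, Theorem~3); the paper does not prove it, so there is no in-paper proof to compare your attempt against. I can only assess whether your reconstruction of the cited argument holds together on its own terms.

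Your high-level anatomy is the right one: a degree-$O(\sqrt{N\log(1/\eps)})$ probabilistic $\F_q$-polynomial for $\MAJ_N$ as the base case, an exact mod-$p^j$ interpolation on a window of width $\Theta(\sqrt{k\log(1/\eps)})$ around the threshold, a recursion whose per-level degree contribution shrinks geometrically, and a final $(\cdot)^{q-1}$ to enforce properness. The concrete gap is in your ``basic move.'' You hash into $\Theta(k)$ buckets and replace each bucket by a probabilistic $\F_q$-OR, then claim you can run the OR-gadgets at constant per-bucket error because ``their errors only push $|y|$ downward, and the window already tolerates a $\Theta(\sqrt{k\log(1/\eps)})$ downward shift.'' That last clause is where the argument breaks: constant per-bucket error (say $1/q$) over $\Theta(k)$ nonempty buckets produces an \emph{expected} downward shift of $\Theta(k)$, not $\Theta(\sqrt{k\log(1/\eps)})$, which is far outside the window. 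To bring the expected shift down to the window width you would need per-bucket error $O(\sqrt{\log(1/\eps)/k})$, i.e.\ per-gadget degree $\Theta(\log_q(k/\log(1/\eps)))$, and composing that with the outer degree $\Theta(\sqrt{k\log(1/\eps)})$ overshoots the target by a $\log k$ factor. So the hashing-plus-Boolean-OR route either blows up the degree or the error; the ``non-black-box splicing'' you gesture at is exactly the missing lemma, and you do not supply it. The way the known proof sidesteps this is to make the scale reduction a \emph{degree-one} operation (coordinate subsampling, or an $\F_q$-linear hash that does not Boolean-ize the buckets), so there is no degree multiplication at all and the only cost of the reduction is concentration error, which the window is designed to absorb. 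In short: right skeleton, but the reduction you chose is not the one in the reference and carries an unclosed quantitative gap.
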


\begin{lemma}
\label{lem:gateapprox}
    For any $\g(k)$ gate $G$ of fan-in $n$ and constant prime $q$, there is an $\eps$-probabilistic $\F_q$-polynomial of degree $O(k + \log(1/\eps))$ computing $G$.\footnote{By \emph{constant prime}, we mean that $q$ does not grow with $n$. In particular, the $O(\cdot)$ expressions may hide factors depending on $q$.}
\end{lemma}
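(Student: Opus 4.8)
The plan is to build $p_T$ from two ingredients: an exact low-degree polynomial for $G$'s behaviour \emph{inside} the radius-$k$ Hamming ball (via \cref{lem:interpolate}), and a low-degree probabilistic polynomial for $\thr^k$, which detects whether the input has \emph{left} the ball (via \cref{lem:thr-prob-poly}). First I would normalize the gate: by replacing $G$ with $1-G$ if necessary (a probabilistic polynomial $p$ for $1-G$ yields $1-p$ for $G$, with the same degree and error), we may assume $G$ outputs the constant $c=0$ on every input of Hamming weight $>k$, while on inputs of weight $\le k$ it computes some fixed $f\colon\{0,1\}^n\to\{0,1\}$.

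Let $p_0\in\F_q[x_1,\dots,x_n]$ be the degree-$\le k$ polynomial from \cref{lem:interpolate} with $p_0(x)=f(x)$ for all $x$ with $\abs{x}\le k$; on the ball $p_0$ is $\{0,1\}$-valued, but off the ball it may take arbitrary $\F_q$ values. Let $\calT$ be the $\eps$-probabilistic $\F_q$-polynomial for $\thr^k$ from \cref{lem:thr-prob-poly}, whose polynomials have degree $O\!\left(\sqrt{k\log(1/\eps)}+\log(1/\eps)\right)$. For a sample $T\sim\calT$ I would output
\[
p_T(x)\;\coloneqq\;\bigl(p_0(x)\,(1-T(x))\bigr)^{q-1}.
\]
Here $(\cdot)^{q-1}$ is the rounding map $a\mapsto a^{q-1}$ on $\F_q$, which sends $0\mapsto 0$ and every nonzero element to $1$; applying it makes $p_T$ a \emph{proper} polynomial no matter what $p_0$ does off the ball, as \cref{def:prob-poly} demands.

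For correctness, fix $x$. Since $T$ is proper, $1-T(x)\in\{0,1\}$, and with probability $\ge1-\eps$ we have $T(x)=\thr^k(x)$; in that event, if $\abs{x}\le k$ then $p_0(x)(1-T(x))=p_0(x)=f(x)\in\{0,1\}$ and the outer power fixes it, so $p_T(x)=f(x)=G(x)$, while if $\abs{x}>k$ then $T(x)=1$, the product is $0$, and $p_T(x)=0=G(x)$. Thus $\Pr_{T\sim\calT}[p_T(x)\neq G(x)]\le\eps$ for every $x$. For the degree, $\deg\bigl(p_0(x)(1-T(x))\bigr)\le k+\deg(T)$, and since $\sqrt{k\log(1/\eps)}\le\tfrac{1}{2}(k+\log(1/\eps))$ this is $O(k+\log(1/\eps))$; composing with $(\cdot)^{q-1}$ multiplies the degree by the constant $q-1$, so $\deg(p_T)=O(k+\log(1/\eps))$ (the hidden constant depending on $q$, consistent with the lemma's footnote).

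I do not expect a real obstacle; the one point requiring care — and the reason the naive choice $p_T=p_0\cdot(1-T)$ does not literally work — is \emph{properness}: the interpolant is uncontrolled outside the ball, and on the $\le\eps$-probability event that $T$ is wrong there, those non-Boolean values would leak through, so the final composition with $a\mapsto a^{q-1}$ is essential. The only other thing to confirm is that the $\thr^k$ probabilistic-polynomial degree is genuinely absorbed into $O(k+\log(1/\eps))$, which is exactly the AM--GM bound noted above.
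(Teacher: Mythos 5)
Your proof is correct and follows essentially the same route as the paper's: interpolate $f$ on the Hamming ball via \cref{lem:interpolate}, multiply by $1-\thr^k$ using the probabilistic polynomial of \cref{lem:thr-prob-poly}, and enforce properness by raising to the $(q-1)$-th power via Fermat's little theorem, with the same AM--GM degree accounting. The only (cosmetic) difference is that you normalize the off-ball constant to $c=0$ by complementing $G$, whereas the paper keeps $c$ general and writes $\bigl(p(x)(1-Q(x))+c\bigr)^{q-1}$ after interpolating $f-c$.
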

\begin{proof}
Because $G\in \g(k)$, we can express its behavior as \[G(x) = \begin{cases}c & |x| > k \\ f(x) & |x|\le k \end{cases} \] for an arbitrary $f:\bitz^n\to\bitz$ and $c\in \bitz$.  
By \Cref{lem:interpolate}, there exists a (deterministic) degree-$k$ polynomial $p(x)$ that matches $f(x)-c$ when $|x|\le k$. Furthermore, by \cref{lem:thr-prob-poly}, there exists a probabilistic polynomial $Q(x)$ of degree $O(\sqrt{k\log(1/\eps)} + \log(1/\eps))$ that computes $\thr^k$ with error $\eps$.

Consider the probabilistic polynomial 
\[P(x) \coloneqq (p(x)(1-Q(x)) + c)^{q-1}.\] Notice that $\deg(P) = O(k + \sqrt{k\log(1/\eps)} + \log(1/\eps)) = O(k + \log(1/\eps))$, and that the support of $P$ is over proper polynomials by Fermat's Little Theorem (\cref{thm:fermat}). 

When $\abs{x} \le k$, observe that $\Pr[Q(x) = 0]\ge 1-\eps$.  Hence, with probability at least $1-\eps$, we have \[P(x) = (p(x) + c)^{q-1} = f(x)^{q-1} = f(x) = G(x),\]
where we use the fact that $p(x) = f(x) - c$ when $\abs{x} \le k$ and the third equality follows from Fermat's Little Theorem (\cref{thm:fermat}).

When $\abs{x} > k$, $\Pr[Q(x) = 1] \ge 1-\eps$. Therefore, with probability at least $1-\eps$, we have \[P(x) = c^{q-1} = c = G(x).\] Thus in either case, it follows that $P$ computes $G$ with error $\le \eps$.
\end{proof}

We can also show that the degree of the probabilistic polynomial in \cref{lem:gateapprox} is optimal.

\begin{lemma}\label{lem:prob-poly-lower-bound}
There exists a $\g(k)$ gate that requires probabilistic degree $\Omega(k + \log(1/\eps))$.
\end{lemma}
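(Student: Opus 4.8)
The plan is to exhibit a specific $\g(k)$ gate whose probabilistic degree is forced to be large by \emph{two separate} lower bound arguments, and then take the larger (equivalently, combine them). The $\Omega(\log(1/\eps))$ term is the cheap half: it already holds for any non-constant Boolean function, essentially by a union/success-amplification argument. Concretely, I would take $G$ to be, say, the $\OR$ of $k$ of the input bits padded to behave as a constant outside the Hamming ball of radius $k$ (which is a legal $\g(k)$ gate, since on inputs of weight $\le k$ it can compute anything). If $P$ were an $\eps$-probabilistic polynomial of degree $t$ for $G$, then for a fixed input $x$ the probability of error is $\le \eps$; a standard argument (average over the distribution, or a second-moment/amplification step) shows that a degree-$t$ $\eps$-probabilistic polynomial for a function that depends on a coordinate must have $t = \Omega(\log(1/\eps))$ — indeed one can amplify $\eps$-probabilistic degree $t$ to $\eps'$-probabilistic degree $O(t\log(1/\eps')/\log(1/\eps))$ by taking majorities, and combined with the fact that exact (i.e. $\eps < 2^{-n}$) probabilistic degree equals something $\ge 1$ for any non-constant function, one gets the $\log(1/\eps)$ lower bound. (This part mirrors the classical fact that the $\eps$-probabilistic degree of $\OR_n$ is $\Theta(\log(1/\eps))$ for the relevant range.)

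The more interesting half is the $\Omega(k)$ term, and here I would choose $G$ to compute a function on the radius-$k$ Hamming ball that is ``hard'' for low-degree $\F_q$ polynomials even to approximate. The natural candidate is: let $G(x) = \Mod_r(x)$ restricted to $|x|\le k$ (and constant outside), where $r \ne q$ is a small prime, or even simpler, $G(x) = [\,|x| \equiv 0 \bmod 2\,]$ on the ball $|x|\le k$ and $G(x) = 0$ outside — i.e. parity-on-the-ball. The point is that if $\calP$ is an $\eps$-probabilistic $\F_q$-polynomial of degree $t$ for $G$ with $\eps$ bounded away from $1/2$, then by averaging there is a single deterministic degree-$t$ $\F_q$-polynomial $p$ agreeing with $G$ on a $(1-\eps)$-fraction of the slice $\{x : |x| = k\}$ (or of the ball), and a Razborov–Smolensky-style argument on that slice shows any such $p$ must have degree $\Omega(k)$: on the middle slice of a Hamming cube of ``dimension'' $k$, parity (or $\Mod_r$ with $r\ne q$) cannot be approximated by $\F_q$-polynomials of degree $o(k)$. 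Formally I would restrict to inputs supported on a fixed set of $2k$ (or $k$) coordinates and with the others set to $0$, reducing to the statement that the $\eps$-approximate $\F_q$-degree of $\Mod_r$ (or $\MAJ$, or parity) on $m$ variables is $\Omega(\sqrt{m})$ or $\Omega(m)$ respectively — which is exactly the ingredient already quoted in the paper (``any $\F_q$ polynomial approximating $\MAJ$ or $\Mod_q$ must have degree $\Omega(\sqrt{n})$''), only now applied on $m = \Theta(k^2)$ coordinates so that $\sqrt{m} = \Theta(k)$.

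Putting the pieces together: fix a prime $r \ne q$, let $m = \Theta(k^2)$, and define $G$ on $n \ge m$ bits by $G(x) = \Mod_r\!\big(x_1,\dots,x_m\big)$ whenever $|x| \le k$ and $G(x) = 0$ otherwise; note every string with $x_{m+1} = \dots = x_n = 0$ and weight $\le k$ lies in the ball, so $G$ restricted to such inputs genuinely computes $\Mod_r$ on $m$ bits under the promise of weight $\le k$, and by a suitable choice of parameters the relevant slice of that promise is still ``large'' enough for the Razborov–Smolensky bound. Then an $\eps$-probabilistic degree-$t$ polynomial for $G$ yields, after fixing the padding coordinates and averaging the randomness, a deterministic degree-$t$ polynomial $\eps$-approximating $\Mod_r$ on the appropriate domain, whence $t = \Omega(\sqrt m) = \Omega(k)$; combining with the $\Omega(\log(1/\eps))$ bound from the first paragraph (applied to the same or a modified $G$, or simply observing both obstructions apply) gives $t = \Omega(k + \log(1/\eps))$.

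\textbf{Main obstacle.} The delicate point is making the $\Omega(k)$ argument go through on the \emph{Hamming-ball promise} rather than on the full cube: the classical Razborov–Smolensky symmetrization is usually run over $\{0,1\}^m$, and here we only control the polynomial's behavior on strings of weight $\le k$. I expect to resolve this by working on a single Hamming slice of weight exactly $\lfloor k/2\rfloor$ within the first $m$ coordinates (so that parity/$\Mod_r$ on that slice is still a nondegenerate symmetric function and the slice has size $\binom{m}{\lfloor k/2\rfloor}$, large enough that a $(1-\eps)$-agreeing polynomial is forced to high degree), and invoking a slice-version of the polynomial degree lower bound — this is standard but needs to be cited or stated carefully, and choosing $m \asymp k^2$ versus $m \asymp k$ (and $r$ versus parity) trades off which quoted bound one uses.
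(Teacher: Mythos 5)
Your $\Omega(\log(1/\eps))$ half is fine in spirit (the paper does it more directly by putting mass $2^{-\lfloor\log(1/\eps)\rfloor}$ on each point of a tiny hypercube glued to a weight-$k$ string, forcing exact computation of an $\OR$-type function on $\lfloor\log(1/\eps)\rfloor$ variables), but your $\Omega(k)$ half has a genuine gap that I don't think can be patched while keeping $\Mod_r$ or parity as the hidden function.

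The problem is that any symmetric function of $x_{[m]}$ becomes degenerate once you impose the Hamming-ball promise $\lvert x\rvert\le k$. Concretely: on a single slice $\lvert x_{[m]}\rvert = \lfloor k/2\rfloor$, the function $\Mod_r(x_{[m]})$ equals the constant $\lfloor k/2\rfloor\bmod r$, so the ``slice version of Razborov--Smolensky'' you invoke is vacuous --- there is nothing to approximate. And if you instead take the uniform distribution over the whole ball $\{x : \lvert x_{[m]}\rvert \le k\}$ with $m\asymp k^2 \gg k$, that distribution is dominated by the top shell $\lvert x_{[m]}\rvert = k$ (since $\binom{m}{k}/\binom{m}{\le k}\to 1$ as $m/k\to\infty$), where $\Mod_r$ is again constant; hence the \emph{constant} polynomial already $\eps$-approximates your $G$ under that distribution for small $\eps$. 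The Razborov--Smolensky $\Omega(\sqrt m)$ bound is a statement about the uniform distribution on all of $\{0,1\}^m$, and it does not transfer to a sub-exponentially small ball, precisely because there is no mass on the ``balanced'' inputs that make $\Mod_r$ hard. So no choice of parameters $m$ versus $k$ rescues this: the function you propose to hide in the ball is too structured.

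The paper sidesteps this entirely by not exhibiting an explicit hard $G$: it chooses $G$ \emph{uniformly at random} among $\g(k)$ gates, takes $\calD_1$ uniform on the radius-$k$ ball, and runs a Shannon-style counting argument --- for a fixed degree-$(k/2)$ polynomial, a Chernoff bound over the random $G$ gives failure probability $e^{-\Omega(\binom{n}{\le k})}$, and a union bound over all $q^{\binom{n}{\le k/2}}$ such polynomials still leaves a random $G$ that fools every one of them, since $\binom{n}{\le k}\gg\binom{n}{\le k/2}$. In other words, the right ``hard function on the ball'' is a random one, not a symmetric one. If you want an explicit-function version of the $\Omega(k)$ half you would need something genuinely non-symmetric on the ball (e.g.\ a bent/IP-type function), with a bespoke anti-concentration argument replacing the off-the-shelf Razborov--Smolensky bound; that is considerably more work than the paper's counting proof.
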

\begin{proof}
To show a probabilistic degree lower bound of $d$ against a $\g(k)$ gate $G$, it suffices by Yao's minimax principle to construct a hard distribution $\calD$ supported over $\bitz^n$ such that for any degree-$d$ polynomial $p$, $\Pr_{x\sim \calD}[p(x) \neq G(x)] > \eps$. We will show a lower bound of $\max(k/2,\log(1/\eps)) = \Omega(k + \log(1/\eps))$ by showing there exists a gate $\g(k)$ gate $G$ and hard distributions $\calD_1$ and $\calD_2$ such that any polynomial $\eps$-approximating $G$ under $\calD_1$ requires degree $\ge k/2$, and any polynomial $\eps$-approximating $G$ under $\calD_2$ requires degree $\lfloor \log(1/\eps)\rfloor$. We will use the probabilistic method and pick $G\in \g(k)$ uniformly at random.

We will set $\calD_1$ to be uniform over all strings $x$ with $|x|\le k$.
For a fixed polynomial $p$, we see by a Chernoff bound that 
\[\Pr_{G}\left[\Pr_{x\sim\calD_1}[p(x)\neq G(x)] < \eps\right] \le e^{-\frac{1}4\binom{n}{\le k}}.\] 
Union bounding over all $q^{\binom{n}{\le k/2}}$ degree-$(k/2)$ $\F_q$-polynomials tells us that $G$ cannot be computed by any degree-$k/2$ polynomial with error $\eps$ with probability 
\[
\ge 1-q^{\binom{n}{\le k/2}}
\cdot e^{-\frac{1}4\binom{n}{\le k}} \ge 1- e^{-\Omega\left(\binom{n}{\le k}\right)}.
\]

Now let $\calD_2$ be the sample $1^k0^{n-k-\lfloor \log(1/\eps)\rfloor}y$, where $y\sim U_{\lfloor \log(1/\eps)\rfloor}$. Notice that with probability $1/2$, $G'(y):= G(1^k0^{n-k-\lfloor \log(1/\eps)\rfloor}y)$ is either an $\AND$ or $\OR$ up to negation (and with the other $1/2$ probability it is constant). Furthermore, if there exists even one $y$ such that \[p(1^k0^{n-k-\lfloor \log(1/\eps)\rfloor}y)\neq G(1^k0^{n-k-\lfloor \log(1/\eps)\rfloor}y),\] then $\Pr_{x\sim\calD_2}[p(x)\neq G(x)]\ge \frac{1}{2^{\lfloor \log(1/\eps)\rfloor}} > \eps.$ Therefore, any polynomial $p$ $\eps$-approximating $G$ under $\calD_2$ must have the restricted polynomial $p'(y) := p(1^k0^{n-k-\lfloor \log(1/\eps)\rfloor}y)$ exactly compute $G'$. Conditioning on $G$ being an $\AND/\OR$ up to negation, we note that the $\AND/\OR$ over $m$ variables has $\F_q$-degree $m$, and so $\deg(p)\ge\deg(p') = \deg(G') = \lfloor \log(1/\eps)\rfloor$.

Consequently by a union bound, a randomly picked $G$ will require degree $k/2$ to approximate under $\calD_1$ and degree $\lfloor \log(1/\eps)\rfloor)$ to approximate under $\calD_2$ with probability $\ge \frac{1}2  - e^{-\Omega\left(\binom{n}{\le k}\right)} > 0$. Hence, our desired $G$ exists, and the lower bound holds.
\end{proof}

We are now ready to show the main theorem. 
Namely, that proper low-degree $\F_q$ polynomials can approximate 
any $\GC^0(k)[q]$ circuit. 

\begin{theorem}
\label{thm:probpoly}
Let $q$ be a constant prime.
For any $\GC^0_d(k)[q]$ circuit $C$ of size $s$, there exists a proper polynomial $p(x)\in\F_q[x_1,\dots, x_n]$ with $\deg(p)\le O\left((k+\log(1/\eps)(k+\log(s/\eps))^{d-1}\right)$ such that 
\[\Pr_{x\sim U_n}[p(x) \neq  C(x)]\le \eps.\]
\end{theorem}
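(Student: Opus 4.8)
The plan is to run the Razborov--Smolensky recipe on $C$: replace every gate by a suitable probabilistic $\F_q$-polynomial, compose these polynomials bottom-up into a single probabilistic polynomial $\calP$ for the whole circuit, bound $\deg(\calP)$, and then fix the randomness to extract a deterministic proper polynomial $p$ that agrees with $C$ on a $(1-\eps)$-fraction of inputs.

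First I would record per-gate approximators. A literal $x_i$ (or $1-x_i$) is an exact proper degree-$1$ polynomial; a $\NOT$ gate is $y\mapsto 1-y$, exact and degree $1$; a $\Mod_q$ gate on inputs $y_1,\dots,y_m$ is computed exactly and properly by $(\sum_i y_i)^{q-1}$ via \cref{thm:fermat}, of degree $q-1=O(1)$; an $\AND$ or $\OR$ gate has the standard $\delta$-probabilistic proper $\F_q$-polynomial of degree $O(\log(1/\delta))$ (Razborov's random-subset construction); and a $\g(k)$ gate has, by \cref{lem:gateapprox}, a $\delta$-probabilistic proper $\F_q$-polynomial of degree $O(k+\log(1/\delta))$. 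I set $\delta=\eps/(2s)$ for every gate except the output gate, and $\delta=\eps/2$ for the output gate, using independent randomness for each gate.

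Next I compose, processing $C$ from the inputs upward and substituting the already-built polynomials for a gate's children into that gate's approximator. Every approximator is proper and is only ever evaluated at Boolean arguments in this process, so each intermediate polynomial is proper, hence $\calP$ is proper. For correctness: fix $x\in\bitz^n$; by a union bound over the at most $s$ gates, the probability that some gate's approximator errs is at most $(s-1)\cdot\eps/(2s)+\eps/2\le\eps$, and off this event $\calP(x)=C(x)$. For the degree: $\deg(\calP)$ is at most the product of the degrees of the approximators along a root-to-leaf path of $C$. Along such a path there are at most $d$ gates; the output gate contributes $O(k+\log(1/\eps))$ and each of the $\le d-1$ internal gates contributes a factor that is the maximum of $O(1)$, $O(\log(s/\eps))$, and $O(k+\log(s/\eps))$, i.e.\ $O(k+\log(s/\eps))$. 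Hence $\deg(\calP)\le O\big((k+\log(1/\eps))(k+\log(s/\eps))^{d-1}\big)$.

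Finally I derandomize by averaging: since $\Pr_{\calP}[\calP(x)\neq C(x)]\le\eps$ for every fixed $x$, Fubini gives $\E_{\calP}\big[\Pr_{x\sim U_n}[\calP(x)\neq C(x)]\big]\le\eps$, so some $p$ in the support of $\calP$ satisfies $\Pr_{x\sim U_n}[p(x)\neq C(x)]\le\eps$; this $p$ is proper and has the claimed degree. The only real subtlety is the degree bookkeeping: one must notice that the \emph{output} gate can be given the larger error budget $\eps/2$ rather than $\eps/(2s)$, which is exactly what replaces one factor of $O(k+\log(s/\eps))$ by $O(k+\log(1/\eps))$ and yields the stated bound instead of $O((k+\log(s/\eps))^{d})$; this saving is what makes the resulting $\GC^0(k)[q]$ lower bound lossless relative to $\AC^0[q]$. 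Everything else is routine, given the standard $\AND/\OR$ probabilistic polynomial together with \cref{lem:gateapprox} and \cref{thm:fermat}.
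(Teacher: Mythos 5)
Your proof is correct and follows the same approach as the paper: replace each gate with a probabilistic $\F_q$-polynomial (via \cref{lem:gateapprox} for $\g(k)$ gates), give the output gate the larger error budget $\eps/2$ so its degree factor is $O(k+\log(1/\eps))$ rather than $O(k+\log(s/\eps))$, compose, union-bound, and derandomize by averaging. You are in fact slightly more careful than the paper's write-up in using the proper polynomial $(\sum_i y_i)^{q-1}$ for $\Mod_q$ rather than the bare sum $\sum_i y_i$, which is what is actually needed for the composed polynomial to remain proper.
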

\begin{proof}
We will construct a probabilistic low-degree polynomial for each gate in the circuit. By composing these polynomials according to the structure of the circuit, we will obtain a probabilistic low-degree polynomial for the entire circuit. 
This final probabilistic polynomial is the low-degree polynomial approximating the circuit.

For each gate $G\in C$ with fan-in $n_G$, we will associate a probabilistic low-degree polynomial $P_G$ that approximates it. If $G = \NOT$, then $n_G = 1$ and we set $P_G(x) = x + 1$. If $G = \Mod_q$, then we set $P_G(x) = \sum x_i$. If $G\in \g(k)$ and $G$ is not the top gate, we will set $P_G$ to be the probabilistic polynomial with degree $O(k+\log(2s/\eps))$ that computes $G$ with error probability at most $\eps/s$, as given by \Cref{lem:gateapprox}. Otherwise if $G$ is the top gate, we will set $P_G$ to be the probabilistic polynomial with degree $O(k+\log(2/\eps))$ that computes $G$ with error probability at most $\eps/2$.
Note that for all gates $G$ below the top gate in the circuit and all inputs $x$, $\Pr[G(x)\neq P_G(x)]\le \eps/2s$, and $\deg(P_G)\le O(k+\log(s/\eps))$, whereas the top gate $G$ satisfies  $\Pr[G(x)\neq P_G(x)]\le \eps/2$ with $\deg(P_G)\le O(k+\log(2/\eps))$. 

Now, if we replace each gate $G$ with the probabilistic polynomial $P_G$ and compose the polynomials together, we get a probabilistic polynomial $P$ with $\deg(P)\le O((k+\log(2/\eps))(k+\log(2s/\eps))^{d-1})$. 
Fix an input $x$ to the circuit. Let $x_G\in \bitz^{n_G}$ be the bits of $x$ read by gate $G$. 
If $P_G(x_G) = G(x_G)$ for all gates $G$ in $C$, then $P(x) = C(x)$. Therefore, by a union bound and accounting for the larger error on the top gate, we have that
\[\Pr_{p\sim P}[p(x)\neq C(x)]\le \sum_{G}\Pr[P_G(x_G) = G(x_G)]\le \frac{\eps}2 + \frac{\eps}{2s}\cdot s = \eps.\] 
Since $x$ was arbitrary, the above holds for \emph{all} $x$, which means 
\[\eps \ge \E_{x}[\Pr_{p\sim P}[p(x)\neq C(x)]] = \E_{p\sim P}[\Pr_x[p(x)\neq C(x)]].\] 
Hence, by an averaging argument, there exists a polynomial $p$ in the support of $P$ that agrees with $C(x)$ on all but an $\eps$ fraction of inputs.
\end{proof}

\subsection{Probabilistic Circuits for \texorpdfstring{$\G(k)$}{G(k)} Gates With Very Few Random Bits}\label{subsec:gcc-approx}

We prove that $\g(k)$ gates can be approximated by a randomness-efficient depth-$2$ probabilistic circuit (\cref{def:prob-circuit}) comprised of $\AND$ gates of small fan-in in the bottom layer and a $\Mod_q$ gate for any prime $q$ in the top layer, generalizing a prior work of Allender and Hertrampf \cite{AHdepthreduction94}.
This result will be crucial for invoking the lower bound technique of Williams \cite{wil14acc0} as we do in \cref{subsec:ryan-williams}.   

Depth-$2$ probabilistic circuits with $\AND$ gates at the bottom and a $\Mod_q$ gate at the top are an instance of probabilistic $\F_q$-polynomials. In particular, if the $\AND$ gates all have fan-in at most $d$, then these depth-$2$ circuits are probabilistic polynomials of degree $d$.
Therefore, one can view the main result of this subsection (\cref{thm:gk-depth-2}) as a version of \Cref{lem:gateapprox} that uses very few bits of randomness.
To compare, our \cref{lem:gateapprox} uses $\poly(n)$ random bits to construct a probabilistic $\F_q$-polynomial of degree $O(k)$ for any $\g(k)$ gate. 
In this section, we use $O(k^2 \log^2 n)$ random bits to construct a probabilistic $\F_q$-polynomial of degree $O(k^3 \log^2 n)$ for any $\g(k)$ gate.
So, at the cost of a $\poly(k ,\log n)$ factor in the degree, we can obtain an exponential savings in the number of random bits used in our construction.

Our construction uses the following theorem of Valiant and Vazirani.

\begin{theorem}[\cite{valiantvazirani}]
\label{thm:valvaz}
 Let $n \in \N$ and let $S \subseteq \{0, 1\}^n$ be a nonempty set. Suppose $w_1, w_2, \ldots, w_n$ are randomly chosen from $\{0, 1\}^n$. Let $S_0 = S$ and let

\[ S_i = \{v \in S : \langle v,  w_1\rangle = \langle v , w_2\rangle  = \ldots = \langle v , w_i\rangle = 0 \}, \text{ for each } i \in [n]\]
(where the dot product of two vectors $v, w$ of length $n$ is $\langle v, w\rangle = \sum_{j=1}^n v_j w_j \pmod 2$). Let $P_n(S)$ be the probability that $|S_i| = 1$ for some $i \in \{0, \ldots, n\}$. Then $P_n(S) \geq \frac{3}{4}$.
\end{theorem}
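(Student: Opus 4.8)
The statement is the Valiant--Vazirani isolation lemma, and I would prove it the standard way: random $\F_2$-linear hashing plus a second-moment estimate, handled one ``scale'' at a time and then unioned over scales. The combinatorial core is that the map $v \mapsto (\langle v, w_1\rangle, \dots, \langle v, w_i\rangle)$ acts like a pairwise independent hash on nonzero inputs. Concretely, for independent uniform $w_1, \dots, w_i$, each bit $\langle v, w_j\rangle$ is unbiased and independent across $j$, so $\Pr_w[v \in S_i] = 2^{-i}$ for every $v \neq 0$; and for distinct nonzero $v, v'$ --- which over $\F_2$ are automatically linearly independent --- the pair $(\langle v, w_j\rangle, \langle v', w_j\rangle)$ is uniform on $\{0,1\}^2$, giving $\Pr_w[v \in S_i \text{ and } v' \in S_i] = 4^{-i}$. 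I would establish these two identities first.

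Next I would pin down the ``right'' number of hash constraints. Write $N = |S|$; if $N = 1$ then $|S_0| = 1$ and we are done, so assume $N \ge 2$ and pick $i^\star \in \{0, \dots, n\}$ placing $N / 2^{i^\star}$ in a fixed window such as $(\tfrac14, \tfrac12]$ whenever such an $i^\star \le n$ exists, which it does unless $N$ is very close to $2^n$. An inclusion--exclusion (Bonferroni) bound then gives $\Pr_w[\,|S_{i^\star}| = 1\,] \ge \sum_{v \in S}\bigl(2^{-i^\star} - (N-1)4^{-i^\star}\bigr) = \tfrac{N}{2^{i^\star}}\bigl(1 - \tfrac{N-1}{2^{i^\star}}\bigr)$, which is at least a fixed positive constant (roughly $\tfrac{3}{16}$) for $i^\star$ in that window. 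This already yields $P_n(S) = \Omega(1)$.

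The remaining --- and, I expect, genuinely delicate --- step is upgrading this to the clean constant $\tfrac34$, since a single scale is not enough (and the window above is unavailable when $N$ is near $2^n$). For this I would argue over all scales at once: the sizes $|S_0| \ge |S_1| \ge \cdots \ge |S_n|$ form a monotone non-increasing integer sequence, so ``$|S_i| = 1$ for some $i$'' can fail only if the sequence jumps from a value $\ge 2$ directly down to $0$ (or never drops below $2$); conditioned on $|S_i| = m \ge 2$, the probability that the single fresh form $w_{i+1}$ annihilates all $m$ survivors is at most $\tfrac14$, and chaining this telescoping-style estimate across the scales (as Valiant and Vazirani do) pushes the total failure probability below $\tfrac14$. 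One separate corner case is $0 \in S$, where $0$ survives every $S_i$ so that ``$|S_i| = 1$'' really demands eliminating every nonzero element of $S$; this is dispatched by a union bound over the nonzero elements at scale $n$, or sidestepped altogether by the usual trick of composing the linear hash with a random affine shift. I would carry out this bookkeeping (or simply invoke Valiant--Vazirani for the exact constant), and I regard this union-over-scales-plus-edge-case argument as the technical heart of the lemma.
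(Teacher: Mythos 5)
The paper cites this result directly from Valiant--Vazirani and gives no proof of its own; in the application (the proof of \cref{thm:thrifty-raz-smo}) it only invokes the weaker consequence $P_n(S)\ge\tfrac14$. So your sketch can only be assessed on its own terms, not against an internal argument.

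Your skeleton is the standard and correct one: the map $v\mapsto(\langle v,w_1\rangle,\dots,\langle v,w_i\rangle)$ is a pairwise-independent hash on nonzero vectors, a second-moment (Bonferroni) estimate isolates with constant probability at a well-chosen scale, and one then aggregates across scales and handles $0\in S$ separately. The two identities and the single-scale estimate you wrote down are fine. The genuine gap is the ``chaining'' step, and it is not merely unwritten bookkeeping: the claim that chaining the per-scale bound of $\tfrac14$ ``pushes the total failure probability below $\tfrac14$'' is false. Take $S=\{v_1,v_2\}$ for any two distinct nonzero vectors (hence linearly independent over $\F_2$) and set $T_a=\min\{j:\langle v_1,w_j\rangle=1\}$, $T_b=\min\{j:\langle v_2,w_j\rangle=1\}$. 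By pairwise independence $T_a,T_b$ are independent with $\Pr[T_a=j]=2^{-j}$; one checks $|S_i|=[i<T_a]+[i<T_b]$, so $|S_i|=1$ for some $i\le n$ iff $T_a\ne T_b$ and $\min(T_a,T_b)\le n$. The failure probability is therefore
\[
\Pr[T_a=T_b\le n]+\Pr[\min(T_a,T_b)>n]=\sum_{j=1}^{n}4^{-j}+4^{-n}\ \longrightarrow\ \tfrac13 .
\]
Your per-scale bound ($\le\tfrac14$ chance of a drop from $\ge 2$ directly to $0$) is correct and even tight at the first scale, but the drop events at different scales are disjoint and their probabilities sum to $\tfrac13$, not to something below $\tfrac14$. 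So the telescoping argument as you phrased it cannot reach the stated constant $\tfrac34$; indeed this example gives $P_n(\{v_1,v_2\})=\tfrac23-o(1)$, so the constant $\tfrac34$ in the statement as transcribed appears to be too strong as written. The paper's downstream use only needs $P_n(S)\ge\tfrac14$, which your Bonferroni step, summed over a few scales, does deliver; but ``simply invoke Valiant--Vazirani for the exact constant'' is not a proof, and the telescoping estimate should be corrected rather than left as the technical heart.
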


We start by constructing a depth $5$ circuit and then reducing it to depth 2.  

\begin{theorem}\label{thm:thrifty-raz-smo}
    Let \( q \) be a constant prime number. Any $\G(k)$ gate on $n$ bits can be computed by a uniform family of probabilistic circuits of size \( n^{O(k)}\log(1/\eps) \), with \( O(k^2\log^2 n\log(1/\eps)) \) random bits and error $\eps$.
    Furthermore, the circuit has the following structure from top to bottom.
    \begin{itemize}
        \item The first layer (the top output gate) is an $\AND$ of fan-in $O(k\log n\log(1/\eps))$.
        \item The second layer consists of $\Mod_p$ gates with fan-in $n^{O(k)}$.
        \item The third layer consists of $\AND$ gates of fan-in $k$.
        \item The fourth layer consists of $\Mod_p$ gates of size $n^{O(k)}$.
        \item The fifth layer consists of $\AND$ gates of fan-in $O(k\log n)$.
    \end{itemize}
    Furthermore, this circuit can be constructed in $n^{O(k)}$ time.
\end{theorem}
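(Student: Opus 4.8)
The plan is to reduce computing the $\g(k)$ gate $G$ to computing the indicator of a small hard-coded set, and then to realize that indicator by a randomness-efficient circuit built from polynomial interpolation on the Hamming ball, a randomized threshold circuit, and Valiant--Vazirani isolation. Write $G(x)=f(x)$ for $\abs x\le k$ and $G(x)=c$ for $\abs x>k$, where $f:\bitz^n\to\bitz$ and $c\in\bitz$. Put $S:=\{v:\abs v\le k,\ f(v)\ne c\}$; since $S$ lies in the radius-$k$ ball, $\abs S\le\binom n{\le k}=n^{O(k)}$, and one checks directly that $G(x)=c\oplus\mathbbm 1[x\in S]$ for \emph{every} $x$ (for heavy $x$ both sides equal $c$). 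So it suffices to build a circuit of the prescribed shape for $\mathbbm 1[x\in S]$; the trailing $\oplus c$ only negates the top gate and costs nothing.

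\textbf{The two building blocks.} By \Cref{lem:interpolate} there is a degree-$\le k$ polynomial $P(x)=\sum_{\abs T\le k}c_Tx^T\in\F_q[x]$, constructible in $n^{O(k)}$ time, agreeing with $\mathbbm 1[\cdot\in S]$ on all $x$ with $\abs x\le k$; as $P$ has $\le\binom n{\le k}=n^{O(k)}$ monomials of degree $\le k$, the bit $(P(x))^{q-1}=\mathbbm 1[P(x)\ne 0]$ is computed exactly by a $\Mod_q$ gate of fan-in $n^{O(k)}$ fed by $\AND$ gates of fan-in $\le k$ (via \Cref{thm:fermat}). This bit equals $\mathbbm 1[x\in S]$ when $\abs x\le k$ but is junk on heavy inputs, so I would $\AND$ it with a bit computing $\mathbbm 1[\abs x\le k]=\neg\thr^k(x)$; the product kills the junk, yielding $\mathbbm 1[x\in S]$ on all $x$. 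The second block is therefore a \emph{randomness-efficient} probabilistic circuit for $\thr^k$: write $\thr^k(x)=\bigvee_{\abs T=k+1}x^T$, an $\OR$ of $n^{O(k)}$ degree-$(k+1)$ monomials, and compute this $\OR$ by random structured $\F_q$-linear combinations of its terms (concretely $\sum_{\abs T=k+1}(\prod_{i\in T}r_i)\,x^T=e_{k+1}(r_1x_1,\dots,r_nx_n)$, nonzero with good probability when $\abs x\ge k+1$ by a Schwartz--Zippel/moment argument, working over a small extension of $\F_q$ if $q$ is tiny), amplified via Valiant--Vazirani (\Cref{thm:valvaz}) applied to the set of satisfied terms -- which has size $n^{O(k)}$, so only $O(k\log n)$ isolation rounds are needed. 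Drawing the seeds $r$ and the Valiant--Vazirani hyperplanes from an $\eps$-biased (or $O(k)$-wise almost-independent) distribution keeps the randomness at $O(k^2\log^2 n\log(1/\eps))$ while, crucially, still fooling linear tests of \emph{arbitrary} Hamming weight. This generalizes the Allender--Hertrampf depth reduction \cite{AHdepthreduction94} and is the source of the middle $\Mod_q/\AND_k$ layers and of the one-sided amplification done by the top $\AND$.

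\textbf{Assembly and accounting.} I would then stack the pieces into the five layers $\AND_{O(k\log n\log(1/\eps))}/\Mod_q/\AND_k/\Mod_q/\AND_{O(k\log n)}$: the bottom $\AND$ layer carries the degree-$(k+1)$, interpolation, and fingerprint monomials; the two $\Mod_q$ layers (fan-in $n^{O(k)}$, since every term-count above is $\le n^{O(k)}$) perform the linear-combination sums; the $\AND_k$ layer realizes the degree-$k$ products; the top $\AND$ performs amplification and the final product with the $\neg\thr^k$ bit. As an $\F_q$-polynomial the circuit then has degree equal to the product of the fan-ins, $O(k\log n)\cdot k\cdot O(k\log n\log(1/\eps))=O(k^3\log^2 n\log(1/\eps))$ (each $\Mod_q$ contributing only an $O(1)$ factor). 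Correctness follows from a union bound over the $\le n^{O(k)}$ bad randomness events (one per term and per amplification trial), choosing the number of trials and the bias so the total is $\le\eps$ uniformly over all $x\in\bitz^n$; conditioned on no bad event the circuit outputs $\mathbbm 1[x\in S]$ exactly. Finally $S$ is obtained by enumerating the radius-$k$ ball and querying $G$ in $n^{O(k)}$ time, all pseudorandom generators are explicit, and the whole gate list is writable in $n^{O(k)}$ time, giving the stated uniform family.

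\textbf{Main obstacle.} The whole difficulty is the randomness budget together with the rigid layer template. The obvious constructions -- the probabilistic polynomial of \Cref{lem:gateapprox}, a textbook randomized $\OR$, or an off-the-shelf probabilistic polynomial for $\thr^k$ (\Cref{lem:thr-prob-poly}) -- all spend $\poly(n)$ random bits, which is fatal for the \textsc{CircuitSAT} application in \Cref{subsec:ryan-williams}. Getting down to $O(k^2\log^2 n\log(1/\eps))$ forces bounded-independence/$\eps$-biased randomness, and the key technical point I expect to fight with is that such distributions must fool linear tests \emph{regardless of weight} -- this is exactly what rescues the heavy-input case, where $G$ must output $c$ -- while simultaneously supporting Valiant--Vazirani isolation over a universe of size $n^{O(k)}$ and fitting the five-layer $\AND/\Mod_q$ shape that pins the degree at $O(k^3\log^2 n\log(1/\eps))$.
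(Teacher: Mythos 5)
Your overall decomposition matches the paper's: you express $G$ as an $\AND$ of (a) a deterministic $\Mod_q \circ \AND_k$ circuit that evaluates an interpolating polynomial from \Cref{lem:interpolate} on the radius-$k$ ball, and (b) a randomized ``heavy-input killer'' that detects $\abs{x} > k$ and outputs $0$ there, with the error amplified by $\AND$-ing independent copies. That is exactly the paper's $C_1 \wedge C_2$ split, and you correctly identify Valiant--Vazirani over the universe $\binom{[n]}{k+1}$ as the right tool. But your execution of the $\thr^k$ circuit has a genuine gap in two places.

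First, the mechanism for getting the randomness down to $O(k^2\log^2 n)$ per copy is not ``$\eps$-biased / $O(k)$-wise independence,'' as you propose; no derandomization tool is needed at all. The paper's key move is purely combinatorial compression: since $\abs{\binom{[n]}{k+1}} \le n^{O(k)}$, each $(k+1)$-subset $S$ can be encoded as a string in $\bitz^m$ with $m = O(k\log n)$, and \Cref{thm:valvaz} is then applied over the compressed universe $\bitz^m$ with $m$ \emph{fully random} vectors $w_1,\dots,w_m \in \bitz^m$, for $m^2 = O(k^2\log^2 n)$ truly random bits. Running $O(\log(1/\eps))$ fresh copies gives the stated total. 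Your proposal reaches for pseudorandomness because you never explicitly make this encoding move; you even flag ``fooling linear tests of arbitrary Hamming weight'' as a difficulty, but with uniformly random $w_i$ over the $m$-bit encoding space there is no such difficulty --- Valiant--Vazirani guarantees isolation with probability $\ge 3/4$ outright, with no bias analysis required.

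Second, your Schwartz--Zippel fingerprint $e_{k+1}(r_1 x_1,\dots,r_n x_n)$ over an extension field is a red herring: it uses $\Theta(n)$ random field elements before any derandomization, it doesn't obviously compile into the fixed $\AND/\Mod_q$ five-layer template (the layer fan-ins are dictated by the construction, not free parameters), and the paper doesn't use it. What the paper actually does, and what your proposal omits, is the specific $\Mod_q$-based detector: for each isolation level $\ell$, a $\Mod_q$ gate $D_\ell$ tests whether the count of surviving subsets ($S$ with $x^S = 1$ and $\langle S,w_i\rangle = 0$ for all $i \le \ell$) is $\equiv 1 \pmod q$; the hash products $\langle S, w_i\rangle$ are themselves realized as $\Mod_q \circ \AND_m$ circuits (DNF of the $\oplus$ with the $\OR$ replaced by $\Mod_q$ since clauses are disjoint), and the $B_{S,\ell}$ indicators are fan-in-$O(m)$ $\AND$s. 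This is precisely what produces the five layers of the claimed shape. Without writing this out your proof doesn't establish the layer structure the theorem asserts, and the layer structure is the entire point (it is what makes \Cref{lem:ahdepthred} applicable in \Cref{thm:gk-depth-2}).
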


\begin{proof}
    Let $G$ be an arbitrary $\G(k)$ gate. Assume that $G(x) = 0$ for $|x|>k$. Otherwise, we can construct a circuit $C$ computing $\neg G$, and then negate it by using a $\Mod_q$ gate connected to $C$ and $q-1$ 1's. 
    We begin by describing our circuit construction (with some commentary to help digest the circuit's behavior). A rigorous analysis of the construction will then follow.
    
    \textbf{Construction.} 
    It will be helpful to think of the circuit as the $\AND$ of two subcircuits, $C_1$ and $C_2$. On inputs $x$ with $|x|\le k$, $C_1$ will compute $G$ exactly while $C_2$ will output $1$. On the remaining inputs with $|x|>k$, $C_1$ will have arbitrary behavior while $C_2$ will output $0$ with high probability over the probabilistic bits. 
    
    Our circuit $C_1$ is the low-degree polynomial constructed in \Cref{lem:interpolate}. 
    This degree-$k$ polynomial can be constructed in $n^{O(k)}$ time, represented as a depth-2 circuit with fan-in-$k$ $\AND$ gates at the bottom, one $\Mod_q$ gate at the top, and requires no random bits. 
    
    Next, we describe the circuit $C_2$ layer-by-layer, from the inputs to the output gate.
    Define $m\coloneqq \lfloor \log \binom{n}{k+1} \rfloor + 1$. 
    In the first layer, we will have $n + m^2$ bits as input: the input $x$ along with $m^2$ random bits. 
    Identify the random bits as $m$ vectors $w_1\dots w_m\in \bitz^m$.
    Arbitrarily associate each $S\in \binom{[n]}{k+1}$ with a distinct bit string in $\bitz^m$, and denote the length-$m$ bit string associated with $S$ by $(S_1, S_2, \dots, S_m)$. 
    We can then define $\langle S, w_i\rangle \coloneqq \sum S_i w_i \pmod 2$.

    In the second layer, we will compute $\langle S, w_i\rangle \coloneq \sum_{j=1}^m x_iw_{i,j} \pmod 2$ for each $S\in \binom{[n]}{k+1}$ and $i\in [m]$. 
    Each $\langle S, w_i \rangle$ can be computed with a single $\oplus$ gate with fan-in $\le m$ by adding a wire from $w_{ij}$ to the gate iff $S_j = 1$.
    To turn $\oplus$ into $\Mod_q$ and $\AND$ gates, each $\oplus$ gate can be expanded into a DNF of size $\binom{n}k ^{O(1)} = n^{O(k)}$. Because at most one of the bottom-layer $\AND$ clauses can be satisfied simultaneously, we can replace the top $\OR$ gate with a $\Mod_q$ gate. 
    This conversion is done for each $\oplus$ gate, so, in total, we have $\binom{n}k\cdot m$ depth-2 subcircuits of size $n^{O(k)}$, where each subcircuit has a layer of fan-in-$m$ $\AND$ gates in the bottom layer and a single $\Mod_q$ gate at the top. 
    Denote the $\Mod_q$ gate computing $\langle S, w_i\rangle$ by $A_{S,i}$.

    In the third layer, for all $S\in \binom{[n]}{k+1}$ and $0\le \ell \le m$ we will compute the predicates 
    \[B_{S,\ell}\coloneqq \mathbbm{1}\left\{(x^S = 1) \wedge (\forall i\le \ell, \langle S, w_i\rangle = 0)\right\}. \] 
    These predicates are easily computed using the $A_{S,i}$'s. In particular, to compute $B_{S,k}$, take the $\AND$ of $x_i$ for $i\in S$, as well as the $A_{S,i}$ for all $i\le k$. 
    This uses a single $\AND$ gate of fan-in $O(m)$. 
    Notice that if $|x|\le k$, $B_{S,\ell}$ is false for all $S,\ell$.

    In the fourth layer, for $0\le i\le m$, we will compute the predicates 
    \[
    D_\ell \coloneqq \mathbbm{1}\left\{ \Abs{\{S \in \binom{[n]}{k} : x^S = 1 \text{ and } \forall i \leq \ell, \langle S, w_i \rangle = 0 \}} \not\equiv 1 \pmod q \right\}.
    \] 
    In words, $D_\ell$ is $1$ iff the number of sets $S\in\binom{[n]}k$ such that $x^S=1$ and $\forall i\le \ell, \langle S, w_i\rangle = 0$ is \emph{not} one more than a multiple of $q$.  
    This is accomplished by taking the $\Mod_q$ of $B_{S,\ell}$ for all $S$, along with $q-1$ $1$'s. Notice if $|x| > k$, then the set of all $S\in \binom{[n]}{k+1}$ such that $x^S=1$ is nonempty. Hence by \Cref{thm:valvaz}, with probability $\ge 1/4$ there will exist some $\ell$ such that there is exactly one $S$ with $x^S=1$ and $\forall i\le \ell, \langle S, w_i\rangle = 0$. In this case, we will have that $D_\ell = 0$.

    In the fifth layer, we simply take the $\AND$ of all the $D_\ell$, which will have fan-in $m$. By the analysis above, we know this $\AND$ gate will output $0$ with probability $\ge 1/4$ when $|x| > k$.

    We also note that by algorithmically constructing $C_2$ exactly in the manner we described, we can produce $C_2$ in $n^{O(k)}$ time.

    \textbf{Analysis.} Consider an input $x$ to $C$. 
    
    If $|x|\le k$, then we know by our construction of $C_1$ and \Cref{lem:interpolate} that $C_1(x) = G(x)$, and from our construction of $C_2$ that $B_{S,\ell}$ is $0$ for all $S$ and $\ell$. 
    It is clear that if all $B_{S,\ell}$'s $0$, then all the $D_\ell$'s must be $1$. Therefore, $C_2(x) = 1$.  
    Hence, in this case we have, \[C(x) = C_1(x)\wedge C_2(x) = G(x)\wedge 1 = G(x).\]
    
    Now if $|x| > k$, $C_1(x)$ may be arbitrary, but, as argued above, $C_2(x) = 0$ with probability $\ge 1/4$. 
    We can amplify the error probability of $C$ by replacing $C_2$ with $C_2'$, which is an $\AND$ of $O(\log(1/\eps))$ copies of $C_2$. 
    It is easy to see that the behavior of $C$ is preserved when $|x|\le k$. Now when $|x| > k$, \[\Pr[C(x) = 0] = \Pr[C_1(x)\wedge C_2'(x) = 0]\ge \Pr[C_2'(x)=0]\ge 1-(3/4)^{O(\log (1/\eps))}\ge 1-\eps.\] 
    
    We have shown that our circuit $C$ has the desired behavior: computing $G$ with error $\eps$. $C_1$ has size and construction runtime $n^{O(k)}$ and uses no random bits, and $C_2$ has size and construction runtime $n^{O(k)}$ and uses $m^2$ random bits. 
    Hence $C$ will have size and construction runtime $O(n^{O(k)}\log(1/\eps))$ and use $O(k^2\log^2 n\log(1/\eps))$ random bits.

    One can also verify easily that the construction has the desired structure (upon collapsing the cluster of $\AND$ gates at the top of the circuit, and trivially extending circuit $C_1$ past layer 2 using fan-in one gates).
\end{proof}

To shorten this construction to depth-2, we use the following depth-reduction lemma of Allender and Hertrampf \cite{AHdepthreduction94}.

\begin{lemma}[{\cite[Lemma 3]{AHdepthreduction94}}]
\label{lem:ahdepthred}
    Let $q$ be prime. Then every depth-$4$ circuit consisting of
\begin{itemize}
\item one $\Mod_p$ gate with fan-in \( s_1 \) on the top level,
    \item $\AND$ gates with fan-in \( t \) on the second level,
    \item $\Mod_p$ gates with fan-in \( s_2 \) on the third level, and
    \item $\AND$ gates with fan-in \( r \) on the last level
\end{itemize}
can be converted into a depth-$2$ circuit that is a $\Mod_p$ of $s_1 \cdot s_2^{t \cdot (p-1)}$  $\AND$ gates, each with fan-in \( r \cdot t \cdot (p-1) \). Furthermore, this conversion can be done in $O(s_1s_2^{t(p-1)} + rt)$ time.
\end{lemma}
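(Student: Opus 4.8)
The plan is to turn every $\Mod_p$ gate into its $\F_p$-arithmetic form via Fermat's little theorem (\cref{thm:fermat}) and then expand the resulting expression into a sum of $\AND$'s. Write the given depth-$4$ circuit as $\Mod_p(w_1,\dots,w_{s_1})$, where each $w_m=\bigwedge_{i=1}^{t} z_{m,i}$ is one of the second-level $\AND$ gates, each $z_{m,i}=\Mod_p(b_{m,i,1},\dots,b_{m,i,s_2})$ is a third-level $\Mod_p$ gate, and each $b_{m,i,j}$ is a bottom $\AND$ of at most $r$ literals. Over $\F_p$, Fermat's little theorem gives $z_{m,i}=\bigl(\sum_{j=1}^{s_2} b_{m,i,j}\bigr)^{p-1}$, which is $\{0,1\}$-valued, and since a product of $\{0,1\}$-valued quantities equals their $\AND$, we obtain the polynomial identity $w_m=\prod_{i=1}^{t}\bigl(\sum_{j=1}^{s_2} b_{m,i,j}\bigr)^{p-1}$ in $\{0,1\}\subseteq\F_p$.

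Next I would expand this product \emph{formally, without collecting like terms}. Choosing an index $j_{i,\ell}\in[s_2]$ for every $i\in[t]$ and $\ell\in[p-1]$ produces one monomial $\prod_{i,\ell} b_{m,i,j_{i,\ell}}$, and $w_m$ is exactly the sum of these $s_2^{t(p-1)}$ monomials, each with coefficient $1$. Each such monomial is a product of $t(p-1)$ bottom $\AND$ gates, hence itself an $\AND$ of at most $r\cdot t\cdot(p-1)$ input literals (repeated literals only shrink the fan-in). Summing over $m\in[s_1]$, the integer sum $\sum_m w_m$ is congruent mod $p$ to the sum of these $s_1\cdot s_2^{t(p-1)}$ $\AND$-terms, so feeding all of them into one new top-level $\Mod_p$ gate yields a depth-$2$ circuit of exactly the claimed shape that computes the original function. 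Enumerating the index tuples and writing down each $\AND$ gate gives the stated construction time.

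The delicate point---more a matter of care than a real obstacle---is the coefficient bookkeeping: a $\Mod_p$ gate literally tests whether the sum of its input bits vanishes mod $p$, so the reduction is valid only if $\sum_m w_m$ equals the sum of the monomials \emph{with every coefficient equal to $1$}, which is precisely why one must expand the $(p-1)$-th powers and the outer product symbolically rather than simplify in $\F_p$. It is also worth noting that primality of $p$ is used only through the fact that $x^{p-1}$ is the $\{0,1\}$-indicator of $x\not\equiv 0\pmod p$, and that the bottom layer of $\AND$ gates is never modified, since those gates already feed directly into the new top $\Mod_p$ gate.
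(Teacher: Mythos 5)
Your proof is correct and is essentially the standard Allender–Hertrampf argument; the paper itself does not reprove \cref{lem:ahdepthred} but cites it directly from \cite[Lemma 3]{AHdepthreduction94}, where the same Fermat-little-theorem-plus-formal-expansion approach is used. The one point worth flagging explicitly — which you did identify — is that the expansion of $\prod_{i=1}^{t}\bigl(\sum_{j=1}^{s_2} b_{m,i,j}\bigr)^{p-1}$ must be kept as a raw sum of $s_2^{t(p-1)}$ monomials with unit coefficients (duplicating identical monomials as separate $\AND$ gates rather than merging them), since the top $\Mod_p$ gate only sums bits with coefficient $1$; your ``without collecting like terms'' caveat handles this correctly. (As a minor aside, the lemma's statement says ``Let $q$ be prime'' while the gates are $\Mod_p$ — this is a notational slip in the paper, and you correctly treated $p$ as the prime throughout.)
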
 

By applying this lemma twice to our depth-5 probabilistic circuit, we get the following depth-2 probabilistic circuit approximating a $\G(k)$ gate.

\begin{theorem}\label{thm:gk-depth-2}
    Let $q$ be a constant prime. Any $\G(k)$ gate on $n$ bits can be computed by a depth-2 probabilistic circuit using $O(k^2\log^2 n\log(1/\eps))$ random bits, and consists of a $\Mod_q$ of fan-in $2^{O(k^3\log^2n\log(1/\eps))}$ at the top, and $\AND$ gates of fan-in $O(k^3\log^2n\log(1/\eps))$ at the bottom layer. Furthermore, the circuit can be constructed in $2^{O(k^3\log^2 n\log(1/\eps))}$ time.
\end{theorem}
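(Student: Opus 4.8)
The plan is to start from the depth-$5$ probabilistic circuit produced by \cref{thm:thrifty-raz-smo} and collapse it to depth $2$ by invoking the Allender--Hertrampf depth-reduction lemma (\cref{lem:ahdepthred}) twice, exactly as foreshadowed above. Recall that \cref{thm:thrifty-raz-smo} gives, for any $\G(k)$ gate, a probabilistic circuit with $O(k^2\log^2 n\log(1/\eps))$ random bits and error $\eps$ whose layers from top to bottom are $\AND$-$\Mod_q$-$\AND$-$\Mod_q$-$\AND$ with fan-ins $O(k\log n\log(1/\eps))$, $n^{O(k)}$, $k$, $n^{O(k)}$, and $O(k\log n)$, and which is constructible in $n^{O(k)}$ time. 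Throughout, $q$ is a fixed prime, so every $(q-1)$ factor appearing in \cref{lem:ahdepthred} is an $O(1)$ constant.

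\textbf{First reduction.} The root of this circuit is an $\AND$, not a $\Mod_q$, so \cref{lem:ahdepthred} cannot be applied to its top four layers. Instead I apply it to the bottom four layers: hanging below each layer-$2$ $\Mod_q$ gate is a sub-circuit of exactly the form $\Mod_q(\text{fan-in }n^{O(k)})$-$\AND(\text{fan-in }k)$-$\Mod_q(\text{fan-in }n^{O(k)})$-$\AND(\text{fan-in }O(k\log n))$ required by the lemma (unfolding shared gates into a tree, as usual, so that the lemma's fan-in-based bounds apply). The lemma collapses each such sub-circuit into a $\Mod_q$ of $n^{O(k)}\cdot(n^{O(k)})^{k(q-1)}=2^{O(k^2\log n)}$ bottom-level $\AND$ gates, each of fan-in $O(k\log n)\cdot k\cdot(q-1)=O(k^2\log n)$. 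The circuit is now depth $3$: an $\AND$ of fan-in $O(k\log n\log(1/\eps))$ feeding into $\Mod_q$ gates of fan-in $2^{O(k^2\log n)}$ feeding into $\AND$ gates of fan-in $O(k^2\log n)$.

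\textbf{Second reduction.} I pad this to depth $4$ by inserting a trivial fan-in-$1$ $\Mod_q$ gate above the current top $\AND$, obtaining a $\Mod_q(\text{fan-in }1)$-$\AND(\text{fan-in }O(k\log n\log(1/\eps)))$-$\Mod_q(\text{fan-in }2^{O(k^2\log n)})$-$\AND(\text{fan-in }O(k^2\log n))$ circuit, and apply \cref{lem:ahdepthred} once more. With $s_1=1$, $t=O(k\log n\log(1/\eps))$, $s_2=2^{O(k^2\log n)}$, and $r=O(k^2\log n)$, the output is a $\Mod_q$ of $s_1 s_2^{t(q-1)}=2^{O(k^3\log^2 n\log(1/\eps))}$ $\AND$ gates, each of fan-in $rt(q-1)=O(k^3\log^2 n\log(1/\eps))$, which is exactly the claimed structure.

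\textbf{Bookkeeping and the (mild) obstacle.} Since \cref{lem:ahdepthred} is a deterministic, exact transformation---it preserves the computed Boolean value for every fixed setting of the random string---the final circuit still uses only the $O(k^2\log^2 n\log(1/\eps))$ random bits of the \cref{thm:thrifty-raz-smo} circuit and still errs with probability at most $\eps$. For the runtime, building the depth-$5$ circuit costs $n^{O(k)}$, and each of the two invocations of \cref{lem:ahdepthred} runs in time polynomial in its output size, the larger of which is $2^{O(k^3\log^2 n\log(1/\eps))}$; hence the whole construction runs in $2^{O(k^3\log^2 n\log(1/\eps))}$ time. I do not expect a genuine obstacle here---the proof is a bookkeeping exercise around the depth-reduction lemma. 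The only points that need care are (i) that the depth-$5$ circuit is rooted at an $\AND$ rather than a $\Mod_q$, which forces the lemma to be applied first to the bottom four layers and then, after prepending a dummy $\Mod_q$ gate, to the resulting top four; and (ii) keeping $q$ constant so the $(q-1)$ exponents contribute only $O(1)$ factors, while confirming that the size blow-up from the two reductions touches neither the randomness budget nor the error.
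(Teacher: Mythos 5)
Your proof is correct and follows essentially the same route as the paper: start from the depth-$5$ circuit of \cref{thm:thrifty-raz-smo}, collapse the bottom four layers with \cref{lem:ahdepthred}, pad the resulting depth-$3$ circuit to depth $4$ with a dummy fan-in-$1$ gate, and collapse again. The only minor discrepancy is that the paper's text says to add a dummy fan-in-$1$ \emph{$\AND$} gate at the top before the second reduction, whereas you insert a dummy fan-in-$1$ \emph{$\Mod_q$} gate; the latter is the choice that actually produces the $\Mod_q$-$\AND$-$\Mod_q$-$\AND$ shape required by \cref{lem:ahdepthred}, so your version is the one that works (the paper's ``$\AND$'' appears to be a typo). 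Your bookkeeping for fan-ins, randomness, error, and construction time all match.
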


\begin{proof}
    We take the construction of \Cref{thm:thrifty-raz-smo} and apply \Cref{lem:ahdepthred} to all the depth-4 subcircuits. This yields a circuit with an $\AND$ of fan-in $O(k\log n\log(1/\eps))$ at the top, followed by $\Mod_p$ gates of fan-in $n^{O(k)}\cdot  n^{O(k\cdot k(q-1))} = 2^{O(k^2\log n)}$ in the next layer, followed by a final layer of $\AND$ gates of fan-in $O(k^2\log n)$.

    We now apply \Cref{lem:ahdepthred} again on this resulting circuit, where we add a dummy fan-in 1 $\AND$ gate at the top. This gives a depth-2 circuit whose top gate is a $\Mod_q$ of fan-in $2^{O(k^3\log^2n\log(1/\eps))}$, and whose bottom layer are $\AND$ gates of fan-in $O(k^3\log^2n\log(1/\eps))$ as desired.
\end{proof}

\section{Applications to Classical Complexity}\label{sec:apps-to-classical}

\cref{thm:probpoly,thm:gk-depth-2} generalize the seminal works of Razborov \cite{razborov1987lower}, Smolensky \cite{smolensky1987algebraic}, and Allender-Hertrampf \cite{AHdepthreduction94}, which have found use throughout theoretical computer science for nearly four decades. 
We expect most (if not all) of these applications to hold equally well for $\GC^0(k)[p]$ and $\GCC^0$, given our results in the previous section. 
To illustrate this, we have selected three applications to present here. 

In \cref{subsec:avg-case}, we prove average-case lower bounds against $\GC^0(k)[p]$. 
In particular, we prove that exponential-size circuits are necessary for a $\GC^0(k)[p]$ circuit to compute $\MAJ$ or $\Mod_q$ for any prime $q \neq p$. This was the original application of the theorems of Razborov and Smolensky.   

In \cref{subsec:ryan-williams}, we prove that $\e^\NP$ does not have non-uniform $\GCC^0$ circuits of exponential size. This generalizes the celebrated result of Williams \cite{wil14acc0}.

Finally, in \cref{subsec:learning-gc}, we apply a framework of Carmosino, Impagliazzo, Kabanets, and Kolokolova \cite{carmosino2016learning} to give a quasipolynomial time learning algorithm for $\GC^0(k)[p]$ in the PAC model over the uniform distribution with membership queries.

\subsection{Average-Case Lower Bounds for \texorpdfstring{$\GC^0[q]$}{GC0[q]}}\label{subsec:avg-case}

We prove that exponential-size $\GC^0(k)[q]$ circuits are necessary to compute $\MAJ$ and $\Mod_r$ for any prime $r \neq q$.
Our lower bounds generalize the lower bounds of Razborov \cite{razborov1987lower} and Smolensky \cite{smolensky1987algebraic} and follow the same structure. 
The lower bound argument has two main pieces: (1) $\GC^0(k)[q]$ circuits can be approximated by low-degree polynomials and (2) $\MAJ$ and $\Mod_r$ gates require large degree to be approximated by a polynomial. 
The former result was shown in \cref{thm:probpoly}, and the latter is a result of Razborov and Smolensky.

\begin{proposition}[\cite{razborov1987lower,smolensky1987algebraic}]\label{prop:low-deg-mod-maj}
Let $q$ and $r$ be distinct prime numbers, and let $F \in \{\MAJ, \Mod_r\}$. 
For all degree-$t$ polynomials $p(x) \in \F_q[x_1,\dots,x_n]$,  
\[
\Pr_{x \in \bitz^n}\left[p(x) = F(x) \right] \leq \frac{1}{2} + O\left(\frac{t}{\sqrt{n}}\right).
\]
\end{proposition}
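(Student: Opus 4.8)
This is the classical $\F_q$-degree lower bound of Razborov and Smolensky \cite{razborov1987lower,smolensky1987algebraic}, so the plan is to invoke it; for completeness I would reproduce the argument as follows. Fix $F\in\{\MAJ,\Mod_r\}$ and $p\in\F_q[x_1,\dots,x_n]$ of degree $t$, and let $S\coloneqq\{x\in\bitz^n:p(x)=F(x)\}$ be the set of inputs on which $p$ is correct. It suffices to establish the purely combinatorial bound $|S|\le\binom{n}{\le\lfloor n/2\rfloor+O(t)}$: a standard tail estimate gives $\binom{n}{\le n/2+m}\le\bigl(\tfrac12+O(m/\sqrt n)\bigr)2^n$, so taking $m=O(t)$ yields $\Pr_x[p(x)=F(x)]=|S|/2^n\le\tfrac12+O(t/\sqrt n)$. (We may assume $t=o(\sqrt n)$, as otherwise the claim is vacuous.)

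The core of the argument is a ``low-degree interpolation on $S$'' lemma: every function $g\colon S\to\F_q$ coincides on $S$ with some $\F_q$-polynomial of degree at most $\lfloor n/2\rfloor+O(t)$. Granting this, the map sending a polynomial of degree $\le\lfloor n/2\rfloor+O(t)$ to its table of values on $S$ is onto the space of all $q^{|S|}$ functions $S\to\F_q$, while its domain has size $q^{\binom{n}{\le\lfloor n/2\rfloor+O(t)}}$ (one free parameter per squarefree monomial of that degree); comparing the two sizes gives the desired bound on $|S|$. To prove the lemma, I would extend $g$ to an arbitrary multilinear polynomial $\hat g$ over $\bitz^n$, split off its monomials of degree $>n/2$, and rewrite each such monomial using an algebraic identity tied to $F$. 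For $\Mod_r$, work over a constant-degree extension $\F_{q^e}$ containing a primitive $r$-th root of unity $\omega$ (which exists since $\gcd(r,q)=1$ and only costs a constant factor in the exponent of the final count), substitute $z_i\coloneqq 1+(\omega-1)x_i\in\{1,\omega\}$ so that $z_i^{j}=\omega^{jx_i}$ and $\prod_{i=1}^n z_i=\omega^{|x|}$, and note that for $|T|>n/2$ one has $\prod_{i\in T}z_i=\omega^{|x|}\prod_{i\notin T}z_i^{\,r-1}$, whose $z$-degree is $n-|T|<n/2$ once $\omega^{|x|}$ is expressed, on $S$, by a degree-$O(t)$ polynomial in $x$ (this is where agreement of $p$ with $\Mod_r$ on $S$ is used). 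For $\MAJ$ one argues analogously with the sign substitution $x_i\mapsto 1-2x_i$, using that on $S$ the degree-$t$ polynomial $p$ correctly reports which half of the Hamming cube $x$ lies in. Converting back to the $x_i$ variables and summing over the (at most $\binom{n}{\le n}$) monomials yields a polynomial of degree $\le\lfloor n/2\rfloor+O(t)$ agreeing with $g$ on all of $S$.

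The one genuinely delicate point is this folding step in the interpolation lemma: one must verify that pushing the top-degree monomials below $n/2$ costs only an \emph{additive} $O(t)$ in the degree, since it is precisely this that produces the sharp $\tfrac12+O(t/\sqrt n)$ bound rather than a weaker $\tfrac12+O(t^c/\sqrt n)$ or $\tfrac12+o(1)$ statement; the dimension count and the binomial tail estimate are routine. Since the statement is exactly the Razborov--Smolensky theorem, the cleanest path is to carry out (or simply cite) their proof from \cite{razborov1987lower,smolensky1987algebraic}.
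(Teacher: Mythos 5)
The paper offers no proof of this proposition---it is quoted directly from Razborov and Smolensky---so your primary move (cite the classical theorem) is exactly what the paper does, and your counting skeleton (interpolate every function on the agreement set $S$ by a polynomial of degree $\lfloor n/2\rfloor+O(t)$, compare the number of such polynomials with the number of functions on $S$, finish with the binomial tail bound) is the standard one.

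Two details of your sketch, however, would fail as written. First, for $\Mod_r$ with $r\ge 3$: agreement of $p$ with the \emph{Boolean} function $\Mod_r$ on $S$ gives you a single bit (whether $|x|\equiv 0\pmod r$), which does not determine the $r$-valued character $\omega^{|x|}$; so you cannot express $\omega^{|x|}$ on $S$ as a degree-$O(t)$ function of $p(x)$ alone, and the folding identity $\prod_{i\in T}z_i=\omega^{|x|}\prod_{i\notin T}z_i^{r-1}$ has nothing of low degree to substitute for $\omega^{|x|}$. The standard fix is to pass first to degree-$t$ polynomials agreeing with each shifted indicator $[\,|x|\equiv a\pmod r\,]$ (obtained by fixing $O(1)$ padding variables), write $\omega^{|x|}=\sum_a\omega^a[\,|x|\equiv a\,]$, and run the count on the intersection of the agreement sets; for $r=2$ only does your one-line substitution work. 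Second, the $\MAJ$ case is not a ``sign-substitution analogue'': $\prod_{i=1}^n(1-2x_i)=(-1)^{|x|}$ is parity, not majority, so the identity you would need is unavailable. The classical argument for $\MAJ$ (Smolensky, 1993) is different and simpler---every monomial of degree $>n/2$ vanishes identically on $\{x:|x|\le n/2\}$ (and, after complementing variables, on the upper half), so on $S$ one writes $g=(1-p)\cdot g_{\mathrm{low}}+p\cdot g_{\mathrm{high}}$ with both pieces of degree $\le n/2$, giving degree $n/2+t$ directly. Since the proposition is a citation, none of this affects the paper; but if you want your sketch to be self-contained, these are the two steps that need the actual Razborov--Smolensky machinery rather than the analogy.
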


We can prove correlation bounds against $\GC^0(k)[q]$ by combining \cref{thm:probpoly} and \cref{prop:low-deg-mod-maj}.

\begin{theorem}[{Correlation bounds against $\GC^0(k)[q]$}]\label{thm:cor-bound-gc0p}
Let $F\in \{\MAJ, \Mod_r\}$. 
For any depth-$d$ size-$s$ $\GC^0(k)[q]$ circuit $C$, we have 
\[\Pr_{x\in\bitz^n} [C(x) = F(x)]\le \frac{1}{2} + O\left(\frac{(k+\log n)(k +\log(ns))^{d-1}}{\sqrt{n}}\right) + \frac{1}{n}.\]
\end{theorem}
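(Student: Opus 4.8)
The plan is to combine the polynomial-approximation result for $\GC^0(k)[q]$ circuits (\cref{thm:probpoly}) with the degree lower bound for $\MAJ$ and $\Mod_r$ (\cref{prop:low-deg-mod-maj}) via a union-bound / averaging argument. First I would apply \cref{thm:probpoly} with the error parameter $\eps$ set to $1/n$: this gives a proper polynomial $p(x) \in \F_q[x_1,\dots,x_n]$ of degree $D \le O\bigl((k+\log n)(k+\log(ns))^{d-1}\bigr)$ (after substituting $\eps = 1/n$ into the bound $O((k+\log(1/\eps))(k+\log(s/\eps))^{d-1})$ and simplifying) such that $\Pr_{x \sim U_n}[p(x) \ne C(x)] \le 1/n$.

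Next I would bound $\Pr_x[C(x) = F(x)]$ by triangle-inequality-style reasoning: if $C(x) = F(x)$ then either $p(x) = C(x) = F(x)$, or $p(x) \ne C(x)$. Hence
\[
\Pr_x[C(x) = F(x)] \le \Pr_x[p(x) = F(x)] + \Pr_x[p(x) \ne C(x)] \le \Pr_x[p(x) = F(x)] + \frac{1}{n}.
\]
Then I would apply \cref{prop:low-deg-mod-maj} to the degree-$D$ polynomial $p$, which yields $\Pr_x[p(x) = F(x)] \le \frac12 + O(D/\sqrt{n})$. Substituting the bound on $D$ gives the claimed estimate
\[
\Pr_{x\in\bitz^n}[C(x) = F(x)] \le \frac12 + O\!\left(\frac{(k+\log n)(k+\log(ns))^{d-1}}{\sqrt n}\right) + \frac1n.
\]

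The argument is essentially routine once \cref{thm:probpoly} is in hand; the only mild subtlety is bookkeeping the degree expression when plugging in $\eps = 1/n$, checking that $\log(1/\eps)$ and $\log(s/\eps)$ collapse into $\log n$ and $\log(ns)$ terms respectively, so that the degree really is $O((k+\log n)(k+\log(ns))^{d-1})$ rather than something larger. I do not expect any genuine obstacle here — the real content was already established in \cref{thm:probpoly} and \cref{prop:low-deg-mod-maj}, and this proposition is the expected ``polynomial method'' corollary assembling those two ingredients. One could alternatively choose $\eps$ to be a slightly different function of $n$ (e.g. a small constant) to optimize constants, but taking $\eps = 1/n$ is the cleanest choice that makes the additive error term match the stated $1/n$.
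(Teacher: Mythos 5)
Your proposal is correct and follows essentially the same route as the paper: apply \cref{thm:probpoly} with $\eps = 1/n$, union-bound the event $\{C(x)=F(x)\}$ against the approximating polynomial, and invoke \cref{prop:low-deg-mod-maj}. The only (cosmetic) difference is that the paper approximates $\neg C$ and passes through $1-p(x)$, whereas you approximate $C$ directly; both yield the identical bound.
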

\begin{proof}
   By \cref{thm:probpoly}, there exists a polynomial $p(x) \in \F_q[x_1,\dots,x_n]$ with degree \\ $O\left((k+\log(1/\eps))(k +\log(s/\eps)\right)^{d-1})$ such that 
   \[
   \Pr_{x \in \bitz^n}\left[p(x) =  \neg C(x) \right] \ge 1-\eps.
   \]
   Then
   \begin{align*}
    \Pr_{x\in\bitz^n} [ C(x) = F(x)] &= \Pr_{x\in\bitz^n} [\neg C(x) \neq F(x)]  \\
    &\le \Pr_{x\in\bitz^n}[p(x) \neq F(x)] + \Pr_{x\in\bitz^n}[p(x)\neq \neg C(x)] \\ 
    &\leq \Pr_{x\in\bitz^n}[1 - p(x) = F(x)] + \eps \\
    &\le \frac{1}{2} + O\left(\frac{(k+\log(1/\eps))(k +\log(s/\eps))^{d-1}}{\sqrt{n}} \right) + \eps, 
   \end{align*}
   where the second inequality follows from the fact that $\Pr_{x \in \bitz^n}\left[p(x) =  \neg C(x) \right] \ge 1-\eps$ and the third inequality follows from \cref{prop:low-deg-mod-maj}.
   The result follows from setting $\eps = 1/n$.
\end{proof}

As a corollary, we get a lower bound for $\GC^0(k)[q]$. 

\begin{corollary}\label{thm:gc0-lower-bound}
    Let $q$ and $r$ be distinct prime numbers, let $F \in \{\MAJ, \Mod_r\}$, and let $k = \Theta(n^{1/2d})$. 
    Any depth-$d$ $\GC^0(k)[q]$ circuit that computes $F$ must have size
    $2^{\Omega\left(n^{1/2(d-1)}\right)}$.
\end{corollary}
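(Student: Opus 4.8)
The plan is to read off the corollary from the correlation bound \cref{thm:cor-bound-gc0p} (equivalently, from the two-step ``approximate, then separate'' template underlying it): a depth-$d$ $\GC^0(k)[q]$ circuit $C$ of size $s$ that \emph{computes} $F$ on every input trivially has $\Pr_{x\sim U_n}[C(x)=F(x)] = 1$, so plugging $C$ into \cref{thm:cor-bound-gc0p} forces
\[
1 \;\le\; \frac12 + O\!\left(\frac{(k+\log n)\,(k+\log(ns))^{d-1}}{\sqrt n}\right) + \frac1n .
\]
For all large $n$ the $\tfrac1n$ term is harmless, so the degree parameter must satisfy $(k+\log n)(k+\log(ns))^{d-1} = \Omega(\sqrt n)$; all that remains is to solve this for $s$ under the hypothesis $k=\Theta(n^{1/2d})$.

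Unwinding the argument behind \cref{thm:cor-bound-gc0p} makes the bookkeeping cleaner. By \cref{thm:probpoly}, applied with a \emph{constant} error $\eps$ (we only need to push the agreement below $1$), $C$ is $0.1$-approximated by a proper $\F_q$-polynomial whose degree is governed by $k^d$ — the cost of the $\g(k)$ gates, which by \cref{lem:gateapprox} is only linear in $k$ per layer — plus a $(\log s)^{d-1}$ term from the remaining $\AND/\OR/\Mod_q$ structure; and by \cref{prop:low-deg-mod-maj} any polynomial agreeing with $\MAJ$ or $\Mod_r$ on a $0.9$ fraction of inputs has degree $\Omega(\sqrt n)$. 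Setting the two against each other, $C$ can compute $F$ only if $k^d + (\log s)^{d-1} = \Omega(\sqrt n)$. This is precisely where the hypothesis $k=\Theta(n^{1/2d})$ enters: $n^{1/2d}$ is the largest growth rate for which $k^d = O(\sqrt n)$, and by taking the constant hidden in $\Theta$ small enough we can keep $k^d$ below, say, half of the $\Omega(\sqrt n)$ requirement. The circuit-size term must then carry the remainder, giving $\log s = \Omega\!\left(n^{1/2(d-1)}\right)$, i.e. $s = 2^{\Omega(n^{1/2(d-1)})}$.

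The step I expect to be most delicate is this final parameter balancing — in particular, arranging the degree accounting so the dependence on $s$ enters as $(\log s)^{d-1}$ rather than $(\log s)^{d}$ (this is why \cref{thm:probpoly} charges only constant error to the output gate, making the top layer essentially free so that the $s$-dependence comes from the remaining $d-1$ layers), and making sure $k^d$ does not on its own exhaust the $\Omega(\sqrt n)$ budget and render the bound vacuous. Everything else is an immediate substitution into \cref{thm:cor-bound-gc0p} (or into \cref{thm:probpoly} combined with \cref{prop:low-deg-mod-maj}).
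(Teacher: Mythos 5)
Your approach tracks the paper's exactly: plug $\Pr[C=F]=1$ into the correlation bound (\cref{thm:cor-bound-gc0p}, itself built from \cref{thm:probpoly} and \cref{prop:low-deg-mod-maj}), then solve for $s$. You also correctly identify why the exponent ends up as $d-1$ rather than $d$ — the top gate is charged only a \emph{constant} error so its degree contribution is independent of $s$.

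However, I want to flag a bookkeeping point that you should not take at face value, because it affects whether the claimed size bound actually follows. You write the degree as ``$k^d + (\log s)^{d-1}$''. That is the shorthand given in the \emph{introduction's restatement} of \cref{thm:probpoly}, but the theorem as proved (and stated in \cref{sec:raz-smo}) gives degree
\[
O\bigl((k+\log(1/\eps))\,(k+\log(s/\eps))^{d-1}\bigr) = O\bigl(k\,(k+\log s)^{d-1}\bigr)\quad\text{for constant }\eps ,
\]
and this is \emph{not} dominated by $k^d + (\log s)^{d-1}$: the cross-term $k(\log s)^{d-1}$ is present and dominates both endpoints whenever $1 \ll k \ll \log s$. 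When you then impose $k = \Theta(n^{1/2d})$ and ask for the smallest $\log s$ with $k(k+\log s)^{d-1} \ge c\sqrt{n}$, the binding constraint is $k(\log s)^{d-1} \ge c\sqrt n$, i.e.\ $(\log s)^{d-1} \gtrsim \sqrt n / k = \Theta(n^{(d-1)/2d})$, which gives only $\log s = \Omega(n^{1/2d})$, not $\Omega(n^{1/(2(d-1))})$. So the ``$(\log s)^{d-1}$ must carry the remainder'' step in your proposal is doing real work that the actual degree bound does not license.

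For context, the paper's own one-line derivation (``solving for $s$'' to get $s \ge 2^{\Omega(n^{1/(2(d-1))}-k^{d/(d-1)})}$) appears to carry the same simplification: if you start from $k(k+\log s)^{d-1}\ge c\sqrt n$ you obtain $\log s \ge (c\sqrt n/k)^{1/(d-1)} - k$, and the factor $k^{-1/(d-1)}$ multiplying $n^{1/(2(d-1))}$ is precisely what reduces the exponent from $1/(2(d-1))$ to $1/(2d)$ once $k=\Theta(n^{1/2d})$ is substituted. So your proposal is faithful to the route the paper takes, but both proofs should be re-examined at the step where the cross-term is dropped; as written the algebra supports $s\ge 2^{\Omega(n^{1/2d})}$ rather than the stated $s\ge 2^{\Omega(n^{1/(2(d-1))})}$ in the regime $k=\Theta(n^{1/2d})$.
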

\begin{proof}
   Let $C$ be a size-$s$, depth-$d$ $\GC^0(k)[q]$ circuit $C$ that can compute $F \in \{\MAJ, \Mod_r\}$. We have \[1 = \Pr[C(x) = F(x)] \le  \frac{1}2 + O\left(\frac{(k+\log n)(k +\log(ns))^{d-1}}{\sqrt{n}}\right) + \frac{1}{n}.\] 
   By solving for $s$, we can conclude that 
   \[
   s \geq 2^{\Omega\left(n^{1/2(d-1)} - k^{d/(d-1)} \right)}. 
   \]
   Plugging in $k$ gives the desired result.
\end{proof}

We can improve our average-case lower bounds for $\GC^0(k)[q]$ to average-case lower bounds for $\GC^0(k)[q]\mathsf{/rpoly}$. 
Recall that $\mathsf{/rpoly}$ means the circuit gets random advice as additional input. 
In other words, one gets to choose a probability distribution over polynomially many bits that depends on the input size (but not the specific input), and the circuit gets to draw one sample from this distribution. 

\begin{theorem}[Average-case lower bound for {$\GC^0(k)[q]$}]\label{thm:mod-p-correlation-bounds}
Let $q$ and $r$ be distinct prime numbers, and let $F \in \{\Mod_r, \MAJ\}$.
    There exists an input distribution on which any $\GC^0(k)[q]\mathsf{/rpoly}$ circuit of depth $d$, $k = O(n^{1/2d})$, and size at most $\exp\left(O\left(n^{1/2.01d} \right)\right)$ only computes $F$ with probability $\frac{1}{2} + \frac{1}{n^{\Omega(1)}}$.
\end{theorem}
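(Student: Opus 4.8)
The plan is to upgrade the correlation bound of \cref{thm:cor-bound-gc0p} by first hitting the circuit with a random restriction that simplifies its bottom layer. A restriction is needed because when $k$ is as large as $\Theta(n^{1/2d})$ the approximating polynomial of \cref{thm:probpoly} already has degree $\Theta(\sqrt n)$, which via \cref{prop:low-deg-mod-maj} only gives correlation $\tfrac12+\Omega(1)$ with $\MAJ$ or $\Mod_r$; a restriction with survival probability $p$ shaves a factor of $p$ off the bottom-layer polynomial degree while shrinking the number of live variables only by a factor of $p$, and a factor $p$ beats the resulting $\sqrt p$ loss in $\sqrt{(\text{live variables})}$.

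\textbf{Reducing $\mathsf{/rpoly}$ away.} The distribution $\calD$ we build depends only on $n$, $d$, and $F$, so fixing the advice bits of a $\GC^0(k)[q]\mathsf{/rpoly}$ circuit yields a deterministic $\GC^0(k)[q]$ circuit of the same depth and size, and averaging over the advice reduces the theorem to: for every deterministic depth-$d$ size-$s$ $\GC^0(k)[q]$ circuit $C$ with $s\le\exp(O(n^{1/2.01d}))$, $\Pr_{x\sim\calD}[C(x)=F(x)]\le\tfrac12+n^{-\Omega(1)}$.

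\textbf{The hard distribution.} Fix a small constant $\delta=\Theta(1/d)$, set $p=n^{-\delta}$ and $m\approx pn$ (adjusting $m$ by $O(1)$ so that $n-m$ is even and $F$ restricts cleanly to its $m$-variable version). Draw $x\sim\calD$ by picking a uniform $T\subseteq[n]$ of size $m$, setting $x$ on $[n]\setminus T$ to a uniformly random string of Hamming weight exactly $(n-m)/2$, and setting $x$ on $T$ uniformly. Then $F(x)$ equals $F$ evaluated on the $m$ live bits $x|_T$ (for $F=\MAJ$, the $m$-bit majority), and $x|_T$ is uniform; the conditioning to a balanced fixed block is exactly what prevents the restricted majority from becoming a heavily biased threshold.

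\textbf{Restriction analysis and wrap-up.} Regard $\calD$ as a random restriction $\boldsymbol{\rho}$ (fixing $[n]\setminus T$ to the balanced string) followed by a uniform assignment to $T$. A bottom-level $\g(k)$ gate of fan-in at least $8k$ has, among its fixed inputs, expected Hamming weight at least $2k$, so by a Chernoff bound that weight exceeds $k$ with probability $1-e^{-\Omega(k)}$ (the conditioning only improves concentration, by negative association), forcing the gate to its constant value; a bottom-level $\AND$ or $\OR$ of fan-in at least $20\log s$ becomes constant with probability $1-s^{-\Omega(1)}$. Since $k=\Theta(n^{1/2d})\gg\log s$, a union bound over the $\le s$ gates shows that with probability $1-o(1)$ every bottom-level gate of $C|_{\boldsymbol{\rho}}$ is either constant or a function of $O(\max(kp,\log s))$ live bits (and any $\Mod_q$ gate has $\F_q$-degree $q-1$), hence has an exact $\F_q$-polynomial of degree $D_1=O(kp+\log s)$. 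Feeding this bound for the bottom layer into the construction of \cref{thm:probpoly} and using \cref{lem:gateapprox} for the other $d-1$ layers (each multiplying the degree by $O(k+\log(s/\eps))=O(k)$, since $\log s$ and $\log(1/\eps)$ are $O(k)$ for $\eps=n^{-\delta/2}$), we get for $C|_{\boldsymbol{\rho}}$ an $\eps$-approximating $\F_q$-polynomial of degree $D=O(k^{d}p+k^{d-1}\log s)$. For $\delta$ small enough this is $o(\sqrt m)$: the first term because $k^{d}p/\sqrt{pn}=\Theta(\sqrt p)\to0$, the second because $k^{d-1}\log s=O(n^{1/2-\Theta(1/d)})$. \cref{prop:low-deg-mod-maj} on the $m$ live variables then gives, for every good $\boldsymbol{\rho}$, $\Pr_{y\sim U_m}[C|_{\boldsymbol{\rho}}(y)=F|_{\boldsymbol{\rho}}(y)]\le\tfrac12+O(D/\sqrt m)+\eps\le\tfrac12+n^{-\Omega(1)}$; averaging over $\boldsymbol{\rho}$, where the $o(1)$ bad event is absorbed, and combining with the reduction above completes the proof. \textbf{The main obstacle} is this last part: recognizing that one needs to simplify only the bottom layer (a factor-$p$ saving there is enough to overcome the $\sqrt p$ loss from passing to the $m$-bit subproblem), and verifying that the balanced-block conditioning simultaneously keeps the restricted target function genuinely $\sqrt m$-hard and leaves the Chernoff bounds intact.
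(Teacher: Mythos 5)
Your proof is correct in outline but takes a genuinely different route from the paper's. The paper's argument is indirect: assume for contradiction that no hard distribution exists, apply Yao's minimax principle to obtain a distribution over $\GC^0_d(k)[q]$ circuits that succeeds with probability $\tfrac12+\eps$ on \emph{every} input, amplify (majority of $O(1/\eps^2)$ samples to reach $0.99$, then approximate majority of $O(n)$ samples to reach $1-\exp(-n)$), derandomize by a union bound to get a single deterministic circuit computing $F$ exactly, and invoke the worst-case lower bound of \cref{thm:gc0-lower-bound}. Your argument is direct: you build the hard distribution explicitly as a random restriction (uniformly chosen balanced fixed block plus uniform live bits), show the restriction trivializes the bottom layer (a $\g(k)$ gate of fan-in $\geq 8k$ sees fixed Hamming weight $>k$ whp and hence becomes its constant; small-fan-in gates retain only $O(kp+\log s)$ live inputs), feed this into the layer-by-layer degree composition of \cref{thm:probpoly} to obtain degree $o(\sqrt m)$ on the $m$ live variables, and then apply \cref{prop:low-deg-mod-maj} on those $m$ variables. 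The structural difference is significant: the amplification step in the paper's argument increases the circuit depth by an additive constant (or factor of two) while $k$ stays $\Theta(n^{1/2d})$, and the degree bound of \cref{thm:probpoly} then scales as roughly $k^{\text{depth}}\ge\Omega(\sqrt n)$, so one must track carefully whether \cref{thm:gc0-lower-bound} still bites at the new depth; your restriction argument sidesteps this by never raising the depth and instead shaving a factor $p$ off the bottom-layer contribution, which beats the $\sqrt p$ loss in $\sqrt{(\text{live variables})}$. The price you pay is the extra bookkeeping: conditioning the fixed block to be balanced (so the restricted target stays genuinely $\Omega(\sqrt m)$-hard and the Chernoff tails stay honest under negative association), choosing $\delta=\Theta(1/d)$ small enough to beat both the $k^d p$ and $k^{d-1}\log s$ contributions, and union-bounding the bad-restriction event. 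One small remark: the failure probability of the restriction is not merely $o(1)$ but $\exp(-\Omega(n^{\Theta(1/d)}))$, which is what lets you absorb it into the $n^{-\Omega(1)}$ error; it is worth stating this explicitly rather than leaving it as $o(1)$, since the latter would not suffice on its own.
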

\begin{proof}
Toward a contradiction, assume that for all input distributions, there exists a $\GC^0_d(k)[q]/\mathsf{rpoly}$ circuit with $k = O(n^{1/2d})$ and size $2^{\Omega(n^{1/2.01d})}$ that computes $F$ with probability $1/2 + \eps$ for $\eps = 1/n^{o(1)}$.
Then Yao's minimax principle implies that there exists a distribution over $\GC^0_d(k)[q]$ circuits that computes $F$ with probability $1/2 + \eps$ on every input.
By drawing $O(1/\eps^2)$ samples from this distribution and taking the majority vote of their outputs, we obtain a new circuit that computes $F$ with probability $0.99$ on every input. 
Recall that one can compute majority on $m$ bits with a size-$2^{O(n^{1/d})}$ $\AC^0$ circuit \cite{haastad2014correlation}. 
Therefore, since $O(1/\eps^2) = n^{o(1)}$, the majority of the $\GC^0_d(k)[q]$ circuits can be computed in depth $d$ and size $2^{n^{o(1)}}$, which doubles the depth of the original circuit and only increases the size by a negligible amount. 

Next, we amplify the success probability from $0.99$ to $1 - \exp(-n)$, for some $\exp(-n) < 2^{-n}$, by sampling $O(n)$ circuits that succeed with probability $0.99$ and taking their majority vote. 
Since the circuits succeed with probability $0.99$, it is easy to see that a $0.99$-fraction of the votes will be $0$'s or $1$'s with high probability. 
Hence, the approximate majority construction of Ajtai and Ben-or \cite{ajtai1984theorem} suffices, which can be performed by a polynomial-size $\AC^0$ circuit.\footnote{For the unfamiliar reader, the approximate majority circuit will output $``1"$ when at least a $0.75$-fraction of the inputs are $1$, $``0"$ when at most a $0.25$-fraction of the inputs are $0$, and behave arbitrarily otherwise.}

Because this distribution over $\GC^0_d(k)[q]$ circuits fails to compute $F$ with probability less than $2^{-n}$, it follows by union bounding over all $2^n$ inputs that there exists one circuit in the distribution that computes $F$ on all inputs. 
Hence, we have constructed a $\GC^0_d(k)[q]$ circuit of depth $2d + O(1)$, $k=O(n^{1/2d})$, and size $\exp(n^{1/2.01d})$, contradicting \cref{thm:gc0-lower-bound}.
\end{proof}

\subsection{Non-Uniform \texorpdfstring{$\GCC^0$}{GCC0} Lower Bounds}\label{subsec:ryan-williams}

We prove that there are languages in $\e^\NP$ that fail to have polynomial-size $\GCC^0(k)$ circuits for certain values of $k$ (which are stated carefully in \cref{thm:main-gcc0}).
Recall that $\e$ is the class of languages that can be decided by a Turing machine in time $2^{O(n)}$.
This generalizes the breakthrough work of Williams \cite{wil14acc0} who proved that there are languages in $\NEXP$ and $\e^\NP$ that fail to have polynomial-size $\ACC^0$ circuits.
Here we focus on $\e^\NP$ instead of $\NEXP$ because we get a stronger size-depth tradeoff. 
We note that similar arguments can show that $\NEXP$ fails to have $\GCC^0(k)$ circuits.

These lower bounds are based on Williams' algorithmic method, which, in short, connects the existence of fast algorithms for the \textsc{CircuitSAT} problem to circuit lower bounds. 

\begin{definition}[$\calC$-\textsc{CircuitSAT}]
Given as input a description of a $\calC$ circuit $C$, the $\calC$-\textsc{CircuitSAT} problem is to decide whether there exists an input $x \in \{0,1\}^n$ such that $C(x) = 1$.
\end{definition}

The algorithmic method only works for ``nice'' circuit classes. 

\begin{definition}[Nice circuits \cite{wil14acc0}]
    A \emph{nice} circuit class $\calC$ is a collection of circuit families that:
    \begin{itemize}
        \item contain $\AC^0$: for every circuit family in $\AC^0$, there is an equivalent circuit family in $\calC$, and
        \item is closed under composition: 
        for $\{C_n\}$, $\{D_n\} \in \calC$ and any integer $c$, the circuit family obtained by feeding $n$ input bits to $n^c + c$ copies of $C_n$ and feeding the outputs into $D_{n^c + c}$ is also in $\calC$. 
    \end{itemize}
\end{definition}

Every well-studied circuit class is nice, and it is easy to see that $\GCC^0$ is nice too. 

We can now formally state the essence of the algorithmic method. Specifically, fast algorithms for $\calC$-\textsc{CircuitSAT} imply circuit lower bounds for $\calC$.

\begin{theorem}[{\cite[Theorem 3.2]{wil14acc0}}]
\label{thm:algtolowerbound}
    Let $S(n) \leq 2^{n/4}$ and let $\calC$ be a nice circuit class. 
    There is a $c > 0$ such that, if $\mathcal{C}$-\textsc{CircuitSAT} instances with at most $n + c \log n$ variables, depth $2d + O(1)$, and $O(n S(2n) + S(3n))$ size can be solved in $O(2^n / n^c)$ time, then $\e^{\NP}$ does not have non-uniform $\mathcal{C}$ circuits of depth $d$ and $S(n)$ size.
\end{theorem}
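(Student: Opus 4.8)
This is Williams' algorithmic method, so the plan is to reproduce the argument of \cite[Theorem~3.2]{wil14acc0}; nothing is special to $\GCC^0$ beyond the already-noted fact that it is a nice class. I would argue by contradiction: assume $\e^{\NP}$ \emph{does} have non-uniform depth-$d$ size-$S(n)$ $\calC$ circuits, fix a language $L$ decidable in nondeterministic time $2^{O(n)}$ but not in nondeterministic time $2^n/n^{\omega(1)}$ (such an $L$ exists by a nondeterministic time hierarchy theorem), and use the hypothesized fast $\calC$-\textsc{CircuitSAT} algorithm to decide $L$ in nondeterministic time $2^n/n^{\omega(1)}$ --- a contradiction.

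The first ingredient is the succinct reduction: given $x$ with $\abs{x}=n$, a Cook--Levin-style reduction applied to an oblivious simulation of $L$ produces, in $\poly(n)$ time, a $\poly(n)$-size circuit $V_x$ describing a $3$-CNF $\psi_x$ on $N = 2^{n+O(\log n)}$ variables with $\psi_x$ satisfiable iff $x \in L$; here $V_x$ maps a clause index $i \in \{0,1\}^{n+O(\log n)}$ to the at most three variables and signs in the $i$-th clause, and because the reduction is local this map is in $\AC^0$. The crux is the \emph{easy-witness} step: when $x \in L$, the bits of the lexicographically-least satisfying assignment of $\psi_x$, viewed as a function of $(x,i)$ of input length $O(n)$, form a language in $\e^{\NP}$ --- this is exactly where the $\NP$ oracle gets used, to peel off the witness one bit at a time --- so by the standing assumption this language has a depth-$d$, size-$S(O(n))$ $\calC$ circuit; hardwiring $x$ gives a depth-$d$ circuit $D$ on the index bits that computes a satisfying assignment of $\psi_x$.

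The nondeterministic algorithm for $L$ on input $x$ is then: guess such a $D$; build the circuit $E_{x,D}$ that on a clause index $i$ uses $V_x$ to read the $i$-th clause, evaluates $D$ at its (at most three) variable indices, and outputs $1$ iff the clause is falsified; run the assumed $\calC$-\textsc{CircuitSAT} algorithm on $E_{x,D}$ and accept iff it reports ``unsatisfiable.'' For correctness, note that if $x \notin L$ then $\psi_x$ is unsatisfiable, so any assignment --- in particular the one given by any guessed $D$ --- falsifies some clause, hence every $E_{x,D}$ is satisfiable and no branch accepts; and if $x \in L$ the easy-witness circuit makes $E_{x,D}$ unsatisfiable and that branch accepts. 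The parameter bookkeeping is precisely what the theorem asserts: $E_{x,D}$ has $n+O(\log n)$ inputs, lies in $\calC$ because it is assembled from $O(1)$ copies of the depth-$d$ circuit $D$, the $\AC^0$ clause-extraction and clause-check gadgets, and one further depth-$d$ application of the hypothesis (replacing the fixed verifier of $L$ by a $\calC$ circuit), giving depth $2d+O(1)$ and --- after the analysis in \cite{wil14acc0} --- size $O(n\,S(2n)+S(3n))$; this step uses closure of $\calC$ under composition and $\AC^0 \subseteq \calC$ from niceness. Running the $O(2^n/n^c)$-time algorithm on the $2^{n+O(\log n)}$-variable instance $E_{x,D}$, together with the $\poly(S(\poly(n)))$ time to build it, keeps the whole procedure within nondeterministic time $2^n/n^{\omega(1)}$ for a suitable $c$, contradicting the choice of $L$.

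The genuine obstacle --- and the reason this is a cited theorem rather than a one-liner --- is making the easy-witness lemma mesh with the parameters: one must verify that $\e^{\NP}\subseteq\calC$ really does force small-$\calC$-circuit witnesses for exactly these succinct-SAT instances, and then route every blow-up (the $O(\log n)$ extra index bits, the depth doubling, the $n\,S(2n)+S(3n)$ size) through the reduction without overshooting the budget needed to beat the time hierarchy. All of this is carried out in \cite{wil14acc0}, and since $\calC=\GCC^0$ is nice, the theorem applies verbatim.
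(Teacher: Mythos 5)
The paper states this theorem purely as a citation to Williams and gives no proof of its own, so there is no in-paper argument to compare against. Your reconstruction of Williams' Theorem 3.2 is essentially correct: the nondeterministic time hierarchy sets up the contradiction, the succinct Cook--Levin reduction produces the exponential-size $3$-CNF $\psi_x$ with a polynomial-size clause-indexing circuit, and the $\NP$-oracle-computed lex-least satisfying assignment supplies the ``easy witnesses'' for free---this is precisely why the $\e^{\NP}$ variant avoids the IKW easy-witness lemma that the $\NEXP$ variant needs. The depth $2d+O(1)$ and size $O(nS(2n)+S(3n))$ come, as you say, from composing the $\calC$ witness circuit, the $\AC^0$ clause-extraction and clause-check gadgets, and a $\calC$ replacement for the polynomial-size verifier, all licensed by niceness ($\AC^0\subseteq\calC$ and closure under composition).

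One slip worth flagging: you first write, correctly, that ``$E_{x,D}$ has $n+O(\log n)$ inputs,'' but two sentences later refer to ``the $2^{n+O(\log n)}$-variable instance $E_{x,D}$.'' The second is a typo (perhaps a conflation with the $N=2^{n+O(\log n)}$ variables of $\psi_x$). It matters, because the entire engine of the argument is that $E_{x,D}$ has only $n+O(\log n)$ input variables, so that an $O(2^n/n^c)$-time $\calC$-CircuitSAT algorithm is a genuine polynomial-factor speedup over the $2^{n+c\log n}=n^c\cdot 2^n$ brute-force bound, yielding the contradiction with the nondeterministic time hierarchy.
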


To apply \cref{thm:algtolowerbound} and obtain our $\GCC^0$ lower bound, we will give fast algorithms for $\GCC^0$-\textsc{CircuitSAT}, showing that the algorithmic method of Williams also lifts from $\ACC^0$ to $\GCC^0$. 
As a starting point, we will recall the $\ACC^0$ satisfiability algorithm and then extend the necessary parts to $\GCC^0(k)$.
Let $\SYM^+$ be the class of depth-two circuits with a layer of $\AND$ gates at the bottom and some symmetric function at the top.
The $\ACC^0$-\textsc{CircuitSAT} algorithm can be modularized as follows. 
Given as input a description of a size-$s$ depth-$d$ $\ACC^0$ circuit (that is comprised of $\AND$, $\OR$, $\NOT$, and $\Mod_m$ gates for a fixed $m$), the algorithm performs the following four steps. 

\begin{enumerate}
    \item Turn each $\Mod_m$ gate into an $\AND$ of $\Mod_p$'s of $\AND$'s, where all gates have constant fan-in and $p$ is some prime dividing $m$. This takes $s^{O(1)}$ time.
    
\item Replace each $\OR$ gate with a probabilistic circuit consisting of a $\Mod_p$ of $2^{\poly(\log s)}$ $\AND$s, each of fan-in $\poly(\log s)$. Call the resulting circuit $C$. $C$ uses $\poly(\log s)$ random bits.

\item Convert $C$ into a $\SYM^+$ circuit $C'$ of size $2^{O(\poly(\log s))}$ whose top symmetric gate can be evaluated in time $2^{O(\poly(\log s))}$.

\item Run a $\SYM^+$\textsc{-CircuitSAT} algorithm on $C'$.
\end{enumerate}

To design a $\GCC^0(k)$-\textsc{CircuitSAT} algorithm, it suffices to modify only the second step in the above blueprint to handle $\g(k)$ gates.
In particular, we will use our \cref{thm:gk-depth-2} to turn a $\G(k)$ gate into a probabilistic circuit with only $\Mod_p$ gates and bounded fan-in $\AND$s with comparable parameters to Step 2 above. (In particular, our circuit will have the same size and $\AND$ fan-in, but with $k\log s$ in place of $\log s$.) 

Now we will prove that Step 2 above holds for $\g(k)$ gates.
We first recall the $\ACC^0$ theorems established in \cite{wil14acc0} that we will use in a black-box manner. In these theorems, we will fix a function $f(d) \coloneqq 2^{O(d)}$ that quantifies the size-depth tradeoffs in these theorems. 
This will be important to track the size-depth improvements we obtain in our $\GCC^0(k)$ lower bounds.

\begin{theorem}[\cite{allender1994uniform, wil14acc0}]
\label{thm:probcirctosym+}
Let $f:\N\to\N$ be a function where $f(d)= 2^{O(d)}$ and let $t \in \N$. 
Let $C$ be a probabilistic circuit with depth $2d=O(1)$, size $2^{t^4}$, no $\OR$ or $\Mod_m$ gates for any composite $m$, and $\AND$ gates of fan-in at most $t^4$ that computes a function with $t^3$ probabilistic inputs and error probability $1/3$. There is an algorithm that, given $C$, outputs an equivalent $\SYM^+$ circuit of size $2^{O(t^{f(d)})}$. The algorithm takes at most $2^{O(t^{f(d)})}$ time.

Furthermore, if the number of $\AND$s in the $\SYM^+$ circuit that evaluate to 1 is known, then the symmetric function in the $\SYM^+$ circuit can be evaluated in $2^{O(t^{f(d)})}$ time.
\end{theorem}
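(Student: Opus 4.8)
The plan is to reduce the statement to the classical transformation of $\SYM\circ\ACC^0$ circuits into $\SYM^+$ circuits of quasipolynomial size (Yao, Beigel--Tarui, used in exactly this form by \cite{wil14acc0}); the only thing I would do by hand is strip off the probabilistic inputs. Since $C$ has just $t^3$ probabilistic inputs, hardwiring each of the $2^{t^3}$ assignments $r$ yields deterministic circuits $C_r$ on the $n$ genuine inputs, each of depth $2d$, size $\le 2^{t^4}$, with bottom $\AND$-fan-in $\le t^4$ and only $\Mod_p$ (prime $p$) among the modular gates. Because $C$ errs with probability $\le 1/3 < 1/2$, for \emph{every} input $x$ we have $f(x) = 1$ iff strictly more than half of the $C_r(x)$ are $1$; hence $f = \MAJ\bigl(C_{r_1}, \dots, C_{r_{2^{t^3}}}\bigr)$ \emph{exactly}, with no error reduction required. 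This exhibits $f$ as a single symmetric gate atop an $\ACC^0$ circuit of depth $2d$ and size $2^{t^3}\cdot 2^{t^4} = 2^{O(t^4)}$ with bounded-fan-in $\AND$s and (after CRT, if more than one prime appears) a single prime modulus.

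Next I would apply the $\SYM\circ\ACC^0 \to \SYM^+$ transformation to this circuit. At the level of a sketch, its mechanism is: replace each bottom $\AND$ of fan-in $\le t^4$ by the corresponding monomial (degree $\le t^4$); replace each $\Mod_p$ gate on inputs $h_1,\dots,h_m$ by the integer polynomial $1 - (h_1+\cdots+h_m)^{p-1}$, which agrees with the gate's Boolean output modulo $p$ (Fermat, \cref{thm:fermat}) while multiplying the degree by $p-1$ and the monomial count by a factor $\le m^{p-1}$ and keeping coefficients of bounded magnitude; propagate through the $2d$ levels and the top symmetric gate so that $f(x) = h(Q(x))$ for a single $\Z$-polynomial $Q$ with $2^{O(t^{f(d)})}$ monomials and coefficients of magnitude $2^{O(t^{f(d)})}$, where the compounded per-level blow-up is absorbed into an $f(d) = 2^{O(d)}$ and $h$ is an \emph{explicit} function of the value of $Q$. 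Finally, expand each coefficient of $Q$ into that many copies of its monomial, so that $Q(x)$ equals the number of satisfied $\AND$ gates (with multiplicity) in a multiset of $2^{O(t^{f(d)})}$ bounded-fan-in $\AND$s, and place $h$ on top as the symmetric function. This produces the desired $\SYM^+$ circuit, the construction runs in time polynomial in its size, i.e.\ $2^{O(t^{f(d)})}$, and since $h$ is explicit and only needs to be evaluated on the integer $Q(x)$, which lies in $[-2^{O(t^{f(d)})}, 2^{O(t^{f(d)})}]$ and is recovered from the count of satisfied $\AND$s, the top symmetric function is evaluable in $2^{O(t^{f(d)})}$ time as claimed.

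The step I expect to be the crux — and the one I would invoke from \cite{allender1994uniform, wil14acc0} rather than reprove — is the passage from ``a polynomial that agrees with the output only modulo $p$ at each gate'' to ``a single $\Z$-polynomial whose value, via an explicit symmetric function, recovers the output globally'', together with the accounting that the coefficient magnitudes and monomial counts stay $2^{O(t^{f(d)})}$. The delicate point is that the naive composition of the mod-$p$ substitutions does not by itself yield a $\{0,1\}$-valued object that can be fed into the symmetric top gate; the Chinese-remainder / MidBit representation of Beigel--Tarui is what makes this work, and one must check that combining it with the $2^{t^3}$ hardwired copies under the majority gate does not inflate the parameters past the stated $2^{O(t^{f(d)})}$ — which is precisely why the bound is phrased with the placeholder $f(d) = 2^{O(d)}$, matching the size--depth bookkeeping of \cite{wil14acc0}.
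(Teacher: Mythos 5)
The paper does not prove \cref{thm:probcirctosym+} at all --- it is imported verbatim as a black box from \cite{allender1994uniform, wil14acc0} --- and your reconstruction (hardwire the $2^{t^3}$ random seeds, observe that error $1/3$ makes $f$ \emph{exactly} the majority of the $C_r$, then invoke the Yao/Beigel--Tarui/Allender--Hertrampf conversion of $\SYM\circ\Mod_p\circ\AND$-type circuits into $\SYM^+$) is precisely the standard argument those references give. You correctly identify the one genuinely delicate step (turning per-gate mod-$p$ agreement into a single integer polynomial read off by an explicit symmetric function, with the $2^{O(t^{f(d)})}$ bookkeeping) and defer it to the citation, which is exactly the level of detail the paper itself commits to.
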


Williams transforms a size-$s$, depth-$d$ $\ACC^0$ circuit into a $\sym^+$ circuit by replacing each $\OR/\AND$ gate with a depth-2 probabilistic circuit with $\AND$ gates of bounded fan-in and then applying \Cref{thm:probcirctosym+} with $t \gets O(\log s)$. 
This is formalized in the following lemma.

\begin{lemma}[\cite{AHdepthreduction94,allender1994uniform,wil14acc0}]
\label{thm:accztosym+}
Let $f: \N \to \N$ be a function where $f(d) = 2^{O(d)}$.
There is an algorithm that, given an $\acc^0$ circuit of depth $d = O(1)$ and size $s$, 
outputs an equivalent $\SYM^{+}$ circuit of size $2^{O(\log^{f(d)} s)}$.
The algorithm takes $2^{O(\log^{f(d)} s)}$ time.

Furthermore, if the number of $\AND$s in the $\SYM^+$ circuit that evaluate to 1 is known, then the symmetric function in the $\SYM^+$ circuit can be evaluated in $2^{O(\log^{f(d)} s)}$ time.
\end{lemma}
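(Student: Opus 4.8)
The plan is to lift Williams' transformation of $\ACC^0$ circuits into $\SYM^+$ circuits \cite{wil14acc0}, invoking \cref{thm:probcirctosym+} as a black box and only re-deriving the step that rewrites a general $\ACC^0$ circuit in the probabilistic $\Mod_p$/$\AND$ form that \cref{thm:probcirctosym+} ingests. The argument has three stages: (i) preprocess the $\Mod_m$ gates; (ii) replace $\AND$ and $\OR$ gates by randomness-efficient probabilistic $\Mod_p$-of-small-$\AND$ subcircuits that \emph{share} their randomness; and (iii) apply \cref{thm:probcirctosym+}.

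\textbf{Stage (i).} By the standard reduction that rewrites each $\Mod_m$ gate (for fixed composite $m$, with $p$ a prime dividing $m$) as a constant-fan-in $\AND$ of $\Mod_p$s of $\AND$s \cite{wil14acc0}, at the cost of $O(1)$ factors in depth and size and $\poly(s)$ preprocessing time, we may assume the circuit uses only $\AND$, $\OR$, and $\Mod_p$ gates for a single prime $p$, with $\NOT$s pushed to the literals by De Morgan.

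\textbf{Stage (ii).} For an $\OR$ gate of fan-in $m'\le s$, run the Valiant--Vazirani isolation of \cref{thm:valvaz} as in the $C_2$ part of the proof of \cref{thm:thrifty-raz-smo}, specialized to detecting whether the input has positive Hamming weight: $O(\log s)$ random hash vectors on $O(\log s)$ bits isolate a unique satisfied coordinate with constant probability, which after collapsing via \cref{lem:ahdepthred} gives a depth-$2$ probabilistic subcircuit — a $\Mod_p$ of $2^{\polylog(s)}$ $\AND$s of fan-in $\polylog(s)$ — that reads $\polylog(s)$ random bits and computes $\OR$ with constant error; a large $\AND$ gate is handled identically via $\AND(x)=\neg\OR(\neg x)$. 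Amplify the per-gate error down to $1/(3s)$ by $\AND$-ing $O(\log s)$ independent copies, organizing the randomness into $O(\log s)$ disjoint blocks: within one gadget the copies use distinct blocks (so errors multiply), but across \emph{different} gates the same $\polylog(s)$ bits are reused, which is harmless because the final union bound over the $\le s$ gates needs no independence. Substituting these gadgets throughout and composing, we obtain a probabilistic circuit of depth $O(d)$ using only $\AND$ and $\Mod_p$ gates, with $\AND$-fan-in $\polylog(s)$, size $2^{\polylog(s)}$, and $\polylog(s)$ random bits, that agrees with the original $\ACC^0$ circuit with error $\le 1/3$.

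\textbf{Stage (iii).} Pick $t=\polylog(s)$ large enough that $t^3$ bounds the number of random bits, $t^4$ bounds the $\AND$-fan-in, and $2^{t^4}$ bounds the size of the probabilistic circuit from Stage (ii). Then \cref{thm:probcirctosym+} outputs an equivalent $\SYM^+$ circuit of size $2^{O(t^{f(d)})}=2^{O(\log^{f(d)}s)}$ in time $2^{O(t^{f(d)})}=2^{O(\log^{f(d)}s)}$, for a suitable $f(d)=2^{O(d)}$ (absorbing into $f$ both the exponent of the $\polylog$ and the $f$ of \cref{thm:probcirctosym+}). The ``furthermore'' clause — evaluating the top symmetric gate in time $2^{O(\log^{f(d)}s)}$ once the number of satisfied $\AND$s is known — carries over verbatim from the corresponding clause of \cref{thm:probcirctosym+}.

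I expect the main obstacle to be the parameter bookkeeping in Stage (ii): one must reuse the random bits across the $\le s$ gadget copies so that the \emph{total} random-bit count remains $\polylog(s)$ — small enough to meet the $t^3$ bound of \cref{thm:probcirctosym+} — while still guaranteeing that the errors multiply \emph{within} each amplified gadget, and one must check that the $\AND$-fan-in, size, and number of random bits are simultaneously bounded as required by \cref{thm:probcirctosym+} for a single choice of $t=\polylog(s)$ after the depth-collapsing of \cref{lem:ahdepthred}. The error analysis (a union bound over $\le s$ gates at per-gate error $1/(3s)$) and the $\Mod_m$ preprocessing are routine.
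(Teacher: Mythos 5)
Your proposal is correct and follows exactly the route the paper intends for this lemma, which is stated there as an import from \cite{AHdepthreduction94,allender1994uniform,wil14acc0} with the same three-stage blueprint (Mod$_m$ preprocessing, Valiant--Vazirani-based probabilistic $\Mod_p$-of-small-$\AND$ gadgets with shared randomness amplified to per-gate error $1/(3s)$, then \cref{thm:probcirctosym+}); it also mirrors the paper's own proof of the $\GCC^0$ analogue, \cref{thm:gccztosym+}. The parameter bookkeeping you flag (total randomness $O(\log^3 s)\le t^3$, $\AND$ fan-in and size within $t^4$ and $2^{t^4}$ for $t=\Theta(\log s)$, union bound needing no cross-gate independence) all checks out.
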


We will get a similar conversion for size-$s$ depth-$d$ $\GCC^0$ circuits by replacing $\G(k)$ gates with our newly constructed depth-2 probabilistic circuits from \Cref{thm:gk-depth-2}, which are comparable in size and identical in depth to the $\AND/\OR$ probabilistic circuit construction used to prove \cref{thm:accztosym+}. 
This allows us to use \Cref{thm:probcirctosym+} with $t \gets O(k\log s)$.

\begin{theorem}
\label{thm:gccztosym+}
Let $f: \N \to \N$ be a function where $f(d) = 2^{O(d)}$.
There is an algorithm that, given a $\GCC^0(k)$ circuit of depth $d = O(1)$ and size $s$, outputs an equivalent $\sym^{+}$ circuit of size $2^{O((k\log s)^{f(d)})}$. 
The algorithm takes at most $2^{O((k\log s)^{f(d)})}$ time.

Furthermore, if the number of $\AND$s in the $\SYM^+$ circuit that evaluate to 1 is known, then the symmetric function in the $\SYM^+$ circuit can be evaluated in $2^{O((k\log s)^{f(d)})}$ time.
\end{theorem}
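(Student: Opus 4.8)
The plan is to re-run Williams' conversion of $\ACC^0$ circuits to $\SYM^+$ circuits (\Cref{thm:accztosym+}) almost verbatim, changing only the gate-replacement step so that it additionally handles $\g(k)$ gates, and then to invoke the black-box transformation of \Cref{thm:probcirctosym+} with the enlarged parameter $t \coloneqq O(k\log s)$ in place of the $t = O(\log s)$ used in the $\ACC^0$ case. The slogan is that every quantity that reads ``$\log s$'' in Williams' argument becomes ``$k\log s$'' here.

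First I would preprocess the input $\GCC^0(k)$ circuit $C$ of depth $d$ and size $s$ exactly as in \cite{wil14acc0}: in $s^{O(1)}$ time, rewrite each $\Mod_m$ gate as a constant-depth circuit of constant-fan-in $\AND$ and $\Mod_p$ gates, $p$ a prime dividing $m$. This leaves a circuit of depth $d' = O(d)$, size $s^{O(1)}$, built from $\AND$, $\OR$, $\NOT$, prime-modulus $\Mod_p$, and $\g(k)$ gates. Next I would replace each gate by a randomness-efficient depth-$2$ probabilistic circuit of the shape ``$\Mod$ of bounded-fan-in $\AND$s,'' run with per-gate error $\eps = 1/(3s)$: for $\AND$ and $\OR$ gates this is the Valiant--Vazirani / Allender--Hertrampf construction (\Cref{thm:valvaz}, \cite{AHdepthreduction94}) already used for \Cref{thm:accztosym+}, and for $\g(k)$ gates this is precisely \Cref{thm:gk-depth-2}. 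Since every gate of $C$ has fan-in at most $s$, \Cref{thm:gk-depth-2} produces, for each $\g(k)$ gate, a $\Mod_q$ (for any fixed prime $q$) of $2^{O(k^{3}\log^{3}s)}$ many $\AND$s of fan-in $O(k^{3}\log^{3}s)$, using $O(k^{2}\log^{3}s)$ random bits and constructible in $2^{O(k^{3}\log^{3}s)}$ time --- that is, the $\AND/\OR$ bounds of Williams with $\log s$ replaced by $k\log s$ (and a slightly larger fixed exponent, which will be harmless).

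Collecting the pieces and reusing random bits across gate replacements exactly as Williams does, I obtain a probabilistic circuit $C'$ of constant depth $O(d)$ that (i) contains no $\OR$ gates and no composite-modulus gates, only $\AND$ and prime-$\Mod$ gates; (ii) has bottom $\AND$ fan-in $O((k\log s)^{3})$ and total size $2^{O((k\log s)^{3})}$; (iii) uses $(k\log s)^{O(1)}$ probabilistic inputs; and (iv) computes $C$ with error at most $1/3$, by a union bound over the at most $s$ replaced gates. Choosing $t \coloneqq C_{0}\,k\log s$ for a large enough absolute constant $C_{0}$, the circuit $C'$ meets all the hypotheses of \Cref{thm:probcirctosym+} (depth $O(1)$, size $\le 2^{t^{4}}$, $\AND$ fan-in $\le t^{4}$, at most $t^{3}$ probabilistic inputs, error $1/3$, no forbidden gates). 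Applying \Cref{thm:probcirctosym+} then yields an equivalent $\SYM^{+}$ circuit of size $2^{O(t^{f(d')})} = 2^{O((k\log s)^{f(d)})}$, computable in $2^{O((k\log s)^{f(d)})}$ time, where the constant inflation $d' = O(d)$ is absorbed into $f(d) = 2^{O(d)}$; the ``furthermore'' clause transfers directly from the corresponding clause of \Cref{thm:probcirctosym+}, and the $s^{O(1)}$ preprocessing time and $2^{O((k\log s)^{3})}$ replacement time are dominated by the final bound.

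The step I expect to be the main obstacle is the bookkeeping in the middle: one must verify that after composing the per-gate replacements, \emph{both} the bottom $\AND$ fan-in and the total number of probabilistic inputs of $C'$ stay polynomial in $k\log s$ --- so that a single linear-in-$k\log s$ choice of $t$ works rather than a polynomially larger $t$, which is ultimately what controls whether the final bound is $2^{O((k\log s)^{f(d)})}$ rather than something worse. This requires importing, not merely citing, Williams' mechanism for keeping the global randomness at $\polylog$ in the $\ACC^0$ setting and checking it survives the substitution $\log s \mapsto k\log s$. A minor additional point to confirm is that the $\Mod_q$ gates introduced by \Cref{thm:gk-depth-2} --- whose prime $q$ need not match the primes arising from the $\Mod_m$ preprocessing --- are still admissible inputs to \Cref{thm:probcirctosym+}; they are, since that theorem only forbids $\OR$ gates and composite moduli, not the simultaneous use of several primes. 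Everything else is a faithful repetition of the $\ACC^0$-to-$\SYM^+$ argument.
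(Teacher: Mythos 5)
Your proposal is correct and follows essentially the same approach as the paper: preprocess $\Mod_m$ gates to prime-modulus gates, replace each remaining gate by a randomness-efficient depth-$2$ probabilistic circuit (with per-gate error $1/(3s)$ and shared randomness) via \cref{thm:gk-depth-2}, and finally feed the result to \cref{thm:probcirctosym+} with $t = \Theta(k\log s)$. The one cosmetic difference is that you replace $\AND/\OR$ gates with the original Allender--Hertrampf construction while letting \cref{thm:gk-depth-2} handle only the $\g(k)$ gates, whereas the paper observes that $\OR$ (and, via De Morgan, $\AND$) is itself a $\g(k)$ gate and so applies \cref{thm:gk-depth-2} uniformly; either way the parameters match.
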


\begin{proof}
    Let $C$ be the given circuit. As in the $\ACCz$ case, we will identically use Step 1 to convert all $\Mod_m$ gates into $\Mod_p$ gates, with $p$ prime, in $s^{O(1)}$ time (see \cite[Appendix A]{wil14acc0} for specific details). Denote this new circuit $C'$. At this point we will now use \Cref{thm:gk-depth-2} to replace each $\G(k)$ gate with a probabilistic circuit that computes the gate except with probability $\eps\coloneqq 1/3s$ and uses the \emph{same random bits} (versus having a fresh supply per gate), which can be done in time $s\cdot 2^{O(k^3\log^3s)}$. Since the fan-in of each $\G(k)$ gate is at most $s$ and $\eps = 1/3s$, it follows that each $\G(k)$ gate is replaced by a depth-2 probabilistic circuit of size $2^{O(k^3\log^3 s)}$ consisting of $\Mod_p$ gates with $p$ prime, and $\AND$ gates of fan-in $O((k\log s)^3)$. Furthermore, the circuit uses $O(k^2\log^3 s)$ random bits altogether. Notice by a union bound, there is at most $s(1/3s) = 1/3$ probability that one of the $s$ probabilistic subcircuits substituted in is faulty. Therefore, the resulting circuit computes $C$ with probability $\ge 2/3$. We finally apply \Cref{thm:probcirctosym+} to construct the desired $\SYM^+$ circuit in the desired time complexity.
\end{proof}

The algorithm in \cref{thm:gccztosym+} is the transformation in Step 2 above.
Hence, all that remains to get our lower bound is to put the pieces together.
To do so, we need the following evaluation algorithm, which takes a $\SYM^+$ circuit as input and outputs its truth table. 

\begin{lemma}[\cite{wil14acc0}]
\label{lem:sym+eval}
    There is an algorithm that, given a $\sym^+$ circuit of size $s \leq 2^{0.1n}$ and $n$ inputs with a symmetric function that can be evaluated in $\poly(s)$ time, runs in $(2^n + \poly(s))\poly(n)$ time and prints a $2^n$-bit vector $V$ which is the truth table of the function represented by the given circuit. That is, $V[i] = 1$ iff the $\sym^+$ circuit outputs 1 on the $i$th variable assignment.
\end{lemma}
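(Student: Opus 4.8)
The plan is to reduce the task to computing, for every input $x \in \bitz^n$, the integer $N(x)$ fed into the top symmetric gate---i.e.\ the number of bottom $\AND$ gates satisfied by $x$---and only then apply the symmetric function. Write the given circuit as $h(N(x))$, where the bottom layer consists of $\AND$ gates $A_1,\dots,A_{s'}$ with $s' \le s$, each a conjunction of literals, so that $N(x) = \sum_{j=1}^{s'} A_j(x)$, and $h \colon \{0,1,\dots,s'\} \to \bitz$ is the top symmetric function. First I would precompute the table $\big(h(0),h(1),\dots,h(s')\big)$ in $\poly(s)$ time using the assumed $\poly(s)$-time evaluator for $h$; after that, the output truth table $V$ is obtained from the table of $N(\cdot)$ by one lookup per entry, in $O(2^n)$ additional time. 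So it suffices to compute the table of $N(\cdot)$ in time $(2^n + \poly(s))\poly(n)$.

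To compute the table of $N(\cdot)$, partition the input coordinates into two blocks $I_1 \dotcup I_2 = [n]$, of sizes $n_1$ and $n_2$ to be fixed later, and write $x = (a,b)$ with $a \in \bitz^{n_1}$, $b \in \bitz^{n_2}$. Each $\AND$ gate $A_j$ is a conjunction of literals, hence factors as $A_j(x) = u_j(a)\,v_j(b)$, where $u_j$ (resp.\ $v_j$) is the indicator that $a$ (resp.\ $b$) satisfies the literals of $A_j$ lying in block $I_1$ (resp.\ $I_2$); each of $u_j,v_j$ is itself the indicator of a subcube. Therefore
\[
N(a,b) \;=\; \sum_{j=1}^{s'} u_j(a)\,v_j(b) \;=\; \big(U^{\mathsf{T}} V\big)_{a,b},
\]
where $U$ is the $s' \times 2^{n_1}$ matrix with $U_{j,a} = u_j(a)$ and $V$ is the $s' \times 2^{n_2}$ matrix with $V_{j,b} = v_j(b)$, both with $\{0,1\}$ entries. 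The matrices $U$ and $V$ can be written down in $\big(s\cdot 2^{n_1} + s\cdot 2^{n_2}\big)\poly(n)$ time (initialize to zero, then for each gate enumerate the subcube of assignments satisfying its block-$I_1$, resp.\ block-$I_2$, literals), which is within budget for any split with $\max(n_1,n_2)$ not too close to $n$. The table of $N(\cdot)$ is then exactly the product $U^{\mathsf{T}} V$, a $2^{n_1}\times 2^{n_2}$ integer matrix with entries in $\{0,\dots,s'\}$, so it fits in $2^n\poly(n)$ bits.

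The remaining---and main---step is to compute this matrix product quickly. Here I would invoke fast rectangular matrix multiplication: there is a constant $\alpha>0$ such that an $N_1\times m$ matrix times an $m\times N_2$ matrix can be multiplied in $(N_1N_2)^{1+o(1)}$ time (indeed $N_1 N_2\,\polylog(N_1N_2)$ time, by Coppersmith-style algorithms) whenever $m \le \min(N_1,N_2)^{\alpha}$. Choosing the partition $(n_1,n_2)$---e.g.\ a balanced split---so that the inner dimension $s$ satisfies $s \le \min(2^{n_1},2^{n_2})^{\alpha}$, which is possible precisely because $s \le 2^{0.1n}$ lies comfortably below the fast-multiplication threshold, computes $U^{\mathsf{T}}V$, hence the table of $N(\cdot)$ and then $V$, in $2^n\poly(n)+\poly(s)$ time overall. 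The main obstacle is exactly this parameter bookkeeping: checking that the size bound $s\le 2^{0.1n}$ keeps the inner dimension below the regime where rectangular matrix multiplication still runs in $N_1N_2\,\polylog$ time for the chosen block sizes. (We note that when the bottom $\AND$ gates have small fan-in $r$---as for the $\SYM^+$ circuits produced by \cref{thm:gccztosym+}---one can sidestep fast matrix multiplication entirely: expand each $A_j$ into at most $2^r$ multilinear monomials over $x_1,\dots,x_n$, so that $N(\cdot)$ is a multilinear polynomial over $\Z$ with at most $s\cdot 2^r$ monomials, and evaluate it at every point of $\bitz^n$ via the subset-sum (zeta) transform in $(2^n + s\,2^r)\poly(n)$ time.)
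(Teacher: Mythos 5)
The paper does not prove this lemma; it is cited directly from Williams \cite{wil14acc0}, so there is no in-paper proof to compare against. Your argument, however, is a faithful reproduction of Williams' original proof: split the $n$ input variables into two halves, factor each bottom $\AND$ gate as a product of an indicator on each half, realize the count $N(x)$ of satisfied $\AND$s as a $2^{n/2}\times s$ by $s\times 2^{n/2}$ rectangular matrix product, compute that product via Coppersmith-style fast rectangular matrix multiplication in $2^n\poly(n)$ time, and finish with a table lookup for the symmetric top gate. This is exactly the decomposition Williams uses, and your attention to the parameter condition (the inner dimension $s\le 2^{0.1n}$ must sit below the rectangular-multiplication threshold $\min(2^{n_1},2^{n_2})^{\alpha}$) is precisely where the hypothesis $s\le 2^{0.1n}$ is consumed; Coppersmith's bound of $\alpha>0.17$, with a balanced split, already comfortably accommodates $s\le 2^{n\alpha/2}$, and the literature has since pushed $\alpha$ well past $0.3$. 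Your parenthetical remark about sidestepping fast matrix multiplication via the zeta transform when the bottom fan-in is small is a correct and well-known alternative, though Williams' lemma as stated is for arbitrary fan-in and hence needs the matrix-multiplication route.
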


This gives us our fast $\GCC^0(k)$-\textsc{CircuitSAT} algorithm. 
Recall that $f: \N \to \N$ in the theorems below is a function $f(d) = 2^{O(d)}$.

\begin{theorem}
\label{thm:gccfasteval}
    For every $d>1$ and  $\eps = \eps(d) \coloneqq .99/f(d)$, the satisfiability of depth-$d$ $\GCC^0(k)$ circuits with $n$ inputs and $2^{n^\eps/k}$ size can be determined in time $2^{n-\Omega(n^\delta/k)}$ for some $\delta > \eps$.
\end{theorem}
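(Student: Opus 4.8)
The plan is to adapt Williams' $\ACC^0$-\textsc{CircuitSAT} algorithm to $\GCC^0(k)$, following the four-step blueprint recalled above but with its Step~2 (the probabilistic $\AND/\OR$ construction) replaced by our randomness-efficient probabilistic circuit for $\g(k)$ gates, \cref{thm:gk-depth-2}; the combined effect of Steps~1--3 after this substitution is already packaged in \cref{thm:gccztosym+}. Concretely, on input a depth-$d$ size-$s$ $\GCC^0(k)$ circuit $C$ on $n$ variables with $s = 2^{n^\eps/k}$, I would fix a parameter $a$ (chosen below), write $x=(u,y)$ with $|u|=a$, and first disjoin over $u$: let $D(y) \coloneqq \bigvee_{u\in\{0,1\}^a} C(u,y)$, an $\OR$ of $2^a$ copies of $C$, so $D$ is a $\GCC^0(k)$ circuit of depth $d+1$ and size at most $2^a s + 1$, and $D$ is satisfiable iff $C$ is. Second, apply \cref{thm:gccztosym+} to $D$ to get an equivalent $\SYM^+$ circuit $D'$ on the $n-a$ remaining variables, of size $2^{O((k\log(2^a s))^{f(d+1)})}$, whose top symmetric gate is evaluable within the same time bound once the number of satisfied bottom $\AND$s is known (this ``furthermore'' clause is precisely the hypothesis \cref{lem:sym+eval} requires). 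Third, run the evaluation algorithm of \cref{lem:sym+eval} on $D'$ to print the $2^{n-a}$-bit truth table of $D'$, and output ``satisfiable'' iff some entry is $1$; correctness is immediate since $D'\equiv D$ and $D$ is satisfiable iff $C$ is.

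For the parameters, I would take $a \coloneqq \Theta(n^\delta/k)$ where $\delta$ is any constant with $\eps < \delta < 1/f(d+1)$ --- a nonempty interval by the choice $\eps=\eps(d)$ (and because $f(d+1)$ is again of the form $2^{O(d)}$). We may assume $k\le n^\eps$, as otherwise the size bound is vacuous; then $\log(2^a s) = a + n^\eps/k + O(1)$ and $k\log(2^a s) = \Theta(n^\delta)$, so $D'$ has size $2^{O(n^{\delta f(d+1)})}$, which lies below $2^{0.1(n-a)}$ for all large $n$ since $\delta f(d+1) < 1$ and $a = o(n)$. Hence \cref{lem:sym+eval} applies and takes $(2^{n-a} + \poly(|D'|))\poly(n) = 2^{n-\Omega(n^\delta/k)}$ time; forming $D$ costs $2^{O(a + n^\eps/k)} = 2^{n-\Omega(n^\delta/k)}$, and the conversion of \cref{thm:gccztosym+} costs $2^{O(n^{\delta f(d+1)})} = 2^{n-\Omega(n^\delta/k)}$ as well, so the whole procedure runs within the stated bound, the trailing $\poly(n)$ being absorbed because $n^\delta/k = \omega(\log n)$.

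The hard part is this parameter balancing, and it is exactly what forces the size restriction in the statement: $a$ must be as large as possible for $2^{n-a}$ to beat the trivial $2^n$, yet the $\SYM^+$ blowup from \cref{thm:gccztosym+} is \emph{doubly} exponential in $(k\log(\text{size of }D))^{f(d+1)}$, so staying under $2^{0.1(n-a)}$ simultaneously caps $ka$ and $n^\eps$ at roughly $n^{1/f(d+1)}$ --- precisely the regime $s\le 2^{n^\eps/k}$ with $\eps$ just below $1/f$. Everything else is routine and delegated to the cited results: the $\Mod_m\to\Mod_p$ preprocessing (Step~1) and the treatment of $\AND$, $\OR$, $\NOT$ gates live inside \cref{thm:gccztosym+} (which itself invokes \cref{thm:gk-depth-2} to turn $\g(k)$ gates into a $\Mod_p$ of bounded-fan-in $\AND$s with only $\poly(k,\log n)$ random bits), and the evaluability of $D'$'s symmetric gate needed by \cref{lem:sym+eval} is the ``furthermore'' clause of \cref{thm:gccztosym+}.
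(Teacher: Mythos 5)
Your proposal is correct and follows essentially the same route as the paper's proof: disjoin $C$ over all assignments to a prefix of the variables, convert the resulting depth-$(d+1)$ circuit to a $\SYM^+$ circuit via \cref{thm:gccztosym+}, and enumerate its truth table with \cref{lem:sym+eval}. If anything, your parameter choice $a=\Theta(n^\delta/k)$ with $\eps<\delta$ is more careful than the paper's (which sets $\ell=n^\eps/k$), and it is what actually delivers the claimed savings of $n^\delta/k$ for some $\delta>\eps$.
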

\begin{proof}
    Consider $C$, a depth-$d$ $\GCC^0$ circuit of size $2^{n^\eps/k}$. For any $\ell\in [n]$, we can create circuit $C'$ of depth $d+1$, size $s2^{\ell}$ over $n-\ell$ inputs by taking $2^{\ell}$ copies of $C$, plugging in a distinct assignment of the first $\ell$ bits into each copy, and then taking the $\OR$ of them. Notice that $C$ is satisfiable iff $C'$ is.  

    We now apply \Cref{thm:gccztosym+} on $C'$ to get an equivalent $\SYM^+$ circuit $C''$, which is a symmetric function of $s''\le 2^{(k(\ell + \log s))^{f(d)}}$ $\AND$s. By \Cref{lem:sym+eval} and the fact the symmetric function can be computed in $\poly(s'')$ time, it follows that upon setting $\ell \coloneqq \log s = n^{\eps}/k$, we get an algorithm that runs in $O(2^{n-\ell}\poly(n)) = 2^{n-\Omega(n^\delta/k)}$ for some $\delta > \eps$.
\end{proof}

Our circuit satisfiability algorithm implies the following lower bound.

\begin{theorem}[$\e^\NP \not\subseteq \GCC^0$]\label{thm:main-gcc0}
    For every $d$, there is a constant $C>1$ and $\delta = \delta(d) \coloneqq 1/Cf(2d)$, such that for all $k \le O(n^\delta/\log n)$, there exists a language in $\e^{\NP}$ that fails to have $\GCC^0(k)$ circuits of depth $d$ and size $\exp\left(\Omega(n^\delta/k)\right)$.
 \end{theorem}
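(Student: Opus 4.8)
The plan is to obtain the lower bound by feeding the fast satisfiability algorithm of \Cref{thm:gccfasteval} into Williams' algorithms-to-lower-bounds connection (\Cref{thm:algtolowerbound}), exactly mirroring how the $\ACC^0$ lower bound of \cite{wil14acc0} follows from the $\ACC^0$-\textsc{CircuitSAT} algorithm. No new combinatorial or algebraic input is needed beyond the tools already assembled in the excerpt; the work is in choosing the parameters so that the two black boxes compose.

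First I would check that $\GCC^0(k)$ is a nice circuit class in the sense required by \Cref{thm:algtolowerbound}: it contains $\AC^0$ (already $\g(k)$ gates compute $\AND$, $\OR$, $\NOT$, and adding $\Mod_m$ gates only helps), and it is closed under composition, since wiring $n^c+c$ copies of a constant-depth circuit into another constant-depth circuit again yields a constant-depth circuit built from the same gate set. This is the routine ``niceness'' check that every well-studied class (and $\GCC^0$) satisfies.

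Next comes the parameter bookkeeping, which is the only delicate part. Since \Cref{thm:algtolowerbound} converts a satisfiability algorithm for depth-$(2d+O(1))$ circuits into a lower bound for depth-$d$ circuits, I would invoke \Cref{thm:gccfasteval} with depth parameter $2d+O(1)$. This produces a constant $\eps_0 = .99/f(2d+O(1))$ and an exponent $\delta_0 > \eps_0$ such that depth-$(2d+O(1))$ $\GCC^0(k)$-\textsc{CircuitSAT} on circuits of size $2^{n^{\eps_0}/k}$ (hence also on any smaller size) is solvable in time $2^{n-\Omega(n^{\delta_0}/k)}$. Because $f(d)=2^{\Theta(d)}$, we have $f(2d+O(1))=\Theta(f(2d))$, so one may pick the constant $C$ large enough that $\delta := \delta(d) = 1/(Cf(2d))$ satisfies $\delta<\eps_0$. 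Now set $S(n):=\exp(c_0 n^{\delta}/k)$ for a small absolute constant $c_0>0$; since $\delta<\eps_0$, for all large $n$ one has $O(nS(2n)+S(3n))\le 2^{n^{\eps_0}/k}$, so a $\GCC^0(k)$-\textsc{CircuitSAT} instance with $n+c\log n=\Theta(n)$ variables, depth $2d+O(1)$, and this size is decided in time $2^{n-\Omega(n^{\delta_0}/k)}=2^{n-\Omega(n^{\delta}/k)}$. That running time is $O(2^n/n^c)$ as soon as $n^{\delta}/k=\omega(\log n)$, i.e.\ whenever $k\le O(n^{\delta}/\log n)$ --- precisely the hypothesis on $k$ --- and $S(n)\le 2^{n/4}$ holds trivially since $\delta<1$. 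Thus the hypothesis of \Cref{thm:algtolowerbound} is met, and we conclude that $\e^{\NP}$ contains a language with no non-uniform depth-$d$ $\GCC^0(k)$ circuit of size $S(n)=\exp(\Omega(n^{\delta(d)}/k))$.

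The main obstacle I anticipate is not any single hard idea but keeping the exponents aligned: one must verify that the exponent $\delta(d)=1/(Cf(2d))$ in the claimed lower bound is genuinely no larger than the exponent $.99/f(2d+O(1))$ governing the size the SAT algorithm tolerates after the depth-doubling of \Cref{thm:algtolowerbound}, and simultaneously that the constraint $k\le O(n^\delta/\log n)$ is exactly what pushes the algorithm's runtime below $2^n/n^c$. Committing to the shape $f(d)=2^{\Theta(d)}$ (so that $f(2d+O(1))=\Theta(f(2d))$) lets all of this go through with a single choice of $C$, and absorbing the additive-constant slack in the depth into $C$ reconciles the $f(2d)$ in this statement with the $f(d)$ appearing in the introduction.
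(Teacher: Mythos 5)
Your proposal matches the paper's proof essentially step for step: verify niceness, invoke \Cref{thm:gccfasteval} at depth $2d+O(1)$, feed it into \Cref{thm:algtolowerbound}, and choose $C$ so that $f(2d+O(1))\le Cf(2d)$, using $k\le O(n^\delta/\log n)$ to make the running time drop below $O(2^n/n^c)$. The bookkeeping and the key observation that the additive-constant depth slack can be absorbed into $C$ because $f(d)=2^{\Theta(d)}$ are exactly what the paper does.
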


\begin{proof}
    By \Cref{thm:gccfasteval}, we know for every $d$, the satisfiability of depth-$d$ $\GCC^0(k)$ of size $2^{O(n^{.99/f(d)})}$ on $n$ inputs can be solved in $2^{n-\Omega(n^\eps/k)}$ time for some $\eps > 1/4f(d)$. Now by \Cref{thm:algtolowerbound}, we know there exists a constant $c>0$ such that if $\GCC^0(k)$-\textsc{Circuit SAT} instances with $n+c\log n$ variables, depth $2d+O(1)$, and size $s = n2^{(2n)^\delta} + 2^{(3n)^\delta}$ can be solved in time $O(2^n/n^c)$, then $E^{NP}$ doesn't have non-uniform $\GCC^0(k)$ circuits of depth $d$ and size $2^{n^\delta}$. Since $f(d) = 2^{O(n)}$, we know $f(2d+O(1)) \le Cf(2d)$ for some constant $C$. Consequently, for $\delta = 1/Cf(2d)$, we can indeed solve depth $2d+O(1)$ and size $n2^{(2n)^\delta} + 2^{(3n)^\delta}\le \exp\left(O(n^{\frac{.99}{f(2d+O(1))}})\right)$ $\GCC^0$ circuits over $n+c\log n$ inputs in time $2^{(n+c\log n) - \Omega((n+c\log n)^\eps/k)} = O(2^n/n^c)$ for small enough constant $c$ (by using the assumption $n^\delta/k=\Omega(\log n)$), yielding the desired lower bound.
\end{proof}

We conclude with some remarks about the extent of our contribution. 
The Williams lower bound of $\e^\NP \not\subseteq \ACC^0$ suffices to prove that there exist languages in $\e^\NP$ that fail to have polynomial-size $\GCC^0$ circuits (or even exponential-size $\GCC^0$ circuits for some small enough exponential function). 
This is achieved by na\"lively transforming the $\GCC^0$ circuit to an $\ACC^0$ circuit. 
Specifically, suppose we have a size-$s$ depth-$d$ $\GCC^0(k)$ circuit, and then we transform each $\g(k)$ gate into a CNF (or DNF, it does not matter). 
The resulting circuit will be a size-$s^k$ depth-$2d$ $\ACC^0$ circuit.
Then, after applying the lower bound for depth-$d$ size-$\exp(\Omega(n^{1/f(2d)}))$ $\ACC^0$ circuits\footnote{This is the lower bound proved by Williams \cite{wil14acc0}. It is also a special case of \cref{thm:main-gcc0} with $k=1$.}, we obtain a separation between $\e^{\NP}$ and depth-$d$ $\GCC^0(k)$ circuits of size $\exp(O(n^{1/Cf(4d)}/k))$. 

In our \cref{thm:main-gcc0}, we get a separation between $\e^\NP$ and depth-$d$ $\GCC^0(k)$ circuits of size $\exp(O(n^{1/Cf(2d)}/k))$. 
The difference is the $f(2d)$ in \cref{thm:main-gcc0} vs. $f(4d)$ in the na\"ive approach that appear in the exponent of the exponent of the circuit size. 
Because $f$ is an exponential function as well, the difference is then a factor of $2$ in the exponent of the exponent of the exponent. Hence, using our result yields an improvement in the triple exponent in the size-depth tradeoff compared to the na\"ive approach.

\subsection{PAC Learning \texorpdfstring{$\GC^0[p]$}{GC0[p]}}\label{subsec:learning-gc}

Carmosino, Impagliazzo, Kabanets, and Kolokolova \cite{carmosino2016learning} gave a quasipolynomial time learning algorithm for $\AC^0[p]$ in the PAC model over the uniform distribution with membership queries. 
We recall their result in more detail and argue that there is a quasipolynomial time learning algorithm for $\GC^0(k)[p]$. 

To begin, we establish some notation and define the learning model. 
For a circuit class $\Lambda$ and a set of size functions $\calS$, $\Lambda[\calS]$ denotes the set of size-$\calS$ $n$-input circuits of type $\Lambda$.  
For a Boolean function $f: \bitz^n \to \bitz$ and $\eps \in [0,1]$, 
$\widetilde{\mathsf{CKT}}_n(f, \eps)$ denotes the set of
all circuits that compute $f$ on all but an $\eps$ fraction of inputs. 

\begin{definition}[Learning model]\label{def:learning-model}
    Let $\calC$ be a class of Boolean functions.
    An algorithm $A$ PAC-learns $\calC$ if for any $n$-variate $f \in \calC$ and for any $\eps, \delta > 0$, given membership query access to $f$, algorithm $A$ prints with probability at least $1-\delta$ over its internal randomness a circuit $C \in \widetilde{\mathsf{CKT}}_n(f,\eps)$. The runtime of $A$ is measured as a function of $T(n, 1/\eps, 1/\delta, \,\size(f))$.
\end{definition}

Carmosino et al.\ establish a connection between learning and natural proofs \cite{razborov1994natural}. We recall the definition of natural proofs here for convenience. 
Let $F_n$ be the collection of all Boolean functions on $n$ variables. 
$\Lambda$ and $\Gamma$ denote complexity
classes. 
A \emph{combinatorial property} is a sequence of subsets of $F_n$ for each $n$.

\begin{definition}[Natural property \cite{razborov1994natural}]\label{def:natural-property}
A combinatorial property $R_n$ is $\Gamma$-natural against $\Lambda$ with density $\delta_n$ if it satisfies the following three conditions:

\begin{itemize}
    \item \textbf{Constructivity:} The predicate $f_n \stackrel{?}{\in} R_n$ is computable in $\Gamma$.
    \item \textbf{Largeness:} $\abs{R_n} \geq \delta_n \abs{F_n}$.
    \item \textbf{Usefulness:} For any sequence of functions $f_n$, if $f_n \in \Lambda$ then $f_n \notin R_n$, almost everywhere. 
\end{itemize}
\end{definition}

A proof that some explicit function is not in $\Lambda$ is called $\Gamma$-natural against $\Lambda$ with density $\delta_n$ when it involves a $\Gamma$-natural property $R_n$ that is useful against $\Lambda$ with density $\delta_n$.
Razborov and Rudich \cite{razborov1994natural} showed that the Razborov-Smolensky lower bound proofs are $\NC^2$-natural against $\AC^0[p]$, where, roughly speaking, the natural property contains functions that cannot be approximated by low-degree polynomials (see \cite[Section 3]{razborov1994natural} and \cite[Section 5]{carmosino2016learning} for further details). 
An immediate implication of our lower bounds (\cref{thm:gc0-lower-bound,thm:mod-p-correlation-bounds}) is that the same property is $\NC^2$-natural against $\GC^0(k)[p]$. 

\begin{theorem}\label{thm:gc0-is-natural}
   For every prime $p$, there is an $\NC^2$-natural property of $n$-variate Boolean functions, with largeness at least $1/2$, that is useful against $\GC^0(k)[p]$ circuits of depth $d$ and of size up to 
   $\exp\left(\Omega(n^{1/2d})\right)$ where $k = O(n^{1/2d})$.
\end{theorem}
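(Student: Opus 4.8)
The strategy is to invoke the natural property that Razborov and Rudich~\cite{razborov1994natural} extract from the Razborov--Smolensky method and to re-verify only its \emph{usefulness} clause, now appealing to \cref{thm:probpoly} in place of the Razborov--Smolensky polynomial approximation for $\AC^0[p]$. Fix the prime $p$ and the depth $d$, fix a constant $\gamma\in(0,\tfrac12)$, say $\gamma=\tfrac14$, and set the degree threshold $\Delta\coloneqq\sqrt n$. Let $R_n$ be the set of Boolean functions $f\colon\bitz^n\to\bitz$ such that no proper $\F_p$-polynomial of degree at most $\Delta$ agrees with $f$ on more than a $(1-\gamma)$-fraction of inputs. (Up to a routine reformulation this is the property of~\cite[Section~3]{razborov1994natural}; as there, membership of $f$ in the property -- or in a closely related proxy with the same largeness and usefulness behavior -- can equivalently be tested by an $\F_p$-rank condition on a matrix of side length $2^{O(n)}$ built from the truth table of $f$.)

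\textbf{Usefulness.} Let $C$ be a depth-$d$ $\GC^0(k)[p]$ circuit with $k\le c_1 n^{1/2d}$ and size $s\le\exp(c_2 n^{1/2d})$, for small constants $c_1,c_2>0$ (depending on $d$) to be fixed. Applying \cref{thm:probpoly} with error parameter $\eps=\gamma$ produces a proper polynomial $q\in\F_p[x_1,\dots,x_n]$ with $\Pr_{x\sim U_n}[q(x)\ne C(x)]\le\gamma$ and
\[
\deg(q)\;\le\; O\!\bigl((k+\log(1/\gamma))(k+\log(s/\gamma))^{d-1}\bigr)\;\le\; O\!\bigl(c_1(c_1+c_2)^{d-1}\bigr)\cdot n^{1/2}.
\]
Choosing $c_1,c_2$ small enough that this bound is at most $\Delta=\sqrt n$, we get $C\notin R_n$. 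Since every function in depth-$d$ $\GC^0(k)[p]$ of this size is computed exactly by such a $C$, the property $R_n$ is useful against depth-$d$ $\GC^0(k)[p]$ circuits of size up to $\exp(\Omega(n^{1/2d}))$ with $k=O(n^{1/2d})$, which is the claimed regime.

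\textbf{Largeness and constructivity.} For largeness, the number of Boolean $f$ that are $(1-\gamma)$-approximable by some proper degree-$\le\Delta$ polynomial is at most the number of such polynomials times the number of Boolean strings within relative Hamming distance $\gamma$ of a fixed one, i.e.
\[
\bigl|\{f : f\notin R_n\}\bigr|\;\le\; p^{\binom{n}{\le\Delta}}\cdot\binom{2^n}{\le\gamma 2^n}\;\le\; 2^{2^{O(\sqrt n\log n)}}\cdot 2^{\mathrm{H}_2(\gamma)\cdot 2^n}\;=\;2^{(\mathrm{H}_2(1/4)+o(1))\cdot 2^n},
\]
where $\mathrm{H}_2$ is the binary entropy function; since $\mathrm{H}_2(1/4)<1$ and $\sqrt n\log n=o(n)$, this is below $\tfrac12\cdot2^{2^n}$ for all large $n$, so $|R_n|\ge\tfrac12|F_n|$. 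For constructivity, deciding $f\in R_n$ from the $2^n$-bit truth table reduces to an $\F_p$-rank computation on a matrix of dimension $2^{O(n)}$, which is in $\NC^2$; this is exactly the observation of Razborov and Rudich, and nothing in their largeness or constructivity arguments depends on the circuit class -- only on $\Delta=o(n)$.

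\textbf{Main obstacle.} The only genuine content is the degree bookkeeping in the usefulness step: one must verify that the degree $O\!\bigl(k\,(k+\log s)^{d-1}\bigr)$ supplied by \cref{thm:probpoly} stays below the $\Theta(\sqrt n)$ threshold at which the Razborov--Rudich property is simultaneously $\NC^2$-decidable and of density at least $\tfrac12$; this is precisely what forces $k=O(n^{1/2d})$ and $\log s=O(n^{1/2d})$. Once that threshold is matched, largeness and $\NC^2$-constructivity are inherited verbatim from~\cite{razborov1994natural}. (Crucially, a weaker approximation degree, e.g. the $O(k\log^{d-1} s)$ one would get from a less efficient probabilistic polynomial for $\g(k)$, would shrink the admissible size/$k$ regime and is why the optimality of \cref{lem:gateapprox} matters here as well.)
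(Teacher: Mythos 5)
Your proposal is correct and takes essentially the same approach as the paper: the paper cites the Razborov--Rudich/Carmosino et al.\ natural property for $\AC^0[p]$ as a black box and asserts that usefulness against $\GC^0(k)[p]$ is an immediate consequence of the new approximation theorem, while you open that box and explicitly re-verify usefulness via \cref{thm:probpoly}, checking the degree bookkeeping, and inherit largeness (by a union-bound/entropy count) and $\NC^2$-constructivity verbatim from Razborov--Rudich.
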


Carmosino et al.\ \cite{carmosino2016learning} prove the following connection between natural properties and PAC learning algorithms over the uniform distribution with membership queries.

\begin{theorem}[{\cite[Theorem 5.1]{carmosino2016learning}}]\label{thm:natural-proof-to-learning}
Let $\Lambda$ be any circuit class containing $\AC^0[p]$ for some prime $p$. Let $R$ be a $\PTIME$-natural property, with largeness at least $1/5$, that is
useful against $\Lambda[u]$, for some size function $u: \N \to \N$. 
Then there is a randomized algorithm
that, given oracle access to any function $f : \bitz^n \to \bitz$ from $\Lambda[s_f]$, produces a circuit
$C \in \widetilde{\mathsf{CKT}}(f, \eps)$ in time $\poly(n, 1/\eps, 2^{u^{-1}\poly(n,1/\eps,s_f)})$.
\end{theorem}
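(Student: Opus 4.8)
The statement is quoted verbatim from Carmosino, Impagliazzo, Kabanets, and Kolokolova \cite{carmosino2016learning}, so the plan is to recall their argument at a sketch level; in the paper's application one instantiates it with $\Lambda = \GC^0(k)[p]$ and the $\PTIME$-natural property of \cref{thm:gc0-is-natural}. The backbone is a ``distinguisher-to-learner'' reduction through the Nisan--Wigderson generator: seed an NW generator with an error-corrected encoding $\bar f$ of the target function $f$, observe that the natural property $R$ is forced to distinguish this generator from random (because its outputs, read as truth tables, are small $\Lambda$-circuits), and then run the NW reconstruction together with the local list-decoder of the code to recover a circuit agreeing with $f$ on a $1-\eps$ fraction of inputs.

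Concretely, I would fix an $\AC^0$-explicit combinatorial design $S_1,\dots,S_m$ over a $\poly(n)$-size universe with pairwise intersections $O(\log m)$, and an $\AC^0[p]$-computable, locally list-decodable code $\mathrm{Enc}$ over $\F_p$ (a suitable XOR/direct-product code) that amplifies agreement $\tfrac12 + \tfrac1{\poly}$ up to agreement $1-\eps$. A key point is that one symbol of $\bar f \coloneqq \mathrm{Enc}(f)$ is a fixed $\F_p$-combination of $\poly(n,1/\eps)$ values of $f$, so the generator $\mathrm{NW}^{\bar f}(z) = (\bar f(z|_{S_1}),\dots,\bar f(z|_{S_m})) \in \{0,1\}^m$ can be evaluated with $\poly(n,1/\eps)$ membership queries per output bit.

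For the distinguishing step, read an $m$-bit string as the truth table of a function on $\log m$ variables: for each fixed seed $z$, the map $i \mapsto \bar f(z|_{S_i})$ selects the coordinates indexed by $S_i$ (an $\AC^0$ selector of quasipolynomial size) and applies $\bar f$, so, since $f \in \Lambda[s_f]$, $\AC^0[p]\subseteq\Lambda$, and $\Lambda$ is closed under composition, it lies in $\Lambda[v]$ with $v = \poly(m,n,s_f,1/\eps)$. Picking $m$ just large enough that $v \le u(\log m)$ --- i.e.\ $\log m \approx u^{-1}(\poly(n,1/\eps,s_f))$, which is where the $2^{u^{-1}(\cdots)}$ factor in the runtime comes from --- \emph{usefulness} of $R$ pushes every such truth table out of $R$, while \emph{largeness} keeps a uniform $m$-bit string in $R$ with probability $\ge 1/5$; hence $R$ distinguishes $\mathrm{NW}^{\bar f}(U)$ from $U_m$ with constant advantage. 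The standard NW/Yao reconstruction then turns $R$ (which we may simulate, being $\PTIME$-constructive) into a size-$\poly(m)$ circuit --- using $O(\log m)$ advice bits found by enumeration --- that computes $\bar f$ on a $\tfrac12 + \tfrac1{\poly(m)}$ fraction of inputs, and the local list-decoder for $\mathrm{Enc}$ converts this into a polynomial-size list of circuits, one of which is $(1-\eps)$-correct for $f$; membership queries let us estimate agreements and output the best one, which lies in $\widetilde{\mathsf{CKT}}_n(f,\eps)$.

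I expect the main obstacle to be less any individual reduction than the requirement that every ``helper'' computation --- the design's coordinate selector, the encoder $\mathrm{Enc}$, and the reconstruction and decoding routines --- fit inside $\AC^0[p]\subseteq\Lambda$ (or at least within the allotted time), so that the crucial ``a generator output is a small $\Lambda$-function'' observation is legitimate; this is the only place the hypothesis $\Lambda\supseteq\AC^0[p]$ is used. Running a close second is the simultaneous parameter bookkeeping: the code's amplification strength, the output length $m$, and the arity of $\bar f$ must be chosen consistently, kept quasipolynomial, and made large enough to clear the usefulness threshold $u$, and this fixed point is what produces the stated bound $\poly(n,1/\eps,2^{u^{-1}\poly(n,1/\eps,s_f)})$. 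Without the amplification code one would only obtain weak learning --- agreement $\tfrac12 + \tfrac1{\poly}$ --- over the uniform distribution.
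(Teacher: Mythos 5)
This theorem is quoted in the paper as an external result from Carmosino, Impagliazzo, Kabanets, and Kolokolova \cite{carmosino2016learning}; the paper itself gives no proof and simply invokes it as \cite[Theorem 5.1]{carmosino2016learning}. Your sketch faithfully reproduces the CIKK argument: seed an NW generator with a locally list-decodable, $\AC^0[p]$-computable encoding of $f$; use closure under composition plus the design's low-complexity selector to argue that each generator output is the truth table of a small $\Lambda$-circuit; invoke usefulness and largeness of $R$ to get a constant-advantage distinguisher; then run NW reconstruction and local list-decoding to produce a $(1-\eps)$-agreeing circuit. You also correctly locate where the hypothesis $\Lambda\supseteq\AC^0[p]$ is used (the helper computations must stay inside $\Lambda$) and where the $2^{u^{-1}(\cdots)}$ term in the running time arises (choosing $m$ so that the generator outputs clear the usefulness threshold). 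Since the paper offers no proof of its own, there is nothing to contrast; your account is an accurate summary of the cited argument.
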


By combining \cref{thm:gc0-is-natural}, \cref{thm:natural-proof-to-learning}, and the basic fact that $\AC^0[p] \subseteq \GC^0(k)[p]$ for all primes (and prime powers) $p$, we get the following learning algorithm for $\GC^0(k)[p]$.

\begin{corollary}[{Learning $\GC^0(k)[p]$ in quasipolynomial time}]\label{thm:pac-learn}
Let $k = O(n^{1/2d})$.
For every prime $p$, there is a randomized algorithm that, using membership queries, learns a given $n$-variate Boolean function $f \in \GC^0(k)[p]$ of size $s_f$ to within error $\eps$ over the uniform distribution, in time $\quasipoly(n, s_f, 1 / \eps)$. 
\end{corollary}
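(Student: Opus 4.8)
The plan is to derive this as an immediate consequence of our natural property (\cref{thm:gc0-is-natural}) together with the learning-from-natural-proofs framework of Carmosino, Impagliazzo, Kabanets, and Kolokolova (\cref{thm:natural-proof-to-learning}). First I would set $\Lambda \coloneqq \GC^0(k)[p]$ with $k = O(n^{1/2d})$ and $d$ constant, and invoke the fact (stated just above) that $\AC^0[p] \subseteq \GC^0(k)[p]$ — indeed a single $\g(k)$ gate computes $\NOT$ and $\OR$, hence $\AND$ by De Morgan — so $\Lambda$ satisfies the hypothesis of \cref{thm:natural-proof-to-learning}. Next I would take $R$ to be the combinatorial property furnished by \cref{thm:gc0-is-natural}: since it is $\NC^2$-natural and $\NC^2 \subseteq \PTIME$, it is in particular $\PTIME$-natural; its largeness is at least $1/2 \ge 1/5$; and it is useful against $\Lambda[u]$ for the size function $u(n) = \exp\!\left(\Omega(n^{1/2d})\right)$.

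Plugging $R$ and $\Lambda$ into \cref{thm:natural-proof-to-learning} then yields a randomized algorithm that, given membership-query access to any $f \in \GC^0(k)[p]$ of size $s_f$, outputs a circuit in $\widetilde{\mathsf{CKT}}_n(f,\eps)$ in time $\poly\!\left(n, 1/\eps, 2^{u^{-1}(\poly(n, 1/\eps, s_f))}\right)$, and it remains only to bound this. Since $u(m) = \exp\!\left(\Omega(m^{1/2d})\right)$, its inverse satisfies $u^{-1}(N) = O\!\left(\log^{2d} N\right)$, so for any polynomial $P$ we get $u^{-1}(P(n, 1/\eps, s_f)) = O\!\left(\log^{2d} P(n,1/\eps,s_f)\right) = \polylog(n, 1/\eps, s_f)$, using that $d$ is constant. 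Hence $2^{u^{-1}(\poly(n,1/\eps,s_f))} = 2^{\polylog(n,1/\eps,s_f)} = \quasipoly(n, 1/\eps, s_f)$, and the overall running time is $\quasipoly(n, s_f, 1/\eps)$, as claimed.

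The argument is essentially bookkeeping once \cref{thm:gc0-is-natural} is available; the only point needing care is the quantitative one above — that the doubly-logarithmic slack created by inverting the near-exponential useful-size bound $u$ is exactly what collapses $2^{u^{-1}(\poly)}$ down to quasipolynomial rather than (sub)exponential. Had our lower bound only been useful against, say, quasipolynomial-size $\GC^0(k)[p]$, the same pipeline would give only an exponential-time learner; it is the exponential useful-size guarantee of \cref{thm:gc0-is-natural} — ultimately inherited from the optimal degree bound in \cref{lem:gateapprox} — that makes the learner quasipolynomial. One should also double-check that the constant depth $d$ and the coupled constraint $k = O(n^{1/2d})$ are threaded consistently through \cref{thm:gc0-is-natural}, so that the class $\GC^0(k)[p]$ appearing in the hypothesis of \cref{thm:natural-proof-to-learning} is literally the one named in the corollary.
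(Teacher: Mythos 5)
Your proposal is correct and takes exactly the same route as the paper: the paper's proof is a one-liner ("combine \cref{thm:gc0-is-natural}, \cref{thm:natural-proof-to-learning}, and $\AC^0[p]\subseteq\GC^0(k)[p]$"), and you have simply made explicit the quantitative bookkeeping — $\NC^2\subseteq\PTIME$, largeness $1/2\geq 1/5$, $u(n)=\exp(\Omega(n^{1/2d}))$ giving $u^{-1}(N)=O(\log^{2d}N)$ and hence a quasipolynomial runtime — that the paper leaves implicit.
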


\section{Applications to Quantum Complexity}\label{sec:quantum}

We study the implications of our lower bounds for $\GC^0[p]$ and $\GC^0$ on quantum complexity theory.
Specifically, we show exponential separations between shallow quantum circuits and both $\GC^0[p]$ and $\GC^0$, surpassing all previously known separations between quantum and classical circuits. 
We emphasize that these separations are \emph{unconditional} and our results generalize the prior work in this area \cite{bravyi2018quantum, watts2019exponential, bravyi2020quantum, grier2021interactive, raz2022oracle, grilo2024power}.

For convenience, we summarize the separations we obtain in this section. We say a separation is exponential when polynomial-size quantum circuits can solve a certain problem but even some exponential-size classical circuits cannot.
In this section, we exhibit (formal definitions and arguments are given within the corresponding subsection): 
\begin{itemize}
    \item A promise problem separating $\BQLOGTIME$ and $\GC^0(k)$ (\cref{cor:bqlogtime-not-in-gac0}).
    \item A relation problem separating $\QNC^0$ and $\GC^0(k)$ (\cref{thm:watts-main}). 
    \item A relation problem separating $\QNC^0\mathsf{/qpoly}$ and $\GC^0(k)[p]$ for any prime $p$ (\cref{thm:gc2,thm:gcp}). 
    \item An interactive problem separating $\QNC^0$ and $\GC^0(k)[p]$ for any prime $p$ (\cref{thm:grier-schaeffer}). 
\end{itemize}
Our separations are all exponential (i.e., the problems can be solved by polynomial-size $\QNC^0$ circuits but are hard for exponential-size classical circuits), and \cref{thm:watts-main,thm:gc2,thm:gcp} prove average-case lower bounds.

In addition to our results in \cref{sec:raz-smo,sec:apps-to-classical}, our quantum-classical separations require a few new classical ingredients. We prove a \emph{multi-output} multi-switching lemma for $\GC^0$ (\cref{thm:multioutputlemma}), which generalizes the multi-switching lemma proved by Kumar \cite{kumar2023tight} to multi-output $\GC^0$ circuits. 
Our result is based on the multi-switching lemmas for $\AC^0$ that were proven by H\aa stad \cite{haastad2014correlation} and Rossman \cite{ros17entropyswitch}, and is based on the proof of the $\AC^0$ multi-output multi-switching lemma established in \cite{watts2019exponential}. 

We also prove that a single $\g(k)$ gate can compute functions that are not computable in $\NC = \AC = \TC$ when $k = \log^{\omega(1)}n$ (\cref{thm:gk-incomp-nc}). We use this to show that certain $\GC^0(k)[p]$ circuits are incomparable to $\NC^1$ (\cref{cor:gc0k-nc1-incomp}), which is needed in the proof of \cref{thm:grier-schaeffer}.

\subsection{Pushing Raz \& Tal: \texorpdfstring{$\BQLOGTIME \not\subseteq \GC^0$}{BQLOGTIME ⊄ GC0}}\label{subsec:raz-tal}

In a breakthrough work, Raz and Tal \cite{raz2022oracle} showed that $\BQP$ is not in $\PH$ relative to an oracle.
An unconditional separation between $\BQLOGTIME$ and $\AC^0$ is at the core of their result. Specifically, they give a distribution that is \emph{pseudorandom} (i.e., cannot be distinguished from the uniform distribution) for $\AC^0$ circuits, but not for $\BQLOGTIME$ circuits. By well-known reductions, this implies their oracle and circuit separations.  
We show that their distribution is also pseudorandom for $\GC^0$ circuits. Hence, by the same reductions, we can conclude that $\BQLOGTIME \not\subseteq \GC^0$.
We begin with a formal definition of $\BQLOGTIME$.

\begin{definition}\label{def:bqlogtime}
   $\BQLOGTIME$ is the class of promise problems $\Pi = (\Pi_{\textsc{Yes}}, \Pi_{\textsc{No}})$ that are decidable, with bounded error probability, by a $\mathsf{LOGTIME}$-uniform family of quantum circuits $\{C_n\}_{n\in\N}$, where each $C_n$ is an $n$-qubit quantum circuit with $O(\log n)$ gates that are either (i) input query gates (i.e., gates that map $\ket{i}\ket{z}$ to $\ket{i}\ket{z \oplus x_i}$ where $x = x_1 \dots x_n$ is the input string) or (ii) standard quantum gates from a fixed, finite gate set. 
\end{definition}

Let $\calD_{\textsc{Raz-Tal}}$ denote the distribution over $\{-1,1\}^{2N}$ described in \cite[Section 4]{raz2022oracle} (see also \cite[Section 2]{wu2022stoch}). 
Raz and Tal showed that if $\calD_{\textsc{Raz-Tal}}$ is sufficiently pseudorandom, then one can obtain separations from $\BQLOGTIME$.

\begin{lemma}[\cite{raz2022oracle}]\label{lemma:abstract-raz-tal}
Let $\calF$ be a class of Boolean functions 
$f: \{\pm 1\}^{2N} \to \{0,1\}$. Suppose that for each $f \in \calF$, 
\[
    \Abs{\E[f(\calD_{\textsc{Raz-Tal}})] - \E[f(U_{2N})]} \leq \left(\frac{1}{\log N}\right)^{\omega(1)}.
\]
Then $\BQLOGTIME \not\subseteq \calF$.
\end{lemma}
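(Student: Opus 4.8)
The plan is to distill from Raz and Tal's work a single promise problem $\Pi \in \BQLOGTIME$ on which any $\calF$-solver would become a distinguisher for $\calD_{\textsc{Raz-Tal}}$ of advantage $1 - o(1)$, far exceeding the $(1/\log N)^{\omega(1)}$ that the hypothesis permits. Throughout I read ``$\BQLOGTIME \subseteq \calF$'' in the natural way: every promise problem $(\Pi_{\textsc{Yes}}, \Pi_{\textsc{No}}) \in \BQLOGTIME$ is \emph{solved} by some total $f \in \calF$, meaning $f \equiv 1$ on $\Pi_{\textsc{Yes}}$ and $f \equiv 0$ on $\Pi_{\textsc{No}}$ with $f$ unconstrained off the promise; ``$\BQLOGTIME \not\subseteq \calF$'' is its negation.

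First I would recall the Forrelation functional underlying $\calD_{\textsc{Raz-Tal}}$. Writing $2N = 2 \cdot 2^n$ and identifying $x \in \{\pm 1\}^{2N}$ with a pair of sign functions $(\boldf, \bg)$ on $\{0,1\}^n$, set
\[
\Phi(x) \;=\; \frac{1}{2^{3n/2}} \sum_{s, t \in \{0,1\}^n} \boldf(s)\,(-1)^{\langle s, t \rangle}\, \bg(t),
\]
the correlation between $\bg$ and the Hadamard transform of $\boldf$. Fix a small absolute constant $\varepsilon_0 > 0$ (the scale parameter of Raz and Tal's truncated Gaussian) and put $\Pi_{\textsc{Yes}} = \{x : \Phi(x) \ge \varepsilon_0\}$ and $\Pi_{\textsc{No}} = \{x : |\Phi(x)| \le \varepsilon_0/2\}$, which are disjoint. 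From Raz and Tal I would import two facts. (i) $\Pi \in \BQLOGTIME$: the plain one-query Forrelation circuit ($H^{\otimes n}$, phase-query $\boldf$, $H^{\otimes n}$, phase-query $\bg$, $H^{\otimes n}$, then measure) produces the all-zeros string with probability exactly $\Phi(x)^2$, which is $\ge \varepsilon_0^2$ on $\Pi_{\textsc{Yes}}$ and $\le \varepsilon_0^2/4$ on $\Pi_{\textsc{No}}$; running it $O(1/\varepsilon_0^4) = O(1)$ times and thresholding the empirical frequency then decides $\Pi$ with error $\le 1/3$, and the whole procedure is a $\mathsf{LOGTIME}$-uniform family of $2N$-qubit circuits with $O(\log(2N)) = O(\log N)$ elementary and input-query gates (a bit oracle is turned into a phase oracle with $O(1)$ overhead). (ii) The distributional facts, which are the substantive Fourier-analytic content of their paper and which I would cite rather than reprove: $\Pr_{x \sim \calD_{\textsc{Raz-Tal}}}[\Phi(x) \ge \varepsilon_0] \ge 1 - o(1)$ and $\Pr_{x \sim U_{2N}}[|\Phi(x)| \le \varepsilon_0/2] \ge 1 - o(1)$.

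To conclude, suppose $\BQLOGTIME \subseteq \calF$ and let $f \in \calF$ be a total function solving $\Pi$, so $f \equiv 1$ on $\Pi_{\textsc{Yes}}$ and $f \equiv 0$ on $\Pi_{\textsc{No}}$. Then
\[
\E\!\big[f(\calD_{\textsc{Raz-Tal}})\big] \;\ge\; \Pr\!\big[\calD_{\textsc{Raz-Tal}} \in \Pi_{\textsc{Yes}}\big] \;\ge\; 1 - o(1),
\qquad
\E\!\big[f(U_{2N})\big] \;\le\; \Pr\!\big[U_{2N} \notin \Pi_{\textsc{No}}\big] \;\le\; o(1),
\]
so $\big|\E[f(\calD_{\textsc{Raz-Tal}})] - \E[f(U_{2N})]\big| \ge 1 - o(1)$, which for all large $N$ exceeds $(1/\log N)^{\omega(1)}$ and contradicts the hypothesis; hence $\BQLOGTIME \not\subseteq \calF$. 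The only genuinely hard ingredient is fact (ii) --- and even there the anti-concentration of $\Phi$ under $U_{2N}$ is routine, so what I am really leaning on is Raz and Tal's construction of $\calD_{\textsc{Raz-Tal}}$ together with their proof that $\Phi$ concentrates near the constant $\varepsilon_0$ under it. Everything else is the packaging above: keeping the distinguisher within the $O(\log N)$ gate budget of $\BQLOGTIME$ and being careful about the promise-versus-total-function convention.
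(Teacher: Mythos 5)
Your overall strategy---package a Raz--Tal-style promise problem so that any $\calF$-solver becomes a distinguisher---has the right shape, and note that the paper does not reprove this lemma but imports it from \cite{raz2022oracle}. However, your reconstruction contains a genuine quantitative error at its core. The scale parameter of Raz and Tal's truncated Gaussian is not a constant $\varepsilon_0$: it is $\eps = \Theta(1/\log N)$ (this is visible in the paper's own \cref{cor:d-fools-gac0} and its footnote, which records that $\eps \in \Omega(1/\log N)$ is needed for the quantum algorithm while $\eps \in O(1/\log N)$ is needed for the pseudorandomness side). Consequently the Forrelation functional $\Phi$ concentrates under $\calD_{\textsc{Raz-Tal}}$ around $\Theta(\eps) = o(1)$, not around a constant, so the first half of your imported ``fact (ii)'' is false: $\Pr_{x\sim\calD_{\textsc{Raz-Tal}}}[\Phi(x)\ge \varepsilon_0] = o(1)$ for every constant $\varepsilon_0>0$. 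With your thresholds, almost all of the mass of $\calD_{\textsc{Raz-Tal}}$ lands in $\Pi_{\textsc{No}}$ together with almost all of the mass of $U_{2N}$; a valid solver $f$ therefore outputs $0$ on a $1-o(1)$ fraction of both distributions, the claimed advantage $1-o(1)$ evaporates, and no contradiction with the hypothesis is obtained.

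The obvious repair---moving the thresholds down to $\Theta(\eps)=\Theta(1/\log N)$ so that $\calD_{\textsc{Raz-Tal}}$ does concentrate on $\Pi_{\textsc{Yes}}$---breaks your step (i) instead: the one-query Forrelation circuit then separates YES from NO acceptance probabilities by only $\Theta(1/\log N)$ (or $\Theta(1/\log^2 N)$ for the squared, all-zeros version you describe), and amplifying this to bounded error costs $\Theta(\log^2 N)$ repetitions, i.e., $\omega(\log N)$ gates, exceeding the gate budget of \cref{def:bqlogtime}. Navigating exactly this tension between the two thresholds is the nontrivial content of Raz and Tal's Corollary 1.6, and it is also why the lemma's hypothesis is the fine-grained bound $(1/\log N)^{\omega(1)}$ rather than simply $o(1)$ or $1/3$: the actual argument extracts from a hypothetical $\calF$-solver only a distinguisher of advantage $(1/\log N)^{O(1)}$, which is then played off against the $(1/\log N)^{\omega(1)}$ hypothesis. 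A proof yielding a $1-o(1)$ distinguishing advantage, as yours claims, would establish a far stronger statement than the lemma---that mismatch should have been a warning sign. To fix this you need to reproduce Raz and Tal's actual derivation of their Corollary 1.6 (or, as the paper does, cite it).
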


Furthermore, Raz and Tal showed that the desired pseudorandomness property follows from understanding the \emph{second-level Fourier growth}, i.e., the $\ell_1$-norm of the Fourier coefficients on the second level. 

\begin{lemma}[\cite{raz2022oracle}, {\cite[Theorem 4.4]{wu2022stoch}}]\label{lem:second-level}
   Let $f: \{\pm 1\}^{2N} \to \{0,1\}$ be a Boolean function. For $L > 0$, suppose that for any restriction $\rho$, 
   \[
   \sum_{\substack{S \subseteq [2N]\\\abs{S}=2}} \abs{\widehat{f_\rho}(S)} \leq L.
   \]
   Then, 
   \[
   \Abs{\E[f(\calD_{\textsc{Raz-Tal}})] - \E[f(U_n)] } \leq \frac{2\eps L}{\sqrt{N}}.
   \]
\end{lemma}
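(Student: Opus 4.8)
This lemma is the key technical input of Raz and Tal \cite{raz2022oracle} (see also Wu \cite{wu2022stoch}), and the plan is to reproduce their Gaussian-interpolation argument. Recall that $\calD_{\textsc{Raz-Tal}}$ is built from a centered, truncated jointly-Gaussian pair $(\mathbf{x},\mathbf{y})\in[-1,1]^{N}\times[-1,1]^{N}$ whose two blocks are internally independent, each coordinate having a polylogarithmically small variance $\sigma^2 = 1/(\log N)^{\Theta(1)}$, and whose cross-covariance is $\eps$ times (a scalar multiple of) the normalized $N\times N$ Hadamard matrix $H_N$, so $\abs{(H_N)_{ij}} = 1/\sqrt{N}$; a sample of $\calD_{\textsc{Raz-Tal}}$ is then obtained by independently rounding the $i$-th output bit to $+1$ with probability $\tfrac12(1+\mathbf{x}_i)$ (resp. $\tfrac12(1+\mathbf{y}_i)$), so that $\eps=0$ recovers $U_{2N}$. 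I would introduce the one-parameter family $\calD_t$ ($t\in[0,\eps]$) that uses cross-covariance $t$ with everything else unchanged, so $\calD_0=U_{2N}$ and $\calD_\eps=\calD_{\textsc{Raz-Tal}}$, and write
\[
\E[f(\calD_{\textsc{Raz-Tal}})]-\E[f(U_{2N})]=\int_0^\eps \frac{d}{dt}\,\E[f(\calD_t)]\,dt,
\]
reducing the problem to a uniform bound on the derivative.

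For the derivative, the two facts to use are: (i) conditioned on $(\mathbf{x},\mathbf{y})$ the output bits are independent with the prescribed biases, so $\E[f(\calD_t)]=\E_{(\mathbf{x},\mathbf{y})}[\tilde f(\mathbf{x},\mathbf{y})]$ where $\tilde f$ is the multilinear (Fourier) extension of $f$ (composed with the coordinatewise clamp onto $[-1,1]$); and (ii) the Gaussian interpolation identity: for $\mathbf{g}_t\sim\mathcal{N}(0,\Sigma_t)$ and smooth $\phi$, $\frac{d}{dt}\E[\phi(\mathbf{g}_t)]=\tfrac12\sum_{a,b}\dot\Sigma_{t,ab}\,\E[\partial_a\partial_b\phi(\mathbf{g}_t)]$. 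Since $\tilde f$ is multilinear all pure second derivatives $\partial_a^2\tilde f$ vanish, and since only the cross-block entries of $\Sigma_t$ move with $t$ (at rate proportional to $(H_N)_{ij}$), this collapses to $\frac{d}{dt}\E[f(\calD_t)]=c\sum_{i,j\in[N]}(H_N)_{ij}\,\E[\partial_{x_i}\partial_{y_j}\tilde f]$ for an absolute constant $c$. The heart of the argument is to bound this: one identifies $\partial_{x_i}\partial_{y_j}\tilde f$, evaluated at the (rounded) interpolation point, with level-$2$ Fourier coefficients $\widehat{f_\rho}(\{x_i,y_j\})$ of restrictions $\rho$ of $f$, arranged so that $\rho$ freezes only a suitable random subset of coordinates and leaves $\Theta(N)$ of them free, so that $\sum_{i,j}\abs{\widehat{f_\rho}(\{x_i,y_j\})}$ is a genuine level-$2$ $\ell_1$-weight of $f_\rho$, hence at most $L$, rather than $N^2 L$; combined with $\abs{(H_N)_{ij}}=1/\sqrt N$ this gives $\abs{\tfrac{d}{dt}\E[f(\calD_t)]}=O(L/\sqrt N)$, and integrating over $t\in[0,\eps]$ yields the claimed $O(\eps L/\sqrt N)$ bound (the explicit constant $2$ being bookkeeping of $c$ and of truncation corrections). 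Equivalently, one can expand $\E[f(\calD)]-\E[f(U)]$ in Fourier, apply Wick's formula to evaluate the Gaussian moments of the biases — only ``balanced'' sets $S$, with equally many $x$- and $y$-indices, survive, since within-block covariances vanish — and then reorganize the resulting sum by peeling off one matched $x$--$y$ pair at a time, each peel reducing to a level-$2$ weight bound on a restriction of $f$ and contributing another factor $O(\eps/\sqrt N)$, so the series converges and is dominated by its first term.

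The step I expect to be the main obstacle is exactly the one flagged above: organizing the recursion — equivalently, the choice of which coordinates to ``round'' before differentiating — so that the restriction hypothesis is applied to functions with many free variables, thereby avoiding the naive factor-$N$ losses; this is the conceptual crux of Raz--Tal. A secondary, purely technical obstacle is the truncation: valid biases require $\abs{\mathbf{x}_i},\abs{\mathbf{y}_i}\le 1$, so $\tilde f$ is composed with a clamp that is only Lipschitz rather than $C^2$, and conditioning on the truncation event perturbs the Gaussian law. I would handle this as in \cite{raz2022oracle,wu2022stoch}: the probability that some coordinate leaves $[-1,1]$ is at most $2N\,e^{-\Omega(1/\sigma^2)}$, which is negligible for $\sigma^2=1/(\log N)^{\Theta(1)}$, so truncation contributes only lower-order error, and one mollifies $\tilde f$ by a small Gaussian convolution before invoking the interpolation identity and controls the resulting error. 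Everything downstream of the derivative bound — the integration in $t$ and the constants — is routine.
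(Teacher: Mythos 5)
The paper states this lemma as a black-box import from \cite{raz2022oracle} and \cite[Theorem 4.4]{wu2022stoch} and gives no proof, so there is no in-paper argument to compare against; the comparison has to be with the Raz--Tal/Wu argument itself.

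Your reconstruction is faithful at a high level and identifies the right moving parts: the Gaussian smart-path interpolation, the multilinear extension $\tilde f$, the fact that for multilinear $\tilde f$ only the cross-block second partials survive, and the reduction to bounding $\sum_{i,j}\abs{(H_N)_{ij}}\,\abs{\partial_{x_i}\partial_{y_j}\tilde f(z)}$ by $O(L/\sqrt{N})$, which then integrates to $O(\eps L/\sqrt{N})$. You have also correctly flagged the crux --- the pointwise bound $\sum_{\abs{S}=2}\abs{\partial_S\tilde f(z)}=O(L)$ --- as the step you do not reproduce, and deferred it to the cited sources. For the record, the way it is closed is essentially what you gesture at: when $\abs{z_k}\le \tfrac12$, write each coordinate as the convex combination $z_k=\alpha_k\cdot(+1)+\beta_k\cdot(-1)+\tfrac12\cdot 0$ with $\alpha_k,\beta_k\ge 0$, which realizes a random restriction $\rho$ in which every coordinate is independently left free with probability exactly $\tfrac12$ and otherwise fixed to $\pm1$ with the right bias; multilinearity then gives $\partial_a\partial_b\tilde f(z)=\E_\rho\bigl[\widehat{f_\rho}(\{a,b\})\mid a,b\ \text{free}\bigr]$, and summing absolute values, reweighting by $\Pr[a,b\ \text{free}]^{-1}=4$, and applying the hypothesis to the \emph{single} restriction $\rho$ yields $\sum_{a<b}\abs{\partial_a\partial_b\tilde f(z)}\le 4L$. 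Two small corrections to your sketch. First, the truncation threshold this step actually needs is $\abs{z_k}\le\tfrac12$, not $\abs{z_k}\le 1$; $[-1,1]$ is only the range for the biases to be well-defined, and the $\tfrac12$ threshold is why the small variance $\sigma^2=1/\mathrm{polylog}(N)$ matters in this part of the argument. Second, the block variance and the cross-covariance strength are not independent knobs: with $\mathbf{y}=H_N\mathbf{x}$ and $\mathbf{x}\sim\mathcal{N}(0,\eps I)$, both equal $\eps$, so $\sigma^2=\eps$ in your notation. Neither issue changes the structure of the argument, but since the crux step is deferred rather than proved, the proposal should be read as a correct outline of the Raz--Tal/Wu proof rather than a self-contained proof.
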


In prior work, Kumar \cite{kumar2023tight} gave upper bounds on the Fourier growth of $\GC^0$-computable functions. 

\begin{lemma}[{\cite[Theorem 5.14]{kumar2023tight}}]\label{lemma:kumar-switching}
    Let $f : \{\pm 1\}^{n} \to \{\pm 1\}$ be computable by a size-$m$ $\GC^0_d(k)$ circuit. 
    Then, for all $\ell \in \N$, the following is true for some universal constants $C_1, C_2 > 0$:
    \[
    \sum_{\substack{S\subseteq[n]\\\abs{S}=\ell}} \abs{\widehat{f}(x)} \leq C_1  (C_2 \cdot k(k + \log m)^{d-1})^\ell.
    \]
    In particular, for some universal constant $C> 0$,
     \[
    \sum_{\substack{S\subseteq[n]\\\abs{S}=2}} \abs{\widehat{f}(x)} \leq C   k^2(k + \log m)^{2(d-1)}.
    \]
\end{lemma}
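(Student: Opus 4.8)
The plan is to deduce this level-$\ell$ Fourier-growth bound from a switching lemma for $\GC^0_d(k)$ circuits, in the same way that Tal obtained sharp Fourier-growth bounds for $\AC^0$ from H\aa stad's switching lemma. Writing $L_{1,\ell}(f) := \sum_{|S|=\ell}|\widehat f(S)|$, the two ingredients are: (i) a switching lemma asserting that if a depth-$d$ size-$m$ $\GC^0(k)$ circuit is hit by a $p$-random restriction $\rho$ with $p = \Theta\big(1/(k(k+\log m)^{d-1})\big)$, then, for every $t$, $f_\rho$ has decision-tree depth at least $t$ with probability at most $2^{-t}$; and (ii) Tal's observation that, given such a restriction, $L_{1,\ell}(f) \le (O(1/p))^\ell$ with no dependence on $n$, since $\sum_{t}\binom{t}{\ell}2^{-t} = O(1)$ regardless of $\ell$. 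Composing the two yields $L_{1,\ell}(f) \le C_1(C_2\, k(k+\log m)^{d-1})^\ell$, and the ``in particular'' statement is just the $\ell=2$ case.

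For (i) I would iterate a one-layer switching lemma $d-1$ times, as in H\aa stad's multi-switching argument for $\AC^0$, exploiting the following feature of $\g(k)$ gates under a $q$-random restriction: a $\g(k)$ gate of fan-in $N$ is arbitrary only on the Hamming ball of radius $k$ of its inputs, so once the restriction fixes more than $k$ of those inputs to $1$ --- which happens almost surely unless $N = O(k)$ --- the gate collapses to a constant, and when it does survive it reads only $O(k)$ live inputs and is hence an $O(k)$-junta, i.e. a decision tree of depth $O(k)$. Thus the bottom layer switches to width-$O(k)$ decision trees; rewriting each as an $O(k)$-CNF and absorbing it into the $\g(k)$ gates above produces a ``$\g(k)$-of-$O(k)$-CNF'' layer, which the same lemma switches again, and so on up the $d-1$ levels. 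The remaining work is to choose round parameters $q_1,\dots,q_{d-1}$ --- roughly $1/k$ at the bottom and $1/(k+\log m)$ at the internal rounds --- so that the composed restriction has total density $p = \prod_i q_i = \Theta(1/(k(k+\log m)^{d-1}))$ and the failure bound holds for all $t$ at once; this is precisely Kumar's switching lemma, and I would reprove it by running H\aa stad's labelling/encoding argument at each round, with the ``width'' in the count of bad restrictions playing the role that $k+\log m$ plays above.

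I expect the main obstacle to be the one-layer switching lemma for $\g(k)$ gates with the right parameters --- in particular obtaining only an \emph{additive} $k$ (rather than a multiplicative $k$ or a $k^2$) inside the failure-probability base, since anything worse would enlarge the exponent of $(k+\log m)$ and degrade the final Fourier bound; this is the same sensitivity to slack the paper emphasizes for the probabilistic-polynomial degree. A secondary subtlety is making step (ii) genuinely dimension-free: the naive bound on the contribution of restrictions that fail to switch carries a $\binom{n}{\ell}$ tail, which must instead be absorbed by re-randomizing the restriction, as in Tal's argument. Everything else --- composing the restrictions, the CNF/DNF reformulation between rounds, and substituting $1/p$ into Tal's inequality --- is routine.
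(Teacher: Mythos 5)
The statement you are addressing is cited here from \cite[Theorem~5.14]{kumar2023tight}; the paper gives no proof of it, so there is nothing internal to compare against --- you are in effect reconstructing Kumar's argument. At the top level your plan (a multi-layer switching lemma for depth-$d$ $\GC^0(k)$ circuits at restriction density $p=\Theta\bigl(1/(k(k+\log m)^{d-1})\bigr)$, composed with Tal's dimension-free reduction from switching lemmas to level-$\ell$ Fourier growth) is the route Kumar takes, and observation (ii), that $\sum_t\binom{t}{\ell}2^{-t}=O(1)$ makes the bound $n$-independent, is correct.

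The gap is in the mechanism you give for the one-layer switch. You claim that when a $\G(k)$ gate survives a $q$-random restriction it ``reads only $O(k)$ live inputs and is hence an $O(k)$-junta, i.e.\ a decision tree of depth $O(k)$.'' This is false. If the restriction sets at most $k$ of the gate's $N$ fanned-in inputs to $1$ --- rare for large $N$ but by no means impossible --- then the restricted gate is a $\G(k')$ gate for some $k'\le k$ over \emph{all} of the surviving inputs: arbitrary on the Hamming ball of radius $k'$, constant outside. Such a gate can have decision-tree depth equal to its entire surviving fan-in; for instance a $\G(0)$ gate that outputs $1$ exactly on the all-zeros assignment must query every live variable along the all-zeros path. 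So a surviving $\G(k)$ gate is neither a junta nor a shallow decision tree, and a one-layer switching lemma built on the junta heuristic would be incorrect; since you propose to iterate that lemma $d-1$ times, the error would propagate. What Kumar actually uses (and what this paper quotes when proving \cref{thm:gc0-multiswitch}) is the structural identity $\G(k)\circ\DT(w)=\G(k)\circ\AND_w$ together with an encoding-style switching lemma for $\G(k)\circ\AND_w$ circuits whose failure-probability base has the form $(2^k s_1)^{1/r}pw$; the exponential dependence on $k$ is genuine and is tamed only by raising the size term to the $1/r$ power with $r\approx\log s$, not by any juntification of the gates. Once you substitute the correct single-layer lemma, the iteration over the $d-1$ layers and the Tal step go through as you describe; as written, however, the heart of step (i) rests on an assertion that does not hold.
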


We can now start combining these ingredients to obtain the claimed separation. 

\begin{proposition}[Generalization of {\cite[Theorem 7.4]{raz2022oracle}}]\label{prop:gc0-pseudorandom}
    Let $f : \{\pm 1\}^{2N} \to \{\pm 1\}$ be a size-$m$ $\GC^0_d(k)$ circuit. 
    Then there is a universal constant $C > 0$ such that
    \[
    \Abs{\E[f(\calD_{\textsc{Raz-Tal}})] - \E[f(U_n)]} \leq \frac{C \eps  k^2 (k + \log m)^{2(d-1)}}{\sqrt{N}}.
    \]
\end{proposition}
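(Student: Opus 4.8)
The plan is to combine Kumar's Fourier-growth bound for $\GC^0$ (\cref{lemma:kumar-switching}) with the second-level-growth criterion of Raz and Tal (\cref{lem:second-level}). The key structural observation is that $\GC^0_d(k)$ is closed under restrictions: if $f$ is computed by a size-$m$ depth-$d$ circuit of $\g(k)$, $\AND$, $\OR$, $\NOT$ gates and $\rho$ is any restriction of the inputs, then hard-wiring the fixed coordinates yields a circuit of the same depth, the same gate types, and size at most $m$ computing $f_\rho$; in particular a $\g(k)$ gate with some inputs fixed to constants is still a $\g(k)$ gate on the surviving inputs, since fixing inputs can only decrease the Hamming weight. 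Hence \cref{lemma:kumar-switching} applies to every $f_\rho$ and gives
\[
\sum_{\substack{S \subseteq [2N] \\ |S| = 2}} \abs{\widehat{f_\rho}(S)} \leq C' k^2 (k + \log m)^{2(d-1)} =: L
\]
for a universal constant $C'$, uniformly over all restrictions $\rho$.

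The statement of \cref{lem:second-level} is phrased for $\{0,1\}$-valued functions, whereas here $f$ is $\{\pm1\}$-valued, so I would pass to $g := (1 - f)/2 : \{\pm1\}^{2N} \to \{0,1\}$. For every level $\ell \geq 1$ and every restriction $\rho$ one has $\widehat{g_\rho}(S) = -\tfrac12\,\widehat{f_\rho}(S)$, so the second-level growth of each $g_\rho$ is at most $L/2 \leq L$. Applying \cref{lem:second-level} to $g$ yields
\[
\Abs{\E[g(\calD_{\textsc{Raz-Tal}})] - \E[g(U_{2N})]} \leq \frac{2 \eps L}{\sqrt{N}},
\]
and since $\E[f(\cdot)] = 1 - 2\E[g(\cdot)]$ the distinguishing advantage only doubles when we pass back to $f$:
\[
\Abs{\E[f(\calD_{\textsc{Raz-Tal}})] - \E[f(U_{2N})]} \leq \frac{4 \eps L}{\sqrt{N}} = \frac{4C' \eps\, k^2 (k + \log m)^{2(d-1)}}{\sqrt{N}},
\]
which is the claimed bound with $C := 4C'$.

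Since every step is a direct invocation of an already-established lemma, there is no substantial obstacle here; the only points requiring (minor) care are verifying the closure of $\GC^0_d(k)$ under restrictions — in particular that restricting a $\g(k)$ gate produces another $\g(k)$ gate rather than something more general — and keeping track of the $\{0,1\}$ versus $\{\pm1\}$ normalization so that the universal constant comes out cleanly. I would record the restriction-closure observation explicitly, since it is reused when this proposition feeds into the $\BQLOGTIME$ separation (\cref{cor:bqlogtime-not-in-gac0}).
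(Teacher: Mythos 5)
Your proposal is correct and takes essentially the same approach as the paper, whose proof is the single line ``Combine \cref{lem:second-level,lemma:kumar-switching}.'' You have simply made explicit the two details the paper leaves implicit: that $\GC^0_d(k)$ is closed under restrictions (so Kumar's Fourier-growth bound in fact holds for every $f_\rho$, which is what \cref{lem:second-level} requires), and the harmless $\{0,1\}$ versus $\{\pm 1\}$ normalization.
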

\begin{proof}
    Combine \cref{lem:second-level,lemma:kumar-switching}.
\end{proof}

Combining \cref{lemma:abstract-raz-tal,prop:gc0-pseudorandom} yields the following two corollaries.

\begin{corollary}[Generalization of {\cite[Corollary 7.5]{raz2022oracle}}]\label{cor:d-fools-gac0}
    Let $f : \{\pm 1\}^{2N} \to \{\pm 1\}$ be a $\GC^0(k)$ circuit of constant depth and size $\quasipoly(N)$. 
    For $\eps = O\left(\frac{1}{ \log N}\right)$ and $k = \frac{O(N^{1/4d})}{\log^{\omega(1)} N}$, 
    \[
    \Abs{\E[f(\calD_{\textsc{Raz-Tal}})] - \E[f(U_{2N})]} \leq \frac{1}{\log^{\omega(1)} N}.\footnote{Note that $\eps \in \Omega(1/\log N)$ is necessary for the $\BQLOGTIME$ to succeed with a large enough probability. See \cite[Section 6]{raz2022oracle} for further detail.} 
    \]
\end{corollary}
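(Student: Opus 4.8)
The plan is to obtain \cref{cor:d-fools-gac0} directly from \cref{prop:gc0-pseudorandom} by substituting the hypothesized parameter bounds and simplifying. Recall that \cref{prop:gc0-pseudorandom} gives, for any size-$m$ $\GC^0_d(k)$ circuit $f : \{\pm 1\}^{2N} \to \{\pm 1\}$,
\[
\Abs{\E[f(\calD_{\textsc{Raz-Tal}})] - \E[f(U_{2N})]} \le \frac{C\eps\, k^2 (k+\log m)^{2(d-1)}}{\sqrt N}
\]
for a universal constant $C$, so it suffices to check that the right-hand side is $\log^{-\omega(1)} N$ under our hypotheses.

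First I would unpack the size bound: since $m = \quasipoly(N)$, by definition $m \le 2^{O(\log^c N)}$ for some constant $c$, hence $\log m = O(\log^c N)$. Because $N^{1/4d}$ grows polynomially while $\log^c N$ grows only polylogarithmically, for all sufficiently large $N$ we have $\log m \le N^{1/4d}$, so the hypothesis $k = O(N^{1/4d})/\log^{\omega(1)} N$ yields $k + \log m = O(N^{1/4d})$. Substituting, bounding $(k+\log m)^{2(d-1)} = O\!\big(N^{(d-1)/(2d)}\big)$, and retaining the super-polylogarithmic denominator only in the factor $k^2 = O\!\big(N^{1/(2d)}\big)/\log^{\omega(1)} N$, the numerator becomes
\[
C\eps\, k^2 (k+\log m)^{2(d-1)} = \eps \cdot \frac{O\!\big(N^{1/(2d) + (d-1)/(2d)}\big)}{\log^{\omega(1)} N} = \eps \cdot \frac{O(\sqrt N)}{\log^{\omega(1)} N}.
\]
Dividing by $\sqrt N$ cancels the polynomial factor, and plugging in $\eps = O(1/\log N)$ leaves $O(1)/\log^{\omega(1)} N = \log^{-\omega(1)} N$, which is exactly the claimed bound; this also sets up the application of \cref{lemma:abstract-raz-tal} in the subsequent corollary.

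The only point requiring a moment's care—and it is minor—is ensuring that the super-polylogarithmic savings in the denominator of $k$ survives exponentiation. Since $d$ is a constant, $\big(\log^{\omega(1)} N\big)^{2(d-1)}$ is again $\log^{\omega(1)} N$, so it is harmless to keep just one such factor (the one from $k^2$) and crudely bound $k \le O(N^{1/4d})$ everywhere else; the leftover constant $C$ is likewise absorbed into the $\log^{-\omega(1)} N$ notation. All the substantive content lives in the lemmas already invoked by \cref{prop:gc0-pseudorandom}—Kumar's Fourier-growth bound \cref{lemma:kumar-switching} and the Raz--Tal level-two pseudorandomness criterion \cref{lem:second-level}—so I do not expect a genuine obstacle here: the corollary is purely an arithmetic consequence of the proposition together with the $\quasipoly$ size bound.
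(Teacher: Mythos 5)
Your proposal is correct and follows exactly the route the paper intends: the corollary is stated as an immediate consequence of \cref{prop:gc0-pseudorandom} (the paper gives no explicit proof), and your parameter substitution — $k+\log m = O(N^{1/4d})$, $k^2(k+\log m)^{2(d-1)} = O(\sqrt N)/\log^{\omega(1)}N$, then multiplying by $\eps = O(1/\log N)$ and dividing by $\sqrt N$ — is the intended arithmetic, carried out correctly.
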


\begin{corollary}[Generalization of {\cite[Corollary 1.6]{raz2022oracle}}]\label{cor:bqlogtime-not-in-gac0}
There is a promise problem in $\BQLOGTIME$ that is not solvable by constant-depth $\GC^0(k)$ for $k = \frac{O(n^{1/4d})}{\log^{\omega(1)} n}$ and size $\quasipoly(n)$, where $n$ is the input size. 
\end{corollary}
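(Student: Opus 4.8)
The plan is to instantiate the abstract pseudorandomness criterion of Raz and Tal (\cref{lemma:abstract-raz-tal}) with the class $\calF$ taken to be all Boolean functions $f:\{\pm1\}^{2N}\to\{0,1\}$ computed by constant-depth $\GC^0(k)$ circuits of size $\quasipoly(N)$ with $k = O(N^{1/4d})/\log^{\omega(1)}N$ (uniformity is irrelevant here, since $\calF$ is merely a set of functions). Any promise problem solved by such a circuit family is, at each even input length, a member of $\calF$; hence by \cref{lemma:abstract-raz-tal} it suffices to check that for every $f\in\calF$,
\[
\Abs{\E[f(\calD_{\textsc{Raz-Tal}})] - \E[f(U_{2N})]} \le \left(\frac{1}{\log N}\right)^{\omega(1)}.
\]
Granting this, \cref{lemma:abstract-raz-tal} produces a promise problem in $\BQLOGTIME$ lying outside $\calF$, which---after relabeling the input length as $n = 2N$---is exactly the assertion of the corollary.

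The displayed bound is precisely what \cref{cor:d-fools-gac0} delivers, taking $\eps = O(1/\log N)$. That corollary is itself obtained by feeding Kumar's second-level Fourier-growth estimate for $\GC^0(k)$ (\cref{lemma:kumar-switching}), which bounds $\sum_{|S|=2}|\widehat{f_\rho}(S)|$ by $Ck^2(k+\log m)^{2(d-1)}$ under every restriction $\rho$, into the Raz--Tal criterion that level-$2$ Fourier growth $L$ implies distinguishing advantage $O(\eps L/\sqrt N)$ (\cref{lem:second-level}); this composition is \cref{prop:gc0-pseudorandom}. The only thing worth noting is that the exponent $1/4d$ on $k$ is exactly calibrated so that, with $\log m = \quasipoly(N)$ and $\eps = \Theta(1/\log N)$, the product $\eps\, k^2(k+\log m)^{2(d-1)}/\sqrt N$ collapses to $(\log N)^{-\omega(1)}$; a weaker restriction on $k$ would not suffice. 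Since all of this has already been carried out in the excerpt, nothing substantive remains.

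Thus the corollary is a one-line composition of \cref{cor:d-fools-gac0} with \cref{lemma:abstract-raz-tal}, modulo two cosmetic points: the change of variables $n = 2N$ (harmless, as the bounds on $k$ and on the size are stable under constant factors inside the logarithm), and the mismatch between the $\{\pm1\}$-valued circuits of \cref{lemma:kumar-switching} and the $\{0,1\}$-valued functions of \cref{lemma:abstract-raz-tal}, which is absorbed by the affine map $f\mapsto(1-f)/2$ that alters neither the level-$2$ Fourier mass nor the distinguishing advantage. The main obstacle, then, lies entirely upstream---in Kumar's switching lemma for $\GC^0$ behind \cref{lemma:kumar-switching} and in the Raz--Tal Fourier analysis behind \cref{lem:second-level} and \cref{lemma:abstract-raz-tal}---rather than in this corollary itself.
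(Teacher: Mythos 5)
Your proposal matches the paper's argument exactly: the paper derives \cref{cor:bqlogtime-not-in-gac0} by combining \cref{lemma:abstract-raz-tal} with \cref{prop:gc0-pseudorandom} (equivalently, with its instantiation \cref{cor:d-fools-gac0}), which is precisely the composition you describe, and your parameter check for $k = O(N^{1/4d})/\log^{\omega(1)}N$ is the right calibration. The cosmetic points you flag (relabeling $n=2N$ and the $\{\pm1\}$ versus $\{0,1\}$ convention) are indeed harmless and handled implicitly in the paper.
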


Our circuit separation also says something about oracle separations. 
By standard techniques, \cref{cor:d-fools-gac0,cor:bqlogtime-not-in-gac0}  imply an oracle $A$ relative to which $\BQP^A \not\subseteq \mathsf{C}^A$ for any class of languages $\mathsf{C}$ that can be decided by a uniform family of constant-depth, exponential-size $\GC^0$ circuits.\footnote{The notion of uniformity here is sometimes called direct connect uniform \cite[Definition 6.28]{arora2009computational} or highly uniform \cite[Exercise 3.8]{goldreich2008computational}.}

\begin{corollary}[Generalization of {\cite[Corollary 1.5]{raz2022oracle}}]\label{thm:oracle-separation-bqp}
There is an oracle relative to which $\BQP$ is not contained in the class of languages decidable by uniform families of circuits $\{C_n\}$, where for all $n \in \N$, $C_n$ is a size-$2^{n^{O(1)}}$ depth-$d$ $\GC^0(k)$ circuit with $k \in \frac{2^{n/4d}}{n^{\omega(1)}}$. 
\end{corollary}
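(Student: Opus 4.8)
The plan is to run the standard translation of a $\BQLOGTIME$-versus-circuit separation into a $\BQP$-versus-circuit \emph{oracle} separation, exactly as in Raz and Tal \cite{raz2022oracle}; the new ingredient---pseudorandomness of $\calD_{\textsc{Raz-Tal}}$ against $\GC^0$---is already in hand as \cref{cor:d-fools-gac0,cor:bqlogtime-not-in-gac0}, so what remains is to match parameters. Set $N = N(n) \coloneqq 2^n$ and view the oracle $A$ as encoding, for every Turing-machine input length $n$, a string $z_n \in \{\pm 1\}^{2N}$: a query to $A$ on a padded encoding of a pair $(n,i)$ returns the $i$-th coordinate of $z_n$. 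Under this dictionary, a uniform family $\{C_n\}$ of depth-$d$, size-$2^{n^{O(1)}}$ $\GC^0(k)$ circuits deciding a language relative to $A$ corresponds, at each length $n$, to a depth-$d$ $\GC^0(k)$ circuit of size $2^{(\log N)^{O(1)}} = \quasipoly(N)$ acting on the $2N$ bits $z_n$, with $k \in 2^{n/4d}/n^{\omega(1)} = O(N^{1/4d})/\log^{\omega(1)} N$---precisely the regime of \cref{cor:d-fools-gac0}. On the quantum side, the $\BQLOGTIME$ algorithm underlying \cref{cor:bqlogtime-not-in-gac0} uses $O(\log N) = O(n)$ gates and input queries, so a $\BQP^A$ machine on input $1^n$ can compute its description and simulate it---replacing each input query by an oracle query to $A$---in $\poly(n)$ time.

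I would then construct $A$ by the usual stagewise diagonalization over circuit families. Fix the $\BQP^A$ machine $M$ that, on input $1^n$, runs the $\BQLOGTIME$ distinguisher on $z_n$ and accepts iff it reports ``pseudorandom,'' and define the target language $L_A = \{\, 1^n : z_n \text{ was planted from } \calD_{\textsc{Raz-Tal}} \,\}$. Enumerate the uniform depth-$d$, size-$2^{n^{O(1)}}$ $\GC^0(k)$ families with $k$ in the allowed regime as $\mathsf{C}^{(1)}, \mathsf{C}^{(2)}, \dots$, and process them in turn, at stage $i$ reserving a fresh input length $n_i$ larger than every length used so far and large enough that the bound of \cref{cor:d-fools-gac0} applies with $N_i = 2^{n_i}$. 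By \cref{cor:d-fools-gac0} the circuit $\mathsf{C}^{(i)}_{n_i}$ distinguishes $\calD_{\textsc{Raz-Tal}}$ from $U_{2N_i}$ with advantage $o(1)$, so for at least one of the two choices ``$z_{n_i}\sim\calD_{\textsc{Raz-Tal}}$'' or ``$z_{n_i}\sim U_{2N_i}$'' the circuit outputs the \emph{wrong} membership bit for $1^{n_i}$ with probability exceeding $\tfrac13 + \Omega(1)$; since $M$ classifies a random sample of either type correctly with probability $\tfrac23 - o(1)$, a union bound gives a \emph{fixed} $z_{n_i}$ of that type that simultaneously defeats $\mathsf{C}^{(i)}$ and keeps $M$ correct at $1^{n_i}$. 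Plant that $z_{n_i}$ into $A$, fix $L_A$ on $1^{n_i}$ accordingly, and continue. Because the reserved blocks are pairwise disjoint and every uniform $\GC^0(k)$ family with the stated parameters is handled at some stage, $L_A \in \BQP^A$ while no such family decides $L_A$ relative to $A$.

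I do not expect a genuinely new obstacle; the only thing requiring care is that parameters line up at the two interfaces. First, the $\log^{\omega(1)}N$ loss in the admissible $k$ in \cref{cor:d-fools-gac0} is exactly the $n^{\omega(1)}$ slack in the statement, so the regime $k \in 2^{n/4d}/n^{\omega(1)}$ is the translated form of what \cref{cor:d-fools-gac0} tolerates, and a size-$2^{n^{O(1)}}$ circuit over $n$-bit Turing inputs is a $\quasipoly(N)$-size circuit over the $2N$-bit block, matching the size hypothesis of \cref{cor:d-fools-gac0}. Second, one must use a notion of uniformity strong enough both to enumerate the families and to define $M$ uniformly; direct-connect (``highly uniform'') uniformity in the sense of \cite[Definition 6.28]{arora2009computational} suffices, and is the notion under which the Raz--Tal reduction already operates. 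With these checks in place the corollary follows from \cref{cor:d-fools-gac0,cor:bqlogtime-not-in-gac0} by the standard reduction.
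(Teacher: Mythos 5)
Your proposal is correct in substance but takes a different route from the paper, which simply defers to \cite[Appendix A]{raz2022oracle} with their Theorem 1.2 swapped for \cref{cor:d-fools-gac0}. Raz and Tal's Appendix A (following Bennett--Gill and Fefferman--Shaltiel--Umans--Viola) is a \emph{probabilistic} oracle construction: each block $z_n$ is drawn at random from an equal mixture of $\calD_{\textsc{Raz-Tal}}$ and $U_{2N}$, and one shows that with probability $1$ over the oracle the resulting language is in $\BQP^A$ yet, by a Borel--Cantelli argument over the countably many uniform families, escapes every such family. You instead run an explicit stagewise diagonalization, handling one enumerated family per reserved input length. Both are standard and both work; the random-oracle version dispenses with the enumeration and with the bookkeeping about which oracle positions a stage-$i$ circuit may read, while your version makes the counting argument ($\Pr[\text{circuit wrong}]-\Pr[M\text{ wrong}]>0$ for one of the two planted types) completely explicit. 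Your parameter translation $N=2^n$, $k\in 2^{n/4d}/n^{\omega(1)}=O(N^{1/4d})/\log^{\omega(1)}N$, size $2^{n^{O(1)}}=\quasipoly(N)$ is exactly right.

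Two points to tighten. First, "$M$ runs the $\BQLOGTIME$ distinguisher on $z_n$ and accepts iff it reports pseudorandom ... classifies a random sample of either type correctly with probability $\tfrac23-o(1)$" is too glib: a single run of the Raz--Tal distinguisher has advantage only $\Theta(1/\log N)$, so $M$ must repeat it $\poly(n)$ times on the same $z_n$ and threshold the empirical acceptance frequency; that this yields constant (indeed $1-o(1)$) correctness on a random sample of either type relies on the concentration statements in \cite[Section 6]{raz2022oracle} (the test statistic is large for almost all $z\sim\calD_{\textsc{Raz-Tal}}$ and small for almost all $z\sim U_{2N}$), not merely on the expectation gap. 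Second, you should say what $z_n$ is at unreserved lengths and why $M$ still has bounded error there (e.g.\ plant a fixed string on which the amplified test rejects, which exists since it rejects almost all of $U_{2N}$), so that $L_A\in\BQP^A$ as a total language. Both points are routine and are already handled inside the Raz--Tal argument the paper imports, so neither is a fatal gap.
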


The proof is the same as \cite[Appendix A]{raz2022oracle} but the step where they apply their Theorem 1.2 should be replaced with our \cref{cor:d-fools-gac0}. Hence, we omit the details. Similar proofs were also given by Aaronson \cite{aaronson2010bqp} and Fefferman, Shalteil, Umans, and Viola \cite{fefferman2012beating}, which were based on an earlier work of Bennett and Gill \cite{charles1981relative}.

It is well-known that $\PH$ is the class of languages decided by uniform families of size-$2^{n^{O(1)}}$ constant-depth $\AC^0$ circuits (see e.g., \cite[Theorem 6.29]{arora2009computational}).
Therefore, the separation of $\BQP$ and $\PH$ is a special case of our theorem, because $\AC^0 \subseteq \GC^0(k)$ for all $k \geq 0$. 

Because $\g(k)$ gates can compute many functions, \cref{thm:oracle-separation-bqp} can be instantiated in many ways. 
We give one concrete example separating $\BQP$ from a biased version of the counting hierarchy, which we now define. 
First, we define existential and universal counting quantifiers.
Similar definitions date back to \cite{wagner1986complexity,toran1991,allender1993counting}. 
For a bit string $x$, let $\len(x)$ denote the length of $x$.   

\begin{definition}[Counting quantifiers]\label{def:Counting-quant}
Let $\mathsf{C}$ be a class of languages, and let $k: \N\to\N$ be a function.
Define $\exists_k \cdot \mathsf{C}$ to be the set of all languages $L$ such that there is some polynomial $p$ and a language $C \in \mathsf{C}$ such that $x \in L \iff$
\[
\abs{\{y \in \bitz^{p(\len(x))} : \langle x, y \rangle \in C \}}> k(\len(x)). 
\]

Define $\forall_k \cdot \mathsf{C}$ to be the set of all languages $L$ such that there is some polynomial $p$ and a language $C \in \mathsf{C}$ such that for $x \in \bitz^n$, $x \in L \iff$
\[
\abs{\{y \in \bitz^{p(\len(x))} : \langle x, y \rangle \notin C \}}\le k(\len(x)). 
\]
\end{definition}

We note that $\exists_0 = \exists$ and $\forall_0 = \forall$.

We can now define the $k$-biased counting hierarchy. 
For two functions $f_1,f_2: \N\to\N$, we say $f_1\le f_2$ when $\forall n, f_1(n)\le f_2(n)$. 

\begin{definition}[Biased counting hierarchy]\label{def:biased-ch}
Let $k:\N\to\N$ be a function. The $k$-biased counting hierarchy $\mathsf{CH}(k)$ is the smallest family of language classes satisfying:
\begin{itemize}
    \item[(i)] $\PTIME \in \mathsf{CH}(k)$,
    \item[(ii)] If $L \in \mathsf{CH}(k)$, then $\exists_{k'} \cdot L$ and $\forall_{k'} \cdot L \in \mathsf{CH}(k)$ for all $k':\N\to\N$, $k' \leq k$. 
\end{itemize}
\end{definition}

It is a well-known fact that the polynomial hierarchy $\PH$ can be characterized by alternating $\exists_0$ and $\forall_0$ quantifiers, so $\mathsf{CH}(0) = \PH$. 
As mentioned above, there is also a well-known characterization of $\PH$ by $\AC^0$ circuits. 
Roughly speaking, the $\exists_0$ quantifiers are replaced by $\OR$ gates, and the $\forall_0$ quantifiers are replaced by $\AND$ gates.  
Let $k$-$\OR$ be the gate that is $1$ iff $> k$ input bits are $1$.
Similarly, let $k$-$\AND$ be the gate that is $0$ iff $> k$ input bits are $0$.  
Observe $\OR = 0$-$\OR$ and $\AND = 0$-$\AND$, and that $k$-$\AND$ and $k$-$\OR$ are $\g(k)$ gates up to negations (specifically, one can construct $k$-$\AND$ with $\NOT$ and $k$-$\OR$ via De Morgan's law).  
So, in \emph{exactly} the same manner as $\PH$, for any class $\mathsf{C} \in \mathsf{CH}(k)$, one can construct a $\GC^0(k)$ circuit that decides an $L \in \mathsf{C}$ by replacing the $\exists_k$ quantifiers with $k$-$\OR$ gates and the $\forall_k$ quantifiers with $k$-$\AND$ gates.
By doing this standard construction, one obtains the following corollary of \cref{thm:oracle-separation-bqp}.

\begin{corollary}\label{cor:biased-ch-separation}
There is an oracle $A$ relative to which $\BQP^A \not\subseteq \mathsf{CH}(k)^A$ for $k(n) =\frac{2^{\Theta(n)}}{n^{\omega(1)}}$. 
\end{corollary}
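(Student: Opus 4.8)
The plan is to derive \cref{cor:biased-ch-separation} as a direct instantiation of \cref{thm:oracle-separation-bqp} via the ``quantifiers-to-gates'' dictionary already sketched in the paragraph preceding the statement. First I would make precise the claim that for every $k:\N\to\N$ and every language class $\mathsf{C}\in\mathsf{CH}(k)$, any $L\in\mathsf{C}$ is decided by a uniform family of constant-depth $\GC^0(k')$ circuits of size $2^{n^{O(1)}}$, where $k'$ is the pointwise bound appearing in the definition of $\mathsf{CH}(k)$ (so $k'\le k$). This is proved by structural induction on the definition of $\mathsf{CH}(k)$: the base case $\PTIME\in\mathsf{CH}(k)$ is handled by noting $\PTIME\subseteq\PH$, and $\PH$ is decided by uniform $2^{n^{O(1)}}$-size constant-depth $\AC^0\subseteq\GC^0(k)$ circuits; for the inductive step, if $L\in\mathsf{CH}(k)$ is decided by a depth-$d$ circuit family, then $\exists_{k'}\cdot L$ (resp.\ $\forall_{k'}\cdot L$) is decided by placing a single $k'$-$\OR$ gate (resp.\ $k'$-$\AND$ gate) of fan-in $2^{p(n)}$ on top, raising the depth to $d+1$ and the size by a polynomial-in-the-exponent factor. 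Since each application of a quantifier in the definition of $\mathsf{CH}(k)$ adds one layer, and we only take a constant number of quantifiers to stay within any fixed class of the hierarchy, the resulting circuit family has constant depth. Crucially, $k$-$\AND$ and $k$-$\OR$ are $\g(k)$ gates up to negations (one obtains $k$-$\AND$ from $k$-$\OR$ by De Morgan, using that $\NOT$ is available), so the circuit is genuinely $\GC^0(k')$ with $k'\le k$.

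Next I would invoke \cref{thm:oracle-separation-bqp}: there is an oracle relative to which $\BQP$ is not contained in the class of languages decidable by uniform families of size-$2^{n^{O(1)}}$ depth-$d$ $\GC^0(k)$ circuits as long as $k\in \frac{2^{n/4d}}{n^{\omega(1)}}$. Taking $k(n) = \frac{2^{\Theta(n)}}{n^{\omega(1)}}$ with the $\Theta(n)$ hidden constant small enough (say any constant strictly below $1/(4d)$ for the relevant depth $d$), the hypothesis $k\le \frac{2^{n/4d}}{n^{\omega(1)}}$ is satisfied for every constant $d$. Combining this with the inductive claim from the previous paragraph — which says $\mathsf{CH}(k)^A$ relativizes in the obvious way and every language in it is computed by such a $\GC^0(k)$ family relative to $A$ — we conclude $\BQP^A\not\subseteq\mathsf{CH}(k)^A$ for this $k$, which is exactly the statement.

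The only subtlety worth spelling out is the relativization and the accounting of depth versus the number of quantifiers. Each class inside $\mathsf{CH}(k)$ is reached by finitely many applications of $\exists_{k'}/\forall_{k'}$; to get a single oracle that works against all of $\mathsf{CH}(k)$ one uses that \cref{thm:oracle-separation-bqp} gives an oracle working against \emph{all} constant depths $d$ simultaneously (as the Raz--Tal construction and the reduction in \cite[Appendix~A]{raz2022oracle} do — the oracle diagonalizes against every uniform constant-depth family), and the constraint on $k$ in terms of $d$ is uniform in the sense that for $k(n)=2^{\Theta(n)}/n^{\omega(1)}$ with a sufficiently small base constant, $k(n)\le 2^{n/4d}/n^{\omega(1)}$ holds for every fixed $d$. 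The main obstacle — really the only place one must be slightly careful — is verifying that the quantifier-to-gate translation respects uniformity (the circuit family must be direct-connect uniform so the oracle separation applies) and that it does not blow the depth up to super-constant; both are handled by observing that membership in a fixed level of $\mathsf{CH}(k)$ involves only a constant number of quantifiers and the $k$-$\OR$/$k$-$\AND$ gates have the explicitly computable connection structure inherited from the polynomial-time predicate $C$ and the polynomial bound $p$. Everything else is bookkeeping, so I would keep the proof to a few sentences and cite the earlier discussion and \cref{thm:oracle-separation-bqp}.
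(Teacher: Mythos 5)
Your proposal matches the paper's own argument: the paper derives this corollary informally in the prose preceding its statement, via exactly the quantifier-to-gate translation you describe ($\exists_{k'}\mapsto k'$-$\OR$, $\forall_{k'}\mapsto k'$-$\AND$, both $\g(k)$ gates up to negation), and then invokes \cref{thm:oracle-separation-bqp}. Your structural induction, uniformity check, and depth accounting make explicit what the paper leaves implicit, and they are the right things to check.

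One small caution on your final paragraph: the sentence claiming that for $k(n)=2^{cn}/n^{\omega(1)}$ with a sufficiently small constant $c>0$ one has $k(n)\le 2^{n/4d}/n^{\omega(1)}$ ``for every constant $d$'' is not literally true — once $d>1/(4c)$ the exponential base dominates and the inequality fails. The honest reading (and the one consistent with \cref{thm:oracle-separation-bqp}) is that the constant hidden in $\Theta(n)$ is allowed to shrink with the number of quantifier alternations, i.e.\ the separation is obtained level-by-level with $k$ calibrated to the depth. This is really an imprecision in the paper's use of $\Theta(n)$ in the corollary statement rather than a defect in your approach, but it is worth not overclaiming a single fixed base constant that works simultaneously for all $d$.
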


This result is perhaps surprising considering that $\BQP \subseteq \PP$ relative to all oracles \cite{adleman1997quantum} and $\PP$ is the first level of the standard counting hierarchy.
Thus, our \cref{cor:biased-ch-separation} shows that a relativizing simulation of $\BQP$ requires being able to count a larger number of witnesses (exponential in the input instance size), as one can in $\PP$.

More broadly, \cref{thm:oracle-separation-bqp} separates $\BQP$ from many complexity classes that contain $\PH$ and are incomparable with $\PP$; the specific complexity classes one gets depends on how the $\G(k)$ gates are used in the uniform circuit families. Above we gave an example where the $\g(k)$ gates are all $k$-$\AND$ and $k$-$\OR$ gates.

\subsection{\texorpdfstring{Separation Between $\QNC^0$ and $\GC^0$}{Separation Between QNC0 and GC0}}\label{subsec:qnc-gc0}

We exhibit a search problem with input size $n$ that can be solved by $\QNC^0$ circuits (i.e., polynomial-size, constant-depth quantum circuits with bounded fan-in gates), but not by size-$s$ $\GC^0(k)$ circuits for $s \leq \exp(n^{1/4d})$ and $k=O(\log s)$. 
In particular, we show strong average-case lower bounds against $\GC^0$ for this search problem, i.e., that any $\GC^0$ circuit can only succeed on an $\exp(-n^c)$ fraction of input strings for some $c > 0$. In \cref{subsec:noisy-qnc0}, we show that our separation holds even when the quantum circuits are subject to noise.

Our result builds on prior work of Bravyi, Gosset, and K\"onig \cite{bravyi2018quantum} and Watts, Kothari, Schaeffer, and Tal \cite{watts2019exponential}. In particular, we use the same search problems introduced in these works and prove that they are average-case hard for $\GC^0$.
To prove our lower bound, we prove a new multi-switching lemma for multi-output $\GC^0$ circuits in \cref{subsec:muli-switching}.

Bravyi et al.\ introduced the 2D Hidden Linear Function Problem and showed that it can be solved by $\QNC^0$ circuits.

\begin{definition}[2D Hidden Linear Function Problem, \tdhlf\, \cite{bravyi2018quantum}]\label{def:tdhlf}
   Let $b \in \{0,1,2,3\}^n$ be a vector and let $A \in \{0,1\}^n$ be a symmetric matrix describing an $n\times n$\, $2$D grid, i.e., $A_{ij} = 1$ when vertices $i$ and $j$ are connected on the 2D grid. Define $q: \F_2^n \to \Z_4$ as $q(u) \coloneqq u^TAu + b^Tu \pmod 4$. Define $\calL_q$ as 
   \[
   \calL_q \coloneqq \left\{ u \in \F_2^n : \forall v\in \F_2^n,\, q(u \oplus v) = q(u) + q(v) \pmod 4 \right\}.
   \]
   $\oplus$ denotes the addition of binary vectors modulo two, and the addition $q(u) + q(v)$ is modulo four. 
   Bravyi, Gosset, and K\"onig \cite{bravyi2018quantum} show that the restriction of $q$ onto $\calL_q$ is a linear form, i.e., there exists a $z \in \F_2^n$ such that $q(x) = 2 z^Tx \pmod 4$ for all $x \in \calL_q$. 
   Given $A \in \{0,1\}^{n \times n}$ and $b \in \{0,1,2,3\}^n$ as input, the \textsc{2D Hidden Linear Function} (\tdhlf) problem is to output one such $z \in \F_2^n$. 
\end{definition}

Subsequently, Watts et al.\ \cite{watts2019exponential} introduced the Parallel Parity Halving Problem and showed that it reduces to 2D HLF.

\begin{definition}[Parallel Parity Halving Problem, $\php_{n,m}^r$\,\cite{watts2019exponential}]\label{def:php}
Given $r$ length-$n$ strings $x_1,\dots,x_r \in \{0,1\}^n$ as input, promised that each $x_i$ has even parity, output $r$ length-$m$ strings $y_1,\dots,y_r\in\{0,1\}^m$ such that, for all $i \in [p]$, 
\[
\abs{y_i} \equiv \frac{1}{2}\abs{x_i} \pmod 2.
\]
\end{definition}

\begin{lemma}[{\cite[Theorem 26, Corollary 30]{watts2019exponential}}]\label{lem:php-reduction-2dhlf}
    $\php_{m,n}^n$ reduces to \tdhlf.
\end{lemma}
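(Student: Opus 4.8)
The plan is to reconstruct the reduction of Watts, Kothari, Schaeffer, and Tal. Given an instance $x_1,\dots,x_n\in\{0,1\}^m$ of $\php_{m,n}^n$, with each $x_i$ of even Hamming weight, we must produce by a simple local map a single \tdhlf\ instance $(A,b)$ and then read off $n$ output strings $y_1,\dots,y_n\in\{0,1\}^n$ from a solution $z$. The first observation is that a \tdhlf\ instance may be supported on \emph{any} subgraph of a large enough 2D grid, so we take $A$ to be a disjoint union of $n$ gadget subgraphs $G_{x_1},\dots,G_{x_n}$, each placed in its own block of the grid and depending only on the corresponding input string, and we take $b$ block-structured the same way. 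Since $q(u)=u^TAu+b^Tu \pmod 4$ then splits as $q=\bigoplus_i q^{(i)}$ across the components, we get $\calL_q=\prod_i\calL_{q^{(i)}}$, and the hidden linear form decomposes coordinatewise as $z=(z^{(1)},\dots,z^{(n)})$ with each $z^{(i)}$ a valid solution of the single-string instance $(G_{x_i},b^{(i)})$. Hence it suffices to build, for a single even-weight $x\in\{0,1\}^m$, a grid gadget $(G_x,b_x)$ together with a designated ``output register'' $R$ of $n$ coordinates, so that every \tdhlf-solution $z$ of $(G_x,b_x)$ has $\sum_{j\in R}z_j\equiv\tfrac12\abs{x}\pmod 2$; then $y_i$ is the $R$-register of $z^{(i)}$.

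For the single-string gadget I would invoke Bravyi--Gosset--K\"onig's explicit description of the hidden form: $\calL_q$ is a translate of $\ker A$ over $\F_2$ intersected with the linear condition forced by $b$, and on $\calL_q$ one has $q(x)=2z^Tx\pmod 4$ for a $z$ obtained from $A$ and $b$ by mod-$4$ arithmetic. The gadget wires each input bit $x_j$ into the grid edges of $G_x$ and the values $b_x\in\{0,1,2,3\}^n$ so that the unpacked identity $q(u)=b^Tu+2\sum_{\ell<\ell'}A_{\ell\ell'}u_\ell u_{\ell'}\pmod 4$, evaluated along $\calL_q$, pins the parity of the $R$-coordinates of $z$ to $\tfrac12\abs{x}\bmod 2$. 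One must also verify that this parity is well defined across the coset of valid solutions: two valid $z$ differ by a vector $w$ with $w^Tx\equiv 0\pmod 2$ for all $x\in\calL_q$, so choosing $R$ with indicator vector $\mathbbm{1}_R\in\calL_q$ makes $\sum_{j\in R}w_j\equiv 0\pmod 2$ and kills the ambiguity. This mod-$4$ linear-algebra verification on the explicit gadget is the main obstacle, and it is precisely the content of Theorem~26 and Corollary~30 of Watts et al., which I would invoke rather than re-derive in full.

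Finally I would check that the reduction has the locality the applications need: each entry of $A$ and each entry of $b$ is a fixed value that is switched on exactly when a single input bit $x_{i,j}$ equals $1$, so $(x_1,\dots,x_n)\mapsto(A,b)$ is computable by an $\NC^0$ (essentially projection) circuit, and recovering $y_1,\dots,y_n$ from $z$ just reads off the $n$ prescribed registers, again an $\NC^0$ map. Therefore $\php_{m,n}^n$ reduces to \tdhlf\ through constant-depth local maps, so the $\QNC^0$ upper bound for \tdhlf\ gives a $\QNC^0$ algorithm for $\php_{m,n}^n$, while conversely any sufficiently strong classical lower bound against $\php$ transfers to \tdhlf\ --- the direction in which this reduction is used in \cref{subsec:qnc-gc0}.
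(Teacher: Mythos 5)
The paper does not supply its own proof of this lemma; it imports the statement wholesale from Watts, Kothari, Schaeffer, and Tal \cite[Theorem 26, Corollary 30]{watts2019exponential}, which is also what you ultimately do. Your high-level reconstruction is consistent with that: the block decomposition of $q$ over disjoint gadget subgraphs of the grid (and hence $\calL_q = \prod_i \calL_{q^{(i)}}$ and $z = (z^{(1)},\dots,z^{(n)})$), the observation that requiring $\mathbbm{1}_R \in \calL_q$ kills the coset ambiguity in $\sum_{j\in R} z_j$ because two valid outputs differ by an element of $\calL_q^{\perp}$, and the locality of both the encoding $x \mapsto (A,b)$ and the readout $z \mapsto (y_1,\dots,y_n)$ are all correct. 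The one substantive step you leave to the citation --- actually exhibiting a single-string gadget $(G_x, b_x)$ embedded in the grid, with a register $R$ satisfying both $\mathbbm{1}_R \in \calL_q$ and the identity $\sum_{j\in R} z_j \equiv \tfrac{1}{2}\abs{x} \pmod 2$ for every valid $z$ --- is precisely what the paper also leaves to the cited theorem, so the two treatments coincide. (One small caution: well-definedness of $\sum_{j\in R} z_j$ across the solution coset is necessary but not sufficient; the gadget must actually force the value $\tfrac{1}{2}\abs{x}$, and that verification is the content of the cited construction.)
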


Hence, to obtain our separation between $\QNC^0$ and $\GC^0(k)$ it suffices to prove that $\php$ is hard for $\GC^0$, which we do in the remainder of this subsection. Doing so yields the following result. 

\begin{theorem}[Generalization of {\cite[Theorem 1]{watts2019exponential}}]\label{thm:watts-main}
The \tdhlf\ problem on $n$ bits cannot be solved by a size-$\exp(O(n^{1/4d}))$ $\GC^0_d(k)$ circuit with $k=O(n^{1/4d})$. 
Furthermore, there exists an (efficiently samplable) input distribution on which any $\GC^0_d(k)$ circuit (or $\GC^0_d(k)\mathsf{/rpoly}$ circuit) of size at most $\exp(n^{1/4d})$ only solves the \tdhlf\ problem with probability at most $\exp(-n^c)$ for some $c > 0$. 
\end{theorem}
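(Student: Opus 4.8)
The plan is to prove the contrapositive through the Parallel Parity Halving Problem. By \cref{lem:php-reduction-2dhlf} there is a local ($\NC^0$) reduction from $\php^r_{m,n}$ to \tdhlf, so a size-$\exp(O(n^{1/4d}))$ depth-$d$ $\GC^0(k)$ circuit solving \tdhlf\ on $N$ bits (on average, under the efficiently samplable distribution) yields a circuit of essentially the same size and depth solving $\php^{r}_{m,n}$ on average, with $r,m,n = \Theta(\sqrt N)$ and the distribution being the product of uniform even-weight strings. Thus it suffices to prove that $\php^{r}_{m,n}$ is average-case hard for $\GC^0(k)$. For this I would (i) establish a \emph{multi-output} multi-switching lemma for $\GC^0(k)$ in \cref{subsec:muli-switching}, and (ii) argue that the shallow decision forests it produces cannot solve $\php$ because the parity of an output block is far too insensitive to carry out the ``halving'' of the corresponding input block's Hamming weight.

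For step (i), I would take Kumar's single-output multi-switching lemma for $\GC^0(k)$ \cite{kumar2023tight} as a black box and feed it into the Rossman-style amplification from single-output to multi-output switching that Watts, Kothari, Schaeffer, and Tal \cite{watts2019exponential} carried out for $\AC^0$; since a $\g(k)$ gate only ever needs to be ``switched'' one at a time, sitting at the top of a subcircuit, Kumar's lemma applies verbatim inside the induction and contributes only a $\poly(k)$ overhead. The outcome is: for a depth-$d$, size-$s$ $\GC^0(k)$ circuit $C$ with $M$ output bits and a $p$-random restriction with $p$ roughly $1/(k+\log s)^{\Theta(d)}$, with probability $1-M2^{-\Omega(t)}$ the restricted circuit $C_\rho$ admits a common partial decision tree of depth $\le t$ after each leaf of which every output bit is a decision tree of depth $\le t$. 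Optimizing the interplay between $p$, $t$, the depth $d$, and $s=\exp(O(n^{1/4d}))$ is exactly what produces the thresholds $k=O(n^{1/4d})$ and size $\exp(O(n^{1/4d}))$; naively expanding $\g(k)$ to a CNF would instead cost an $n^{O(k)}$ factor and destroy the bound.

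For step (ii), suppose $C$ is such a circuit and solves $\php^{r}_{m,n}$ correctly on more than an $\exp(-n^{c})$ fraction of inputs. Apply the multi-switching lemma with the parameters above, so that with probability $1-o(1)$ over $\rho$: $C_\rho$ is a forest as described with $t$ small, and every input block retains $L_i \gg t+m t$ live coordinates. Fix a block $i$. Walking the common tree once and then the $m$ output-trees for block $i$ shows that $\bigoplus_{j} y_{i,j} = \lvert y_i\rvert \bmod 2$ is computed, as a function of block $i$'s live coordinates, by a decision tree of depth $\le t+mt$, hence has block sensitivity $\le t+mt$. But correctness of $\php$ forces this function to equal $\lvert x_i\rvert/2 \bmod 2$ on the even-weight slice of block $i$, and this function has block sensitivity $\ge L_i/2$ (disjoint coordinate pairs each flip it), a contradiction once $L_i > 2(t+mt)$. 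So $C_\rho$ is incorrect on block $i$; a standard computation (as in \cite{watts2019exponential}) shows that in fact a random input makes $C_\rho$ fail on block $i$ with probability $1/2 - o(1)$, and combining the $r=n^{\Omega(1)}$ blocks yields total success probability $\exp(-\Omega(r)) = \exp(-n^{\Omega(1)})$, contradicting the assumption and giving the claimed $\exp(-n^c)$ average-case bound. The worst-case statement is the special case, and a $\GC^0(k)\mathsf{/rpoly}$ circuit is handled by fixing the advice string that maximizes success probability and rerunning the argument on the resulting deterministic $\GC^0(k)$ circuit.

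The main obstacle I expect is step (i): proving the multi-output multi-switching lemma for $\GC^0(k)$ with parameters strong enough that the $\g(k)$ gates cost only a $\poly(k)$ factor. A $\g(k)$ gate does not simplify under restriction the way a narrow DNF/CNF does, so the argument must be organized so that Kumar's lemma is only ever invoked for a single $\g(k)$ gate at a time, while the bookkeeping across all $M=rm$ output bits must still produce one genuinely common partial decision tree; keeping the quantitative dependence on $p$, $t$, $d$, and $s$ tight through all $d$ layers is the delicate part. Everything downstream --- the reduction and the block-sensitivity argument --- is then routine, as is the amplification to the $\exp(-n^{c})$ average-case bound and to $\mathsf{/rpoly}$ circuits.
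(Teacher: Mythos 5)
Your proposal matches the paper's proof: the paper likewise reduces via $\php$ (\cref{lem:php-reduction-2dhlf}), proves a multi-output multi-switching lemma for $\GC^0(k)$ by plugging Kumar's single-output lemma into the Rossman/Watts-et-al.\ framework (\cref{thm:multioutputlemma}, \cref{cor:muli-output-multi-switching}), and then observes that the only $\AC^0$-specific fact in the Watts et al.\ lower bound is the simplification to $\DT(2t)\circ\DT(q)^m$ under random restriction, so the rest of their argument goes through verbatim. Your identification of the multi-output switching lemma as the one genuinely new ingredient, and your parameter choices, are consistent with the paper.
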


In \cref{subsec:muli-switching}, we prove a multi-switching lemma for multi-output $\GC^0$ circuits necessary for our lower bound. 
In \cref{subsec:gc0-lower-bound}, we prove that $\php$ is average-case hard to compute for $\GC^0$ circuits, yielding \cref{thm:watts-main}.
Finally, in \cref{subsec:noisy-qnc0}, we generalize our result further to obtain an exponential separation between \emph{noisy} $\QNC^0$ circuits and $\GC^0(k)$, building on the work of Bravyi, Gosset, K\"onig, and Temamichel \cite{bravyi2020quantum} and Grier, Ju, and Schaeffer \cite{grier2021interactive}.

\subsubsection{A Multi-Switching Lemma for \texorpdfstring{$\GC^0$}{GC0}}\label{subsec:muli-switching}

We prove a multi-output multi-switching lemma for $\GC^0(k)$, building on prior works of Rossman \cite{ros17entropyswitch}, H\aa stad \cite{haastad2014correlation}, and Kumar \cite{kumar2023tight}.
We must first establish some notation, following Rossman \cite{ros17entropyswitch} and Watts et al.\ \cite[Appendix A]{watts2019exponential}. Our proof involves the following classes of functions:
\begin{itemize}
    \item $\DT(w)$ is the class of depth-$w$ decision trees.
    \item $\CKT(k; d; s_1,\dots,s_d)$ is the class of depth-$d$ $\GC^0(k)$ circuits with $s_i$ gates at the $i$th layer of the circuit for all $i \in \{1,\dots, d\}$. Note that these circuits have $s_d$ many output bits.
    \item $\CKT(k; d; s_1, \dots, s_d) \circ \DT(w)$ is the class of circuits in $\CKT(k; d; s_1,\dots,s_d)$ whose inputs are labeled by depth-$w$ decision trees. Note that these functions have $s_d$ many output bits.
    \item $\DT(t) \circ \CKT(k; d; s_1, \dots, s_d) \circ \DT(w)$ is the class of depth-$t$ decision trees whose leaves are labeled by functions in $\CKT(k;d;s_1,\dots,s_d) \circ \DT(k)$. Note that these functions have $s_d$ many output bits.
    \item $\DT(w)^m$ is the class of $m$-tuples of depth-$k$ decision trees. This function has $m$ many output bits, where each output bit is computed by an element of $\DT(w)$. 
    \item $\DT(t) \circ \DT(w)^m$ is the class of depth-$t$ decision trees where each leaf is labeled by an $m$-tuple of depth-$k$ decision trees. Note that these functions have $m$ many output bits.
\end{itemize}

In the remainder of this subsection, we will build to the multi-switching lemma by combining ingredients from Rossman \cite{ros17entropyswitch} and Kumar \cite{kumar2023tight}.
To begin, we need the following lemma that says, with high probability, a depth-$\ell$ decision tree will reduce in depth under random restriction. 

\begin{lemma}[{\cite[Lemma 20]{ros17entropyswitch}}]
\label{thm:syntactic}
For a depth-$\ell$ decision tree $T \in \DT(\ell)$, 
\[\Pr_{\rho\sim \calR_p} [ T|_\rho\text{ has depth}\ge t]\le (2ep\ell/t)^t.\]
\end{lemma}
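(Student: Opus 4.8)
\emph{Proof proposal.} The plan is to prove the bound by induction on the structure of $T$, conditioning on the value of $\rho$ at the variable queried at the root. Write $f(T,t)\coloneqq\Pr_{\rho\sim\calR_p}[T|_\rho\text{ has depth}\ge t]$. Since a decision tree reads each variable at most once along any root-to-leaf path, the depth of $T|_\rho$ equals the maximum, over root-to-leaf paths $\pi$ of $T$ that at every fixed-variable node follow the branch dictated by $\rho$, of the number of free ($\ast$) coordinates queried along $\pi$. If $T$ is a single leaf then $f(T,t)=0$ for all $t\ge 1$, and $f(T,0)=1$ for every $T$. Otherwise let $x$ be the root variable of $T$, with subtrees $T_0$ and $T_1$. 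When $\rho_x=\ast$ the depth of $T|_\rho$ is $1+\max_i\operatorname{depth}(T_i|_\rho)$, and when $\rho_x$ is fixed to $b$ it equals $\operatorname{depth}(T_b|_\rho)$; using $\Pr[\rho_x=\ast]=p$, $\Pr[\rho_x=0]=\Pr[\rho_x=1]=(1-p)/2$, independence of the coordinates of $\rho$ (so conditioning on $\rho_x$ does not affect $\operatorname{depth}(T_i|_\rho)$, as $T_i$ does not re-query $x$), and a union bound on ``$\operatorname{depth}(T_0|_\rho)\ge t-1$ or $\operatorname{depth}(T_1|_\rho)\ge t-1$'', one obtains
\[
f(T,t)\;\le\; p\bigl(f(T_0,t-1)+f(T_1,t-1)\bigr)+\tfrac{1-p}{2}\bigl(f(T_0,t)+f(T_1,t)\bigr).
\]
Here the marginal of $\calR_p$ on the coordinates appearing in $T_i$ is again a $p$-random restriction, so the recursion stays within the same quantity.

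Next I would show by induction on $\ell=\operatorname{depth}(T)$ that $f(T,t)\le\binom{\ell}{t}(2p)^t$. The base cases are $t=0$ (both sides equal $1$) and $t>\ell$ (the left side is $0$, since $T|_\rho$ has depth at most $\ell$). For the inductive step, $T_0$ and $T_1$ have depth at most $\ell-1$, so substituting the inductive hypothesis into the displayed recursion and using $2p\cdot(2p)^{t-1}=(2p)^t$ together with $1-p\le 1$ gives
\[
f(T,t)\;\le\;\Bigl(\binom{\ell-1}{t-1}+\binom{\ell-1}{t}\Bigr)(2p)^t\;=\;\binom{\ell}{t}(2p)^t
\]
by Pascal's identity. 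Finally, the standard estimate $\binom{\ell}{t}\le(e\ell/t)^t$ yields $f(T,t)\le(e\ell/t)^t(2p)^t=(2ep\ell/t)^t$, which is exactly the claimed bound.

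The step I expect to require the most care is the recursive inequality itself: one must verify that the union bound over the two subtrees in the ``free root variable'' case does not compound into an exponential-in-depth blow-up. It does not, because each such branching is paid for by the extra factor $p$, after which the two binomial terms collapse via Pascal's rule — this is precisely what pins the constant in the exponent at $2ep\ell/t$ rather than something larger. A secondary point is the legitimacy of conditioning on $\rho_x$ and then invoking the inductive hypothesis on $T_0$ and $T_1$, which follows from the independence of distinct coordinates under $\calR_p$ and the read-once-along-paths convention for decision trees.
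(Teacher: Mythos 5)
Your proof is correct. The paper does not actually prove this lemma; it imports it verbatim as a black-box citation of Rossman's Lemma 20, so your write-up supplies the argument that the paper defers to the reference. The route you take is the standard recursive one for such syntactic decision-tree bounds: condition on the status of the root variable under $\rho$ to get
\[
f(T,t)\;\le\; p\bigl(f(T_0,t-1)+f(T_1,t-1)\bigr)+\tfrac{1-p}{2}\bigl(f(T_0,t)+f(T_1,t)\bigr),
\]
prove $f(T,t)\le\binom{\ell}{t}(2p)^t$ by induction on depth via Pascal's identity, and finish with $\binom{\ell}{t}\le(e\ell/t)^t$. You correctly flag the two points that actually need care: the factor of $p$ in the free-root branch exactly offsets the doubling introduced by the union bound over the two subtrees (so the recursion does not blow up in depth), and the product structure of $\calR_p$ together with the read-once-along-paths convention justifies conditioning on $\rho_x$ and then invoking the hypothesis on each subtree with the marginal of $\calR_p$. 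One small bookkeeping remark worth making explicit: state the inductive hypothesis as holding for every tree of depth at most $m$ (with $\binom{m}{t}$ in the bound), so that substituting $m=\ell-1$ is legitimate for both $T_0$ and $T_1$ even when their depths differ; you rely on this implicitly through monotonicity of $\binom{m}{t}$ in $m$, which is fine, but saying so removes any ambiguity.
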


We also need the multi-switching lemma for $\GC^0$.

 \begin{lemma}[{\cite[Theorem 4.8, Lemma 4.9]{kumar2023tight}}]\label{thm:gc0-multiswitch}
 Let $f\in  \CKT(k;d; s_1,\dots, s_d)\circ \DT(w)$, then
\[\Pr_{\rho\sim \calR_p}[f|_\rho\notin \DT(t-1)\circ \CKT(k;d-1; s_2,\dots, s_d)\circ \DT(r)]\le 4(64(2^ks_1)^{1/r}pw)^t.\]
\end{lemma}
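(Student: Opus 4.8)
The plan is to bootstrap a single‑gate switching lemma for $\g(k)$ gates into a multi‑switching statement, using the ``common decision tree'' machinery of H\aa stad \cite{haastad2014correlation} and Rossman \cite{ros17entropyswitch}, while carefully tracking the extra $2^k$ factor that a $\g(k)$ gate contributes relative to an ordinary $\AND$/$\OR$ gate. Concretely, the argument has two stages: (i) a switching lemma for one bottom gate applied to depth‑$w$ decision trees, and (ii) an amplification that handles all $s_1$ bottom gates simultaneously via a shared decision tree.

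First I would establish (or recall from \cite{kumar2023tight}) the single‑gate ingredient: if $G$ is one $\g(k)$ gate (or an $\AND$/$\OR$) whose $n_G$ inputs are each computed by a depth‑$w$ decision tree, then over $\rho\sim\calR_p$ the canonical decision tree of $G|_\rho$ has depth exceeding $r$ with probability at most $(C\,2^{k/r}\,p\,w)^r$ for a universal constant $C$. The intuition is a two‑regime analysis. If $\rho$ leaves many live inputs feeding $G$, then with overwhelming probability the surviving input lands outside the radius‑$k$ Hamming ball, so $G|_\rho$ is forced to the constant $c$; if $\rho$ leaves few live inputs, then $G|_\rho$ trivially admits a shallow decision tree. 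Quantitatively, the depth‑$w$ decision‑tree inputs contribute the usual $(O(pw))^r$ factor, while each of the $r$ queries on a ``bad branch'' carries an extra $2^{k/r}$ that pays for the arbitrary behaviour of the gate inside the radius‑$k$ ball.

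Next I would run the standard multi‑switching argument on the $s_1$ bottom‑layer gates $G_1,\dots,G_{s_1}$ of $f$ at once. If the conclusion fails, then by definition there is no depth‑$(t-1)$ decision tree $T$ such that along every root‑to‑leaf path of $T$ each $G_j$ collapses to a depth‑$r$ decision tree; a greedy procedure then extracts a witness by repeatedly choosing the first index $j$ whose canonical decision tree (under the variables queried so far) still has depth $>r$, querying its first $r$ variables, recording the index $j$ together with the answers to those $r$ queries, and iterating until at least $t$ variables have been queried, i.e.\ for at least $t/r$ blocks. Encoding $\rho$ together with this witness and applying the Razborov‑style counting of \cite{haastad2014correlation,ros17entropyswitch} gives a bound of the form $\sum_{\ell\ge t}(\text{per‑block index choices})^{\ell/r}\cdot(\text{single‑gate failure cost})^{\ell}$: per block, $s_1$ choices of which gate is ``hard'' and an overhead $2^k$ from the single‑gate lemma, and per $r$‑query block the classical cost $(O(pw))^r$; summing the geometric tail and absorbing constants yields exactly $4\big(64\,(2^k s_1)^{1/r}\,pw\big)^t$. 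Finally, conditioning on any leaf of the common tree $T$ (which has depth $\le t-1$) replaces every $G_j$ by a depth‑$r$ decision tree, so layers $2,\dots,d$ of $f$ now sit atop depth‑$r$ decision trees --- precisely an element of $\DT(t-1)\circ\CKT(k;d-1;s_2,\dots,s_d)\circ\DT(r)$.

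The hard part will be the single‑gate lemma with the correct exponent: the overhead must appear as $2^{k/r}$ inside the base (so that raising to the $r$‑th power gives a clean $2^k$ per block), not as an additive $2^k$ term or as $2^{k/r'}$ with $r'<r$, because it is exactly this exponent that keeps the final bound $(2^k s_1)^{1/r}$ strong enough for the downstream applications such as \cref{thm:watts-main}. Getting the two‑regime case analysis to mesh with the block structure of the multi‑switching witness --- in particular, ensuring that the ``few live inputs'' case does not inflate the per‑block cost and that the encoding of a long branch is still reversible in the presence of the arbitrary ball behaviour --- is the delicate point; the remaining bookkeeping is already carried out for $\AND$/$\OR$ gates in \cite{haastad2014correlation,ros17entropyswitch} and for a single $\g(k)$ gate in \cite{kumar2023tight}.
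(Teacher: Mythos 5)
Your proposal takes a genuinely different (and far longer) route than the paper. You propose to re-derive a multi-switching lemma for $\g(k)\circ\DT(w)$ from scratch: first a single-gate switching lemma with a two-regime case analysis on the number of live inputs, then a witness-encoding amplification over all $s_1$ bottom gates in the style of H\aa stad and Rossman. In contrast, the paper's proof is a two-line black-box reduction to Kumar's results. Kumar's Theorem~4.8 \emph{already is} a multi-switching lemma, but it is stated for formulas of the form $\g(k)\circ\AND_w$ (bounded-fan-in $\AND$s at the bottom, not decision trees). The single missing ingredient is Kumar's Lemma~4.9, which shows that any $\g(k)\circ\DT(w)$ is \emph{exactly equivalent} to some $\g(k)\circ\AND_w$. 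With that equivalence in hand, the $s_1$ bottom-layer $\G(k)\circ\DT(w)$ subcircuits of $f$ can be rewritten as $s_1$ $\G(k)\circ\AND_w$ subcircuits and Theorem~4.8 applies directly with no new encoding argument.

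Two concrete issues with the plan as written. First, you appear to under-credit Kumar: you treat \cite{kumar2023tight} as having proved only the \emph{single}-gate $\g(k)$ switching lemma (``and for a single $\g(k)$ gate in \cite{kumar2023tight}''), so you believe the amplification to $s_1$ gates still needs to be carried out; in fact Kumar already proved the multi-gate version, and the only gap is the mismatch between $\DT(w)$ and $\AND_w$ at the bottom. Second, your ``two-regime'' intuition for the single-gate lemma is not literally correct as stated: having many live inputs under $\rho$ does not by itself push the input outside the radius-$k$ Hamming ball --- most live variables could still be set to $0$ on a given branch --- so the argument needs the standard conditioning-on-a-canonical-decision-tree-path framework rather than a direct concentration claim about Hamming weight. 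Neither issue is fatal (a correct re-derivation would presumably fall out of Kumar's proof of Theorem~4.8 with $\DT(w)$ instead of $\AND_w$ at the bottom, or of the Watts et al. argument with the extra $2^k$ tracked), but you would be re-proving, with some risk, a statement that follows immediately from the $\DT\to\AND$ conversion plus the already-proved multi-switching lemma.
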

\begin{proof}
This follows immediately from \cite[Theorem 4.8, Lemma 4.9]{kumar2023tight}. 
We include the details for completeness.
The bottom two layers of $f$ are $s_1$ elements of $\G(k) \circ \DT(w)$, i.e., $\g(k)$ gates whose inputs are labeled by depth-$w$ decision trees.   
\cite[Lemma 4.9]{kumar2023tight} shows that $\g(k) \circ \DT(w)$ is equivalent to $\g(k) \circ \AND_w$, i.e., a depth-2 circuit whose bottom layer has fan-in-$w$ $\AND$ gates that feed into a $\g(k)$ gate one the top layer. Hence, the $s_1$ $\G(k)\circ \DT(w)$ substructures in $f$ can be viewed as $s_1$ $\G(k)\circ \AND_w$ subcircuits. 
To complete the proof, apply \cite[Theorem 4.8]{kumar2023tight} to these $s_1$ subcircuits. 
\end{proof}

We can now show that under random restriction elements of 
$\DT(t-1)\circ \CKT(k;d; s_1,\dots, s_d)\circ \DT(w)$ simplify to elements of 
$\DT(t-1)\circ \CKT(k;d-1; s_2,\dots, s_d)\circ \DT(r)$ with high probability.

\begin{lemma}[Generalization of {\cite[Lemma 24]{ros17entropyswitch}}]
\label{thm:combinedmultiswitch}
 Let $f\in \DT(t-1)\circ \CKT(k;d; s_1,\dots, s_d)\circ \DT(w)$, then
\[\Pr_{\rho\sim \calR_p}[f|_\rho\notin \DT(t-1)\circ \CKT(k;d-1; s_2,\dots, s_d)\circ \DT(r)]\le 5(64(2^ks_1)^{1/r}pw)^t. \]
\end{lemma}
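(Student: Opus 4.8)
The plan is to reduce Lemma~\ref{thm:combinedmultiswitch} to the single-level multi-switching lemma for $\GC^0$ (\cref{thm:gc0-multiswitch}) together with the decision-tree-depth-reduction bound (\cref{thm:syntactic}), following the template of Rossman's proof of \cite[Lemma~24]{ros17entropyswitch}. Write $f$ as a depth-$(t-1)$ decision tree $T$ whose leaves are labeled by functions $g_{\ell}\in\CKT(k;d;s_1,\dots,s_d)\circ\DT(w)$, indexed by leaves $\ell$ of $T$. A random restriction $\rho\sim\calR_p$ acts on $f$ in two stages: it simplifies the outer tree $T$, and simultaneously it acts on each of the (at most $2^{t-1}$) inner circuits $g_\ell$. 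The strategy is to bound the probability of two bad events and union-bound: (i) the event that the outer decision tree $T|_\rho$ fails to have depth $\le t-1$ (this is exactly the regime where \cref{thm:syntactic} applies, with $\ell \gets t-1$, giving a bound of the form $(2ep(t-1)/t)^t$, which is at most $(64 pw)^t$ and hence absorbed into the claimed bound); and (ii) for some leaf $\ell$, the inner circuit fails to satisfy $g_\ell|_\rho \in \DT(t-1)\circ\CKT(k;d-1;s_2,\dots,s_d)\circ\DT(r)$.

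For event (ii), I would invoke \cref{thm:gc0-multiswitch} on each inner $g_\ell \in \CKT(k;d;s_1,\dots,s_d)\circ\DT(w)$, which gives
\[
\Pr_{\rho\sim\calR_p}\bigl[g_\ell|_\rho\notin \DT(t-1)\circ\CKT(k;d-1;s_2,\dots,s_d)\circ\DT(r)\bigr]\le 4\bigl(64(2^ks_1)^{1/r}pw\bigr)^t.
\]
Crucially, \cref{thm:gc0-multiswitch} is stated for a \emph{single} tuple of outputs and a \emph{common} failure target $\DT(t-1)\circ(\dots)$; the point of the multi-switching formulation (as opposed to naively union-bounding over the $2^{t-1}$ leaves and over output bits separately) is that the bound it delivers already accounts for the common decision tree of depth $t-1$ serving all output coordinates at once, so the $4(64(2^ks_1)^{1/r}pw)^t$ bound is uniform over leaves and absorbs the $2^{t-1}$-fold union via the $t$th power in the exponent. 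Concretely, union-bounding over the $\le 2^{t-1}$ leaves multiplies the per-leaf failure probability by $2^{t-1}$, which is dominated by the $64^t$ already present, yielding a total contribution of at most $4\cdot 2^{t-1}(64(2^ks_1)^{1/r}pw)^t \le 4(128(2^ks_1)^{1/r}pw)^t$; being slightly more careful with constants (or observing $2^{t-1}\cdot 4 \le 8^t$ for $t\ge 1$ only when combined appropriately) one folds this into the stated $5(\cdots)^t$. Adding the event-(i) contribution $(64pw)^t \le (64(2^ks_1)^{1/r}pw)^t$ gives the total bound $5(64(2^ks_1)^{1/r}pw)^t$, completing the proof. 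The arithmetic on constants should be double-checked so that $4+1=5$ really covers both the leaf-union blow-up and the outer-tree term; if it does not, one either tightens \cref{thm:gc0-multiswitch}'s invocation by noting the restriction $\rho$ is shared or simply records a slightly larger absolute constant, which does not affect any downstream application.

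One subtlety to handle carefully is that the outer tree and the inner circuits are being restricted by the \emph{same} $\rho$, so events (i) and (ii) are not independent; but since we only use a union bound, independence is irrelevant --- I just need $\Pr[\text{(i) or (ii)}]\le \Pr[\text{(i)}]+\sum_{\ell}\Pr[\text{(ii) for leaf }\ell]$, and on the complement of all these events the restricted function genuinely lies in $\DT(t-1)\circ\CKT(k;d-1;s_2,\dots,s_d)\circ\DT(r)$: the restricted outer tree of depth $\le t-1$ composed with restricted inner functions each of which has the form $\DT(t-1)\circ\CKT(k;d-1;s_2,\dots,s_d)\circ\DT(r)$ collapses (by concatenating the two decision-tree layers, increasing depth from $t-1$ to at most $2(t-1)$, then re-examining) --- actually here I should be careful: Rossman's argument keeps the outer depth at $t-1$ by absorbing the tree from \cref{thm:gc0-multiswitch} into the \emph{same} depth budget, which works because the queries made by $T|_\rho$ and by the inner canonical decision trees can be interleaved into one tree of depth $\le t-1$ using the standard ``canonical decision tree'' bookkeeping from \cite{ros17entropyswitch,haastad2014correlation}. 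I would cite that bookkeeping rather than re-derive it.

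The main obstacle I anticipate is precisely this last point --- making sure the composition of the outer depth-$(t-1)$ tree with the inner $\DT(t-1)\circ(\cdots)$ structures does not blow the outer depth past $t-1$, i.e., correctly importing Rossman's canonical-decision-tree argument to the $\GC^0(k)$ setting. Since \cref{thm:gc0-multiswitch} is already the $\GC^0(k)$ analog of H\aa stad/Rossman's single-level multi-switching lemma and is used as a black box, and since the $\g(k)$ gates only affect what happens \emph{inside} a single level (not the tree-surgery at the top), I expect this import to go through with only notational changes; the rest is routine constant-chasing. I would present the proof as: (1) set up $f = T\circ(g_\ell)_\ell$; (2) apply \cref{thm:syntactic} to $T$; (3) apply \cref{thm:gc0-multiswitch} to each $g_\ell$; (4) union bound, citing the canonical-tree composition to keep the outer depth at $t-1$; (5) collect constants into the factor $5$.
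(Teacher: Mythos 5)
Your setup (outer tree $T$ with leaves labeled by circuits $C_\ell$, two bad events, union bound over the $\le 2^{t-1}$ leaves absorbed as $2^t \cdot 4(64\cdots)^t \le 4(128\cdots)^t$) matches the paper's proof, but there is a genuine gap at exactly the point you flag as the ``main obstacle,'' and your proposed fix is not the right one. If you let the restricted outer tree keep depth up to $t-1$ and ask each inner circuit only to land in $\DT(t-1)\circ\CKT(k;d-1;s_2,\dots,s_d)\circ\DT(r)$, the straightforward composition has outer decision-tree depth up to $2(t-1)$, and there is no ``canonical decision tree bookkeeping'' that interleaves two genuinely distinct layers of queries back down to depth $t-1$; that device is for building the tree inside a single application of the switching lemma, not for collapsing a composition of two trees. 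Relatedly, your event (i) as stated --- ``$T|_\rho$ fails to have depth $\le t-1$'' --- has probability $0$, since $T$ already has depth $t-1$ and restriction only shrinks it; this is a symptom of the missing idea.

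The missing idea is to \emph{split the depth budget}: define $\calE_1$ as ``$T|_\rho$ has depth $\le \lfloor t/2\rfloor - 1$'' and $\calE_2$ as ``every $C_\ell|_\rho \in \DT(\lceil t/2\rceil - 1)\circ\CKT(k;d-1;s_2,\dots,s_d)\circ\DT(r)$.'' Then plain concatenation of the two tree layers gives depth at most $(\lfloor t/2\rfloor - 1)+(\lceil t/2\rceil - 1) = t-2 \le t-1$, so $\calE_1\wedge\calE_2$ really does put $f|_\rho$ in the target class with no interleaving needed. \Cref{thm:syntactic} is then invoked with target depth $\lceil t/2\rceil$ (giving a nontrivial bound of roughly $(4ep)^{t/2}$, not your vacuous $(2ep(t-1)/t)^t$), and \cref{thm:gc0-multiswitch} is invoked per leaf with the correspondingly reduced tree-depth target. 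With that change the rest of your argument --- union bound over leaves, collecting constants into the leading factor --- goes through as in the paper.
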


\begin{proof}
Say $f$ is computed by a depth-$(t-1)$ decision tree $T$, where each leaf $\ell$ is labeled by a circuit $C_\ell\in \CKT(k;d; s_1, s_2,\dots, s_d)\circ \DT(w)$. 
Let $\calE_1$ be the event $T|_{\rho}$ has depth $\le \lfloor t/2\rfloor - 1$, and let $\calE_2$ be the event $C_\ell|_\rho\in \DT(\lceil t/2 \rceil - 1)\circ \CKT(k;d-1;s_2,\dots, s_d)\circ \DT(r)$ for all leaves $\ell$ of $T$. Note that \[\calE_1\wedge \calE_2 \implies f|_\rho\in \DT(t-1)\circ \CKT(k;d-1;s_2,\dots, s_d)\circ \DT(r).\] 
By \Cref{thm:syntactic}, we know \[\Pr_{\rho\sim \calR_p}[\neg\calE_1] \le (2ep(t-1)/\lceil t/2 \rceil)^{\lceil t/2 \rceil} \le (4ep)^{t/2}.\]
By \Cref{thm:gc0-multiswitch} and a union bound, we have
\begin{align*}
        \Pr_{\rho\sim \calR_p}[\neg\calE_2] & \le \sum_{\text{leaves }\ell} \Pr[C_\ell|_\rho\notin \DT(\lceil t/2\rceil - 1)\circ \CKT(k;d-1;s_2,
        \dots, s_d)\circ \DT(r)] \\ & \le \sum_\ell 4(64(2^ks_1)^{1/r}pw)^t  \\ 
        & \le 2^t\cdot 4(64(2^ks_1)^{1/r}pw)^t \\
        & = 4(128(2^ks_1)^{1/r}pw)^t.
\end{align*}
Therefore, we can finally bound \begin{align*}
        \Pr_\rho[f|_\rho\notin \DT(t-1)\circ \CKT(k;d-1;s_2,\dots, s_d)\circ \DT(r)] & \le \Pr_\rho[\neg\calE_1] + \Pr_\rho[\neg \calE_2] \\ &  \le (4ep)^{t/2} + 4(128(2^ks_1)^{1/r}pw)^t \\ &  \le 5(128(2^ks_1)^{1/r}pw)^t.\qedhere
    \end{align*}
\end{proof}

\cref{thm:combinedmultiswitch} shows a depth reduction by $1$ under random restriction. At a high level, our argument will repeat this process $d$ times to simplify the depth of the circuit to $1$ with high probability. When the depth has simplified to $1$, we will need the following form of the multi-switching lemma for $\GC^0$ to complete our argument.

\begin{theorem}[{\cite[Theorem 4.8, Lemma 4.9]{kumar2023tight}} restated]
\label{thm:basecase}
    Let $f\in \CKT(k;1; m)\circ \DT(w)$. Then 
    \[\Pr_{\rho\sim \calR_p}[f|_\rho\notin \DT(t-1)\circ\DT(r-1)^m]\le  4(64(2^km)^{1/r}pw)^t.\]
\end{theorem}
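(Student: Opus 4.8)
The plan is to obtain this directly from the two ingredients of Kumar's switching apparatus \cite{kumar2023tight} that already underlie \cref{thm:gc0-multiswitch}, specialized to the base case $d=1$. First I would unpack the hypothesis: an element $f \in \CKT(k;1;m) \circ \DT(w)$ is an $m$-tuple of functions whose $i$-th coordinate is a single $\g(k)$ gate all of whose input wires are labeled by depth-$w$ decision trees; that is, each coordinate lies in $\g(k) \circ \DT(w)$.

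Second, I would invoke \cite[Lemma 4.9]{kumar2023tight} — the same step used in the proof of \cref{thm:gc0-multiswitch} — to replace each coordinate function $\g(k) \circ \DT(w)$ by an \emph{equivalent} circuit of the form $\g(k) \circ \AND_{w}$, i.e.\ a $\g(k)$ gate fed by a bottom layer of $\AND$ gates of fan-in at most $w$. Since this rewriting preserves the computed function exactly, it leaves the distribution of $f|_\rho$ unchanged. After this step $f$ is, as a function, an $m$-tuple of depth-$2$ circuits, each a $\g(k)$ of fan-in-$\le w$ $\AND$s.

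Third, I would apply the multi-switching lemma of Kumar \cite[Theorem 4.8]{kumar2023tight} to this family of $m$ subcircuits. In its multi-output form, that theorem says that with probability at least $1 - 4(64(2^k m)^{1/r} p w)^t$ over $\rho \sim \calR_p$ there is a common partial decision tree of depth at most $t-1$ such that, on each of its leaves, every one of the $m$ restricted coordinate functions is computed by a decision tree of depth at most $r-1$ — which is precisely the assertion that $f|_\rho \in \DT(t-1) \circ \DT(r-1)^m$. Passing to the complement yields the claimed bound. (Alternatively one could run the single-output statement \cref{thm:gc0-multiswitch} with $d=1$, in which $\CKT(k;0;\cdot)$ is trivial, together with a union bound and a common-decision-tree argument across the $m$ outputs; but the multi-output form of Theorem 4.8 delivers the conclusion in one shot.)

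There is no real obstacle here — this is a restatement of Kumar's lemmas — so the only thing that needs care is bookkeeping: checking that the fan-in parameter $w$ produced by the $\g(k)\circ\DT(w)\to\g(k)\circ\AND_w$ conversion is exactly the ``decision-tree depth'' parameter consumed by Theorem 4.8; that the number of bottom gates fed to the multi-switching lemma is $m$, so the $2^k s_1$ appearing in \cref{thm:gc0-multiswitch} specializes to $2^k m$ here; and that the output arity $m$ and the leaf-depth index $r-1$ match the $\DT(t-1)\circ\DT(r-1)^m$ notation fixed at the start of this subsection. With those conventions lined up, the statement is immediate.
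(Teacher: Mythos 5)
Your proposal follows exactly the paper's own route: the paper proves this by first converting each $\g(k)\circ\DT(w)$ coordinate to $\g(k)\circ\AND_w$ via \cite[Lemma 4.9]{kumar2023tight}, then applying the multi-switching lemma \cite[Theorem 4.8]{kumar2023tight} to the $m$ resulting depth-$2$ subcircuits, which is precisely your steps two and three. The bookkeeping observations (that $s_1$ specializes to $m$, that $w$ is the relevant fan-in parameter) match the paper's treatment as well.
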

\begin{proof}
Like \cref{thm:gc0-multiswitch}, this follows immediately from \cite[Theorem 4.8, Lemma 4.9]{kumar2023tight}. 
\end{proof}

We are now ready to prove our multi-output multi-switching lemma for $\GC^0(k)$, the main theorem of this subsection. 

\begin{theorem}[Multi-Output Multi-Switching Lemma for $\GC$]
\label{thm:multioutputlemma}
   Let $f\in\CKT(k;d;s_1,\dots, s_{d-1}, m)$ with $n$ inputs and $m$ outputs. Let $s = s_1+\dots+s_{d-1} + m$. Let $p = p_1\cdot p_2\cdots p_d$ and $w \coloneqq \lceil \log s\rceil + 1$. 
   Then \[\Pr_{\rho\sim \calR_p} [f|_\rho\notin \DT(2t-2)\circ \DT(r-1)^m]\le 5(128\cdot 2^{k/w}p_1)^t + \sum_{i=2}^{d-1} 5(128\cdot 2^{k/w}p_iw)^t + 4(128(2^km)^{1/r}pw)^t.\] 
\end{theorem}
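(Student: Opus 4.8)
The plan is to write the $p$-random restriction as a composition of $d$ independent restrictions and peel off one circuit layer at each stage, using the single-layer $\GC^0$ switching lemmas already available. Since $p = p_1p_2\cdots p_d$, the distribution of $f|_\rho$ for $\rho\sim\calR_p$ coincides with that of $f$ restricted successively by $\sigma_1\sim\calR_{p_1}$, then $\sigma_2\sim\calR_{p_2}$, and so on through $\sigma_d\sim\calR_{p_d}$, all independent. Stage $1$ peels the bottom circuit layer via \cref{thm:gc0-multiswitch}, stages $2,\dots,d-1$ peel the intermediate layers via \cref{thm:combinedmultiswitch}, and stage $d$ collapses the remaining depth-$1$ circuit via an inline ``combined'' version of \cref{thm:basecase}. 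Throughout I set the intermediate decision-tree width to $r := w = \lceil\log s\rceil+1$, which is chosen precisely so that $s_i^{1/w} < 2$ for every layer size $s_i\le s$.

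For stage $1$, view $f$ as an element of $\CKT(k;d;s_1,\dots,s_{d-1},m)\circ\DT(1)$ (each input wire is a trivial depth-$1$ decision tree). Applying \cref{thm:gc0-multiswitch} with restriction $\sigma_1$, incoming width $1$, and outgoing width $w$, the probability that $f|_{\sigma_1}\notin\DT(t-1)\circ\CKT(k;d-1;s_2,\dots,m)\circ\DT(w)$ is at most $4\big(64\,(2^k s_1)^{1/w}p_1\big)^t$, which by $s_1^{1/w}\le s^{1/w}<2$ is at most $5(128\cdot 2^{k/w}p_1)^t$ — the first term. For stage $j$ with $2\le j\le d-1$, the current object lies in $\DT(t-1)\circ\CKT(k;d-j+1;s_j,\dots,m)\circ\DT(w)$, and \cref{thm:combinedmultiswitch} applied with $\sigma_j$ and incoming/outgoing width $w$ bounds the probability it fails to land in $\DT(t-1)\circ\CKT(k;d-j;s_{j+1},\dots,m)\circ\DT(w)$ by $5\big(64\,(2^k s_j)^{1/w}p_j w\big)^t\le 5(128\cdot 2^{k/w}p_j w)^t$, matching the $i=j$ summand of the middle term. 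After stage $d-1$ the object lies in $\DT(t-1)\circ\CKT(k;1;m)\circ\DT(w)$.

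For stage $d$ I need a base case with an outer decision-tree prefix, which I argue inline by imitating the proof of \cref{thm:combinedmultiswitch}. Write the current object as a depth-$(t-1)$ tree $T$ whose at most $2^{t-1}$ leaves are labeled by circuits $C_\ell\in\CKT(k;1;m)\circ\DT(w)$. Restriction never increases decision-tree depth, so $T|_{\sigma_d}$ still has depth $\le t-1$ with certainty — this is exactly why the final outer depth is allowed to be $2t-2$ rather than being squeezed back to $t-1$. By \cref{thm:basecase}, each $C_\ell|_{\sigma_d}$ fails to lie in $\DT(t-1)\circ\DT(r-1)^m$ with probability at most $4\big(64\,(2^km)^{1/r}p_d w\big)^t$, so a union bound over the $\le 2^{t-1}$ leaves bounds this failure by $4\big(128\,(2^km)^{1/r}p_d w\big)^t\le 4\big(128\,(2^km)^{1/r}p w\big)^t$ (using $p_d\le p$) — the third term; when none of these fail, $T|_{\sigma_d}$ together with the restricted leaf circuits flattens into an element of $\DT(t-1)\circ\DT(t-1)\circ\DT(r-1)^m=\DT(2t-2)\circ\DT(r-1)^m$.

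Finally, each of the three single-stage bounds holds uniformly over every circuit in its input class, so conditioning on $\sigma_1,\dots,\sigma_{j-1}$ to fix the stage-$j$ input and union-bounding over $j=1,\dots,d$ gives $\Pr_{\rho\sim\calR_p}[f|_\rho\notin\DT(2t-2)\circ\DT(r-1)^m]$ at most the sum of the three contributions, which is the stated bound. I expect the only genuinely delicate points to be (i) formulating the combined base case correctly — it is not among the quoted lemmas and must be obtained by the same split-the-tree idea as \cref{thm:combinedmultiswitch}, except that the top tree is \emph{not} re-shrunk — and (ii) the constant bookkeeping, where one must check that the stage-$j$ switching lemma sees bottom-layer size exactly $s_j$ and that the substitutions $s_j^{1/w}<2$ and $p_j\le p$ turn the quoted constants into the target ones; the restriction-composition and the union bound over stages are routine.
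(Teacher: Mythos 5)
Your proof follows the same blueprint as the paper's: factor $\calR_p$ as $\calR_{p_1}\circ\cdots\circ\calR_{p_d}$, peel one circuit layer per stage via \cref{thm:combinedmultiswitch} (or \cref{thm:gc0-multiswitch} for the base of the chain), handle the last stage by splitting off the depth-$(t-1)$ tree prefix, applying \cref{thm:basecase} to each of the at most $2^{t-1}$ leaf circuits, and union bounding, then finally sum the failure probabilities of all $d$ stages. The paper defines events $\calE_1,\dots,\calE_d$ and bounds $\Pr[\neg\calE_i\mid\calE_1,\dots,\calE_{i-1}]$; your ``condition on $\sigma_1,\dots,\sigma_{j-1}$'' is the same thing. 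Your observation that the outer tree of the stage-$(d-1)$ result need not be re-shrunk before stage $d$ (giving final outer depth $2t-2$) is exactly what the paper's inline stage-$d$ argument does, and your choice of intermediate width $w$ so that $s_i^{1/w}<2$ matches as well. So the approach is the same and the proof is essentially right.

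One concrete slip: you write ``using $p_d\le p$'' to pass from $4(128(2^km)^{1/r}p_dw)^t$ to $4(128(2^km)^{1/r}pw)^t$. That inequality is backwards: since $p=p_1\cdots p_d$ with each $p_i\le 1$, we have $p\le p_d$, so $p_d^t\ge p^t$ and the stated replacement would make the bound \emph{smaller}, which is not justified. What your stage-$d$ argument actually proves is the third term with $p_d$, i.e.\ $4(128(2^km)^{1/r}p_dw)^t$, and that is almost certainly what the theorem's third term is meant to read (the first two terms carry $p_1$ and $p_i$ respectively, and nothing in the paper's own stage-$d$ argument produces the full product $p$). So your derivation is fine up to and including the $p_d$ bound; just delete the false substitution and either state the result with $p_d$ or note explicitly that you are only proving the (correct) $p_d$ version, which is what the downstream application (\cref{cor:muli-output-multi-switching}) actually instantiates with $p_d=1/O(m^{1/q}w)$.
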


\begin{proof}
    Let $s_d\coloneqq m$.  Notice we can factor $\rho\sim \calR_p$ as $\rho_1\circ\dots\circ \rho_d$, where each $\rho_i\sim \calR_{p_i}$. Now for each $i\in [d-1]$, define the event 
    \[\calE_i \iff f|_{\rho_1\cdot\dots\cdot \rho_i}\in \DT(t-1)\circ\CKT (d-i; s_{i+1},\dots, s_d)\circ \DT(w), \] and define \[\calE_d \iff f|_{\rho_1\circ\dots\circ \rho_d} \in \DT(2t-2)\circ \DT(r-1)^m.\] Notice that \[\bigwedge_{i=1}^d \calE_i \implies \calE_d\iff  f|_\rho\in \DT(2t-2)\circ\DT(q-1)^m.\]

    We will bound the complement of this event. Notice that since \[f\in \CKT(k;d; s_1,\dots, s_d)\subset \DT(t-1)\circ \CKT(k;d; s_1,\dots, s_d)\circ \DT(1), \]
     we have by \Cref{thm:combinedmultiswitch} that \[\Pr_\rho[\neg \calE_1]\le 5(64(2^ks_1)^{1/w}p_1)^t \le 5(128\cdot 2^{k/w}p_1)^t .\] For $i = 2,\dots, d-1$, \Cref{thm:combinedmultiswitch} gives us \[\Pr[\neg\calE_i|\calE_1,\dots, \calE_{i-1}]\le 5(64(2^ks_i)^{1/w}p_iw)^t\le 5(128\cdot 2^{k/w}p_iw)^t.\]

     We now bound $\Pr[\neg\calE_d|\calE_1,\dots, \calE_{d-1}]$. Let $g\coloneqq f|_{\rho_1\circ\dots\circ \rho_{d-1}}$. Conditioning on $\calE_1,\dots,\calE_{d-1}$, we have \[g\in \DT(t-1)\circ \CKT(k;1;m)\circ \DT(w).\] For each leaf $\ell$ of the partial decision tree of depth $t-1$ for $g$,
     define $g_\ell$ to be $g$ restricted by the root-to-leaf path in the tree to $\ell$. It follows that each $g_\ell$, by definition, is $\CKT(k;1;m)\circ \DT(w)$. Consequently, by \Cref{thm:basecase}, 
     we have for each $\ell$, \[\Pr[g_\ell|_{\rho_d}\notin \DT(t-1)\circ \DT(r-1)^m]\le 4(64(2^km)^{1/r}pw)^t.\] 
     As there are $2^{t-1}$ leaves, by a union bound it follows that the probability \emph{some} $g_\ell$ doesn't simplify is at most \[4(128(2^km)^{1/r}pw)^t.\] 
     In the complementary event, we have \[g|_{\rho_d} = f|_{\rho_1\circ\dots\circ \rho_d}\in \DT(t-1)\circ \DT(t-1)\circ \DT(q-1)^m = \DT(2t-2)\circ \DT(q-1)^m, \] so event $\calE_d$ holds. We now finally bound \begin{align*}
         \Pr_{\rho\sim \calR_p} [f|_\rho\notin \DT(2t-2)\circ \DT(r-1)^m]  & = \Pr[\neg\calE_1, \dots, \neg\calE_d] \\ &  = \sum_{i=1}^d \Pr[\neg\calE_i|\calE_1,\dots, \calE_{i-1}] \\ & \le 5(128\cdot 2^{k/w}p_1)^t + \sum_{i=2}^{d-1} 5(128\cdot 2^{k/w}p_iw)^t + 4(128(2^km)^{1/r}pw)^t.
     \end{align*}
\end{proof}

\begin{corollary}\label{cor:muli-output-multi-switching}
    Let $f:\bits^n\to\bits^m$ be computable by a $\GC^0(k)$ circuit of size $s$, depth $d$, and $k = O(\log s)$. Let $p = \frac{1}{m^{1/q}O(\log s)^{d-1}}$. Then, for all $t \in \N$, 
    \[\Pr_{\rho\sim \calR_p}[f|_\rho\notin \DT(2t-2)\circ \DT(q-1)^m] \le 2^{-t}.
    \]
\end{corollary}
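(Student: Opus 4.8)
The plan is to instantiate the parameters of \cref{thm:multioutputlemma} with a suitable choice of the $p_i$ and then bound each of the three error terms by $\frac13 2^{-t}$.

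First I would set $p_i$ for $i\in[d]$ so that the product $p=p_1\cdots p_d$ equals the claimed value $\frac{1}{m^{1/q}O(\log s)^{d-1}}$. A natural split is to take $p_1 = \frac{c_1}{2^{k/w}}$ (recall $w=\lceil\log s\rceil+1$, so since $k=O(\log s)$ the quantity $2^{k/w}$ is a constant, making $p_1$ a constant fraction that we can choose to satisfy $128\cdot 2^{k/w}p_1 \le \frac12$), then take $p_i = \frac{c}{2^{k/w}w}$ for each $2\le i\le d-1$ with $c$ a small constant so that $128\cdot 2^{k/w}p_i w \le \frac12$, and finally let $p_d$ absorb whatever is left so that $p_d w \cdot (2^k m)^{1/r}\,\cdot 128 \le \frac12$; here we would also set $r := q$. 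Multiplying these out, $p = p_1p_2\cdots p_d$ is $\Theta\big(\frac{1}{2^{k/w}\,(2^{k/w}w)^{d-2}\, \cdot (\text{poly}(w) m^{1/q})}\big)$, and since $2^{k/w}=\Theta(1)$ this simplifies to $\frac{1}{m^{1/q}\,O(\log s)^{d-1}}$, matching the statement (with $q$ playing the role of $r$ in the theorem).

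With these choices each of the three summands in \cref{thm:multioutputlemma} is at most $(1/2)^t$ times a constant number of terms: the first term is $5(128\cdot 2^{k/w}p_1)^t \le 5\cdot 2^{-t}$-scale, the middle sum has $d-2$ terms each $\le 5\cdot 2^{-t}$-scale, and the last term is $\le 4\cdot 2^{-t}$-scale, for a total bound of the form $C_d 2^{-t}$ with $C_d$ depending only on $d$. To get the clean $2^{-t}$ bound claimed, one can either absorb $C_d$ into the constants hidden in the $O(\log s)^{d-1}$ factor defining $p$ (shrinking $p$ by a constant depending on $d$ shrinks each base below $\frac{1}{2C_d^{1/t}}$ uniformly, but more simply one shrinks the bases to $\le \tfrac14$ say), or observe that replacing $2^{-t}$-scale bases by $(1/4)^t$-scale bases and summing $\le d+1$ such terms is still $\le 2^{-t}$ for the relevant range; since $d$ is a fixed constant this is harmless. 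I would phrase it as: choose the constants hidden in $p$ small enough (depending on $d$) that $128\cdot 2^{k/w}p_1 \le 1/4$, $128\cdot 2^{k/w}p_iw\le 1/4$, and $128(2^km)^{1/q}pw\le 1/4$; then the total is at most $(2d+5)4^{-t} \le 2^{-t}$.

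The one point requiring a little care — and the main (minor) obstacle — is verifying that $2^{k/w}=\Theta(1)$ under the hypothesis $k=O(\log s)$: since $w = \lceil\log s\rceil+1 \ge \log s$, we have $k/w = O(1)$, hence $2^{k/w}$ is bounded by a constant, which is exactly what lets $p_1$ (and the factors $2^{k/w}$ appearing in the middle terms) be absorbed into the constant prefactors without affecting the stated form of $p$. The rest is bookkeeping: matching $r$ with $q$, checking that the product of the $p_i$'s telescopes to the advertised expression, and noting that the $\DT(2t-2)$ and $\DT(r-1)^m = \DT(q-1)^m$ in the conclusion of \cref{thm:multioutputlemma} are literally the objects in the corollary's conclusion.
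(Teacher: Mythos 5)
Your proof takes essentially the same route as the paper's (very terse) argument: observe $2^{k/w}=O(1)$ under $k=O(\log s)$, instantiate \cref{thm:multioutputlemma} with $p_1=\Omega(1)$, $p_2=\dots=p_{d-1}=\Omega(1/w)$, $p_d=1/O(m^{1/q}w)$ and $r=q$, check the product is the advertised $p$, and bound the resulting sum. One small arithmetic slip at the end: bases $\le 1/4$ give a total of $(2d+5)4^{-t}$, which is \emph{not} $\le 2^{-t}$ for small $t$ (e.g.\ $t=1$); you instead need each base $\le 1/(2(2d+5))$, so that $(2d+5)(2(2d+5))^{-t}=(2d+5)^{1-t}2^{-t}\le 2^{-t}$ for all $t\ge 1$ --- but since you already said the hidden constants should depend on $d$, this is a one-line fix and the overall argument is sound.
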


\begin{proof}
    For $w\coloneqq \lceil \log s\rceil$, we have that $2^{k/w} = O(1)$. Using this and applying \Cref{thm:multioutputlemma} with $p_1 = \Omega(1)$, $p_2 = \dots = p_{d-1} = \Omega(1/w)$, and $p_d = 1/O(m^{1/q}w)$ yields the desired result.
\end{proof}

\subsubsection{\texorpdfstring{$\GC^0$}{GC0} Lower Bound}\label{subsec:gc0-lower-bound}

We can now use our multi-output multi-switching lemma to prove that $\php$ (\cref{def:php}) is hard $\GC^0$ circuits.

\begin{theorem}[$\php_{n,m}^r \notin \GC^0(k)$, Generalization of {\cite[Theorem 25]{watts2019exponential}}]\label{theorem:php-notin-gc0}
Let $r = n$ and $m \in [n,n^2]$. Any $\GC^0_d(k)$ circuit $F:\bitz^{nr}\to\bitz^{mr}$ with size $s \leq \exp\left((nr)^{\frac{1}{2d}}\right)$ and $k = O((nr)^{\frac{1}{2d}})$ solves $\php_{n,m}^r$ with probability at most $\exp\left(-n^2/(m^{1+o(1)}O(\log s)^{2(d-1)}\right)$.
\end{theorem}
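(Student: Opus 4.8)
The plan is to carry over the random-restriction argument of Watts, Kothari, Schaeffer, and Tal \cite{watts2019exponential}, feeding our multi-output multi-switching lemma (\cref{cor:muli-output-multi-switching}) into it in place of the $\AC^0$ one. First I would fix the hard distribution $\calD$ on $\bitz^{nr}$: independently in each of the $r$ blocks, draw the $n$ input bits uniformly subject to the $\php$ promise that the block has even Hamming weight. The key move is to realize a sample $x\sim\calD$ in two stages --- sample a random restriction $\rho\sim\calR_p$ on all $nr$ coordinates, where $p$ is the restriction parameter that \cref{cor:muli-output-multi-switching} provides for $F$ viewed as a map with $mr$ outputs, and then complete the surviving coordinates uniformly at random subject to the per-block even-parity promise. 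Reconciling $\calR_p$ with the parity promise is routine bookkeeping (e.g.\ reserve one coordinate per block that $\rho$ never stars, to act as a parity bit, exactly as in \cite{watts2019exponential}). It then suffices to show that $\Pr_{x\sim\calD}[F(x)\text{ solves }\php_{n,m}^r]$ is exponentially small.

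The next step is to collapse $F$. By \cref{cor:muli-output-multi-switching}, for every $t$, with probability at least $1-2^{-t}$ over $\rho$ the restricted circuit $F|_\rho$ lies in $\DT(2t-2)\circ\DT(q-1)^{mr}$: a single common decision tree of depth at most $2t-2$, each leaf of which labels every output coordinate by a decision tree of depth at most $q-1$. A Chernoff bound, union-bounded over the $r$ blocks, shows that except with probability $r\cdot e^{-\Omega(pn)}$ every block retains $\Theta(pn)$ live coordinates. Call $\rho$ \emph{good} when both events occur; I would choose $t$ (and hence $p$ and $q$) so that $2^{-t}$ and the Chernoff term are both dominated by the target failure probability.

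Then I would argue that a good restriction destroys $\php$ on many blocks. Conditioning on a good $\rho$ and on the branch of the common depth-$\le(2t-2)$ tree that the input follows (which pins down at most $2t-2$ further live coordinates, spread across the blocks), in any block $i$ that still has enough live coordinates the block output $y_i$ is coordinate-wise a depth-$\le(q-1)$ decision tree on block $i$'s live coordinates, so its parity $\bigoplus_j y_{i,j}$ reads at most $J:=m(q-1)$ of them. But the promise forces $\bigoplus_j y_{i,j}=\tfrac12|x_i|\bmod 2$, which on block $i$'s live coordinates is --- up to a shift by the already-fixed bits --- the symmetric predicate $\mathbf 1[\text{(live weight)}\equiv c_i\bmod 4]$. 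No function of only $J$ of the live coordinates agrees with this predicate on more than a $\tfrac12+o(1)$ fraction of valid completions of block $i$, since after fixing those $J$ coordinates the residue modulo $4$ of the remaining live weight is nearly uniform over the two residues compatible with the even-parity promise, making the target bit essentially an unbiased coin independent of the circuit. Hence $\Pr[\text{block }i\text{ correct}\mid\rho,\text{path}]\le\tfrac12+o(1)$. Because distinct blocks' live coordinates are completed independently, these per-block bounds multiply, and the number of blocks for which the argument applies --- those retaining more live coordinates than the output parity can read --- is what produces the exponent $n^2/(m^{1+o(1)}O(\log s)^{2(d-1)})$ once the value of $p$ from \cref{cor:muli-output-multi-switching} (which scales like $1/((mr)^{1/q}O(\log s)^{d-1})$) is substituted.

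The hard part will be the parameter juggling in the last step: one must pick the decision-tree depth $q$ and the switching parameter $t$ so that simultaneously the live coordinates surviving in a block typically outnumber the $\approx m(q-1)$ that the output parity can inspect, the switching error $2^{-t}$ and the Chernoff deviation stay below the target, and the count of broken blocks matches the claimed exponent --- all while keeping the even-parity bookkeeping in $\calD$ honest; getting the mod-$4$ "unbiased coin" estimate quantitatively tight against a decision forest (rather than merely a junta) is the other place where care is needed. Everything on the quantum side (that $\QNC^0$ solves \tdhlf, hence $\php$ via \cref{lem:php-reduction-2dhlf}) is inherited verbatim from \cite{bravyi2018quantum,watts2019exponential}, and the $\mathsf{/rpoly}$ strengthening used downstream follows immediately by hardwiring the best advice string, turning a $\GC^0(k)\mathsf{/rpoly}$ solver into a plain $\GC^0(k)$ solver of the same success probability.
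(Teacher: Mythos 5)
Your proposal is correct and takes essentially the same route as the paper: the paper's proof simply observes that the only $\AC^0$-specific ingredient in \cite[Theorem~25]{watts2019exponential} is the multi-output multi-switching lemma, plugs in \cref{cor:muli-output-multi-switching} in its place, and invokes the rest of the Watts et al.\ argument as a black box. You reconstruct more of that black-box argument (two-stage sampling of the promise distribution, per-block decision-forest vs.\ mod-$4$ estimate, multiplying over blocks), and correctly flag the spots where care is needed, but the structure and the key lemma are the same.
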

\begin{proof}
Set $q = \sqrt{\log(mr)}$, $p = 1/(O(\log s)^{d-1} (mr)^{1/q})$, and $t = pnr/8$. Let $\rho$ be a $p$-random restriction. 
The only fact about $\AC^0$ used in the proof of \cite[Theorem 25]{watts2019exponential} is that a function $F$ computable by a size-$s$ $\AC^0$ circuit simplifies to an element of $\DT(2t) \circ \DT(q)^m$ under $\rho$ with probability at least $1 - \exp(-\Omega(pnr))$.
By \cref{cor:muli-output-multi-switching}, this holds for functions computable by size-$s$ $\GC^0(k)$ circuits with $k = O(\log s)$. Hence, the rest of the argument in \cite[Theorem 25]{watts2019exponential} goes through.
\end{proof}

With that, the main result follows. 

\begin{proof}[Proof of \cref{thm:watts-main}]
The result follows from combining \cref{lem:php-reduction-2dhlf} and \cref{theorem:php-notin-gc0}. 
\end{proof}

\subsubsection{\texorpdfstring{Separation Between Noisy $\QNC^0$ and $\GC^0$}{Separation Between Noisy QNC0 and GC0}}\label{subsec:noisy-qnc0}

Our separation between $\GC^0$ and $\QNC^0$ holds even when the $\QNC^0$ circuits are subjected to noise. 
The noise model considered is the \emph{local stochastic quantum noise model} \cite{fawzi2020constant,bravyi2020quantum} (see also \cite[Section 2.2]{grier2021interactive}). As in prior works, the noise rate is assumed to be below some constant threshold. Here and throughout, ``noisy $\QNC^0$'' refers to $\QNC^0$ subjected to local stochastic quantum noise with a certain constant noise rate.

Bravyi, Gosset, K\"onig, and Temamichel \cite{bravyi2020quantum} show that for any relation problem solvable by $\QNC^0$, one can construct a ``noisy version'' of that relation problem that is solvable by noisy $\QNC^0$ (\cite[Definition 15, Theorem 17]{bravyi2020quantum}, \cite[Definition 14, Theorem 15]{grier2021interactive}. 
Additionally, \cite[Lemma 16]{grier2021interactive} implies that any classical circuit solving the noisy version of the relation problem can solve the original relation problem with the overhead of first running a quasipolynomial-size $\AC^0$ circuit.

We can apply this framework to separate $\GC^0(k)$ and noisy $\QNC^0$. 

\begin{theorem}[Generalization of {\cite[Proposition 18, Theorem 19]{grier2021interactive}}]\label{thm:noisy-qnc0-gc0}
There is a search problem that is solvable by noisy $\QNC^0$ with probability $1 - \exp(-\Omega(\polylog(n)))$, but any size-$s$ depth-$d$ $\GC^0(k)$ circuit with $k = O(\log s)$ cannot solve the search problem with probability exceeding 
\[
\exp\left( \frac{-n^{1/2 - o(1)}}{O\left(\log( s + \exp(\polylog(n))) \right)^{2d+O(1)}} \right).
\]
\end{theorem}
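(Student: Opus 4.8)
The plan is to apply the noise-robustness framework of Bravyi, Gosset, K\"onig, and Temamichel \cite{bravyi2020quantum}, in the form refined by Grier, Ju, and Schaeffer \cite{grier2021interactive}, to the search problem underlying \cref{thm:watts-main}. Concretely, I would take $P \coloneqq \php_{n_0,n_0}^{n_0}$ (\cref{def:php}), which has input length $N = n_0^2$, reduces to \tdhlf\ by \cref{lem:php-reduction-2dhlf}, and is therefore solvable by $\QNC^0$. Invoking \cite[Definition 15, Theorem 17]{bravyi2020quantum} (equivalently \cite[Definition 14, Theorem 15]{grier2021interactive}) produces a ``noisy version'' $\widetilde{P}$ of $P$, on input length $n = N \cdot \polylog(N)$, together with a noisy $\QNC^0$ circuit that solves $\widetilde P$ with probability $1 - \exp(-\Omega(\polylog(n)))$; this establishes the quantum upper bound in the statement and needs nothing beyond quoting the cited theorems.

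For the classical lower bound I would argue by reduction. Suppose a depth-$d$, size-$s$ $\GC^0(k)$ circuit $C$ with $k = O(\log s)$ solves $\widetilde{P}$ with probability $\delta$. By \cite[Lemma 16]{grier2021interactive}, there is a $\quasipoly(n)$-size constant-depth $\AC^0$ circuit $A$ such that the composition $C \circ A$ solves the noiseless problem $P$ with probability at least $\delta$, up to an additive term that is negligible here. Since $\AC^0 \subseteq \GC^0(k)$ for every $k$, the circuit $C \circ A$ is itself a $\GC^0(k)$ circuit on $N$ inputs, of depth $d + O(1)$ and size $s + \quasipoly(n) = s + \exp(\polylog(n))$, and $k = O(\log s)$ still satisfies the hypotheses of \cref{theorem:php-notin-gc0} in the size regime where the bound is non-trivial. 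Applying \cref{theorem:php-notin-gc0} to $C \circ A$ with $r = m = n_0$ (so $N = n_0^2$) gives
\begin{align*}
\delta &\le \exp\!\left(\frac{-n_0^2}{n_0^{1+o(1)} \cdot O\!\left(\log(s + \exp(\polylog(n)))\right)^{2(d+O(1)-1)}}\right) \\
&= \exp\!\left(\frac{-n^{1/2-o(1)}}{O\!\left(\log(s + \exp(\polylog(n)))\right)^{2d+O(1)}}\right),
\end{align*}
using $n_0 = N^{1/2} = n^{1/2-o(1)}$ and absorbing the polylogarithmic input blow-up into the $o(1)$; this is exactly the claimed bound.

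The only genuine content is parameter bookkeeping, and the step most prone to error is tracking how the overheads of the noise gadget propagate through the bound of \cref{theorem:php-notin-gc0}: one must check that the fault-tolerance construction of \cite{bravyi2020quantum,grier2021interactive} blows up the input length only polylogarithmically (so that $N = n^{1-o(1)}$ and the additive size overhead is $\exp(\polylog(n))$ rather than something larger), that the reduction circuit $A$ adds only $O(1)$ depth (yielding the ``$+O(1)$'' in the exponent $2d+O(1)$), and that noisy $\QNC^0$ retains success probability $1 - \exp(-\Omega(\polylog(n)))$ under constant-rate local stochastic noise. I expect the main (mild) obstacle to be confirming that \cite[Lemma 16]{grier2021interactive} reduces the noisy task to the noiseless one using only an $\AC^0$ preprocessing step on the classical solver, since this is precisely what keeps the reduced circuit inside $\GC^0(k)$; once that holds, the robustness of \cref{theorem:php-notin-gc0} to a constant depth increase and an additive $\exp(\polylog(n))$ size increase is immediate, because $s$ enters that bound only through $\log s$.
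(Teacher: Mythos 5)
Your proposal is correct and follows essentially the same route as the paper: the quantum upper bound is quoted from the Bravyi–Gosset–K\"onig–Temamichel/Grier–Ju–Schaeffer noise framework, the classical solver for the noisy problem is converted to a solver for the noiseless problem via \cite[Lemma 16]{grier2021interactive} at the cost of $\AC^0$ preprocessing (which stays inside $\GC^0(k)$), and the final bound comes from \cref{theorem:php-notin-gc0}. The only cosmetic difference is that you apply the noise gadget to $\php$ directly while the paper applies it to \tdhlf\ and then uses the $\php\to\tdhlf$ reduction of \cite{watts2019exponential}; the parameter bookkeeping is the same either way.
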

\begin{proof}
Let the noisy \tdhlf\ be the relation problem obtained from applying \cite[Definition 14]{grier2021interactive} to the \tdhlf\ (\cref{def:tdhlf}). 
The quantum upper bound is precisely 
\cite[Proposition 18]{grier2021interactive}. 

Towards the classical lower bound, assume there exists a size-$s$, depth-$d$ $\GC^0(k)$ circuit with $k = O(\log s)$ that solves noisy \tdhlf\ with probability at most $\eps$.
Then, by \cite[Lemma 16]{grier2021interactive}, there exists a size-$(s + \exp(\polylog(n)))$, depth-$(d+O(1))$ $\GC^0(k)$ circuit with $k= O(\log(s + \exp(\polylog(n)))$ that solves \tdhlf\ with probability at most $\eps$.
But, by \cref{theorem:php-notin-gc0} and \cite[Theorem 26, Corollary 30]{watts2019exponential}, any $\GC^0(k)$ circuit of size $s$, depth $d$, and $k = O(\log s)$ for \tdhlf\  succeeds with probability at most 
\[
\exp\left(\frac{-n^{1/2-o(1)}}{O(\log s)^{2d}}\right).
\]
Therefore, we can conclude that 
\[
\eps \leq \exp\left(\frac{-n^{1/2-o(1)}}{O\left(\log\left(s + \exp(\polylog(n))\right) 
\right)^{2d+O(1)}}\right).\qedhere
\]
\end{proof}

\subsection{Separation Between \texorpdfstring{$\QNC^0\mathsf{/qpoly}$}{QNC0/qpoly} and \texorpdfstring{$\GC^0(k)[2]$}{GC0(k)[2]}}\label{subsec:gc2}
We exhibit a relation problem that can be solved with high probability by a $\QNC^0\mathsf{/qpoly}$ circuit but is average-case hard for $\GC^0(k)[2]\mathsf{/rpoly}$. 
Recall that $\QNC^0\mathsf{/qpoly}$ is the class of $\QNC^0$ circuits with quantum advice, i.e., polynomial-size, constant-depth quantum circuits with bounded fan-in gates that can start with any quantum state as long as it is independent of the input.  

Our argument follows the same structure as Watts et al.\ \cite{watts2019exponential}. However, we obtain an \emph{exponential separation} between $\GC^0(k)[2]\mathsf{/rpoly}$ and $\QNC^0\mathsf{/qpoly}$. Previously, the best separation was between $\QNC^0\mathsf{/qpoly}$ and \emph{polynomial-size} $\AC^0[2]$ circuits. 

\begin{theorem}[Generalization of {\cite[Theorem 6]{watts2019exponential}}]\label{thm:gc2}
There is a search problem that is solvable by $\QNC^0\mathsf{/qpoly}$ with probability $1-o(1)$, but any $\GC^0(k)[2]\mathsf{/rpoly}$ circuit of depth $d$ and size at most $\exp\left(O\left(n^{1/2.01d} \right) \right)$ with $k=O(n^{1/2d})$ cannot solve the search problem with probability exceeding $n^{-\Omega(1)}$.
\end{theorem}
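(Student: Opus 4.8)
The plan is to follow the proof of \cite[Theorem 6]{watts2019exponential}, replacing their $\AC^0[2]$ hardness ingredient with our average-case hardness for $\GC^0(k)[2]\mathsf{/rpoly}$ (namely \cref{thm:mod-p-correlation-bounds} with $p=2$). The search problem is the (generalized) Parity Bending problem of Watts, Kothari, Schaeffer, and Tal, also studied by Grilo, Kashefi, Markham, and Oliveira \cite{grilo2024power}; its $\QNC^0\mathsf{/qpoly}$ algorithm is entirely quantum --- it uses a fixed graph/cluster state as advice and produces the required correlated output bits by single-round local measurements --- so the upper bound ``solvable by $\QNC^0\mathsf{/qpoly}$ with probability $1-o(1)$'' is inherited verbatim from \cite{watts2019exponential} and needs no change. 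All the work is on the classical side.

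For the classical lower bound I would use the reduction underlying \cite[Theorem 6]{watts2019exponential}: any circuit solving the Parity Bending problem with probability exceeding $\delta$ on the prescribed hard input distribution can be transformed, by composing it with a projection on its inputs and a single $\Mod_2$ gate over (some of) its output wires, into a circuit that computes $\Mod_3$ --- a prime distinct from $2$; $\MAJ$ works equally well --- on $m = \Theta(n)$ bits with probability at least $\tfrac12 + \delta/\poly(n)$ over the uniform distribution. The point is that $\GC^0(k)[2]\mathsf{/rpoly}$ is closed under exactly these operations at negligible cost: input projections preserve depth, size, the parameter $k$, and the random advice, while prepending/appending constant-depth $\Mod_2$ (and $\AC^0$) gadgetry increases the depth by only $O(1)$, multiplies the size by $\poly(n)$, and never touches the $\g(k)$ layer, so $k$ is unchanged. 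Hence a depth-$d$, size-$s$, $k = O(n^{1/2d})$ $\GC^0(k)[2]\mathsf{/rpoly}$ solver of Parity Bending yields a $\GC^0(k)[2]\mathsf{/rpoly}$ circuit for $\Mod_3$ on $\Theta(n)$ bits with depth $d+O(1)$, size $\poly(n)\cdot s$, and the same $k$.

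I would then invoke \cref{thm:mod-p-correlation-bounds} with $p=2$: on the appropriate input distribution, no $\GC^0(k)[2]\mathsf{/rpoly}$ circuit of depth $O(d)$, size $\exp\!\left(O(n^{1/2.01d})\right)$, and $k = O(n^{1/2d})$ computes $\Mod_3$ (or $\MAJ$) with probability exceeding $\tfrac12 + n^{-\Omega(1)}$. Since $m = \Theta(n)$ and $s \le \exp\!\left(O(n^{1/2.01d})\right)$, the circuit produced by the reduction lies in this regime, so $\tfrac12 + \delta/\poly(n) \le \tfrac12 + m^{-\Omega(1)}$, giving $\delta \le n^{-\Omega(1)}$; comparing against the $1-o(1)$ quantum success probability yields the claimed exponential separation.

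The main obstacle is the parameter bookkeeping in the reduction, not the quantum side (which is black box). One must check that the WKST transformation can be run keeping the depth essentially fixed and keeping $k$ untouched, so that \cref{thm:mod-p-correlation-bounds} still applies at the stated size and depth; if the transformation genuinely increases the depth by more than can be absorbed, I would instead mimic the restriction-based argument used for multi-output $\GC^0$ circuits in \cref{theorem:php-notin-gc0} and \cref{thm:watts-main} --- restrict the solver and read off that a designated output coordinate must agree with $\Mod_3$ of the surviving input bits --- which is genuinely depth-preserving. A secondary point is that the $\mathsf{/rpoly}$ advice must commute with the reduction (it does, since the reduction never inspects the advice), so that the average-case bound for $\GC^0(k)[2]\mathsf{/rpoly}$, rather than merely $\GC^0(k)[2]$, is what transfers.
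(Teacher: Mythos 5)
Your high-level framing is right in several respects: the quantum upper bound is indeed inherited verbatim from Watts et al., the base hardness ingredient is indeed \cref{thm:mod-p-correlation-bounds}, and $\GC^0(k)[2]\mathsf{/rpoly}$ is indeed closed under the constant-depth $\AC^0/\Mod_2$ pre/post-processing at negligible cost (the paper uses exactly this in \cref{lemma:worst-to-average,lem:3-output-mod-3-lower-bound,cor:3-output-3-lb}). But the central step of your classical lower bound has a genuine gap. The search problem is the \emph{$r$-Parallel} Parity Bending Problem with $r=\Theta(\log n)$, whose success criterion is a \emph{threshold event}: get at least a $\tfrac{2}{3}+0.005$ fraction of the $r$ blocks right. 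Your claimed reduction --- ``a solver with success probability $\delta$ yields a $\Mod_3$ circuit with advantage $\delta/\poly(n)$'' --- is not valid: a single-block (projection) reduction only controls the \emph{marginal} per-block success probability, and a circuit can have no advantage on any individual block (marginal success exactly $\tfrac{2}{3}$, the trivial baseline for trit inputs) while still clearing the $\tfrac{2}{3}+0.005$ threshold with constant probability if its per-block errors are strongly \emph{correlated}. Bounding the probability of the threshold event therefore requires ruling out such correlations, which is the heart of both \cite[Theorem 6]{watts2019exponential} and the paper's proof and is absent from your proposal.

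Concretely, the paper proceeds via: (i) a worst-case-to-average-case self-reduction for the single-instance ``3 Output Mod 3'' problem (\cref{lemma:worst-to-average}), combined with a careful conversion to a $\Mod_3$ predictor (the map $E$ plus mixing with a constant output --- needed because the natural baseline for the Boolean function ``$\abs{x}\equiv 0 \pmod 3$'' is $\tfrac{2}{3}$, not the $\tfrac{1}{2}$ appearing in \cref{thm:mod-p-correlation-bounds}); (ii) \cref{cor:3-output-3-lb}, which bounds \emph{every} character $\chi_a$ of the joint distribution of per-block errors over $\F_3^r$ by embedding the $a$-weighted combination of blocks as a single larger instance; (iii) the XOR lemma for abelian groups (\cref{lem:xor-lemma-abelian}) to conclude the error distribution is $n^{-\Omega(1)}\sqrt{3^r}$-close to uniform on $\{0,1,2\}^r$; and (iv) a Chernoff bound giving $\exp(-\Omega(r))=n^{-\Omega(1)}$ for $r=\Theta(\log n)$, followed by the reduction from Parallel Parity Bending to Parallel 3 Output Mod 3. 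Your fallback (mimicking the restriction argument of \cref{theorem:php-notin-gc0}) does not repair this: it addresses depth bookkeeping, not inter-block correlation, and in any case random restrictions do not simplify the $\Mod_2$ gates present in $\GC^0(k)[2]$. To fix the proposal you need step (ii)--(iii): prove the per-block advantage bound not just for single blocks but for all nonzero $\F_3$-linear combinations of blocks, and then invoke the XOR lemma to get concentration.
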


The remainder of this subsection is devoted to proving \cref{thm:gc2}.
The quantum upper bound is given in \cite[Section 6.1, Section 6.3]{watts2019exponential}.
We will show an average-case lower bound for the following problem.

\begin{definition}[$r$-Parallel Parity Bending Problem {\cite[Problem 8]{watts2019exponential}}]\label{def:r-parallel-parity-bending}
  Given inputs $x_1, \dots, x_r$ with $x_i \in \{0,1,2\}^n$ for all $i \in [r]$, produce outputs $y_1,\dots,y_r \in \{0,1\}^n$ such that $y_i$ satisfies: 
  \begin{align*}
      &\abs{y_i} \equiv 0 \pmod{2} \quad\text{if}\quad \abs{x_i} \equiv 0 \pmod{3}\quad \text{or} \\
      &\abs{y_i} \equiv 1 \pmod{2} \quad\text{if}\quad \abs{x_i} \not\equiv 0 \pmod{3}.
  \end{align*}
  for at least a $\frac{2}{3} + 0.005$ fraction of the $i \in [k]$.
\end{definition}

Note that this problem takes input over $\{0,1,2\}$. Ultimately we are studying Boolean circuits, so, technically speaking, trits are encoded with two bits (e.g., $0 \mapsto 00$, $1\mapsto 01$, $2 \mapsto 10$). We use $\{0,1,2\}$ for notational convenience.

On the way to our lower bound, we first prove lower bounds for the following problem. 

\begin{definition}[3 Output Mod 3 {\cite[Problem 9]{watts2019exponential}}]\label{def:3-output-mod-3}
Given an input $x \in \{0, 1, 2\}^n$, output a trit $y \in \{0, 1, 2\}$ such that $y \equiv \abs{x} \pmod 3$. 
\end{definition}

To prove 3 Output Mod 3 is hard for $\GC^0(k)[2]$, we use the following worst-case to average-case reduction, given in \cite{watts2019exponential}.

\begin{lemma}\label{lemma:worst-to-average}
Suppose there is a $\GC^0(k)[2]\mathsf{/rpoly}$ circuit of size $s$ and depth $d$ that solves 3 Output Mod 3 (\cref{def:3-output-mod-3}) on a uniformly random input with probability $1/3 + \eps$ for some $\eps > 0$. 
Then there exists a $\GC^0(k)[2]\mathsf{/rpoly}$ circuit $C$ of depth $d + O(1)$ and size $s + O(n)$ such that for any $x \in \{0,1,2\}^n$,
\begin{align*}
    &\Pr[C(x) \equiv \abs{x} \pmod 3] = \frac{1}{3} + \eps, \text{ and} \\
    &\Pr[C(x) \equiv \abs{x} + 1 \pmod 3] = \Pr[C(x) \equiv \abs{x} + 2 \pmod 3] = \frac{1}{3} - \frac{\eps}{2}.
\end{align*}
\end{lemma}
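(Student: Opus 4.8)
The plan is to prove the lemma by a random self-reduction for counting modulo $3$, augmented by a symmetrization step that equidistributes the two wrong residues. Let $A$ be the hypothesized depth-$d$ size-$s$ $\GC^0(k)[2]\mathsf{/rpoly}$ circuit for 3 Output Mod 3. I will take the random advice of $C$ to consist of three independent pieces: a fresh sample of $A$'s advice; a uniformly random $z \in \{0,1,2\}^n$; and a uniformly random scalar $\sigma \in \{1,2\} \subseteq \Z_3^\times$. On input $x \in \{0,1,2\}^n$ (two bits per trit), $C$ first computes coordinatewise $x'_i \coloneqq \sigma x_i + z_i \pmod 3$, then runs $A$ on $x'$ to get $y' \in \{0,1,2\}$, then outputs $y \coloneqq \sigma^{-1}\bigl(y' - \abs{z}\bigr) \pmod 3$. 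Since each coordinate map $x_i \mapsto \sigma x_i + z_i \bmod 3$ is a function of the two bits of $x_i$ and the constant advice $(\sigma, z_i)$, it is an $O(1)$-size $\AC^0$ gadget, and there are $n$ of them; the output wrapper $(y',\sigma)\mapsto \sigma^{-1}(y'-\abs z)\bmod 3$ acts on $O(1)$ bits ($\abs z$ being an advice-determined constant). Hence $C$ has depth $d+O(1)$, size $s+O(n)$, and stays in $\GC^0(k)[2]\mathsf{/rpoly}$ since every added gate has constant fan-in.

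For correctness, fix $x$. Conditioned on any fixed value of $\sigma$, as $z$ ranges uniformly the string $x'$ is exactly uniform over $\{0,1,2\}^n$ and independent of $A$'s advice, so the residue-error $e_A \coloneqq y' - \abs{x'} \pmod 3$ of $A$ on $x'$ has the same law $D$ for both values of $\sigma$, and by hypothesis $\Pr_D[e_A = 0] = \tfrac{1}{3} + \eps$. Using $\abs{x'} \equiv \sigma\abs{x} + \abs{z} \pmod 3$, one computes
\[
y - \abs{x} \;\equiv\; \sigma^{-1}\bigl(y' - \abs{z}\bigr) - \abs{x} \;\equiv\; \sigma^{-1}\bigl(y' - \abs{x'}\bigr) \;\equiv\; \sigma^{-1} e_A \pmod 3 .
\]
Thus $C$'s error $e_C \coloneqq y - \abs x$ is distributed as $D$ when $\sigma = 1$ and as $-D$ (the image of $D$ under the residue swap $1 \leftrightarrow 2$, since $\sigma^{-1} = 2 \equiv -1 \pmod 3$) when $\sigma = 2$, so averaging over $\sigma$ the law of $e_C$ is $\tfrac{1}{2} D + \tfrac{1}{2}(-D)$, which is invariant under $e \mapsto -e$.

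Finally, $\Pr[C(x)\equiv\abs x \pmod 3] = \Pr[e_C = 0] = \Pr_D[e_A = 0] = \tfrac{1}{3}+\eps$ since $0$ is fixed by $e\mapsto -e$, and by the symmetry $\Pr[C(x)\equiv\abs x+1 \pmod 3] = \Pr[C(x)\equiv\abs x+2 \pmod 3] = \tfrac{1}{2}\bigl(1-(\tfrac{1}{3}+\eps)\bigr) = \tfrac{1}{3} - \tfrac{\eps}{2}$, which is the claim, and it holds for every $x$. The one genuinely load-bearing idea is the scalar $\sigma$: a plain self-reduction with only the additive shift $z$ already yields success probability $\tfrac{1}{3}+\eps$ but leaves the two error residues unbalanced, and randomizing over the multiplicative automorphism of $\Z_3$ is exactly what symmetrizes them. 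Beyond getting that step right, the argument — the arithmetic identities, the independence of $A$'s advice from $x'$, and the $\poly$-bit advice and circuit-size accounting — is routine, and I do not anticipate a serious technical obstacle.
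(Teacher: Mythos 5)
Your proof is correct and is essentially the same random self-reduction (additive shift $z$ to uniformize the input, multiplicative twist $\sigma\in\Z_3^\times$ to symmetrize the two wrong residues) that the paper invokes by deferring to Lemma~35 of Watts, Kothari, Schaeffer, and Tal. The one point worth making explicit is that $\abs{z}\bmod 3$ must be supplied as part of the (correlated) random advice rather than computed by the circuit, which you already do by treating it as an advice-determined constant.
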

\begin{proof}
    The proof is exactly the same as \cite[Lemma 35]{watts2019exponential}.
\end{proof}

We can now show that 3 Output Mod 3 is average-case hard for exponential-size $\GC^0(k)[2]$ circuits.

\begin{lemma}[Generalization of {\cite[Lemma 36]{watts2019exponential}}]\label{lem:3-output-mod-3-lower-bound}
Let $k = O(n^{1/2d})$.
Any $\GC^0(k)[2]\mathsf{/rpoly}$ circuit of depth $d$ and size $s \leq \exp\left(O\left(n^{1/2.01d} \right) \right)$ solves 3 Output Mod 3 (\cref{def:3-output-mod-3}) on the uniform distribution with probability at most $\frac{1}{3} + \frac{1}{n^{\Omega(1)}}$.
\end{lemma}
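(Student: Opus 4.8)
The idea is to reduce the average-case hardness of 3 Output Mod 3 to the average-case hardness of computing $\Mod_3$ as a Boolean function, for which we already have correlation bounds via the polynomial method (\cref{thm:mod-p-correlation-bounds} with $q = 2$, $r = 3$). Suppose for contradiction that some $\GC^0(k)[2]\mathsf{/rpoly}$ circuit $C$ of depth $d$ and size $s \le \exp(O(n^{1/2.01d}))$ solves 3 Output Mod 3 on the uniform distribution with probability $\tfrac13 + \eps$ where $\eps = n^{-o(1)}$ (i.e., larger than any $n^{-\Omega(1)}$). First I would apply \cref{lemma:worst-to-average} to obtain a depth-$(d+O(1))$, size-$(s+O(n))$ circuit $C'$ that, on \emph{every} input $x$, outputs $\abs{x} \bmod 3$ with probability exactly $\tfrac13 + \eps$ and each of the two wrong residues with probability $\tfrac13 - \tfrac\eps2$. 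This symmetrization is the key structural gain: it turns a uniform-input guarantee into a worst-case-input biased guarantee.

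The second step is to convert the trit-valued circuit $C'$ into a Boolean predicate that is correlated with $\Mod_3$. One natural route: use $C'$ to build a randomized test for ``$\abs{x} \equiv 0 \pmod 3$''. Define $B(x)$ to be $1$ iff $C'(x) = 0$. Then $\Pr[B(x) = \Mod_3(x)^c]$ is $\tfrac13+\eps$ when $\abs{x}\equiv 0$ (recalling $\Mod_3$ outputs $0$ iff $\abs{x}\equiv 0$) and, when $\abs{x}\not\equiv 0$, $B(x) = 0$ with probability $\tfrac23 + \tfrac\eps2$, so $B$ correctly reports ``not-zero-mod-3'' with probability $\tfrac23+\tfrac\eps2$. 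Averaging over the uniform distribution (where each residue class has density $\tfrac13 + O(2^{-n})$ — I'd be careful here since $n$ need not be divisible by $3$, but the imbalance is exponentially small and absorbed into the error), a single sample of $B$ agrees with $\Mod_3$ with probability $\tfrac13\cdot(\tfrac13+\eps) + \tfrac23\cdot(\tfrac23+\tfrac\eps2) = \tfrac59 + \tfrac23\eps$, which beats $\tfrac12$ by a constant margin plus $\Omega(\eps)$. Actually, even more simply: since this bias is already $\ge \tfrac12 + \Omega(1)$, we can then amplify: take $O(1)$ independent copies of $B$ (fresh $\mathsf{rpoly}$ advice, so the resulting circuit is still $\GC^0(k)[2]\mathsf{/rpoly}$ of the same depth up to $O(1)$ and size blown up by a constant), combine their votes with an $\AC^0$ approximate-majority gadget as in the proof of \cref{thm:mod-p-correlation-bounds}, and boost the agreement with $\Mod_3$ on the uniform distribution to $\tfrac12 + \tfrac1{n^{\Omega(1)}}$ — in fact to any constant, since the starting bias is constant. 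Alternatively, and more cleanly, I'd just observe that a constant bias above $\tfrac12$ already contradicts \cref{thm:mod-p-correlation-bounds}, which forbids any $\GC^0(k)[2]\mathsf{/rpoly}$ circuit of this size and depth from computing $\Mod_3$ with probability exceeding $\tfrac12 + n^{-\Omega(1)}$ on the hard input distribution it produces.

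The one subtlety to get right — and the step I expect to be the main obstacle — is matching the \emph{input distributions}: \cref{thm:mod-p-correlation-bounds} gives a hard distribution $\mathcal D$, not necessarily the uniform one, whereas \cref{lemma:worst-to-average} delivers a \emph{worst-case} per-input guarantee for $C'$. This is precisely why the symmetrization in \cref{lemma:worst-to-average} matters: because $C'$'s behavior is controlled pointwise, the Boolean predicate $B$ (and its amplified version $B'$) has agreement with $\Mod_3$ at least $\tfrac12 + \Omega(1)$ under \emph{every} input distribution, in particular under the hard distribution $\mathcal D$ of \cref{thm:mod-p-correlation-bounds}. Since $B'$ is a $\GC^0(k)[2]\mathsf{/rpoly}$ circuit of depth $d + O(1)$ and size at most $\exp(O(n^{1/2.01d}))$ (the approximate-majority amplification over $O(1)$ copies, exactly as in \cref{thm:mod-p-correlation-bounds}, increases depth by $O(1)$ and size negligibly), this contradicts \cref{thm:mod-p-correlation-bounds}. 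I would also double-check the arithmetic that $\tfrac13 + \eps$ with $\eps$ larger than every $n^{-\Omega(1)}$ really forces the final agreement to exceed $\tfrac12 + n^{-\Omega(1)}$ after amplification — it does, since the pre-amplification bias is a genuine constant ($\tfrac59$ versus $\tfrac12$) independent of $\eps$, so the contradiction is obtained even from $\eps \to 0$; the quantitative $n^{-\Omega(1)}$ in the statement is exactly the threshold below which the reduction no longer gives a contradiction, i.e. our bound is tight against what \cref{thm:mod-p-correlation-bounds} can prove. This yields the claimed $\tfrac13 + \tfrac1{n^{\Omega(1)}}$ bound.
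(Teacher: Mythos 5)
Your plan starts correctly — apply \cref{lemma:worst-to-average} to get $C'$ with a pointwise symmetric guarantee, then turn the trit output into a Boolean predicate correlated with $\Mod_3$ — and this is exactly what the paper does. But the Boolean predicate you construct is the wrong one, and the claim you rely on in the final step is false.

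Your $B(x) = \mathbbm{1}[C'(x) = 0]$ has the following per-input behavior: when $\abs{x}\equiv 0 \pmod 3$, $\Pr[B(x) = 1] = \tfrac13 + \eps$, so $B$ agrees with $\neg\Mod_3$ (which is $1$ there) with probability only $\tfrac13 + \eps$, strictly \emph{below} $\tfrac12$ for small $\eps$. When $\abs{x}\not\equiv 0$, the agreement is $\tfrac23 + \tfrac\eps2$. So the claim in your last paragraph — that $B$ ``has agreement with $\Mod_3$ at least $\tfrac12 + \Omega(1)$ under \emph{every} input distribution'' — is wrong: a distribution supported on $\{x : \abs{x}\equiv 0\}$ gives agreement $\tfrac13 + \eps$. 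Since \cref{thm:mod-p-correlation-bounds} only produces \emph{some} hard distribution (not necessarily the uniform one, and a priori possibly one concentrated on the bad residue class), your per-distribution guarantee fails exactly where it is needed, and there is no contradiction. The fallback arguments in your plan don't repair this. The uniform-average agreement of $\tfrac59 + \tfrac{2\eps}{3}$ is not a meaningful advantage: $\neg\Mod_3$ is a $\tfrac13/\tfrac23$-biased function, and the constant-$0$ circuit already achieves agreement $\approx\tfrac23 > \tfrac59$ over the uniform distribution, so a $\tfrac59$ agreement is certifying nothing about $\Mod_3$-hardness. Amplifying by majority-of-copies actively hurts: on the $\abs{x}\equiv 0$ inputs the per-copy agreement is below $\tfrac12$, so the majority converges to the wrong answer there, and the amplified circuit approaches the trivial constant.

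The missing idea is the mixing step. The paper defines $C''$ to output $0$ with probability $\tfrac14$ and $E(C'(x))$ otherwise (where $E(0)=0$ and $E(1)=E(2)=1$). This blend re-centers the per-input agreement: on $\abs{x}\equiv 0$, $\Pr[C''(x)=0] = \tfrac14 + \tfrac34(\tfrac13+\eps) = \tfrac12 + \tfrac{3\eps}{4}$, and on $\abs{x}\not\equiv 0$, $\Pr[C''(x)=1] = \tfrac34\big((\tfrac13+\eps)+(\tfrac13-\tfrac\eps2)\big) = \tfrac12 + \tfrac{3\eps}{8}$. Thus $C''$ agrees with $\Mod_3$ with probability at least $\tfrac12 + \tfrac{3\eps}{8}$ on \emph{every} input, hence under every distribution, and in particular under the hard distribution of \cref{thm:mod-p-correlation-bounds}. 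That forces $\eps \le n^{-\Omega(1)}$. Without this re-centering, the reduction does not hand \cref{thm:mod-p-correlation-bounds} a predicate it can refute.
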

\begin{proof}
Let $C$ be the $\GC^0(k)[2]\mathsf{/rpoly}$ circuit that solves 3 Output Mod 3 on the uniform distribution with probability $\frac{1}{3} + \eps$. 
\cref{lemma:worst-to-average} implies that there is a circuit $C'$ that succeeds with probability $\frac{1}{3} + \eps$ and outputs each wrong answer with probability $\frac{1}{3} - \frac{\eps}{2}$.

Let $E : \{0,1,2\} \to \{0,1\}$ be the circuit that maps $0$ to $0$ and everything else to $1$. 
Define $C''$ to be the circuit that, given input $x$, outputs $0$ with probability $\frac{1}{4}$, and outputs $E(C''(x))$ otherwise. 
Observe that, if $\abs{x} \equiv 0 \pmod 3$, then $C''$ correctly outputs $0$ with probability $\frac{1}{4} + \frac{3}{4}(\frac{1}{3} + \eps) = \frac{1}{2} +\frac{3\eps}{4}$. Similarly, if $\abs{x} \not\equiv 0 \pmod 3$, then $C''$ correctly outputs $1$ with probability $\frac{3}{4}(\frac{1}{3} + \eps + \frac{1}{3} - \frac{\eps}{2}) = \frac{1}{2} + \frac{3\eps}{8}$. Hence $C''$ computes $\Mod_3$ with probability $\frac{1}{2} + \frac{3\eps}{8}$, so \cref{thm:mod-p-correlation-bounds} implies that $\eps \in \frac{1}{n^{\Omega(1)}}$.
\end{proof}

The average-case lower bound in \cref{lem:3-output-mod-3-lower-bound} implies the following corollary. 

\begin{corollary}[Generalization of {\cite[Corollary 37]{watts2019exponential}}]\label{cor:3-output-3-lb}
Let $k = O(n^{1/2d})$.
Let $C$ be a $\GC^0(k)[2]/\mathsf{rpoly}$ circuit of depth $d$ and size $s \leq \exp\left(O\left(n^{1/2.01d} \right) \right)$ outputting a trit.
Then, for all $i \in \{0, 1, 2\}$,
\[
\frac{1}{3} - \frac{1}{n^{\Omega(1)}} \leq \Pr_{x \in \{0,1,2\}^n}\left[C(x) - \abs{x} \equiv i \pmod 3 \right] \leq \frac{1}{3} + \frac{1}{n^{\Omega(1)}}.
\]
\end{corollary}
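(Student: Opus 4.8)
The plan is to derive both inequalities from the average-case hardness of \emph{3 Output Mod 3} established in \cref{lem:3-output-mod-3-lower-bound}. Fix a $\GC^0(k)[2]\mathsf{/rpoly}$ circuit $C$ of depth $d$ and size $s \le \exp(O(n^{1/2.01d}))$ outputting a trit, and set $p_i \coloneqq \Pr_{x \in \{0,1,2\}^n}[C(x) - \abs{x} \equiv i \pmod 3]$ for $i \in \{0,1,2\}$, where the probability is over the uniform $x$ together with the random advice. The starting observation is that $p_0$ is exactly the probability that $C$ solves \emph{3 Output Mod 3} on the uniform distribution, so \cref{lem:3-output-mod-3-lower-bound} immediately gives $p_0 \le \tfrac13 + n^{-\Omega(1)}$.

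To get the upper bound for $i = 1, 2$, I would compose $C$ with the bijection $y \mapsto y + j \pmod 3$ for $j \in \{1,2\}$, feeding $C$'s output trit into it. Since trits are encoded by two bits, this map is a fixed permutation of the three codewords and is computed by a constant-size, constant-depth $\AC^0$ gadget stacked on top of $C$; the resulting circuit $C_j$ is again a $\GC^0(k)[2]\mathsf{/rpoly}$ circuit, now of depth $d + O(1)$ and size $s + O(1)$, and it satisfies $\Pr_x[C_j(x) \equiv \abs{x} \pmod 3] = p_{(-j)\bmod 3}$. Applying \cref{lem:3-output-mod-3-lower-bound} to $C_1$ and $C_2$ then yields $p_1, p_2 \le \tfrac13 + n^{-\Omega(1)}$, which is the claimed upper bound for every $i$. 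The bookkeeping for the $O(1)$ additive blow-up in depth and size is identical to what is already used inside the proof of that lemma, where the auxiliary circuits $C'$ and $C''$ likewise add $O(1)$ depth, so the parameters stay within the covered regime.

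For the matching lower bound, note that for each fixed $x$ and fixed advice string the value $C(x) - \abs{x} \pmod 3$ lies in exactly one of the three residue classes, so $p_0 + p_1 + p_2 = 1$. Combining this identity with the three upper bounds gives $p_i = 1 - \sum_{j \ne i} p_j \ge 1 - 2\bigl(\tfrac13 + n^{-\Omega(1)}\bigr) = \tfrac13 - n^{-\Omega(1)}$ for each $i$, which completes the proof. There is no genuine obstacle: the only point that needs any care is checking that passing from $C$ to $C_j$ keeps the depth, size, and $k$ within the regime of \cref{lem:3-output-mod-3-lower-bound}, and everything else reduces to the one-line counting identity $\sum_i p_i = 1$.
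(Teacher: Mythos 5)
Your proposal is correct and matches the paper's own argument: both derive the lower bounds from the upper bounds via the identity $\sum_{i} p_i = 1$, both apply \cref{lem:3-output-mod-3-lower-bound} directly for $i = 0$, and both handle $i \in \{1,2\}$ by composing the circuit with a fixed trit-shift (an $O(1)$-overhead gadget) and invoking the same lemma. The paper phrases the $i \ne 0$ case as a proof by contradiction while you phrase it constructively, but the underlying reduction is the same.
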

\begin{proof}
Because 
\[
\sum_{i \in \{0,1,2\}} \Pr_{x \in \{0,1,2\}^n}\left[C(x) - \abs{x} \equiv i \pmod 3 \right] =1,
\]
it suffices to prove 
\[
\Pr_{x \in \{0,1,2\}^n}\left[C(x) - \abs{x} \equiv i \pmod 3 \right] \leq \frac{1}{3} + \frac{1}{n^{\Omega(1)}}
\]
for each $i \in \{0,1,2\}$.
For $i = 0$, the desired bound is exactly shown in \cref{lem:3-output-mod-3-lower-bound}. 
For $i \in \{ 1, 2\}$, observe that if there is a $\GC^0(k)[2]\mathsf{/rpoly}$ circuit $D$ of depth $d$ and size at most $\exp\left(O\left(n^{1/2.01d} \right) \right)$ for which 
\[
\Pr[D(x) - \abs{x} \equiv i \pmod 3] \geq \frac{1}{3} + \frac{1}{n^{o(1)}}, 
\]
then one could construct a circuit $D'$ for which
\[
\Pr[D'(x)  \equiv \abs{x} \pmod 3] \geq \frac{1}{3} + \frac{1}{n^{o(1)}}
\]
by subtracting by the trit $i$ at the end of the circuit. Subtracting by a fixed trit only adds a constant overhead to the size and depth of the circuit, so such a $D'$ contradicts \cref{lem:3-output-mod-3-lower-bound}. 
\end{proof}

We note that \cite[Corollary 37]{watts2019exponential} is only stated for polynomial-size $\AC^0[2]\mathsf{/rpoly}$ circuits. However, we observe the statement also holds for exponential-size circuits, as demonstrated in \cref{cor:3-output-3-lb}. 
This allows us to obtain exponentially stronger lower bounds than the ones obtained in \cite{watts2019exponential}. 

Now we study the difficulty of solving $r$ instances of the 3 Output Mod 3 Problem. 

\begin{definition}[$r$-Parallel 3 Output Mod 3]\label{def:paralel-3-output-mod}
  Given inputs $x_1,\dots,x_r \in \{0,1,2\}^{n}$, output a vector $\Vec{y}\in \{0,1,2\}^r$ such that 
  \[
  y_i \equiv \abs{x_i} \pmod 3
  \]
  for at least a $\frac{1}{3} + 0.01$ fraction of the $i \in [k]$.
\end{definition}

To prove lower bounds for this problem, we use the XOR lemma for finite abelian groups.

\begin{lemma}[{\cite[Lemma 4.2]{rao2007exposition}}, XOR lemma for finite abelian groups]\label{lem:xor-lemma-abelian}
   Let $\calD$ be a distribution over a finite abelian group $G$ such that $\abs{\E[\psi(X)]} \leq \eps$  for every non-trivial character $\psi$. Then $\calD$ is $\eps \sqrt{\abs{G}}$-close (in total variation distance) to the uniform distribution over $G$.  
\end{lemma}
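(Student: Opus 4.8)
The plan is to run the standard Fourier-analytic argument over the abelian group $G$, following Rao's exposition~\cite{rao2007exposition}. Write $\widehat G$ for the character group of $G$ and recall the basic facts: $|\widehat G| = |G|$; $\widehat G$ contains the trivial character $\psi_0 \equiv 1$; each character takes values in the unit circle; and $\sum_{g \in G}\psi(g)$ equals $|G|$ if $\psi = \psi_0$ and $0$ otherwise. For $f : G \to \C$ define the Fourier coefficients $\widehat f(\psi) \coloneqq \sum_{g\in G} f(g)\psi(g)$, so that Plancherel's identity reads $\sum_{g\in G}\abs{f(g)}^2 = \tfrac1{|G|}\sum_{\psi\in\widehat G}\abs{\widehat f(\psi)}^2$ and, in particular, $\widehat\calD(\psi) = \E_{X\sim\calD}[\psi(X)]$.

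First I would apply this to the signed measure $f \coloneqq \calD - U$, where $U$ denotes the uniform distribution on $G$. Since $\calD$ and $U$ are both probability distributions, $\widehat f(\psi_0) = 1 - 1 = 0$, while for every non-trivial $\psi$ the uniform part contributes nothing, so $\widehat f(\psi) = \widehat\calD(\psi) = \E_{X\sim\calD}[\psi(X)]$, which has modulus at most $\eps$ by hypothesis. Plancherel then gives
\[
\sum_{g\in G}\abs{f(g)}^2 \;=\; \frac1{|G|}\sum_{\psi\ne\psi_0}\abs{\widehat f(\psi)}^2 \;\le\; \frac{|G|-1}{|G|}\,\eps^2 \;\le\; \eps^2 .
\]
Second, I would pass from this $\ell_2$ bound to an $\ell_1$ bound by Cauchy--Schwarz, which costs a factor of $\sqrt{|G|}$:
\[
\sum_{g\in G}\abs{\calD(g)-U(g)} \;\le\; \sqrt{|G|}\,\Big(\sum_{g\in G}\abs{\calD(g)-U(g)}^2\Big)^{1/2} \;\le\; \eps\sqrt{|G|}.
\]
Finally, since total variation distance in this paper is $\tfrac12\sum_g\abs{\calD(g)-U(g)}$, this yields $\dtv{\calD}{U} \le \tfrac12\eps\sqrt{|G|} \le \eps\sqrt{|G|}$, which is the claim.

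I do not expect a genuine obstacle: the whole argument is elementary once the character theory of finite abelian groups (equivalently, Pontryagin duality for finite groups) is used as a black box. The only points that need a moment's care are fixing consistent normalizations for $\widehat f$ and Plancherel, and observing that the paper's factor-$\tfrac12$ convention for total variation distance makes the stated bound slightly loose — one in fact gets the sharper $\tfrac12\eps\sqrt{|G|}$, and the stated bound follows trivially.
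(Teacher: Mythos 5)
Your proof is correct: the Plancherel-plus-Cauchy--Schwarz argument is exactly the standard proof of this XOR lemma, and it is the one in Rao's exposition that the paper cites rather than reproves. Your normalizations are consistent, the bound $\sum_{\psi\neq\psi_0}|\widehat f(\psi)|^2\le(|G|-1)\eps^2$ is right, and your remark that the paper's factor-$\tfrac12$ convention for total variation distance makes the stated bound slightly loose is also accurate.
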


\begin{theorem}[Generalization of {\cite[Theorem 39]{watts2019exponential}}]\label{thm:3-output-mod-3}
Let $k=O(n^{1/2d})$.
   There exists an $r\in \Theta(\log n)$ for which any $\GC^0(k)[2]\mathsf{/rpoly}$ circuit of depth $d$ and size $s \leq \exp\left( O\left(n^{1/2.01d} \right) \right)$ solves the $r$-Parallel 3 Output Mod 3 Problem (\cref{def:paralel-3-output-mod}) with probability at most $n^{-\Omega(1)}$. 
\end{theorem}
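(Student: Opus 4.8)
The plan is to follow the template of \cite{watts2019exponential} (this theorem generalizes their Theorem 39) and reduce the $r$-parallel hardness to the single-instance bound \cref{cor:3-output-3-lb} via the XOR lemma for finite abelian groups (\cref{lem:xor-lemma-abelian}). Given a $\GC^0(k)[2]\mathsf{/rpoly}$ circuit $C$ of depth $d$ and size $s \le \exp(O(n^{1/2.01d}))$ with $k = O(n^{1/2d})$, write its outputs on a uniformly random $\vec x = (x_1,\dots,x_r)$ as $\vec y$, and set $e_i \coloneqq y_i - \abs{x_i} \pmod 3$, so $\vec e \in \Z_3^r$. I would first prove that the distribution of $\vec e$ is $n^{-\Omega(1)}$-close in total variation to the uniform distribution on $\Z_3^r$. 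Granting this, the theorem follows: under uniform $\vec e$ the number of zero coordinates is binomial with mean $r/3$, so it exceeds $(\tfrac13 + 0.01)r$ with probability $e^{-\Omega(r)}$; taking $r = c\log n$ for a small enough constant $c > 0$ makes both the XOR-lemma loss and this tail $n^{-\Omega(1)}$, and since solving the problem is a fixed event in $\Z_3^r$, the success probability of $C$ is at most the sum of these two quantities, i.e. $n^{-\Omega(1)}$.

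The heart of the argument is the closeness claim. By \cref{lem:xor-lemma-abelian} it suffices to show $\abs{\E_{\vec x}[\omega^{\langle \vec a, \vec e\rangle}]} \le n^{-\Omega(1)}$ for every nonzero $\vec a \in \Z_3^r$, where $\omega = e^{2\pi i/3}$; the total variation distance is then at most $n^{-\Omega(1)} \cdot 3^{r/2}$. Fix such an $\vec a$, let $S = \supp(\vec a)$, and pick any $j \in S$. Averaging over the input blocks $\vec x_{-j}$ and the $\mathsf{/rpoly}$ advice and using the triangle inequality, it is enough to bound the remaining expectation over $x_j$ for each fixing of $\vec x_{-j}$ and the advice. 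After such a fixing, $\prod_{i \in S}\omega^{a_i(y_i - \abs{x_i})}$ is a fixed constant (coming from the now-constant $\abs{x_i}$ for $i \ne j$) times $\omega^{Z(x_j) - a_j\abs{x_j}}$, where $Z(x_j) \coloneqq \sum_{i \in S} a_i y_i(x_j) \bmod 3$. This $Z$ is a $\Z_3$-linear combination of at most $\abs{S} \le r = O(\log n)$ trit-valued outputs of $C$, hence a function of $O(\log n)$ bits, so it is computable by a $\GC^0(k)[2]$ circuit of depth $d + O(1)$ and size $s + \poly(n)$. Applying \cref{cor:3-output-3-lb} to $Z$ (and, when $a_j = 2$, to the circuit obtained from $Z$ by the cost-free input relabeling $t \mapsto -t \bmod 3$ on each input trit, which converts ``$Z - 2\abs{x}$'' into ``$Z' - \abs{x}$'') shows that $Z(x_j) - a_j\abs{x_j}$ is $n^{-\Omega(1)}$-close to uniform over $\Z_3$, whence $\abs{\E_{x_j}[\omega^{Z(x_j) - a_j\abs{x_j}}]} \le n^{-\Omega(1)}$, as needed.

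For the bookkeeping: the constant-factor depth blowup and polynomial-factor (and $O(\log n)$-factor) size blowups incurred in forming $Z$ are absorbed by the $2.01$ slack in the exponent and by $r = O(\log n)$, exactly as in the analogous accounting of \cite{watts2019exponential}; the bound $k = O(n^{1/2d})$ is carried along unchanged. I expect the character bound to be the main obstacle: a priori the $r$ output blocks are arbitrarily correlated by the circuit, so the naive block-by-block hybrid over the coordinates of $\vec a$ does not work, because after fixing one block the factors for the other blocks are \emph{not} constant. The observation that makes it go through is that, once \emph{all but one} input block (and the advice) are fixed, the entire product collapses to a single trit-output $\GC^0(k)[2]$ circuit $Z$ of controlled size, to which the single-instance correlation bound applies directly; verifying that $Z$ indeed stays within the allowed size and depth regime (which relies on $\abs{S} \le r = O(\log n)$) is the only delicate point.
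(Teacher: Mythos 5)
Your proof is correct and reaches the same conclusion, but the key reduction is set up differently from the paper's. Both you and the paper reduce the character bound $\bigl\lvert \E[\chi_a(\calD)]\bigr\rvert \le n^{-\Omega(1)}$ to the single-instance statement in \cref{cor:3-output-3-lb}; the difference is \emph{how}. The paper forms the concatenated input $\widetilde{x} = (a_i x_{j,i})_{i\in S, j\in[r]} \in \{0,1,2\}^{n\lvert S\rvert}$, views the original circuit $C$ (with the non-$S$ blocks filled in by $\mathsf{/rpoly}$ randomness) followed by a $\poly(n)$-size depth-$2$ combining layer as one $\GC^0(k)[2]\mathsf{/rpoly}$ circuit computing a candidate answer to 3 Output Mod 3 on $\widetilde{x}$, and applies \cref{cor:3-output-3-lb} at input size $n\lvert S\rvert$. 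You instead pick a single coordinate $j\in S$, average over $\vec x_{-j}$ and the advice, and apply \cref{cor:3-output-3-lb} at input size $n$ to the resulting hard-wired circuit $Z$ (handling the multiplier $a_j\in\{1,2\}$ by the trit-relabeling $t\mapsto -t$). Your fix for the ``naive hybrid over coordinates fails'' concern — collapsing the whole product to a single small circuit by fixing all but one block — is exactly right, and it is pleasant that it keeps the input size at $n$ rather than $n\lvert S\rvert$; the paper's version is more in the style of a block reduction and also works since $\lvert S\rvert \le r = O(\log n)$ only perturbs parameters by a $\polylog$ factor. Both proofs share the same mild looseness in tracking the $O(1)$ depth and $\poly(n)$ size overhead against the $2.01$ slack and the requirement $k = O(n^{1/2d})$, which you acknowledge explicitly; that bookkeeping is no worse in your version than in the paper's.
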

\begin{proof}
   For $k= O(n^{1/2d})$, let $C$ be a $\GC^0(k)[2]\mathsf{/rpoly}$ circuit of depth $d$ and size at most $\exp\left(O\left(n^{1/2.01d} \right) \right)$ that solves the $r$-Parallel 3 Output Mod 3 problem with probability $\eps$. 
    Throughout this proof, let $x_1,\dots,x_r \in \{0,1,2\}^n$ be chosen uniformly at random, and let $(y_1, \dots, y_r)$ be the output trits of the circuit $C$.
    Let $\calD$ be the distribution over $r$ trits defined by 
    \[
    \bigotimes_{i=1}^r (\abs{x_i} - y_i \pmod 3).
    \]

    We begin by showing that $\calD$ is close to the uniform distribution over $\{0,1,2\}^r$ in total variation distance.
    Let $\chi_a$ be the character of $\F_3^r$ corresponding to $a \in \F_3^r$. Recall that $\chi_a(z) \coloneqq \omega^{\sum_{i=1}^r a_i z_i}$, where $z \in \F_3^r$ and $\omega$ is a third root of unity.  
    To show that $\calD$ is close to uniform, it suffices to show that $\abs{\E[\chi_a(\calD)]}$ is small for all nonzero $a$.

    For $a \in \F_3^r$, let $S$ denote the set of indices on which $a_i \neq 0$. Consider the problem where, given a nonzero $a\in \F_3^r$ and strings $x_1,\dots,x_r \in \{0,1,2\}^n$, the goal is to find trits $y_1,\dots,y_r$ such that 
    \[
    \sum_{i \in S} a_i \abs{x_i} \equiv \sum_{i \in S} a_i y_i \pmod 3.
    \]
    This problem reduces to 3 Output Mod 3 on the concatenated input $\widetilde{x} \coloneqq (a_i x_{j,i})_{i \in S, j\in[r]} \in \{0,1,2\}^{n\abs{S}}$. Specifically, given any circuit $A$ solving the former problem, one can solve the latter problem by first running the circuit $A$ to obtain the trits $y_1,\dots,y_r$. Then, add a circuit to compute the sum $\sum_{i \in S} a_i y_i \pmod 3$, which is the correct answer to the 3 Output Mod 3 problem on input $\widetilde{x}$. This last step can be done with a depth-$2$ $\AC^0$ circuit with $\exp(\abs{S})\leq \exp(r) \leq \poly(n)$ many gates.

    Now, because we are choosing $x_1,\dots,x_r$ uniformly at random, the concatenated input $\widetilde{x} \in \{0,1,2\}^{n \abs{S}}$ is uniformly random. Therefore, \cref{cor:3-output-3-lb} implies that the distribution 
    \[
    \sum_{i \in S} a_i (\abs{x_i} - y_i) \pmod 3
    \]
    is at most $n^{-\Omega(1)}$-far from the uniform distribution over a trit $\{0,1,2\}$ in total variation distance. Hence, $\abs{\E[\chi_a(\calD)]} \leq n^{-\Omega(1)}$ for each nonzero $a$. 
    Then, \cref{lem:xor-lemma-abelian} implies that $\calD$ is $n^{-\Omega(1)}\sqrt{3^r}$-close to the uniform distribution on $\{0,1,2\}^r$.

    Because $\calD$ is close to uniform, the probability $\eps$ that the circuit $C$ solves the $r$-Parallel 3 Output Mod 3 problem is (almost) equivalent to the probability that a uniformly random string in $\{0,1,2\}^r$ has more than a $\frac{1}{3} + 0.01$ fraction of its trits set to $0$. 
    By a Chernoff bound, this probability is bounded above by $\exp\left(-\Omega(r)\right)$. More carefully, we see that the probability of $C$ solving the $r$-Parallel 3 Output Mod 3 problem is at most 
    \[
    n^{-\Omega(1)} \sqrt{3^r} + \exp\left(-\Omega(r)\right),
    \]
    which is bounded above by $n^{-\Omega(1)}$ for some $r \in \Theta(\log n)$.
\end{proof}

In \cite[Theorem 40]{watts2019exponential} they show that the $r$-Parallel Parity Bending Problem (\cref{def:r-parallel-parity-bending}) is as hard as the $r$-Parallel 3 Output Mod 3 Problem (\cref{def:paralel-3-output-mod}).
Their reduction and \cref{thm:3-output-mod-3} imply the following corollary.

\begin{corollary}\label{cor:gc02-lb}
Let $k = O(n^{1/2d})$.
   There exists an $r \in \Theta(\log n)$ for which any $\GC^0(k)[2]\mathsf{/rpoly}$ circuit of depth $d$ and size at most $\exp\left(O\left(n^{1/2.01d}\right) \right)$ solves the $r$-Parallel Parity Bending Problem with probability at most $n^{-\Omega(1)}$.
\end{corollary}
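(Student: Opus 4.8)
The plan is to combine the worst-case-to-average-case reduction of Watts, Kothari, Schaeffer, and Tal with our \cref{thm:3-output-mod-3}. Recall that \cite[Theorem 40]{watts2019exponential} exhibits a constant-depth, polynomial-size classical reduction from the $r$-Parallel 3 Output Mod 3 Problem (\cref{def:paralel-3-output-mod}) to the $r$-Parallel Parity Bending Problem (\cref{def:r-parallel-parity-bending}): from any device solving $r$-Parallel Parity Bending with probability $\eps$ one builds a device solving $r$-Parallel 3 Output Mod 3 with probability $f(\eps)$, where $f$ loses at most a constant factor (and a negligible additive term), and the conversion pays only a constant additive increase in depth and a $\poly(n)$ additive increase in size. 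Since this transformation uses only constant-depth $\AC^0$-style gadgetry (a block-wise reshuffling together with a routine count-mod-$3$ post-processing step) and leaves the random advice untouched, applying it to a depth-$d$, size-$s$ $\GC^0(k)[2]\mathsf{/rpoly}$ circuit yields a depth-$(d+O(1))$, size-$(s+\poly(n))$ $\GC^0(k)[2]\mathsf{/rpoly}$ circuit with the same parameter $k$.

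First I would do the bookkeeping: if $s \le \exp(O(n^{1/2.01d}))$ then $s + \poly(n) \le \exp(O(n^{1/2.01d}))$ as well, and with $k = O(n^{1/2d})$ the resulting circuit meets the hypotheses of \cref{thm:3-output-mod-3} with depth $d' = d + O(1)$ in place of $d$ (absorbing a constant into the depth only changes the hidden constants in the exponents, which \cref{thm:3-output-mod-3} already permits). Applying \cref{thm:3-output-mod-3}, there is an $r \in \Theta(\log n)$ such that every such circuit solves $r$-Parallel 3 Output Mod 3 with probability at most $n^{-\Omega(1)}$. Pulling this back through the reduction, any $\GC^0(k)[2]\mathsf{/rpoly}$ circuit of depth $d$ and size at most $\exp(O(n^{1/2.01d}))$ that solved $r$-Parallel Parity Bending with probability $\eps$ would yield one solving $r$-Parallel 3 Output Mod 3 with probability $f(\eps) \le n^{-\Omega(1)}$, forcing $\eps \le n^{-\Omega(1)}$; this is precisely the statement of the corollary.

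The only point that needs care is confirming that the reduction of \cite[Theorem 40]{watts2019exponential} genuinely stays inside $\GC^0(k)[2]\mathsf{/rpoly}$ --- it must introduce no gate outside $\{\AND,\OR,\NOT,\Mod_2,\g(k)\}$, must not increase $k$, and must map the ``$\ge \tfrac{2}{3}+0.005$ fraction of blocks correct'' success criterion of Parity Bending to the ``$\ge \tfrac{1}{3}+0.01$ fraction of blocks correct'' criterion of 3 Output Mod 3 with only a constant-factor loss in the success probability. Each of these is immediate from inspecting \cite[Theorem 40]{watts2019exponential}, and since the slack in \cref{thm:3-output-mod-3} is a full $n^{-\Omega(1)}$ (for a well-chosen $r = \Theta(\log n)$), it comfortably absorbs these constant losses. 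Accordingly there is no genuine obstacle here: the corollary is a direct consequence of \cref{thm:3-output-mod-3} together with the cited reduction, and the work is confined to the routine verification just outlined.
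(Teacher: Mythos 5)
Your proof is correct and mirrors the paper's argument, which likewise combines \cref{thm:3-output-mod-3} with the reduction of \cite[Theorem 40]{watts2019exponential} from the $r$-Parallel 3 Output Mod 3 Problem to the $r$-Parallel Parity Bending Problem. The paper leaves the bookkeeping about depth, size, and closure of $\GC^0(k)[2]\mathsf{/rpoly}$ under the reduction implicit; you spell it out, but there is no substantive difference in approach.
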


Combining \cref{cor:gc02-lb} with the quantum upper bound in \cite[Section 6]{watts2019exponential} implies \cref{thm:gc2}.

\subsection{Separation Between \texorpdfstring{$\QNC^0\mathsf{/qpoly}$}{QNC0/qpoly} and \texorpdfstring{$\GC^0(k)[p]$}{GC0(k)[p]}}\label{subsec:gcp}

We exhibit relation problems that can all be solved by $\QNC^0\mathsf{/qpoly}$ but each one is average-case hard for $\GC^0(k)[p]$ for some prime $p \neq 2$. Since we proved a separation when $p=2$ in the previous subsection (\cref{thm:gc2}), we have an exponential separation between $\QNC^0\mathsf{/qpoly}$ and $\GC^0(k)[p]$ for all primes $p$.

\begin{theorem}{\label{thm:gcp}}
    For any prime $p$, there is a search problem that is solvable by $\QNC^0\mathsf{/qpoly}$ with probability $1-o(1)$, but any $\GC^0(k)[p]\mathsf{/rpoly}$ circuit of depth $d$ and size at most $\exp\left(O(n^{1/2.01d}) \right)$ with $k = O(n^{1/2d})$ cannot solve the search problem with probability exceeding $n^{-\Omega(1)}$.
\end{theorem}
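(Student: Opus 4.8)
The plan is to mirror the structure of \cref{subsec:gc2} exactly, replacing the prime $2$ by an arbitrary prime $p$ and $\Mod_2$-hard problems (parity) by $\Mod_p$-hard problems. Recall that the $p=2$ case proceeded by: (i) proving $3$ Output Mod $3$ is average-case hard for $\GC^0(k)[2]\mathsf{/rpoly}$ via a reduction to computing $\Mod_3$ (\cref{lem:3-output-mod-3-lower-bound}), which invoked the correlation bound \cref{thm:mod-p-correlation-bounds}; (ii) boosting this to an $r$-parallel version via the XOR lemma for finite abelian groups (\cref{lem:xor-lemma-abelian}); and (iii) appealing to the worst-to-average reduction of Watts et al.\ to get a parity-bending-type problem solvable by $\QNC^0\mathsf{/qpoly}$. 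The key observation that makes this go through for general $p$ is that \cref{thm:mod-p-correlation-bounds} already gives average-case hardness of $\Mod_q$ against $\GC^0(k)[p]\mathsf{/rpoly}$ for \emph{any} prime $q \neq p$ --- so I would pick, say, $q$ to be the smallest prime different from $p$ (or simply $q=3$ when $p \neq 3$, and $q=2$ when $p=3$) and run the whole argument with the ``mod $q$'' analogue of $3$ Output Mod $3$.

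Concretely, first I would define ``$q$ Output Mod $q$'': given $x \in \{0,\dots,q-1\}^n$, output $y \in \{0,\dots,q-1\}$ with $y \equiv |x| \pmod q$. The worst-to-average reduction (\cref{lemma:worst-to-average}, which is the general-$p$ analogue of \cite[Lemma 35]{watts2019exponential}) converts a circuit succeeding with probability $1/q + \eps$ on uniform input into one that, on \emph{every} input, outputs the correct residue with probability $1/q + \eps$ and each wrong residue with probability $\tfrac{1}{q-1}(2/q - \eps/\text{something})$ --- the precise smoothing constants are routine. Then I would run the ``fold down to $\Mod_q$'' trick: postcompose with the map $E$ sending $0 \mapsto 0$ and everything nonzero to $1$, together with a biased coin flip, to manufacture a $\GC^0(k)[p]\mathsf{/rpoly}$ circuit computing $\Mod_q$ on the uniform distribution with advantage $\Omega(\eps)$. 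Since $q \neq p$, \cref{thm:mod-p-correlation-bounds} forces $\eps \le n^{-\Omega(1)}$, giving the base-case average-case lower bound. The corollary analogous to \cref{cor:3-output-3-lb} (bounding $\Pr[C(x) - |x| \equiv i]$ for all $i$) follows by the same ``subtract a fixed constant'' observation.

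Next, I would lift to $r$-parallel $q$ Output Mod $q$. This is verbatim the proof of \cref{thm:3-output-mod-3}: given a circuit $C$ solving the $r$-parallel problem with probability $\eps$, define the distribution $\calD = \bigotimes_{i=1}^r (|x_i| - y_i \bmod q)$ over $\F_q^r$; for each nonzero $a \in \F_q^r$ the quantity $\sum_{i \in S} a_i(|x_i| - y_i) \bmod q$ is the output of a $q$ Output Mod $q$ instance on a uniformly random concatenated input (of length $n|S|$, with the $a_i$-scaling absorbed, and the final $\F_q$-linear-combination step computable by a polynomial-size depth-$2$ $\AC^0$ circuit since $|S| \le r = O(\log n)$), so the corollary gives $|\E[\chi_a(\calD)]| \le n^{-\Omega(1)}$ for all nonzero $a$; then \cref{lem:xor-lemma-abelian} gives $\calD$ is $n^{-\Omega(1)}\sqrt{q^r}$-close to uniform on $\F_q^r$; a Chernoff bound on the ``fraction of coordinates equal to $0$'' event finishes it for suitable $r \in \Theta(\log n)$. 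Finally, the quantum-classical separation: Watts et al.'s worst-to-average reduction from a $p$-analogue of the Parallel Parity Bending Problem to $r$-parallel $q$ Output Mod $q$, together with the $\QNC^0\mathsf{/qpoly}$ upper bound (which is the content of \cite[Section 6]{watts2019exponential} and its obvious generalization to prime $p$ --- the quantum circuit just manipulates $\F_p$-linear structure), yields a search problem solvable by $\QNC^0\mathsf{/qpoly}$ with probability $1 - o(1)$ that no $\GC^0(k)[p]\mathsf{/rpoly}$ circuit of the stated size solves with probability exceeding $n^{-\Omega(1)}$.

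The main obstacle is making the ``fold $q$ Output Mod $q$ down to $\Mod_q$'' step and the worst-to-average smoothing work cleanly for general $q$ rather than $q=3$: for $q=3$ the biased-coin constants in \cref{lem:3-output-mod-3-lower-bound} and the parity-bending reduction of Watts et al.\ are tuned to ternary alphabets, and one has to check that the $\QNC^0\mathsf{/qpoly}$ construction of \cite{watts2019exponential}, which exploits a specific identity relating $\F_2$-parities to $\F_3$-sums, generalizes to a corresponding $\F_2$-vs-$\F_q$ (or more naturally $\F_p$-vs-$\F_q$) identity --- I expect it does, since the underlying Bravyi--Gosset--K\"onig-style construction is about graph states and local Clifford/qudit operations, but confirming the exact promise gap ($\tfrac{1}{q} + \Omega(1)$ vs.\ the quantum success probability $1-o(1)$) survives is where the care is needed. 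Everything downstream (XOR lemma, Chernoff, the size/depth bookkeeping) is mechanical once the base case and the quantum upper bound are in hand.
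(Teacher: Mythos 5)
You take a genuinely different route from the paper, and there is a real gap at the heart of it. The paper's proof does \emph{not} generalize the $p=2$ argument (which used $\F_3$-valued inputs and $\Mod_3$-hardness against $\GC^0[2]$) by swapping in some prime $q\neq p$; instead it uses the $(q,r)$-Parallel Parity Bending Problem of Grilo et al.\ (\cref{def:q-r-parallel-parity-bending}), whose input is \emph{binary} ($x_i\in\{0,1\}^n$) and whose output constraint lives in $\F_q$. Consequently the only hard function the classical circuit is forced to compute is $\Mod_2$, and for $p\neq 2$ one simply invokes \cref{thm:mod-p-correlation-bounds} via \cref{cor:2-output-2-uniform}; the XOR lemma is applied over $\F_2^r$, not $\F_q^r$. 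This is a deliberate asymmetry: the quantum circuit only ever needs to compute a parity of the input (an $\F_2$-linear object), which is exactly what the qubit-based $\QNC^0/\mathsf{qpoly}$ construction of Watts et al.\ and Grilo et al.\ does; the $\F_q$-structure is confined to the \emph{output} Hamming weight, which the quantum advice state encodes.

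Your proposed ``$q$ Output Mod $q$'' on inputs over $\{0,\dots,q-1\}^n$ would make the quantum circuit responsible for computing $|x|\bmod q$ for a uniformly random $q$-ary input, and you justify the upper bound by saying ``the quantum circuit just manipulates $\F_p$-linear structure.'' This is a misconception: the $p$ in $\GC^0(k)[p]$ has no bearing on the quantum circuit at all (the quantum circuit is identical for all $p$), and the existing $\QNC^0$ constructions in this line of work are qubit-based and built around $\F_2$-parities and Clifford/graph-state structure --- they do not compute $|x|\bmod q$ for general $q$ on an arbitrary $q$-ary input. You flag ``confirming the exact promise gap survives'' as a technicality, but it is actually the crux: there is no off-the-shelf $\QNC^0/\mathsf{qpoly}$ upper bound for your proposed problem, and building one would likely require a qudit-based or otherwise substantially new quantum construction. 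The paper's choice of problem is precisely engineered to avoid this: binary input keeps the quantum side within the known framework, and the classical lower bound still goes through because $\Mod_2$ is hard for $\GC^0[p]$ whenever $p\neq 2$.

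Your classical lower-bound skeleton (reduce to $\Mod_q$, lift via the abelian XOR lemma, Chernoff) is sound in spirit and mirrors the $p=2$ argument correctly, but it proves hardness of the wrong problem --- one without a demonstrated quantum upper bound. To repair this, replace your ``$q$ Output Mod $q$'' base case with the binary-input problem ``given $x\in\{0,1\}^n$, output $y\in\{0,1\}$ with $y\equiv|x|\pmod 2$'' and run the XOR lemma over $\F_2^r$, then reduce from $(q,r)$-Parallel Parity Bending as the paper does.
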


We also note that we use the case where $p=2$ to obtain separations for primes $p \neq 2$, which is why the $p=2$ case is handled in a separate subsection.

Previously, the best separation known was between $\QNC^0\mathsf{/qpoly}$ and polynomial-size $\AC^0[p]$ circuits, which was shown in the recent work of Grilo, Kashefi, Markham, and Oliveira \cite{grilo2024power}.
The case where $p=2$ was shown in \cref{subsec:gc2}. We handle all other primes in this subsection.
We will show lower bounds for the following problem, which is a natural generalization of the $r$-Parallel Parity Bending Problem introduced by \cite[Problem Problem 8]{watts2019exponential}.

\begin{definition}[$(q,r)$-Parallel Parity Bending Problem {\cite[Definition 4]{grilo2024power}}]\label{def:q-r-parallel-parity-bending}
  Given inputs $x_1, \dots, x_r$ with $x_i \in \{0,1\}^n$ for all $i \in [r]$, produce outputs $y_1,\dots,y_r \in \{0,1\}^n$ such that $y_i$ satisfies: 
  \begin{align*}
      &\abs{y_i} \equiv 0 \pmod{q} \quad\text{if}\quad \abs{x_i} \equiv 0 \pmod{2}\quad \text{or} \\
      &\abs{y_i} \not\equiv 0 \pmod{q} \quad\text{if}\quad \abs{x_i} \equiv 1 \pmod{2}.
  \end{align*}
  for at least a $\frac{2}{3} + 0.005$ fraction of the $i \in [k]$.
\end{definition}

Grilo et al.\ \cite{grilo2024power} prove that $\QNC^0\mathsf{/qpoly}$ can solve this problem. We prove that the problem is average-case hard for $\GC^0(k)[p]$ for all primes $p \neq 2$. We begin with the following corollary of \cref{thm:mod-p-correlation-bounds}.

\begin{corollary}\label{cor:2-output-2-uniform}
Let $k = O(n^{1/2d})$.
For a prime $p \neq 2$, let $C$ be a $\GC^0(k)[p]/\mathsf{rpoly}$ circuit of depth $d$ and size $s \leq \exp\left(O\left(n^{1/2.01d} \right) \right)$. 
Then, for all $i \in \{0, 1\}$,
\[
\frac{1}{2} - \frac{1}{n^{\Omega(1)}} \leq \Pr_{x \in \{0,1\}^n}\left[C(x) - \abs{x} \equiv i \pmod 2 \right] \leq \frac{1}{2} + \frac{1}{n^{\Omega(1)}}.
\]
\end{corollary}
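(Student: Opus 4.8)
The plan is to follow the proof of \cref{cor:3-output-3-lb} almost verbatim, substituting the modulus $2$ for $3$ and the class $\GC^0(k)[p]$ (with $p$ an odd prime) for $\GC^0(k)[2]$. Since $\Pr_{x}[C(x) - \abs{x} \equiv 0 \pmod 2] + \Pr_{x}[C(x) - \abs{x} \equiv 1 \pmod 2] = 1$, it suffices to prove only the two upper bounds $\Pr_{x}[C(x) - \abs{x} \equiv i \pmod 2] \le \tfrac12 + n^{-\Omega(1)}$ for $i \in \{0,1\}$; the matching lower bounds then follow by subtracting the other upper bound from $1$.

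For $i = 0$, note that the event $C(x) - \abs{x} \equiv 0 \pmod 2$ is exactly the event $C(x) = \Mod_2(x)$, so the task reduces to the $\Mod_2$-analogue of \cref{lem:3-output-mod-3-lower-bound}: every $\GC^0(k)[p]\mathsf{/rpoly}$ circuit of the stated depth and size computes $\Mod_2$ on the uniform distribution with probability at most $\tfrac12 + n^{-\Omega(1)}$. This in turn follows from the random self-reducibility of $\Mod_2$. Suppose $C$ computes $\Mod_2$ on $U_n$ with probability $\tfrac12 + \eps$. Build a circuit $C'$ whose $\mathsf{/rpoly}$ advice consists of $C$'s advice together with a uniformly random string $z \in \bitz^n$ and the single bit $b \coloneqq \abs{z} \bmod 2$, and which on input $x$ outputs $C(x \oplus z) \oplus b$; the bitwise XOR $x \oplus z$ and the final XOR with $b$ each cost only $O(1)$ extra depth and $O(n)$ extra gates, so $C'$ is a $\GC^0(k)[p]\mathsf{/rpoly}$ circuit of depth $d + O(1)$ and size $s + O(n)$. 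For every fixed $x$ the string $x \oplus z$ is uniform, and $\Mod_2(x \oplus z) = \Mod_2(x) \oplus \Mod_2(z)$, so $C'(x) = \Mod_2(x)$ whenever $C$ is correct on $x \oplus z$; hence $C'$ computes $\Mod_2$ correctly on \emph{every} input with probability $\tfrac12 + \eps$. In particular this holds on the hard distribution furnished by \cref{thm:mod-p-correlation-bounds}, invoked with the distinct primes $q = p$ and $r = 2$ (distinct because $p \ne 2$) and $F = \Mod_2$, which forces $\tfrac12 + \eps \le \tfrac12 + n^{-\Omega(1)}$, i.e.\ $\eps \le n^{-\Omega(1)}$.

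For $i = 1$, if some $\GC^0(k)[p]\mathsf{/rpoly}$ circuit $D$ of the stated depth and size had $\Pr_x[D(x) - \abs{x} \equiv 1 \pmod 2] \ge \tfrac12 + n^{-o(1)}$, then negating its single output bit---an $O(1)$-depth, $O(1)$-size modification---yields a circuit $D'$ with $\Pr_x[D'(x) = \Mod_2(x)] \ge \tfrac12 + n^{-o(1)}$, contradicting the $i = 0$ bound just established. The only genuinely delicate point is keeping the depth and size of $C'$ (and of the auxiliary circuits above) inside the window in which \cref{thm:mod-p-correlation-bounds} applies; this bookkeeping is identical to that in the proofs of \cref{lem:3-output-mod-3-lower-bound} and \cref{cor:3-output-3-lb}, so we follow it mutatis mutandis. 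One small subtlety is worth flagging: because $p \ne 2$, a $\GC^0(k)[p]$ circuit cannot itself compute the parity $\abs{z} \bmod 2$, which is precisely why the bit $b$ must be supplied to $C'$ as part of its random advice rather than computed internally---this is legitimate since the advice distribution may depend on the input length and $z$ is input-independent.
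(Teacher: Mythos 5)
Your proposal is correct, and its skeleton (reduce to the two upper bounds via the fact that the probabilities sum to $1$; handle $i=1$ by negating the output) matches the paper's proof exactly. The one place you diverge is the $i=0$ case. The paper simply writes that the bound ``is exactly shown in \cref{thm:mod-p-correlation-bounds}''---which is slightly loose, since that theorem asserts hardness on \emph{some} input distribution while the corollary is about the uniform distribution. You close this gap with a random self-reduction: shift the input by uniformly random advice $z$ and correct by the advice bit $b=\abs{z}\bmod 2$, turning uniform-distribution success $\tfrac12+\eps$ into per-input success $\tfrac12+\eps$, which then contradicts \cref{thm:mod-p-correlation-bounds} on its hard distribution. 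This is a valid (and arguably more careful) argument; your observation that $b$ must be supplied as advice because a $\GC^0(k)[p]$ circuit with $p\neq 2$ cannot compute parity internally is exactly right. A slightly shorter route to the same end, more in the spirit of what the paper intends, is to invoke \cref{thm:cor-bound-gc0p} directly with $q=p$, $r=2$, $F=\Mod_2$: that correlation bound is already stated for the uniform distribution, and it extends to $\mathsf{/rpoly}$ circuits by averaging over the advice, with no depth increase. Either way, the remaining depth/size bookkeeping (your $C'$ has depth $d+O(1)$ and size $s+O(n)$) is handled at the same level of rigor as in \cref{lem:3-output-mod-3-lower-bound} and \cref{cor:3-output-3-lb}, so your proof stands.
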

\begin{proof}
The proof is similar to \cref{cor:3-output-3-lb}.
Because 
\[
\sum_{i \in \{0,1\}} \Pr_{x \in \{0,1\}^n}\left[C(x) - \abs{x} \equiv i \pmod 2 \right] =1,
\]
it suffices to prove 
\[
\Pr_{x \in \{0,1\}^n}\left[C(x) - \abs{x} \equiv i \pmod 2 \right] \leq \frac{1}{2} + \frac{1}{n^{\Omega(1)}}
\]
for each $i \in \{0,1\}$.
For $i = 0$, the desired bound is exactly shown in \cref{thm:mod-p-correlation-bounds}. 
For $i = 1$, observe that if there is a $\GC^0(k)[p]\mathsf{/rpoly}$ circuit $D$ of depth $d$ and size at most $\exp\left(O\left(n^{1/2.01d} \right) \right)$ for which 
\[
\Pr[D(x) - \abs{x} \equiv 1 \pmod 2] \geq \frac{1}{2} + \frac{1}{n^{o(1)}}, 
\]
then one could construct a circuit $D'$ for which
\[
\Pr[D'(x)  \equiv \abs{x} \pmod 2] \geq \frac{1}{2} + \frac{1}{n^{o(1)}}
\]
adding a $\NOT$ gate to the end of the circuit. 
However, such a $D'$ cannot exist as it contradicts \cref{thm:mod-p-correlation-bounds}. 
\end{proof}

We now prove our average-case lower bound.

\begin{theorem}
Let $p \neq 2$ be a prime, and let $k=O(n^{1/2d})$.
There exists an $r \in \Theta(\log n)$ for which any $\GC^0(k)[p]\mathsf{/rpoly}$ circuit of depth $d$ and size at most $\exp\left(O\left(n^{1/2.01d} \right)\right)$ solves the $(q,r)$-Parallel Parity Bending Problem (\cref{def:q-r-parallel-parity-bending}) with probability at most $n^{-\Omega(1)}$.
\end{theorem}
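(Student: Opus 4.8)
The plan is to transcribe the $p=2$ argument of \cref{subsec:gc2} with the field $\F_3$ (and the trit-valued ``$3$ Output Mod $3$'' problem) replaced throughout by $\F_2$ (and the problem of computing parity), using \cref{cor:2-output-2-uniform} in place of \cref{cor:3-output-3-lb}. Throughout we take $q$ to be a power of $p$ (in particular $q=p$ works, and this is the only interesting regime: when $q=p=2$ the problem lies in $\AC^0[2]$, which is why the statement excludes $p=2$), so that the bit $y\mapsto\Mod_q(y)=\mathbbm{1}[\abs{y}\not\equiv 0\pmod q]$ is computed by a constant-depth $\AC^0[p]\subseteq\GC^0(k)[p]$ gadget (a single $\Mod_p$ gate when $q=p$). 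Fix a $\GC^0(k)[p]\mathsf{/rpoly}$ circuit $C$ of depth $d$ and size $s\le\exp(O(n^{1/2.01d}))$ that solves the $(q,r)$-Parallel Parity Bending Problem (henceforth $(q,r)$-PPB) with probability $\eps$ over uniform $x_1,\dots,x_r\in\{0,1\}^n$; the goal is to show $\eps\le n^{-\Omega(1)}$ for a suitable $r\in\Theta(\log n)$.

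First I would extract parity bits. Let $C'(x_1,\dots,x_r)=(b_1,\dots,b_r)$ where $b_i\coloneqq\Mod_q\big(\text{the $i$-th output block of }C\big)$; this is a $\GC^0(k)[p]\mathsf{/rpoly}$ circuit of depth $d+O(1)$ and size $s+\poly(n)$, and the additive depth increase is absorbed into the parameters exactly as in \cref{lem:3-output-mod-3-lower-bound}. By \cref{def:q-r-parallel-parity-bending}, whenever $C$ solves the $(q,r)$-PPB we have $b_i\equiv\abs{x_i}\pmod 2$ for at least a $\tfrac23+0.005$ fraction of $i\in[r]$. Let $\calD\coloneqq\bigotimes_{i=1}^{r}\big(\abs{x_i}-b_i\bmod 2\big)$ be the joint distribution of the parity discrepancies on $\F_2^r$. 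Then $\eps$ is at most the probability, under $\calD$, that at least a $\tfrac23$ fraction of the $r$ coordinates equal $0$, so it suffices to bound this latter quantity by $n^{-\Omega(1)}$.

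Next I would estimate the Fourier characters of $\calD$ and apply the XOR lemma. Fix a nonzero $a\in\F_2^r$ with support $S$, so that $\chi_a(z)=(-1)^{\sum_{i\in S}z_i}$ and hence $\E_{z\sim\calD}[\chi_a(z)]=\E\big[(-1)^{\sum_{i\in S}(\abs{x_i}-b_i)}\big]$. Let $\widetilde x\coloneqq(x_i)_{i\in S}\in\{0,1\}^{n\abs S}$, which is uniform since $x_1,\dots,x_r$ are. Consider the circuit $B$ that, given $\widetilde x$, samples the remaining blocks $x_i$ ($i\notin S$) from the $\mathsf{rpoly}$ advice (legitimate, as these are input-independent), runs $C'$, and outputs $\sum_{i\in S}b_i\bmod 2$; since $\abs S\le r=O(\log n)$, this final parity is computed by a $\poly(n)$-size depth-$2$ DNF, so $B$ is a single-output $\GC^0(k)[p]\mathsf{/rpoly}$ circuit of depth $d+O(1)$ and size $s+\poly(n)$. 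Because $B(\widetilde x)-\abs{\widetilde x}\equiv\sum_{i\in S}(\abs{x_i}-b_i)\pmod 2$, \cref{cor:2-output-2-uniform} gives $\big|\E_{z\sim\calD}[\chi_a(z)]\big|\le n^{-\Omega(1)}$ for every nonzero $a$. \cref{lem:xor-lemma-abelian} with $G=\F_2^r$ then shows $\calD$ is $n^{-\Omega(1)}\sqrt{2^r}$-close to uniform on $\F_2^r$. Finally, under the uniform distribution a Chernoff bound gives that at least a $\tfrac23$ fraction of the coordinates vanish with probability $\exp(-\Omega(r))$, so
\[
\eps\ \le\ n^{-\Omega(1)}\sqrt{2^r}+\exp(-\Omega(r)).
\]
Taking $r=\beta\log_2 n$ with $\beta$ a small enough constant (so that $\beta/2$ is strictly below the exponent hidden in the $n^{-\Omega(1)}$ of \cref{cor:2-output-2-uniform}) makes the right-hand side $n^{-\Omega(1)}$ while keeping $r\in\Theta(\log n)$. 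Combined with the $\QNC^0\mathsf{/qpoly}$ upper bound for $(q,r)$-PPB of Grilo et al.\ \cite{grilo2024power}, this establishes \cref{thm:gcp} for all primes $p\ne 2$.

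The step I expect to be the main obstacle is keeping the reduction inside $\GC^0(k)[p]$. Two points need care: extracting the parity bit from a $(q,r)$-PPB output must be free, which is exactly why $q$ is taken to be a power of $p$ (for $q$ with a prime factor $r'\neq p$ the extraction would amount to computing $\Mod_{r'}$, which is hard for $\GC^0(k)[p]$ by \cref{thm:gc0-lower-bound}); and the blocks $x_i$ with $i\notin S$ must be routed through the $\mathsf{rpoly}$ advice rather than supplied as circuit inputs, so that the character estimate reduces cleanly to the single-output statement \cref{cor:2-output-2-uniform}. Everything else is a routine mod-$2$ analogue of the mod-$3$ argument in \cref{subsec:gc2}.
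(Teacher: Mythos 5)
Your proof matches the paper's argument: extract the parity bits via $\Mod_q$ gates, bound the characters of the discrepancy distribution over $\F_2^r$ by reducing the support-$S$ character to a single-output circuit and invoking \cref{cor:2-output-2-uniform}, apply \cref{lem:xor-lemma-abelian}, and finish with a Chernoff bound at $r = \Theta(\log n)$. You are somewhat more careful than the paper in two places that it leaves implicit — you explicitly fix $q$ to be a power of $p$ so the $\Mod_q$ extraction gadget stays within $\GC^0(k)[p]$, and you explicitly route the off-support blocks $x_i$ ($i \notin S$) through the $\mathsf{rpoly}$ advice so that the character estimate cleanly invokes the single-output corollary.
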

\begin{proof}
The proof is similar to \cref{thm:3-output-mod-3}. 
For $k= O(n^{1/2d})$, let $C$ be a $\GC^0(k)[2]\mathsf{/rpoly}$ circuit of depth $d$ and size at most $\exp\left(O\left(n^{1/2.01d} \right) \right)$ that, on input $x_1,\dots,x_r \in \{0,1\}^n$, outputs $y_1,\dots,y_r \in \{0,1\}$ such that, for at least a $\frac{1}{2} + 0.01$ fraction of $i \in [r]$, $y_i \equiv \abs{x_i} \pmod 2$. Let $\eps$ denote the probability that $C$ succeeds at this task.
Throughout this proof, consider $x_1,\dots,x_r$ to be chosen uniformly at random.
Let $\calD$ be the distribution over $r$ bits defined by 
\[
\bigotimes_{i=1}^r (\abs{x_i} - y_i \pmod 2).
\]

We will show that $\calD$ is close to the uniform distribution over $\{0,1\}^r$ in total variation distance.
Let $\chi_a$ be the character of $\F_2^r$ corresponding to $a \in \F_2^r$.
Recall that $\chi_a(z) \coloneqq (-1)^{\sum_{i=1}^r a_i z_i}$, where $z \in \F_2^n$.
We will show that $\abs{\E[\chi_a(\calD)]}$ is small for all nonzero $a$, which implies that $\calD$ is close to the uniform distribution in total variation distance.

For $a \in \F_2^r$, let $S$ denote the set of indices on which $a_i \neq 0$.
Consider the problem where, given a nonzero $a \in \F_2^r$ and strings $x_1,\dots,x_r \in \{0,1\}^n$, the goal is to find $y_1,\dots,y_r \in \{0,1\}$ such that 
\[
\sum_{i \in S} a_i \abs{x_i} \equiv \sum_{i \in S} a_i y_i \pmod 2.
\]
Let $\widetilde{x} \coloneqq (a_i x_{j,i})_{i\in S, j\in [r]} \in \{0,1\}^{n\abs{S}}$, i.e., the bits chosen by $a$ for each $x_i$ for $i \in [r]$. 
The problem above reduces to computing $\Mod_2$ on $\widetilde{x}$.
Specifically, let $y_1, \dots, y_r$ be the output of a circuit solving the former problem. Then, add a circuit that computes the sum $\sum_{i\in S} a_i y_i \pmod 2$, which is equal to $\Mod_2(\widetilde{x})$.
Note this last step requires at most a depth-$2$ $\AC^0$ circuit with $\exp(\abs{S}) \leq \exp(r) \leq \poly(n)$ many gates.

Next, because $x_1,\dots,x_r$ are uniformly random, so too is the concatenated input $\widetilde{x}$.
Therefore, \cref{cor:2-output-2-uniform} implies that the distribution 
\[
\sum_{i\in S} a_i (\abs{x_i} - y_i) \pmod 2
\]
is at most $n^{-\Omega(1)}$-far from the uniform distribution over a single bit in total variation distance. Hence, $\abs{\E[\chi_a(\calD)]}\leq n^{-\Omega(1)}$. Then, \cref{lem:xor-lemma-abelian} implies that $\calD$ is $n^{-\Omega(1)}\sqrt{2^r}$-close to the uniform distribution on $\{0,1\}^r$.

Note that for a sample drawn from $\calD$, the bits that are $0$ correspond to the circuit successfully computing $\Mod_2$ on the corresponding input. Hence, the success probability $\eps$ of $C$ is precisely the probability that a sample drawn from $\calD$ has more than a $\frac{1}{2}+0.01$ fraction of the bits set to $0$.
By a Chernoff bound, the probability that a uniformly random string in $\{0,1\}^r$ has more than a $\frac{1}{2} + 0.01$ of its bits set to $0$ is at most $\exp(-\Omega(r))$. 
Because $\calD$ is $n^{-\Omega(1)}\sqrt{2^r}$-close to uniform (in variation distance), we have that $\eps$ (i.e., the probability that the number of bits in a sample drawn from $\calD$ has more than a $\frac{1}{2} + 0.01$ fraction of its bits set to 0) is at most 
\[
n^{-\Omega(1)} \sqrt{3^r} + \exp(-\Omega(r)), 
\]
which is bounded above by $n^{-\Omega(1)}$ for some $r \in \Theta(\log n)$.

At this point, we have shown that any $\GC^0(k)[p]\mathsf{/rpoly}$ circuit of depth $d$ and size at most $\exp(O(n^{1/2.01d}))$ trying to compute $\Mod_2(x_i)$ on a $\frac{1}{2} + 0.01$ fraction of inputs $x_1,\dots,x_r \in \{0,1\}^n$ will succeed with probability at most $n^{-\Omega(1)}$. To complete the proof, we give a reduction from this problem to the $(q,r)$-Parallel Parity Bending Problem, following the reduction given in \cite[Theorem 40]{watts2019exponential}.
Suppose we have a solution $y_1,\dots,y_r$ to $(q,r)$-Parallel Parity Bending Problem. Then we can output $y_1',\dots,y_r'$ solving the above problem as follows.
For $y_i$, set $y_i' = 0$ when $\abs{y_i}\equiv 0 \pmod q $, and set $y_i' = 1$ otherwise. 
This transformation preserves the number of successes, i.e., if $y_i$ is correct for the $(q,r)$-Parallel Parity Bending Problem, then $y_i'$ will equal $\Mod_2(x_i)$. 
\end{proof}

\subsection{On Interactive \texorpdfstring{$\QNC^0$}{QNC0} Circuits}

Grier and Schaeffer \cite{grier2019interactive} obtain quantum-classical separations for two-round interactive problems. 
We provide a high-level overview of their interactive problems and refer readers to \cite{grier2019interactive} for further detail. 
The problems involve a simple quantum state $\ket{G}$ that is fixed (independent of the input).
In the first round, the input specifies a sequence of Clifford gates to be applied to $\ket{G}$, along with a subset of $n - O(1)$ qubits to measure in the standard basis. A valid output for this round is any measurement outcome that could have been observed if the measurement was performed on an actual quantum computer. 

In the second round, a similar process occurs: the input specifies a sequence of Clifford gates to be applied to the $O(1)$ qubits that were not measured in the first round. Again, a valid output is any measurement outcome that could have been observed if the measurement was performed on a quantum computer. 

To summarize, all the interactive problems in \cite{grier2019interactive} revolve around simulating a Clifford circuit on $n$ qubits, and the simulation is broken into two rounds. The \emph{point} is that this problem caters to quantum devices, and the interactive aspect is crucial for proving lower bounds. 

In more detail, Grier and Schaeffer give three different interactive tasks $\mathsf{T_1}, \mathsf{T_2},$ and $\mathsf{T_3}$ that follow the above structure. The differences between the three tasks come from, e.g., the geometry of the starting state $\ket{G}$. 
It is not too surprising that Grier and Schaeffer show that $\QNC^0$ can solve their interactive tasks. 
On the other hand, they prove that any classical model that can solve these interactive tasks (i.e., \emph{simulate} the action on the fixed state $\ket{G}$) must be fairly powerful.
A bit more carefully, \cite[Theorem 1]{grier2019interactive} shows that $\AC^0[6] \subseteq \left(\AC^0\right)^{\mathsf{T_1}}$, $\NC^1 \subseteq \left(\AC^0\right)^{\mathsf{T_2}}$, and $\oplus \mathsf{L} \subseteq \left(\AC^0\right)^{\mathsf{T_3}}$.

To illustrate the usefulness of their theorem, let us explain how it implies a separation between $\AC^0[2]$ and $\QNC^0$. For the upper bound, they show that $\QNC^0$ can solve any of the tasks $\mathsf{T_i}$. For the lower bound, suppose towards a contradiction that $\AC^0[2]$ can solve $\mathsf{T_2}$. Then, by Grier and Schaeffer's theorem, this implies that $\NC^1 \subseteq \left(\AC^0\right)^{\AC^0[2]} = \AC^0[2]$, but this is a contradiction because the containment of $\AC^0[2]$ in $\NC^1$ is known to be strict.

The remainder of this subsection will use Grier and Schaeffer's framework to show that there is an interactive task that $\QNC^0$ circuits can solve but $\GC^0(k)[p]$ circuits cannot.
We begin by showing that even a single $\g(k)$ gate can compute functions that are not computable by $\NC = \AC = \TC$.

\begin{theorem}\label{thm:gk-incomp-nc}
   There is a function $f:\bitz^n \to \bitz$ computable by a single $\g(k)$ gate that is not computable in $\NC^i$ for any constant $i$ and $k = \omega(\log^{i-1}(n))$.     
   When $k \in \log^{\omega(1)}(n)$, then there are functions $f:\bitz^n \to \bitz$ that are computable by a single $\g(k)$ gate that cannot be computed in $\NC = \AC = \TC$.
\end{theorem}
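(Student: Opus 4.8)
The plan is to exploit the extreme unrestricted power of a single $\g(k)$ gate on low-Hamming-weight inputs by a simple counting/diagonalization argument, packaging an information-theoretically hard function into the Hamming ball of radius $k$. Recall that on inputs of Hamming weight $\leq k$, a $\g(k)$ gate on $n$ bits may compute an \emph{arbitrary} function $f$; the number of such inputs is $\binom{n}{\leq k}$, so the $\g(k)$ gate can realize $2^{\binom{n}{\leq k}}$ distinct Boolean functions (fixing, say, the output to be $0$ above the ball). First I would observe that for $k = \omega(\log^{i-1} n)$ we have $\binom{n}{\leq k} \geq \binom{n}{k} \geq (n/k)^k = n^{\omega(\log^{i-1} n)/\,\cdot\,}$, which is super-polynomial in $n$; more precisely $\log \binom{n}{\leq k} = \Omega(k \log(n/k)) = \omega(\log^i n)$. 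On the other hand, the number of $\NC^i$ circuits of depth $c \log^i n$ and polynomial size $n^c$ on $n$ inputs is at most $2^{\poly(n)} = 2^{n^{O(1)}}$ — but that bound is too weak to beat $\binom{n}{\le k}$ directly once $k \gtrsim \log n$, so I must instead restrict attention to the function's behavior \emph{on the Hamming ball} and use a finer circuit count. The cleanest route: fix the "hard coordinates'' to be the first $m := \lfloor 2k \rfloor$ bits, encode an arbitrary Boolean function $h : \{0,1\}^{m} \to \{0,1\}$ \emph{on strings of weight exactly $k$ within those $m$ bits} (a set of size $\binom{m}{k} = 2^{\Theta(m)} = 2^{\Theta(k)}$ inputs, all of global Hamming weight $k \le k$), pad the rest with zeros, and let the $\g(k)$ gate output $h$ there and $0$ elsewhere.

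Next I would run the diagonalization. An $\NC^i$ circuit family of size $n^c$ and depth $c\log^i n$, once we hard-wire the padding (set bits $m+1,\dots,n$ to $0$ and restrict the first $m$ bits to have weight exactly $k$), induces some function on those $\binom{m}{k}$ inputs; the number of distinct functions so induced, over all choices of the polynomial-size, $\log^i$-depth circuit, is at most $2^{n^{O(1)}}$ for each fixed $c$, hence at most $2^{n^{O(1)}}$ in total by a diagonal union over $c$. But the number of candidate functions $h$ on the $\binom{m}{k}$-point domain is $2^{\binom{m}{k}} = 2^{2^{\Theta(k)}}$. When $k = \omega(\log^{i-1} n)$ we get $\binom{m}{k} \ge (m/k)^k \ge 2^{k} = 2^{\omega(\log^{i-1} n)}$; this still need not dominate $n^{O(1)} = 2^{O(\log n)}$ unless $k \gg \log n$. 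To get the claimed $k = \omega(\log^{i-1} n)$ threshold for all $i$, the right move is to \emph{also} diagonalize against the \emph{depth}: for $i \ge 2$, $\log^i n = (\log n)\cdot \log^{i-1} n$, and the key identity is that an $\NC^i$ circuit of depth $D = c \log^i n$ and fan-in $2$ has at most $2^{D}$ leaves on any root-to-output path, so evaluating it on an input reads at most $2^{O(\log^i n)}$ of the coordinates — but that is again too crude. I think the correct, clean statement is the following: compare $\log(\text{number of }\NC^i\text{ circuits}) = \Theta(n^c \log n)$ with $\log(\text{number of functions in the ball})=\binom{n}{\le k}$, and note $\binom{n}{\le k} \ge (n/k)^k$, which exceeds $n^{c}\log n$ precisely once $k \log(n/k) = \omega(\log n)$, i.e. once $k = \omega(\log n / \log\log n)$ — this handles $i = 2$ with room to spare and, a fortiori, every $i \ge 2$, since the hypothesis $k = \omega(\log^{i-1} n)$ is only stronger. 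For $i = 1$ the hypothesis reads $k = \omega(1)$, and the same count $\binom{n}{\le k} \ge (n/k)^k \gg n^{c}$ as soon as $k \to \infty$ finishes it (here $\NC^1$ has polynomially many gates and $O(\log n)$ depth, so at most $2^{O(\log^2 n)}$ circuits — again beaten by $\binom{n}{\le \omega(1)}$). So by a union bound over all polynomial size bounds $n^c$ and a Shannon–Lupanov counting argument, there exists $f$ computable by one $\g(k)$ gate that differs from every $\NC^i$ circuit on at least one input of weight $\le k$, hence $f \notin \NC^i$.

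For the second statement, $k \in \log^{\omega(1)} n$ means $k = \omega(\log^{i-1} n)$ simultaneously for every constant $i$, so a single function cannot be pinned down by this argument for all $i$ at once; instead I would pick an \emph{uncomputable} target. Take any non-computable language $L \subseteq \{1\}^* $ (e.g. the halting set encoded in unary), and on $n$ inputs let the $\g(k)$ gate output, on the unique weight-$\le k$ "unary-flavored'' encoding, the bit $\mathbf{1}[n \in L]$, and $0$ elsewhere — but this is a family of \emph{total} functions, one per $n$, each trivially computable, so that does not work either. The right formalization: fix a single non-uniform hard function by a \emph{Kolmogorov-complexity} / diagonal argument across all of $\NC = \AC = \TC$ at once. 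Concretely, $\NC = \bigcup_i \NC^i$ is a countable union of classes, and on inputs of weight $\le k$ with $k = \log^{\omega(1)} n$ the number of realizable $\g(k)$-behaviors is $2^{\binom{n}{\le k}} = 2^{n^{\omega(1)}}$, which for every fixed $i$ and every fixed polynomial $n^c$ exceeds $2^{n^{O(1)}}$; enumerating all pairs $(i, c)$ and doing a single diagonalization (pick $f$ to disagree, on the weight-$\le k$ ball, with the first circuit in a fixed enumeration of all $\bigcup_{i,c}$ $\NC^i$-size-$n^c$ circuits that it has not yet been made to disagree with) produces one $f$ outside every $\NC^i$. The main obstacle I anticipate is purely bookkeeping: making the counting inequality $\binom{n}{\le k} > (\text{number of }\NC^i\text{ circuits of size }n^c)$ precise enough to nail the stated threshold $k = \omega(\log^{i-1} n)$ rather than a weaker one like $k = \omega(\log n)$ — this likely requires the sharper bound that an $\NC^i$ circuit, because of its $O(\log^i n)$ depth and bounded fan-in, has an associated decision-tree / certificate-complexity of $2^{O(\log^i n)}$, so that restricting to weight-$\le k$ inputs one should compare $\binom{n}{\le k}$ against $2^{2^{O(\log^i n)}}\cdot\text{(number of wiring patterns)}$; getting that comparison to break exactly at $k = \omega(\log^{i-1} n)$ is the delicate step, and I would want to double-check it against the $i=1$ base case where it must reduce to the familiar fact that $\AC^0$-style width-$\log$ phenomena kick in at $k=\omega(1)$.
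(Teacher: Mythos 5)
Your proposal follows the same basic approach as the paper---a Shannon counting argument comparing the number of $\g(k)$ gates (at least $2^{\binom{n}{\le k}}$) against the number of $\NC^i$ circuits---but the execution has real gaps and never actually closes the argument. Most tellingly, you twice touch the key structural observation and then back away from it. Early on you dismiss ``an $\NC^i$ circuit of depth $D$ and fan-in $2$ has at most $2^D$ leaves, so $\le 2^{O(\log^i n)}$ gates'' as ``too crude,'' yet this is exactly the paper's count: it bounds the number of $\NC^i$ circuits by $2^{\widetilde{O}(2^{\log^i n})}$ without any appeal to the polynomial size constraint, so the comparison becomes $\binom{n}{\le k}$ versus $\widetilde{O}(2^{\log^i n})$. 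Since $\log\binom{n}{\le k} \ge k\log(n/k) = \Omega(k\log n)$ for $k \le \sqrt{n}$, this exceeds $O(\log^i n)$ precisely when $k=\omega(\log^{i-1}n)$, which is the stated threshold. At the very end you return to this comparison (``compare $\binom{n}{\le k}$ against $2^{2^{O(\log^i n)}}\cdot(\text{number of wiring patterns})$'') but declare it ``the delicate step'' and leave it unverified; it is, in fact, a two-line calculation that the paper does.

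Several intermediate steps are also wrong or unnecessary. Your restriction to strings of weight exactly $k$ in a padded $2k$-bit prefix only shrinks the count of available behaviors to $2^{\binom{2k}{k}}=2^{2^{\Theta(k)}}$ and makes the threshold artificially worse ($k \gg \log n$); diagonalizing over the full weight-$\le k$ ball directly, as the paper does, avoids this entirely. Your claimed threshold $k=\omega(\log n/\log\log n)$ for the polynomial-size comparison is also off: $(n/k)^k \ge n^{c}\log n$ already once $k\log(n/k)=\omega(\log n)$, which holds for any $k=\omega(1)$ (with $k\le n^{1/2}$), so the ``alternative'' count you propose gives a strictly \emph{stronger} threshold than the theorem needs, not a weaker one, and handles $i=1$ cleanly---the case you leave ``to double-check.'' Finally, the second part requires none of the machinery you consider (uncomputable targets, Kolmogorov complexity, Cantor-style enumeration of $(i,c)$-pairs): $k\in\log^{\omega(1)}n$ means $k=\omega(\log^{i-1}n)$ simultaneously for every constant $i$, so the \emph{single} hardest $\g(k)$ gate, which requires fan-in-two depth $\gtrsim\log\binom{n}{\le k}=\omega(\log^i n)$ for every $i$, lies outside every $\NC^i$ and hence outside $\NC$.
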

\begin{proof}
We count the functions computable by $\NC^i$ and a single $\g(k)$.
For $\NC^i$, since the circuit has depth $O(\log^i n)$ with fan-in $2$, there are $\le 2^{O(\log^i n)}$ gates in any $\NC^i$ circuit. 
Furthermore, all fan-in points of these gates are connected by a wire to the fan-out of another gate or an input bit. 

Each gate can be one of $\{\AND, \OR, \NOT\}$, giving $6^{O(\log^i n)}$ many options. 
For each fan-in point of a gate, there exists $\le 2^{O(\log^i n)}+n+2$ many choices of wires that will connect this fan-in point to either the fan-out of another gate, an input variable, or a constant $0/1$ bit. This gives a total of $6^{O(\log^i n)}(2^{O(\log^i n)}+n+2)^{2^{O(\log^i n)}} = 2^{\widetilde{O}(2^{\log^i n})}$ $\NC^1$ circuits.
Meanwhile, the number of $\G(k)$ gates of fan-in $n$ is at least $2^{\binom{n}{\leq k}}$. 
To see this, note that ${\binom{n}{\leq k}}$ many inputs can be assigned arbitrarily, giving $2^{\binom{n}{\leq k}}$ many options.\footnote{By counting carefully, one can show that the number of $\g(k)$ gates is $2\cdot 2^{\binom{n}{\leq k}}$ for $0 \leq k \leq n-1$, and $2^{2^n}$ for $k=n$. We do not need this for our argument.} 
This number exceeds 
$2^{\widetilde{O}(2^{\log^i n})}$ as long as $k = \omega(\log^{i-1}(n))$.
The final part of the theorem follows from setting $k = \log^{\omega(1)}(n)$.
\end{proof}

As another form of \cref{thm:gk-incomp-nc}, we can also show that, e.g., a single $\g(k)$ gate can compute functions that require exponential-size circuits. 
We find this interesting in its own right because proving superlinear circuit lower bounds is currently beyond our techniques. 

\begin{theorem}\label{thm:gk-incomp-TC}
   There is a function $f:\bitz^n \to \bitz$ computable by a single $\g(k)$ gate that requires $\mathsf{SIZE}\left(2^{\widetilde{\Omega}(n^\eps)}\right)$ for $k = \Omega(n^\eps)$  and $\eps > 0$.
\end{theorem}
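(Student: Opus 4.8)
The plan is to prove this by a Shannon‑style counting argument, structurally identical to the proof of \cref{thm:gk-incomp-nc} but replacing the count of small $\NC^i$ circuits with a count of small $\TC^0$ circuits. Fix a constant $\eps\in(0,1)$ and take $k=\lceil n^{\eps}\rceil$. Since a single $\g(k')$ gate is a special case of a single $\g(k)$ gate whenever $k'\ge k$ (make the larger gate agree with the smaller one on Hamming weights $\le k'$ and output the smaller gate's ``far'' constant on all weights in $(k',k]$ as well), it suffices to produce the hard function at this one value of $k$; it will then also witness the bound for every larger $k$, which is what ``$k=\Omega(n^{\eps})$'' requires.

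First I would bound the number of $n$‑variate Boolean functions computable by $\TC^0$ circuits of size at most $s$. Such a circuit has at most $s$ gates other than $\NOT$ gates, and each such gate may be assumed to draw its inputs from the set of the other gates' outputs, those outputs negated, the $n$ variables, their negations, and the two constants — a pool of at most $2(s+n)+2$ wires. Hence each gate is described by one of $O(1)$ gate types together with a subset of this pool, giving at most $O(1)\cdot 2^{2(s+n)+2}$ choices per gate, and the circuit's output is one of $O(s+n)$ possibly‑negated wires. Thus the total number of functions realized is at most $2^{O(s(s+n))}$, an estimate that does not depend on the constant depth $d$. This treats the ``free'' $\NOT$ gates correctly: negations are absorbed into the choice of which (possibly negated) wire feeds each gate, so their being cost‑free does not inflate the count.

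Second, exactly as in the proof of \cref{thm:gk-incomp-nc}, a single $\g(k)$ gate of fan-in $n$ computes at least $2^{\binom{n}{\le k}}$ distinct functions, since its outputs on the $\binom{n}{\le k}$ inputs of Hamming weight at most $k$ may be chosen arbitrarily while fixing, say, output $0$ elsewhere. For $k=\lceil n^{\eps}\rceil$ and large $n$ one has $\binom{n}{\le k}\ge\binom nk\ge(n/k)^{k}=2^{k\log_2(n/k)}\ge 2^{(1-\eps)n^{\eps}\log_2 n}$, using that $\eps$ is a constant strictly less than $1$ so that $\log_2(n/k)=\Theta(\log n)$.

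Finally I would compare the counts. If every function computed by a single $\g(k)$ gate had a $\TC^0$ circuit of size $s$, then $2^{\binom{n}{\le k}}\le 2^{O(s(s+n))}$, forcing $s(s+n)=\Omega\bigl(\binom{n}{\le k}\bigr)$; since $\binom{n}{\le k}$ is super‑polynomial this rules out $s<\sqrt n$, and for $s\ge\sqrt n$ it gives $s\ge 2^{\frac12(1-\eps)n^{\eps}\log_2 n - O(1)}=2^{\widetilde{\Omega}(n^{\eps})}$. By the pigeonhole principle there is therefore an $n$‑variate function $f$ computable by a single $\g(k)$ gate but not by any $\TC^0$ circuit of size $2^{\widetilde{\Omega}(n^{\eps})}$. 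There is essentially no obstacle here; the one point requiring care is the bookkeeping in the $\TC^0$ circuit count — in particular handling the cost‑free $\NOT$ gates by folding negations into wire choices rather than counting them as gates — after which the doubly‑exponential‑versus‑singly‑exponential gap between $2^{\binom{n}{\le k}}$ and $2^{O(s(s+n))}$ does the rest.
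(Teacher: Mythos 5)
Your proof is correct and takes essentially the same route as the paper's: a Shannon-style counting argument comparing the $2^{\binom{n}{\le k}}$ functions computable by a single $\g(k)$ gate against an upper bound ($2^{O(s(s+n))}$ in your version, $2^{O(s\log s)}$ in the paper's) on the number of functions computed by size-$s$ $\TC^0$ circuits. If anything, your parameter bookkeeping at $k=\lceil n^{\eps}\rceil$ is the more careful of the two, since the paper derives its intermediate claim for fixed $k$ and then substitutes $k=n^{\eps}$.
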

\begin{proof}
   We use a counting argument. In the proof of \cref{thm:gk-incomp-nc}, we showed that there are at least $2^{\binom{n}{\le k}}$ functions computable by $\g(k)$ gates. 
   We will give a loose upper bound on the number of size-$s$ circuits, which suffices for our purposes. There are $3$ choices each gate could be (from $\{\AND,\OR,\NOT\}$), and each gate has at most $\binom{s+n}2$ choices of two gates to feed into it (including the $n$ input bits). Finally there are $s$ ways to pick one gate to be the output. Thus the total number of ways to pick our $s$ gates are at most $3^s\binom{n+s}2^ss = (n+s)^{O(s)}$. 
   
   For $s = \Omega(n^{k-1})$ and $k\ge 2$, this quantity is $\le (n+s))^{O(s)} = 2^{O(s\log s)} = 2^{\widetilde{O}(n^{k-1})}$.  But $2^{\widetilde{O}(n^{k-1})} = o(2^{\binom{n}{\le k}})$, so the number of size-$s$ circuits is smaller than the number of $\g(k)$ gates for $s = \Omega(n^{k-1})$ and $k \ge 2$. Hence, there exists a $\g(k)$ gate that cannot be computed by size $n^{k-1}$ circuits. 
   In particular, by setting $k = n^\eps$, we see that $\GC^0(n^\eps)\notin \mathsf{SIZE}(2^{\widetilde{\Omega}(n^\eps)})$.
\end{proof}

We say two circuit classes $\mathsf{C}$ and $\mathsf{D}$ are incomparable when there are functions $f, g :\bitz^n\to \bitz$ such that $f \in \mathsf{C}$ but $f \notin \mathsf{D}$ and $g \notin \mathsf{C}$ but $g \in \mathsf{D}$.

\begin{corollary}\label{cor:gc0k-nc1-incomp}
Let $p$ be a prime number. For $k \in \omega(1)$, the class of depth-$d$ $\GC^0(k)[p]$ circuits of size at most $\exp\left(O(n^{1/2.01d})\right)$ is incomparable to $\NC^1$. 
\end{corollary}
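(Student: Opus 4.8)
The plan is to prove the two directions of incomparability separately. Write $\mathsf{C}_d$ for the class of depth-$d$ $\GC^0(k)[p]$ circuits of size at most $\exp(O(n^{1/2.01d}))$. I need a function in $\mathsf{C}_d \setminus \NC^1$ and a function in $\NC^1 \setminus \mathsf{C}_d$.

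For the first direction I would invoke \cref{thm:gk-incomp-nc} with $i = 1$. Since $k = \omega(1) = \omega(\log^{i-1} n)$ in that regime, the theorem produces a function $f : \bitz^n \to \bitz$ that is computed by a single $\g(k)$ gate and is not in $\NC^1$. A single $\g(k)$ gate, padded out to depth $d$ with fan-in-one gates, is a depth-$d$ $\GC^0(k)$ circuit (hence a depth-$d$ $\GC^0(k)[p]$ circuit) of size $O(d)$, which is far below $\exp(O(n^{1/2.01d}))$. Thus $f \in \mathsf{C}_d \setminus \NC^1$.

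For the second direction I would take $g = \MAJ$, which lies in $\TC^0 \subseteq \NC^1$, and show $\MAJ \notin \mathsf{C}_d$. Here I would apply \cref{thm:mod-p-correlation-bounds} with the prime $q$ set to $p$ and $F = \MAJ$ (this uses $k = O(n^{1/2d})$, the standing regime for all our $\GC^0(k)[p]$ lower bounds): there is an input distribution on which every depth-$d$, size-$\exp(O(n^{1/2.01d}))$ $\GC^0(k)[p]\mathsf{/rpoly}$ circuit — and a fortiori every such plain $\GC^0(k)[p]$ circuit, since it may simply ignore the advice — agrees with $\MAJ$ with probability only $\tfrac12 + n^{-\Omega(1)} < 1$. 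A circuit computing $\MAJ$ exactly would agree with probability $1$ under every distribution, a contradiction, so $\MAJ \in \NC^1 \setminus \mathsf{C}_d$. Combining the two directions yields incomparability.

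There is no genuinely difficult step; the proof is bookkeeping built on results already in hand. The one point requiring care is lining up the quantitative regimes: the correlation bound of \cref{thm:mod-p-correlation-bounds} only begins to bite at size $\exp(O(n^{1/2.01d}))$ — the $2.01$ in place of $2$ is precisely the slack produced by its Yao-minimax-plus-amplification argument — so the size threshold in the corollary is the largest one for which this argument goes through verbatim. I would also note explicitly that both hypotheses on $k$ are consumed: $k = \omega(1)$ in the first direction (via \cref{thm:gk-incomp-nc}) and $k = O(n^{1/2d})$ in the second (via \cref{thm:mod-p-correlation-bounds}), so the statement should be read for $k$ in the range $\omega(1) \le k \le O(n^{1/2d})$.
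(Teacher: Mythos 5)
Your proposal is correct and follows essentially the same route as the paper: \cref{thm:gk-incomp-nc} (with $k=\omega(1)=\omega(\log^{0}n)$) supplies a single $\g(k)$ gate outside $\NC^1$, and \cref{thm:mod-p-correlation-bounds} applied to $\MAJ\in\TC^0\subseteq\NC^1$ supplies the other direction. Your added remark that the statement implicitly requires $k$ to also satisfy $k=O(n^{1/2d})$ for the second direction is an accurate reading of the hypotheses being used.
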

\begin{proof}
\cref{thm:mod-p-correlation-bounds} says that $\MAJ$ cannot be computed by $\GC^0(k)[p]$ for any prime $p$. $\MAJ$ can be computed by $\NC^1$ because $\NC^1\supseteq \TC^0$.
\cref{thm:gk-incomp-nc} implies that there is a function that can be computed by $\GC^0(k)[p]$ but not $\NC^1$.
\end{proof}

We can now use Grier and Schaeffer's framework to get a separation between $\QNC^0$ and $\GC^0(k)[p]$ for an interactive problem. 

\begin{theorem}[Generalization of {\cite[Corollary 2]{grier2019interactive}}]\label{thm:grier-schaeffer}
Let $k = O(n^{1/2d})$.
There is an interactive task that $\QNC^0$ circuits can solve that depth-$d$, size-$s$ $\GC^0(k)[p]$ circuits cannot for $s \leq \exp\left(O(n^{1/2.01d})\right)$.
\end{theorem}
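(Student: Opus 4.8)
The plan is to follow Grier and Schaeffer's template exactly, substituting our strengthened lower bounds wherever they used $\AC^0[p]$ bounds. Recall from the discussion preceding the statement that Grier and Schaeffer \cite{grier2019interactive} exhibit interactive tasks $\mathsf{T_1},\mathsf{T_2},\mathsf{T_3}$, all solvable by $\QNC^0$, with the property that $\AC^0[6]\subseteq(\AC^0)^{\mathsf{T_1}}$, $\NC^1\subseteq(\AC^0)^{\mathsf{T_2}}$, and $\oplus\mathsf{L}\subseteq(\AC^0)^{\mathsf{T_3}}$. The key observation is that these reductions are \emph{black-box}: an oracle solving $\mathsf{T_i}$ is queried by a surrounding $\AC^0$ circuit, and the composition of an $\AC^0$ circuit with size-$s$ depth-$d$ $\GC^0(k)[p]$ oracle gates is again a $\GC^0(k)[p]$ circuit of depth $d+O(1)$ and size $s\cdot\poly(n)$ (since $\AC^0\subseteq\GC^0(k)[p]$ and the class is closed under composition). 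So if some depth-$d$, size-$s$ $\GC^0(k)[p]$ circuit with $s\le\exp(O(n^{1/2.01d}))$ and $k=O(n^{1/2d})$ could solve $\mathsf{T_2}$, we would get $\NC^1\subseteq$ depth-$(d+O(1))$, size-$(s\cdot\poly(n))$ $\GC^0(k)[p]$ circuits.

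First I would fix the task to be $\mathsf{T_2}$ (the one reducing from $\NC^1$), and recall the quantum upper bound from \cite{grier2019interactive} verbatim: $\QNC^0$ solves $\mathsf{T_2}$. Next, for the lower bound, I would assume for contradiction that depth-$d$, size-$s$ $\GC^0(k)[p]$ circuits solve $\mathsf{T_2}$ with the stated parameters. Composing with the Grier--Schaeffer $\AC^0$ reduction yields that every $\NC^1$-computable function is computed by a depth-$(d+O(1))$, size-$\exp(O(n^{1/2.01d}))$ $\GC^0(k)[p]$ circuit with $k=O(n^{1/2d})$. But $\MAJ\in\TC^0\subseteq\NC^1$, and our \cref{thm:mod-p-correlation-bounds} (or even the worst-case \cref{thm:gc0-lower-bound}) says $\MAJ$ is not computable by $\GC^0(k)[p]$ circuits of depth $d'$ and size $\exp(o(n^{1/2(d'-1)}))$; since $\exp(O(n^{1/2.01d}))$ is below this threshold for the adjusted constant depth $d+O(1)$, we reach a contradiction. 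This is the same contradiction structure as in \cref{cor:gc0k-nc1-incomp}, just applied to the composed circuit.

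There is one subtlety worth flagging as the main obstacle: the parameter $k$ must be tracked through the composition. The $\AC^0$ reduction in \cite{grier2019interactive} introduces only $\AND/\OR/\NOT$ gates (fan-in up to $\poly(n)$), which are $\g(k')$ gates for \emph{every} $k'\ge 0$, so the effective locality parameter of the composed circuit is still $k=O(n^{1/2d})$—it does not grow. The depth grows additively by a constant and the size grows multiplicatively by $\poly(n)$, which is absorbed into $\exp(O(n^{1/2.01d}))$. Once this bookkeeping is checked, the argument is immediate; the real content is entirely in our \cref{thm:mod-p-correlation-bounds} and in Grier--Schaeffer's reduction, both of which we invoke as black boxes. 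I would close by remarking (as the paper does for the other separations) that this subsumes the $\AC^0[p]$-vs-$\QNC^0$ interactive separation of \cite{grier2019interactive}, since $\AC^0[p]\subseteq\GC^0(k)[p]$.
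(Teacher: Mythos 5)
Your proposal is correct and follows essentially the same route as the paper: use Grier--Schaeffer's task $\mathsf{T_2}$ with its $\QNC^0$ upper bound, compose a hypothetical $\GC^0(k)[p]$ solver with their $\AC^0$ reduction to conclude $\NC^1\subseteq\GC^0(k)[p]$, and contradict this via the $\MAJ$ lower bound (the paper packages that last step as \cref{cor:gc0k-nc1-incomp}, which rests on \cref{thm:mod-p-correlation-bounds} exactly as you describe). Your extra bookkeeping on how $k$, depth, and size behave under the composition is a correct and welcome elaboration of what the paper leaves implicit.
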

\begin{proof}
Grier and Schaeffer's task $\mathsf{T}_2$ (\cite[Problem 12]{grier2019interactive}) can be solved by $\QNC^0$. Suppose it can be solved by $\GC^0(k)[p]$ circuits for some prime $p$. Then, by \cite[Theorem 1]{grier2019interactive}, $\NC^1 \subseteq (\AC^0)^{\GC^0(k)[p]} = \GC^0(k)[p]$ but this contradicts \cref{cor:gc0k-nc1-incomp}.
\end{proof}

\section*{Acknowledgements}
We thank Scott Aaronson, Srinivasan Arunachalam, Anna G\'{a}l, Uma Girish, Jesse Goodman, Daniel Grier, Siddhartha Jain, Nathan Ju, William Kretschmer, Shyamal Patel, Avishay Tal, Ryan Williams, and Justin Yirka for helpful conversations.
This work was done in part while the authors were visiting the Simons Institute for the Theory of Computing, supported by NSF QLCI Grant No. 2016245, and in part while SG was an intern at IBM Quantum.

SG is supported by the NSF QLCI Award OMA-2016245 (Scott Aaronson).
VMK is supported by NSF Grants CCF-2008076 and CCF-2312573, and a Simons Investigator
Award (\#409864, David Zuckerman).

\bibliographystyle{alphaurl}
\bibliography{refs}

\end{document}